\documentclass[12pt]{article}
\usepackage{macros}

\begin{document}

\title{\vspace{-2em}Should Demand Models Incorporate Competitor Prices? \\ Oblivious Learning and Algorithmic Collusion}

\author{Yuhang Wu}
\author{Assaf Zeevi}
\affil{Columbia Business School}

\date{This version: June 7, 2026}

\maketitle
\vspace{-2em}

\begin{abstract}
On a platform with many sellers, should a pricing algorithm explicitly model competitors' prices when learning demand? Classical learning arguments suggest an affirmative answer: ignoring competitors induces model misspecification and inefficiency. In contrast, recent work on algorithmic collusion suggests that \emph{strategic obliviousness}---deliberately ignoring competitor prices---may facilitate collusive outcomes and improve profits. We study this modeling choice in a stylized competitive market with unknown noisy demand, in which multiple sellers repeatedly set prices and estimate demand via iterated least squares, and either incorporate competitors' prices into their demand models (\emph{informed}) or ignore them (\emph{oblivious}). We first show that, relative to a monopolist, an oblivious seller in a competitive market must explore more aggressively to compensate for the loss of dynamic competitor information. Building on this insight, we characterize market dynamics when all sellers are oblivious and show that prices converge to the competitive outcome under sufficient exploration, while a continuum of pseudo-equilibria arises when exploration decays. Analyzing the resulting price trajectories, we uncover an \emph{excursion} phenomenon that gives rise to transient collusive patterns that dissipate as learning progresses. In markets with both oblivious and informed sellers, the informed strictly out-earn the oblivious. Read as a strategy game, the modeling choice has a unique Nash equilibrium: the all-informed market, in which prices converge to the competitive outcome efficiently. Overall, our results indicate that collusive patterns are not robust and are not sustained by oblivious modeling; therefore, incorporating competitor information, together with sufficient price exploration, remains a reliable strategy for sellers in competitive markets.
\end{abstract}

\noindent{\bf Keywords:} Algorithmic collusion, Dynamic pricing, Multi-agent learning, Demand learning, Model misspecification

\section{Introduction}\label{sec:intro}
\subsection{Background}
The rising popularity of learning algorithms and the advent of AI have significantly expanded algorithmic decision-making, allowing algorithms to make autonomous or semi-autonomous decisions that were once under the full purview of humans. The resulting \textit{algorithmic economy} is hence one where interactions among human decision makers are increasingly replaced by interactions among algorithms. This is particularly true in the context of pricing analytics, where decades of research have given rise to an abundance of dynamic pricing algorithms \citep{denboerDynamicPricingLearning2013}.

One consequence of these trends is a growing need to better understand how pricing algorithms interact and influence market outcomes, both for regulatory oversight and revenue management practices. One of the most recent concerns is \textit{algorithmic collusion}, where independent pricing algorithms operating in competitive markets may learn to exhibit or even sustain \textit{supracompetitive prices} (i.e., prices that are higher than the competitive level or even near-collusive). The concept of algorithmic collusion emerged around 2015 \citep{cooperLearningPricingModels2015, salcedoPricingAlgorithmsTacit2015} and became a topic of heated discussion with the seminal work of \citet{calvanoArtificialIntelligenceAlgorithmic2020}. Since then, related issues have made headlines in the popular press and have received regulatory attention \citep{AlgorithmsCollusionCompetition2017, calzolariMisleadingConsequencesComparing2021, klobuchar2024antitrust, doj2024realpage}. Despite extensive numerical studies and theoretical exposition, many problems remain open in this area \citep{denboerAlgorithmicCollusionMathematical2024, abadaAlgorithmicCollusionWhere2025}. It is nevertheless important to recognize that algorithmic collusion has introduced new perspectives on the interaction of learning and pricing algorithms, especially when competition is involved. 

\subsection{Motivation}\label{sec:motivation}
One such perspective, fundamental to pricing practices in competitive markets, is whether sellers should explicitly incorporate competitors' prices into their demand models when learning to price. Consider a market with multiple sellers offering imperfectly substitutable products, a setting representative of many online retail platforms. In such environments, it is natural to posit that a seller's demand depends on all sellers' prices, so explicitly modeling competitors' prices appears both economically sound and methodologically prudent.

\paragraph{Oblivious learning and collusive outcomes.} The algorithmic collusion literature, however, has challenged this view. While explicit collusion is prohibited by antitrust laws, \emph{tacit} collusion through a seemingly benign pricing algorithm is attractive to sellers, as opacity of algorithmic decision-making affords plausible deniability, and there is little consensus on what constitutes a legally ``correct'' learning algorithm in competitive settings. For example, several works suggest that sellers may benefit from \textit{strategic obliviousness}---deliberately ignoring competitors' prices when modeling demand---as this may facilitate collusive outcomes \citep{cooperLearningPricingModels2015, hansenFrontiersAlgorithmicCollusion2021, douglasNaiveAlgorithmicCollusion2024, baek2026misspecified}. In practice, there are also many sensible reasons for sellers to consider oblivious modeling: collecting competitors' prices may be costly or infeasible, and learning from one's own price-demand data is simpler. This means that, if ignoring competitors' prices can lead to collusive outcomes, sellers may have both economic and operational incentives to do so.

\paragraph{Impact of demand noise and price exploration.} 
Although oblivious sellers effectively behave as monopolists in their own demand models, the interaction of ``reasonable'' monopolistic learning algorithms yields puzzling market behavior. In monopolistic dynamic pricing, iterated least squares with cumulative exploration of order $\Theta(\sqrt{n})$, where $n$ denotes the pricing horizon (or number of interactions with buyers), is widely regarded as a sensible and near-optimal choice \citep{keskinDynamicPricingUnknown2014}. Yet, when all sellers in a competitive market are oblivious and adopt this exploration rate, numerical simulations reveal a wide range of possible outcomes, spanning near-competitive prices, near-collusive prices, and indeterminate regimes. Figure~\ref{fig:samplepaths} illustrates three representative sample paths. This phenomenon echoes earlier observations in settings \emph{without demand noise or price exploration}: \citet[\S4.3]{cooperLearningPricingModels2015} documents a continuum of possible price limit points under oblivious learning. That similar patterns persist even in the presence of exploration and noise raises several fundamental questions. Do price exploration and demand noise affect the nature of the limit points? Is the widely accepted $\Theta(\sqrt{n})$ exploration rate still appropriate in competitive markets? How do model misspecification and exploration interact to shape market dynamics?

\begin{figure}[!htbp]
\centering
\begin{subfigure}{.33\textwidth}
\centering
\includegraphics[width=\linewidth]{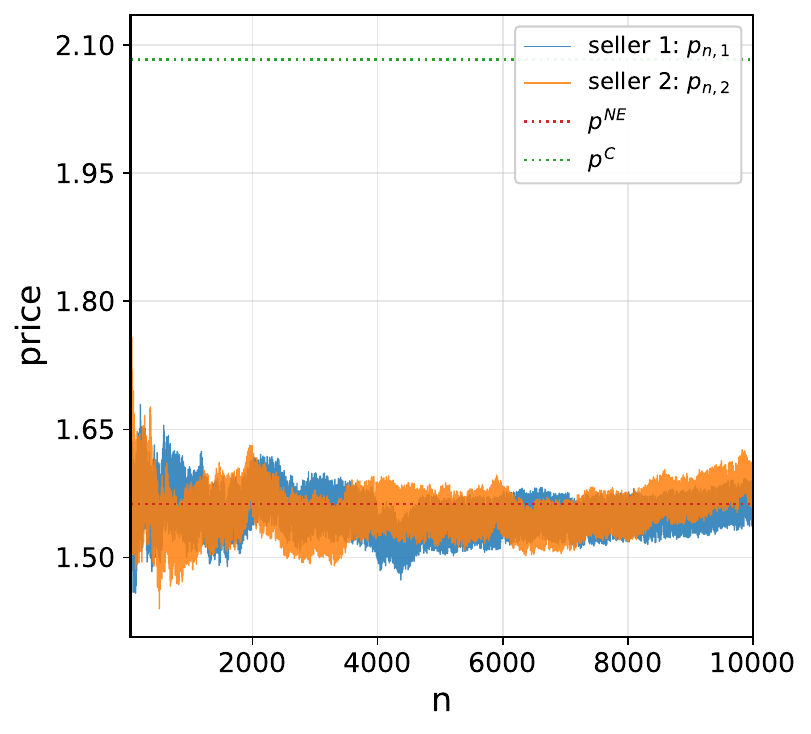}
\caption{A near-competitive outcome.}
\label{fig:samplepaths1}
\end{subfigure}%
\begin{subfigure}{.33\textwidth}
\centering
\includegraphics[width=\linewidth]{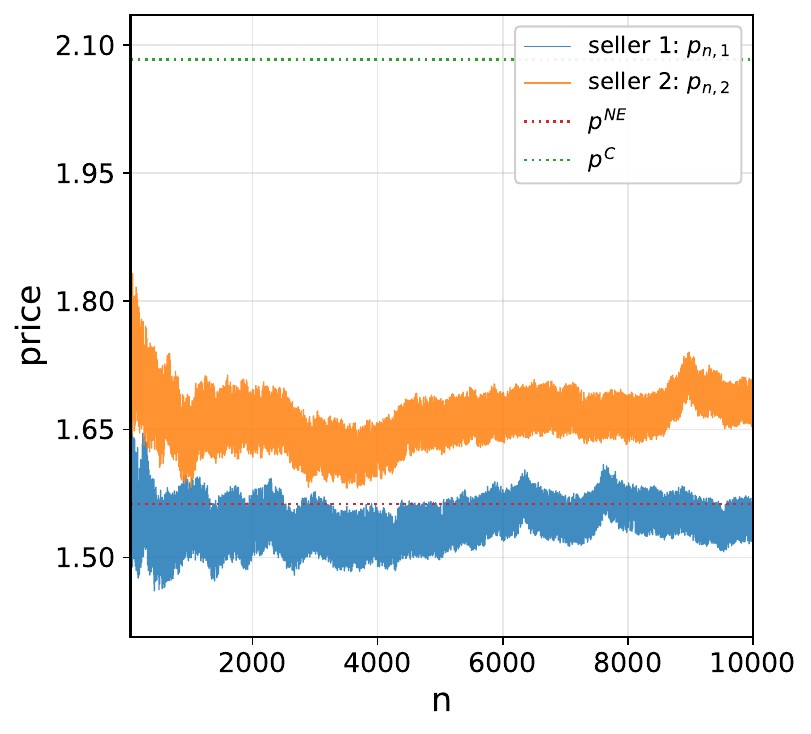}
\caption{An indeterminate outcome.}
\label{fig:samplepaths2}
\end{subfigure}%
\begin{subfigure}{.33\textwidth}
\centering
\includegraphics[width=\linewidth]{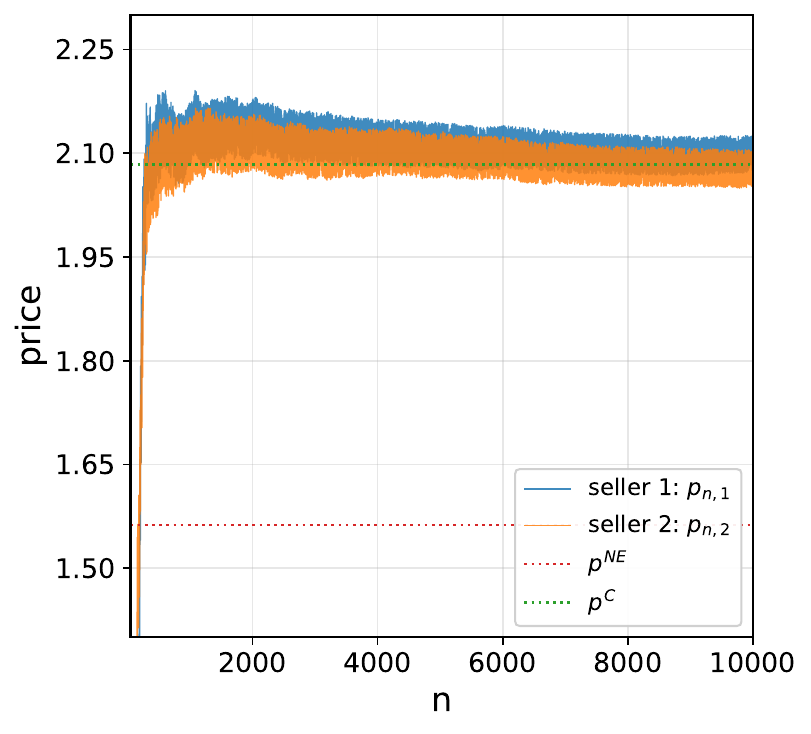}
\caption{A near-collusive outcome.}
\label{fig:samplepaths3}
\end{subfigure}%
\caption{Three sample paths of the prices for two oblivious sellers under cumulative exploration of order $\Theta(\sqrt n)$ over $10{,}000$ periods. The blue and orange lines are the per-period prices of sellers~1 and~2; the red and green dotted lines mark the competitive and collusive prices $p^{NE}$ and $p^C$. Implementation details are in Appendix~\ref{sec:experimental-details}.}
\label{fig:samplepaths}
\end{figure}

\subsection{Contributions}
In this work, we study this modeling choice in a dynamic pricing environment with unknown and noisy demand. We consider a stylized competitive market in which multiple sellers repeatedly set prices and estimate demand via iterated least squares, and sellers may either incorporate competitors' prices into their demand models (\textit{informed}) or ignore them (\textit{oblivious}). A key challenge is to analyze the market dynamics when oblivious sellers are present, as the asymmetry in information evolves and feeds back into the dynamics. Our contributions include the following: 
\begin{enumerate}
\item We show that oblivious sellers must explore more aggressively than a monopolist because a ``spiral-up'' effect compels them to escalate exploration to offset model misspecification, and we identify \emph{linear exploration} as the rational behavioral response that nevertheless incurs a persistent exploration tax (Section~\ref{sec:price-exploration-oblivious-modeling}).
\item We characterize market outcomes when all sellers are oblivious: under persistent (linear-rate) exploration, prices converge to the competitive (Nash) outcome, while decaying exploration produces a \emph{continuum of pseudo-equilibria}, mirroring \citet[\S4.3]{cooperLearningPricingModels2015} and explaining the puzzling outcomes of Figure~\ref{fig:samplepaths} (Section~\ref{sec:oblivious-sellers}).
\item We uncover an \emph{excursion} phenomenon in the price trajectories that explains observed collusive patterns, but these patterns are fragile and dissipate as learning progresses (Section~\ref{sec:ode-excursions}).
\item When informed sellers are present, they consistently learn the true demand function and the market converges to the competitive outcome \emph{without} the persistent exploration tax that oblivious sellers bear. In mixed markets, the informed strictly \emph{out-earn} the oblivious (Section~\ref{sec:informed-strategic-choice}).
\item Consolidating these three market compositions, we show that the choice between oblivious and informed modeling can be read as a strategy game (Table~\ref{tab:meta-game-intro}), in which \emph{informed} strictly dominates \emph{oblivious} and the unique strict pure-strategy Nash equilibrium is the \emph{all-informed market}. The same conclusion carries over to a general $N$-seller setting (Section~\ref{sec:meta-strategy}).
\end{enumerate}

\begin{table}[H]
\centering
\renewcommand{\arraystretch}{1.8}
\setlength{\tabcolsep}{12pt}
\begin{tabular}{c|c|c}
\hline
 & \textbf{\textsf{oblivious}} & \textbf{\textsf{informed}} \\
\hline
\textbf{\textsf{oblivious}} &
\textcolor{red}{$<$\,Nash}, \textcolor{red}{$<$\,Nash} &
\textcolor{red}{$<$\,Nash}, \textcolor{green!50!black}{$\ge$\,Nash} \\
\hline
\textbf{\textsf{informed}} &
\textcolor{green!50!black}{$\ge$\,Nash}, \textcolor{red}{$<$\,Nash} &
\textcolor{green!50!black}{\textbf{Nash}}, \textcolor{green!50!black}{\textbf{Nash}} \\
\hline
\end{tabular}
\caption{The strategy game. Rows index seller~1, columns index seller~2. Each seller chooses whether to model competitor prices (informed) or not (oblivious). Cell entries summarize the asymptotic revenue earned by each seller relative to the Nash benchmark; Section~\ref{sec:meta-strategy} formalizes the payoffs via the surplus-capture ratio $S_i$ and shows that the unique strict pure-strategy Nash equilibrium of this game is \emph{(informed, informed)}. See also Table~\ref{tab:meta-revenue-summary} for the empirical counterpart.}
\label{tab:meta-game-intro}
\end{table}

\noindent Overall, oblivious demand modeling does not robustly sustain collusive pricing, and explicitly modeling competitors together with sufficient price exploration remains the reliable design principle for learning-based pricing in competitive markets.

\section{Literature Review} \label{sec:literature}
\paragraph{Algorithmic collusion.} The seminal work of \citet{calvanoArtificialIntelligenceAlgorithmic2020} drew significant attention by demonstrating through numerical experiments that Q-learning agents can learn to charge \textit{supracompetitive prices} and even exhibit collusive reward-punishment schemes. Numerous works followed up and added further numerical evidence; see, e.g., \citet{fishAlgorithmicCollusionLarge2024} and references therein. Most theoretical work on algorithmic collusion proceeds through the lens of a repeated Prisoner's Dilemma with a \emph{deterministic} payoff matrix, primarily for reinforcement learning (RL) algorithms \citep{hansenFrontiersAlgorithmicCollusion2021, banchioArtificialIntelligenceSpontaneous2022, abadaArtificialIntelligenceCan2023, possnigReinforcementLearningCollusion2023, abadaCollusionMistakeDoes2024, douglasNaiveAlgorithmicCollusion2024, lambinLessMeetsEye2024, bertrandQlearnersCanProvably2025}. Despite numerous insights, the above works are limited in that they do \emph{not} consider demand noise. Another line of work on algorithmic collusion stems from an algorithmic game-theoretic view and largely studies repeated (Bertrand) pricing games \citep{carteaAlgorithmicCollusionFolk2022, brownCompetitionPricingAlgorithms2023, arunachaleswaranAlgorithmicCollusionThreats2024, bichlerOnlineOptimizationAlgorithms2025} but under deterministic demands as well. Some works also design algorithms intended to collude \citep{aouadAlgorithmicCollusionAssortment2021, meylahnLearningColludePricing2022, lootsDataDrivenCollusion2023}. Empirical evidence on algorithmic collusion has been sparse overall, with \citet{musolffAlgorithmicPricingPrice2022} and \citet{assadAlgorithmicPricingCompetition2024} considering simplified settings. For more comprehensive reviews, we refer readers to \citet{denboerAlgorithmicCollusionMathematical2024} and \citet{abadaAlgorithmicCollusionWhere2025}.

\paragraph{Pricing and learning in competition.} A common result in multi-seller revenue management is that the system will achieve optimal regret with respect to a Nash equilibrium if every seller deploys a specific (often informed) algorithm (see, e.g., \citet{yangCompetitiveDemandLearning2024}). However, the often-used dynamic benchmark is not necessarily appropriate in the context of algorithmic collusion (Appendix~\ref{sec:dynamicbenchmark}). An earlier line of work studies multi-seller pricing with a fixed inventory and a finite time horizon; see \citet{chenRecentDevelopmentsDynamic2015} for a comprehensive review. Learning in competition is also related to learning in non-stationary environments, but the commonly assumed sublinear ``variation budget'' that constrains how much the environment can change over time (e.g., \citet{besbesNonStationaryStochasticOptimization2015}) is equivalent to assuming a priori that the competitor's price will converge in our case. Our convergence-to-Nash conclusion also has a long classical antecedent in game theory: when payoffs are \emph{known} and players adaptively best respond to historical averages of opponents' actions, the dynamics converge to the Nash set (see, e.g., \citet{milgrom1990rationalizability,milgrom1991adaptive} and the subsequent literature on adaptive learning in games). Our setting departs from this benchmark in that the demand parameters are unknown, so the question is no longer one of purely adaptive decision-making but also of joint estimation under (possibly misspecified) demand models.

\paragraph{Price exploration.} Most closely related to our work is \citet{cooperLearningPricingModels2015}, which considers two sellers with linear demand and studies the limit prices when sellers do not incorporate competitor prices in their modeling. Crucially, however, they do not consider price exploration (versus exploitation or greedy strategies). In the dynamic pricing literature, it has long been known that, without proper price exploration, the presence of demand noise can cause suboptimal performance for the sellers \citep{besbes2009dynamic, keskinDynamicPricingUnknown2014}. As a result, the algorithms analyzed in \citet{cooperLearningPricingModels2015}---as well as the works mentioned above that consider deterministic demand and/or no price exploration---may not reflect practical settings accurately.

\paragraph{Oblivious learning.} Besides \citet{cooperLearningPricingModels2015}, multiple works related to algorithmic collusion consider oblivious settings \citep{hansenFrontiersAlgorithmicCollusion2021, douglasNaiveAlgorithmicCollusion2024} without formally discussing this modeling decision. There is also a fruitful line of work whose focus is to \textit{design} a distributed algorithm such that, if it is deployed by all sellers, the market converges to a Nash outcome efficiently; e.g., \citet{goyalLearningPriceCompetition2023} and \citet{liAdaptiveLearningUncertain2026} consider multinomial logit demand models and design online gradient descent algorithms. In contrast, our work lets market outcomes steer the agents' modeling choices and establishes that oblivious modeling is suboptimal. Two concurrent papers also study pricing algorithms with omitted competitive information. \citet{lin2025competition} study a duopoly in which both firms run misspecified monopoly-style OLS with persistent constant-variance perturbations. They derive a stability condition under which the market converges to the Nash equilibrium, and identify correlated exploration, timing advantages, and insufficient experimentation as channels for supracompetitive prices. In comparison, to answer whether sellers should be oblivious in the first place, we consider $N$-seller markets with possibly decaying and heterogeneous exploration, place the exploration behavior of oblivious sellers at the center of the analysis (Sections~\ref{sec:price-exploration-oblivious-modeling}--\ref{sec:oblivious-sellers}), and develop mixed markets and the model-choice strategy game (Sections~\ref{sec:informed-strategic-choice}--\ref{sec:meta-strategy}). \citet{baek2026misspecified} analyze an estimate-then-optimize pipeline in which firms first randomize prices over an initial window and then exploit a misspecified monopoly estimate, showing that the resulting omitted-variable bias can drive limiting prices to supracompetitive levels. Their results complement our analysis of regimes in which ongoing persistent excitation is absent or insufficient, where a pseudo-equilibrium continuum may arise (Section~\ref{sec:oblivious-sellers}).

\paragraph{Learning under misspecified models.} Our analysis connects to the literature on statistical learning under a misspecified model. Classical results show that least-squares and maximum-likelihood estimators converge to a ``pseudo-true'' parameter \citep{berkLimitingBehaviorPosterior1966, whiteMaximumLikelihoodEstimation1982}. \citet{espondaBerkNashEquilibrium2016} extends this idea to a game-theoretic setting and proposes a relevant solution concept, whereas we show explicit convergence. Closer to our pricing context, \citet{besbesSurprisingSufficiencyLinear2015} and \citet{nambiarDynamicLearningPricing2019} study single-seller misspecification, whereas we explicitly consider a multi-seller setting where the misspecification comes from ignoring competitors with dynamic and interacting prices. \citet{light2026experimentation} link limiting experimentation covariance objects to a conjectural-variations equilibrium, recovering Nash when the induced bias vanishes. They characterize the equilibrium selected by limiting covariance objects, while we derive the relevant covariance behavior from the endogenous learning dynamics and show that persistent excitation drives the corresponding ratio to zero.

\section{Model}\label{sec:model}

\subsection{Demand Model}\label{sec:demand-model}
We consider a market with $N \ge 2$ sellers indexed by $[N] = \{1,\ldots,N\}$, where each seller offers a single product. Time is discrete and indexed by $n \in \mathbb{Z}_+$. Let $\mathbf{p}_n = (p_{n,1},\ldots,p_{n,N}) \in \mathbb{R}^N$ denote the vector of prices at time $n$. The demand faced by seller $i$ at time $n$ follows a linear model:
\begin{equation}\label{eq:demand}
d_{n,i} = \alpha_i + \beta_i p_{n,i} + \sum_{j \neq i} \gamma_{i,j} p_{n,j} + \varepsilon_{n,i},
\end{equation}
where the following assumptions hold:
\begin{itemize}
\item Prices are bounded: $p_{n,i} \in [l,u]$ for all $n$ and $i$, with $u>l>0$.
\item The noise terms $\varepsilon_{n,i}$ are i.i.d.\ across $n$ and $i$, with zero mean and bounded support.
\item The demand parameters satisfy $\alpha_i>0$, $\beta_i<0$, $\gamma_{i,j}>0$ for all $i\neq j$, and $-\beta_i>\gamma_i$ for all $i$, where $\gamma_i\triangleq \sum_{j\neq i}\gamma_{i,j}$.
\end{itemize}
These assumptions are standard in the dynamic pricing literature \citep{keskinDynamicPricingUnknown2014, cooperLearningPricingModels2015}. They ensure that expected demand is non-negative, products are imperfect substitutes, own-price effects dominate cross-price effects, and the competitive benchmark is well-defined. The price bounds reflect practical considerations, such as marginal costs and reasonable upper limits on prices. Under the true demand model, we distinguish sellers by whether they explicitly model competitive effects.

\begin{definition}[\textbf{Oblivious vs. informed}]
Seller $i$ is said to be \emph{oblivious} if they model demand using a misspecified monopolistic form,
\[
d_{n,i} = a_i + b_i p_{n,i} + \varepsilon_{n,i}.
\]
Seller $i$ is \emph{informed} if they correctly specify demand according to~\eqref{eq:demand}.
\end{definition}

\paragraph{Remark.}
Oblivious modeling may arise for several reasons. A seller may lack reliable access to competitor prices, face substantial missing data, or prefer an operationally simpler model. Even when competitor prices are observable, it may be unclear how to incorporate them effectively. See \citet{cooperLearningPricingModels2015} for more discussion of this modeling choice.

\subsection{Learning and Pricing Dynamics}\label{sec:learning-dynamics}
All sellers, oblivious or informed, follow the same learning-and-pricing protocol, differing only in the demand model they estimate. Each seller repeatedly estimates demand via least squares, computes a myopic revenue-maximizing price based on the estimate, and adds random perturbations to ensure exploration. At time $n$, an oblivious seller $i$ observes their own price-demand history $\mathcal{H}_{n,i}^{ob} = \{(p_{m,i}, d_{m,i})\}_{m=1}^n$ only, whereas an informed seller $i$ observes the full price vector, $\mathcal{H}_{n,i}^{in} = \{(\mathbf{p}_m, d_{m,i})\}_{m=1}^n$. No seller observes other sellers' historical demands. Define the regressors
\[
x_{n,i}^{ob} = (1, p_{n,i})^\top \in \mathbb{R}^2,
\qquad
x_{n,i}^{in} = (1, p_{n,1},\ldots,p_{n,N})^\top \in \mathbb{R}^{N+1}.
\]
An oblivious seller thus estimates two parameters, while an informed seller estimates $N+1$ parameters. For each seller $i$, let $\Theta_i^{ob}\subset\mathbb{R}^2$ and $\Theta_i^{in}\subset\mathbb{R}^{N+1}$ denote known compact and convex parameter sets. For each oblivious seller $i$, we write $\Theta_i^{ob}=[\underline a_i,\bar a_i]\times[\underline b_i,\bar b_i]$ with $0<\underline a_i<\bar a_i$ and $\underline b_i<\bar b_i<0$, so the oblivious slope estimate stays negative; similarly, for each informed seller $j$ we assume that the own-price coordinate is bounded above by a negative constant throughout $\Theta_j^{in}$. At time $n$, seller $i$ computes the least-squares estimator and projects it onto the feasible set:
\[
\tilde{\theta}_{n,i}^{ob} = \argmin_{\theta\in\mathbb{R}^2}\sum_{m=1}^n [d_{m,i} - (x_{m,i}^{ob})^\top\theta]^2,
\qquad
\hat{\theta}_{n,i}^{ob} = \pi_{\Theta_i^{ob}}(\tilde{\theta}_{n,i}^{ob}),
\]
\[
\tilde{\theta}_{n,i}^{in} = \argmin_{\theta\in\mathbb{R}^{N+1}}\sum_{m=1}^n [d_{m,i} - (x_{m,i}^{in})^\top\theta]^2,
\qquad
\hat{\theta}_{n,i}^{in} = \pi_{\Theta_i^{in}}(\tilde{\theta}_{n,i}^{in}),
\]
where $\pi_{\mathcal C}$ denotes Euclidean projection onto the set $\mathcal C$. The oblivious seller's expected single-period revenue under price $p$ is then estimated as
$$
\hat{r}_{n,i}^{ob}(p, \hat{\theta}_{n,i}^{ob}) = p \cdot \left(\hat{a}_{n,i} + \hat{b}_{n,i} p\right).
$$
For the informed seller, the expected single-period revenue under price $p$ given a vector of \emph{predicted competitors' prices} $\hat{\mathbf{p}}_{n+1,-i} = (\hat p_{n+1,j})_{j \neq i}$ is estimated as
$$
\hat{r}_{n,i}^{in}(p, \hat{\mathbf{p}}_{n+1,-i}, \hat{\theta}_{n,i}^{in}) = p \cdot \left(\hat{\alpha}_{n,i} + \hat{\beta}_{n,i} p + \sum_{j \neq i} \hat{\gamma}_{n,i,j}\, \hat p_{n+1,j}\right).
$$
Define the myopic revenue-maximizing prices as
\begin{equation}\label{eq:greedy-prices}
\phi^{ob}(\theta^{ob}) = \argmax_{p \in [l,u]} \hat{r}_{n,i}^{ob}(p, \theta^{ob}), \quad \phi_i^{in}(\theta^{in}, \hat{\mathbf{p}}_{n+1,-i}) = \argmax_{p \in [l,u]} \hat{r}_{n,i}^{in}(p, \hat{\mathbf{p}}_{n+1,-i}, \theta^{in}).
\end{equation}
Since both revenue functions are strictly concave quadratics in own price (with $\hat{b}_{n,i}<0$ and $\hat{\beta}_{n,i}<0$, respectively), the unconstrained greedy prices are
$$
\tilde p_{n+1,i}^{ob} = -\frac{\hat{a}_{n,i}}{2 \hat{b}_{n,i}}, \quad \tilde p_{n+1,i}^{in} = -\frac{\hat{\alpha}_{n,i} + \sum_{j \neq i} \hat{\gamma}_{n,i,j}\, \hat p_{n+1,j}}{2 \hat{\beta}_{n,i}}.
$$
We assume that $\phi^{ob}(\Theta_i^{ob}) \subset (l,u)$ for every oblivious seller $i$, so that the oblivious greedy price always lies in the feasible range; this is a standard interior assumption in dynamic pricing \citep{denBoerSimultaneouslyLearningOptimizing2014, keskinDynamicPricingUnknown2014, besbesSurprisingSufficiencyLinear2015}. For informed sellers, we assume that the true parameters are in the interior of the feasible set, i.e., $(\alpha_i, \gamma_{i,1}, \ldots, \gamma_{i,i-1}, \beta_i, \gamma_{i,i+1}, \ldots, \gamma_{i,N})^\top \in \mathrm{int}(\Theta_i^{in})$ for all $i$, with $\beta_i$ in the position that multiplies $p_{n,i}$ in $x_{n,i}^{in}$. To explore, all sellers then add a random perturbation $z_{n+1,i}$ to the greedy price, where $z_{n,i}$ are independent across $n$ and $i$ with mean zero, variance $\nu_{n,i}^2$, and bounded support. The final price is
$$
p_{n+1,i} = \begin{cases}
\tilde p_{n+1,i}^{ob} + z_{n+1,i}, & \text{if seller } i \text{ is oblivious}, \\
\tilde p_{n+1,i}^{in} + z_{n+1,i}, & \text{if seller } i \text{ is informed}.
\end{cases}
$$
Price feasibility can be ensured by truncation or by enlarging the bounds $[l,u]$.

\paragraph{Remark.}
Our model follows a standard \emph{estimate--exploit--explore} paradigm that isolates the role of information and misspecification while remaining analytically tractable. We have deliberately left the dithering terms $z_{n,i}$ unspecified, treating the exploration schedule as a strategic choice to be studied in subsequent sections. Table~\ref{tab:notation} in Appendix~\ref{sec:notation} consolidates the notation used throughout the paper.

\paragraph{Comparison with related model choices.}
Two alternative modeling choices are common in the algorithmic-collusion literature. Finite price grids paired with bandit-style or RL algorithms (Section~\ref{sec:literature}) admit no-regret guarantees but discard the parametric demand structure, and existing theoretical analyses are often confined to two-price grids. The (multinomial) logit demand model \citep{aouadAlgorithmicCollusionAssortment2021, goyalLearningPriceCompetition2023, lootsDataDrivenCollusion2023} differs informationally: every successful sale by one seller is a sale a competitor did not make, so sellers observe a partial signal about rival demand from their own data, whereas our linear-demand model with unknown intercept offers no such handle and is therefore strictly harder.

\subsection{Solution Concepts}\label{sec:solution-concept}
We benchmark learning dynamics against two full-information outcomes: the \emph{competitive outcome} (Nash equilibrium) and the \emph{collusive outcome} (cartel pricing). In a pure-strategy Nash equilibrium (NE), each seller best responds to competitors' prices. The unconstrained equilibrium price vector $\mathbf{p}^{NE}$ satisfies
\[
p_i^{NE} = \argmax_{p_i\in\mathbb R} p_i\!\left(\alpha_i + \beta_i p_i + \sum_{j\neq i}\gamma_{i,j}p_j^{NE}\right),
\qquad i\in[N].
\]
Let $\Gamma$ denote the $N\times N$ matrix with $\Gamma_{ii}=2\beta_i$ and $\Gamma_{ij}=\gamma_{i,j}$ for $i\neq j$, and let $\boldsymbol{\alpha}=(\alpha_1,\ldots,\alpha_N)^\top$. Since $-\beta_i>\gamma_i$ for all $i$, $\Gamma$ is strictly diagonally dominant with negative diagonal, hence invertible, yielding the unique equilibrium
\[
\Gamma \mathbf{p}^{NE} = -\boldsymbol{\alpha},
\qquad
\mathbf{p}^{NE} = -\Gamma^{-1}\boldsymbol{\alpha}.
\]
On the other hand, the unconstrained \emph{collusive outcome} maximizes total expected revenue:
\[
\mathbf{p}^{C} = \argmax_{\mathbf{p}\in\mathbb R^N}
\sum_{i=1}^N p_i\!\left(\alpha_i + \beta_i p_i + \sum_{j\neq i}\gamma_{i,j}p_j\right).
\]
The first-order conditions yield the $N\times N$ symmetric matrix $H$ with $H_{ii}=2\beta_i$ and $H_{ij}=\gamma_{i,j}+\gamma_{j,i}$ for $i\neq j$. A unique collusive outcome
\[
\mathbf{p}^{C} = -H^{-1}\boldsymbol{\alpha}
\]
exists whenever $H$ is negative definite, a standard requirement in differentiated-demand oligopoly models \citep{singhPriceQuantityCompetition1984,vivesOligopolyPricingOld1999,choneLinearDemandSystems2020}. Define $\gamma_i^{\mathrm{col}}\triangleq \sum_{j\neq i}\gamma_{j,i}$, the total cross-price effect of seller $i$'s price on all other sellers' demands. For ease of exposition, we assume the slightly stronger diagonal-dominance condition $-2\beta_i>\gamma_i+\gamma_i^{\mathrm{col}}$ for all $i$, which is sufficient for $H\prec 0$ and additionally renders both $-\Gamma$ and $-H$ into $M$-matrices, so that the clean component-wise ordering $\mathbf{p}^{C}\ge\mathbf{p}^{NE}$ holds and the discussion of competition vs. collusion is cleaner.\footnote{The strengthening is not invoked in any of our proofs and all formal results carry over verbatim under the weaker requirement $H\prec 0$. When cross-price effects are symmetric ($\gamma_{i,j}=\gamma_{j,i}$ for all $i\neq j$), the strengthening reduces to the standing condition $-\beta_i>\gamma_i$.} To make the discussion of competitive and collusive outcomes meaningful, we also assume that $\mathbf{p}^{NE},\mathbf{p}^{C}\in(l,u)^N$.

\paragraph{Remark.}
The competitive outcome can be interpreted as an indication of market efficiency. Under this outcome, no seller can unilaterally improve their revenue by deviating from the price vector and consumer welfare is preserved. The collusive outcome, on the other hand, maximizes the \emph{total} revenue of the sellers. It represents the case that sellers act as if they have formed a cartel and 
charge the monopoly price, which negatively impacts consumers because $\mathbf{p}^{C} \ge \mathbf{p}^{NE}$ component-wise.\footnote{In asymmetric markets, the joint-revenue-maximizing prices may disproportionately benefit some sellers, and richer notions of collusion additionally specify how gains are divided \citep{osborneProfitSharingCollusive1983,fischerCollusionBargainingAsymmetric2019}. Since any such outcome is bounded above by $\mathbf{p}^{C}$, our choice of benchmark keeps the exposition straightforward and leaves the separate question of how collusive revenues are distributed to future work.}

\subsection{Performance Metric}\label{sec:performance-metric}
\paragraph{Price convergence.}
Much of the competitive dynamic pricing literature evaluates algorithms by their \emph{regret}---the cumulative revenue gap relative to an oracle that best-responds to the realized competitor profile at every step \citep{sternDynamicLearningStrategic2020, li2024lego, yangCompetitiveDemandLearning2024, liAdaptiveLearningUncertain2026}. Formally, write $R_i(p_i, \mathbf{p}_{-i}) \triangleq p_i(\alpha_i + \beta_i p_i + \sum_{j\neq i}\gamma_{i,j} p_j)$ for seller~$i$'s expected per-period revenue and $\phi_i^{in}(\theta_i, \mathbf{p}_{-i}) \triangleq (\alpha_i + \sum_{j\neq i}\gamma_{i,j} p_j)/(-2\beta_i)$ for the full-information best response. The regret of seller~$i$ over horizon $T$ is
\begin{equation}\label{eq:dynamic-regret-main}
\Delta_i(\theta_i, T) \;\triangleq\; \sum_{n=1}^{T} \EE_{n-1}\!\left[R_i\!\left(\phi_i^{in}(\theta_i, \mathbf{p}_{n,-i}), \mathbf{p}_{n,-i}\right) - R_i\!\left(p_{n,i}, \mathbf{p}_{n,-i}\right)\right].
\end{equation}
A standard workflow in monopolistic dynamic pricing often rewrites $\Delta_i$ as the cumulative squared distance from the per-period price to a fixed reference price---the monopoly optimum \citep{keskinDynamicPricingUnknown2014}. The same reformulation applied in our competitive setting (cf. Appendix~\ref{sec:dynamicbenchmark}) uncovers the reference to be the Nash equilibrium:
\begin{equation}\label{eq:regret-equiv-main}
\Delta_i(\theta_i, T) \;\lesssim\; \sum_{n=1}^{T} \EE_{n-1}\!\norm{\mathbf{p}_n - \mathbf{p}^{NE}}_2^2,
\qquad
\sum_{i=1}^{N} \Delta_i(\theta_i, T) \;\asymp\; \sum_{n=1}^{T} \EE\,\norm{\mathbf{p}_n - \mathbf{p}^{NE}}_2^2.
\end{equation}
Minimizing regret is therefore equivalent to driving prices toward the Nash equilibrium. In particular, sellers who sustain a collusive outcome and earn higher revenues than at $\mathbf{p}^{NE}$ nevertheless incur linear regret, revealing an inherent tension between the benchmark and the phenomenon of interest. Accordingly, our analysis foregrounds \emph{price convergence}---characterizing where prices converge to ($\mathbf{p}^{NE}$, $\mathbf{p}^C$, or elsewhere) and at what rate---and discusses implications for regret as a complement.

\paragraph{Surplus-capture ratio.}
To normalize realized time-averaged revenue across markets with different demand parameters, we follow the algorithmic-collusion literature \citep{calvanoArtificialIntelligenceAlgorithmic2020} and define seller~$i$'s \emph{surplus-capture ratio} along a price path $\{\mathbf{p}_n\}_{n\ge 1}$ by
\begin{equation}\label{eq:surplus-capture}
S_i \;\triangleq\; \liminf_{n\to\infty}\;
\frac{\frac{1}{n}\sum_{m=1}^{n} p_{m,i}\, d_{m,i} \;-\; \Pi_i^{NE}}{\Pi_i^{C} \;-\; \Pi_i^{NE}},
\end{equation}
where $d_{m,i}$ is the realized demand at time~$m$, and the seller-specific per-period revenues at the competitive and collusive outcomes are
\[
\Pi_i^{NE} \;\triangleq\; p_i^{NE}\!\left(\alpha_i + \beta_i p_i^{NE} + \sum_{j\neq i}\gamma_{i,j} p_j^{NE}\right),
\qquad
\Pi_i^{C} \;\triangleq\; p_i^{C}\!\left(\alpha_i + \beta_i p_i^{C} + \sum_{j\neq i}\gamma_{i,j} p_j^{C}\right).
\]
By construction, $S_i = 0$ means that seller~$i$ exactly matches their Nash benchmark on average, $S_i = 1$ means that they match their collusive benchmark, and the super-collusive case $S_i > 1$ and the sub-Nash case $S_i < 0$ are both possible. The ratio is unit-free and well-defined seller-by-seller from $\mathbf{p}^{NE}$ and $\mathbf{p}^{C}$. The denominator $\Pi_i^{C} - \Pi_i^{NE}$ is strictly positive whenever the market is symmetric; when sellers are asymmetric we additionally assume $\Pi_i^{C} > \Pi_i^{NE}$ for every $i$. We will use $S_i$ as a metric for revenue in both theoretical and numerical analyses.

\section{Price Exploration in Oblivious Modeling} \label{sec:price-exploration-oblivious-modeling}
In monopolistic dynamic pricing with unknown demand, the role of price exploration is relatively well understood; see, e.g., \citet{besbes2009dynamic} and \citet{keskinDynamicPricingUnknown2014}. In contrast, under \emph{oblivious} demand modeling, estimation is misspecified because competitor prices are omitted, affecting both what can be learned and what exploration is needed for stable performance. Rather than imposing an ad hoc exploration rule, in this section we adopt the perspective of an oblivious seller and ask: \emph{what exploration behavior is defensible from first principles?} The answer will serve as a baseline behavioral prediction for oblivious sellers and as the foundation for our market-level analysis. To set the stage, we first introduce some key quantities.

Fix an oblivious seller $i$. Define their \emph{cumulative exploration} up to time $n$ by
\[
J_{n,i}
\;\triangleq\;
\sum_{m=1}^n \left(p_{m,i} - \bar p_{n,i}\right)^2
\;=\;
\sum_{m=2}^{n} \left(1 - \frac{1}{m}\right)\left(p_{m,i} - \bar p_{m-1,i}\right)^2,
\]
where $\bar p_{n,i}\triangleq \frac{1}{n}\sum_{m=1}^n p_{m,i}$. The growth rate of $J_{n,i}$ captures how much variation seller $i$ injects into their own prices, and we use \emph{exploration rate} to refer to the asymptotic growth of $J_{n,i}$ (e.g., linear vs.\ sublinear) as a function of $n$. We assume throughout that the first two prices are not equal so that $J_{n,i}>0$ for all $n\ge 2$. Two more auxiliary quantities will be used repeatedly in what follows:
\[
w_{n,i}
\;\triangleq\;
\frac{1}{J_{n,i}} \sum_{m=1}^n (p_{m,i} - \bar p_{n,i})\,\varepsilon_{m,i},
\qquad
r_{n,i\leftarrow j}
\;\triangleq\;
\frac{1}{J_{n,i}} \sum_{m=1}^{n} (p_{m,i} - \bar p_{n,i})(p_{m,j} - \bar p_{n,j}).
\]
The term $w_{n,i}$ is a \emph{self-noise} term capturing the correlation between seller $i$'s price deviations and idiosyncratic demand shocks. The term $r_{n,i\leftarrow j}$ captures the \emph{omitted-variable channel}: it is a normalized empirical cross-covariance between seller $i$'s prices and seller $j$'s prices, and it determines how competitor price variation contaminates seller $i$'s misspecified regression.

To make the dependence on $(w_{n,i}, r_{n,i\leftarrow j})$ explicit, we write each oblivious parameter set as a rectangle $\Theta_i^{ob}=[\underline a_i,\bar a_i]\times[\underline b_i,\bar b_i]$ with $0<\underline a_i<\bar a_i$ and $\underline b_i<\bar b_i<0$. A direct least-squares calculation and projection yields, for all $n\ge 2$,
\[
\hat{a}_{n,i}
=
\pi_{[\underline a_i,\bar a_i]}\!\left(
\alpha_i
+
\sum_{j\neq i}\gamma_{i,j}\left(\bar p_{n,j} - r_{n,i\leftarrow j}\bar p_{n,i}\right)
+
\bar\varepsilon_{n,i}
-
w_{n,i}\bar p_{n,i}
\right),
\]
\[
\hat{b}_{n,i}
=
\pi_{[\underline b_i,\bar b_i]}\!\left(
\beta_i
+
\sum_{j\neq i}\gamma_{i,j}\,r_{n,i\leftarrow j}
+
w_{n,i}
\right),
\]
where $\bar\varepsilon_{n,i}\triangleq \frac{1}{n}\sum_{m=1}^n \varepsilon_{m,i}$. Consequently, the oblivious seller's greedy (conditional-mean) price at time $n+1$ is
\begin{equation}\label{eq:obliviousgreedynextprice}
\tilde p_{n+1,i}
=
\frac{
\pi_{[\underline a_i,\bar a_i]}\!\left(
\alpha_i
+
\sum_{j\neq i}\gamma_{i,j}\left(\bar p_{n,j} - r_{n,i\leftarrow j}\bar p_{n,i}\right)
+
\bar\varepsilon_{n,i}
-
w_{n,i}\bar p_{n,i}
\right)
}{
-2\,
\pi_{[\underline b_i,\bar b_i]}\!\left(
\beta_i
+
\sum_{j\neq i}\gamma_{i,j}\,r_{n,i\leftarrow j}
+
w_{n,i}
\right)
}.
\end{equation}
This expression isolates two distinct sources of instability for oblivious learning: idiosyncratic demand noise via $w_{n,i}$ and $\bar\varepsilon_{n,i}$, and strategic interaction via the cross terms $r_{n,i\leftarrow j}$.

\subsection{Baseline: Divergent Exploration Rate}\label{sec:divergent-exploration}

We first examine the role of demand noise. By the strong law of large numbers, $\bar\varepsilon_{n,i}\to 0$ a.s., so the relevant term is $w_{n,i}$. Intuitively, $w_{n,i}$ should vanish under effective learning, otherwise idiosyncratic shocks persistently distort demand estimates and pricing decisions. The next theorem formalizes this link between exploration and noise attenuation; the proof is in Appendix~\ref{proof:thm:impactdemandnoise}.

\begin{theorem}[\textbf{Impact of demand noise}]\label{thm:impactdemandnoise}
As the horizon $n \to \infty$, the following holds:
\begin{enumerate}
\item On the event that $\frac{J_{n,i}}{\log n \log \log n} \rightarrow \infty$, we have $w_{n,i} \rightarrow 0$.
\item On the event that $\sum_{m=2}^{\infty} \left(p_{m,i} - \bar p_{m-1,i}\right)^2 (\log m)^{1+\delta} < \infty$ for some $\delta>0$, we have $w_{n,i} \rightarrow w_{\infty,i}$ for some a.s.\ finite random variable $w_{\infty,i}$.
\end{enumerate}
\end{theorem}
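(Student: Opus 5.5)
The plan is to rewrite the numerator of $w_{n,i}$ as a martingale transform with predictable weights and then use martingale limit theorems. First, the centering by $\bar p_{n,i}$ is not predictable, so the first step is to expand it recursively. Set
\[
S_n\triangleq\sum_{m=1}^n (p_{m,i}-\bar p_{n,i})\varepsilon_{m,i}.
\]
A direct computation, analogous to the identity defining $J_{n,i}$, gives
\[
S_n=\sum_{m=2}^n\Bigl(1-\tfrac1m\Bigr)(p_{m,i}-\bar p_{m-1,i})\bigl(\varepsilon_{m,i}-\bar\varepsilon_{m-1,i}\bigr).
\]
An equivalent decomposition is $S_n=\sum_{m\le n}(p_{m,i}-\bar p_{m-1,i})\varepsilon_{m,i}$ minus a correction involving $n\,\bar\varepsilon_{n,i}(\bar p_{n,i}-\cdot)$.

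The cleaner route is to write $S_n=M_n-R_n$, where
\[
M_n\triangleq\sum_{m=2}^n c_m\,\varepsilon_{m,i},\qquad c_m\triangleq\bigl(1-\tfrac1m\bigr)(p_{m,i}-\bar p_{m-1,i}).
\]
Here $c_m$ is $\mathcal F_{m-1}$-measurable given $z_{m,i}$. This holds because $p_{m,i}$ depends on $\varepsilon_{1..m-1}$ and on the independent dither $z_{m,i}$, but not on $\varepsilon_{m,i}$. Hence $M_n$ is a martingale with bounded-support increments and quadratic variation $\langle M\rangle_n\asymp\sum c_m^2\asymp J_{n,i}$. The remainder collects the terms $c_m\bar\varepsilon_{m-1,i}$. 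I would rather handle it by the exact identity $\sum_m(p_m-\bar p_n)\varepsilon_m=\sum_m(p_m-\bar p_n)(\varepsilon_m-\bar\varepsilon_n)$ together with Cauchy--Schwarz. That gives a bound of the form
\[
|R_n|\lesssim \sqrt{J_{n,i}}\cdot\sqrt{n}\,|\bar\varepsilon_{n,i}|\lesssim \sqrt{J_{n,i}\log\log n}
\]
by the LIL. Dividing by $J_{n,i}$, this remainder vanishes whenever $J_{n,i}\to\infty$.

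For part 1, I would apply the martingale law of the iterated logarithm / strong law for martingales with bounded increments (e.g.\ Stout's or de la Peña's self-normalized bounds, or the Lai--Wei lemma used for least squares). On $\{\langle M\rangle_\infty=\infty\}$, this gives
\[
M_n=O\Bigl(\sqrt{\langle M\rangle_n\log\langle M\rangle_n}\Bigr),\quad\text{or more sharply}\quad O\Bigl(\sqrt{\langle M\rangle_n\log\log\langle M\rangle_n}\Bigr).
\]
Since $\langle M\rangle_n\lesssim J_{n,i}\lesssim n$, we get $|M_n|/J_{n,i}\lesssim\sqrt{\log n/J_{n,i}}\to0$ under the hypothesis $J_{n,i}/(\log n\log\log n)\to\infty$, with room to spare. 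The stated rate suggests the authors used a cruder bound of the Lai--Wei form $o(\langle M\rangle_n^{1/2}(\log\langle M\rangle_n)^{1/2+\eta})$ or a discretization with a union bound, and I would match whichever suffices.

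For part 2, under the summability hypothesis, $\sum c_m^2(\log m)^{1+\delta}<\infty$ forces $\sum c_m^2<\infty$. Hence $J_{n,i}\uparrow J_{\infty,i}<\infty$, a positive limit since the first two prices differ. The martingale $M_n$ then has a.s.\ finite quadratic variation, so $M_n\to M_\infty$ a.s.\ by the martingale convergence theorem on $\{\langle M\rangle_\infty<\infty\}$.

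The delicate piece, and the main obstacle, is the centering remainder $\sum_m c_m\bar\varepsilon_{m-1,i}$, because $\bar\varepsilon_{m-1,i}$ is only of order $\sqrt{\log\log m/m}$. Here the $(\log m)^{1+\delta}$ weight enters. By Cauchy--Schwarz,
\[
\sum_m|c_m||\bar\varepsilon_{m-1,i}|\le\Bigl(\sum c_m^2(\log m)^{1+\delta}\Bigr)^{1/2}\Bigl(\sum_m \bar\varepsilon_{m-1,i}^2(\log m)^{-1-\delta}\Bigr)^{1/2}.
\]
Unfortunately $\bar\varepsilon_m^2\sim 1/m$ makes the second sum $\sum 1/(m(\log m)^{1+\delta})$, which converges. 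So this sum converges absolutely a.s., and $S_n$ converges. Dividing by the limit $J_{\infty,i}>0$ then gives $w_{n,i}\to w_{\infty,i}$ a.s.\ finite. To make the $\bar\varepsilon^2\lesssim1/m$ step rigorous I would use $\mathbb E\sum\bar\varepsilon_m^2(\log m)^{-1-\delta}<\infty$, which yields a.s.\ finiteness directly.
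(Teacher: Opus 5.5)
Your decomposition $S_n=M_n-R_n$, with $c_m=(1-\tfrac1m)(p_{m,i}-\bar p_{m-1,i})$ and $R_n=\sum_{m=2}^n c_m\bar\varepsilon_{m-1,i}$, is the paper's. Your martingale step and all of part~2 are sound. Your expectation bound $\mathbb E\sum_m\bar\varepsilon_{m-1,i}^2(\log m)^{-1-\delta}<\infty$ is even slightly cleaner than the paper's LIL route.

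The gap is the remainder in part~1. The identity $\sum_m(p_{m,i}-\bar p_{n,i})\varepsilon_{m,i}=\sum_m(p_{m,i}-\bar p_{n,i})(\varepsilon_{m,i}-\bar\varepsilon_{n,i})$ holds only because $\bar\varepsilon_{n,i}\sum_m(p_{m,i}-\bar p_{n,i})\equiv 0$. It carries no information about $R_n$. That term comes from replacing the non-predictable center $\bar p_{n,i}$ by $\bar p_{m-1,i}$, and it involves all past averages $\bar\varepsilon_{m-1,i}$, not just $\bar\varepsilon_{n,i}$. Applied to the identity itself, Cauchy--Schwarz only gives $|S_n|\lesssim\sqrt{nJ_{n,i}}$, which is useless.

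In fact, both your bound $|R_n|\lesssim\sqrt{J_{n,i}\log\log n}$ and the conclusion that $R_n/J_{n,i}\to 0$ whenever $J_{n,i}\to\infty$ fail for general predictable prices, which is all either argument uses. Take $p_{m,i}-\bar p_{m-1,i}=\mathrm{sign}(\bar\varepsilon_{m-1,i})\,m^{-1/2}$. Then:
\begin{itemize}
\item $J_{n,i}\asymp\log n$ deterministically;
\item $R_n\ge 0$ with $\mathbb E R_n\asymp\sum_m m^{-1/2}\,\mathbb E|\bar\varepsilon_{m-1,i}|\asymp\log n$;
\item $R_n/J_{n,i}$ is bounded in $L^2$, since $(R_n/J_{n,i})^2\le\sum_{m\le n}\bar\varepsilon_{m-1,i}^2/J_{n,i}$.
\end{itemize}
Hence $R_n/J_{n,i}$ cannot converge to $0$ almost surely. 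Since $M_n/J_{n,i}\to 0$ in this example, $w_{n,i}$ cannot tend to $0$ a.s.\ either. So a logarithmic threshold on $J_{n,i}$ is genuinely needed. It is forced by the remainder, not by the martingale term as you guessed.

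The repair is the Cauchy--Schwarz step you already use in part~2, without weights: $|R_n|\le(\sum_m c_m^2)^{1/2}(\sum_{m\le n}\bar\varepsilon_{m-1,i}^2)^{1/2}$. The LIL $\bar\varepsilon_{m,i}=O(\sqrt{\log\log m/m})$ gives $\sum_{m\le n}\bar\varepsilon_{m-1,i}^2=O(\log n\log\log n)$ a.s. Therefore $|R_n|/J_{n,i}=O(\sqrt{\log n\log\log n/J_{n,i}})\to 0$, exactly under the stated hypothesis. For $M_n$, the plain martingale strong law $M_n/\langle M\rangle_n\to 0$ on $\{\langle M\rangle_\infty=\infty\}$ already suffices; no self-normalized LIL bound is needed. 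With this change your argument coincides with the paper's proof.
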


\paragraph{Discussion.}
Since $J_{n,i}$ is nondecreasing, each sample path satisfies either $J_{n,i}\to\infty$ or $J_{n,i}\to J_{\infty,i}<\infty$, and the two cases of Theorem~\ref{thm:impactdemandnoise} correspond to these scenarios up to log factors: idiosyncratic demand noise becomes asymptotically negligible only if exploration is unbounded, otherwise the long-run price (and revenue) depends on the random limit $w_{\infty,i}$ and is \emph{sample-path dependent}. This mirrors the well-known failure of certainty-equivalent control in monopolistic dynamic pricing \citep{keskinDynamicPricingUnknown2014, keskinIncompleteLearningCertaintyEquivalence2018}: competition does \emph{not} automatically save an oblivious seller, as $w_{n,i}$ vanishes only if the seller's \emph{own} prices exhibit sufficient dispersion. We henceforth impose the mild condition
\[
\frac{J_{n,i}}{\log n \log \log n} \longrightarrow \infty
\]
for all oblivious sellers, so that $w_{n,i}$ is asymptotically negligible. By Lemma~\ref{lem:Jn-lower}, this can be achieved by adding perturbations with cumulative variance diverging faster than $\log n \log\log n$ (e.g., $\mathrm{Var}(z_{n,i})\asymp n^{-c}$ for any $c\in[0,1)$).

\subsection{The ``Spiral-up'' Phenomenon: Linear Exploration Rate}\label{sec:spiralupphenomenon}

We now turn to the cross-seller terms $r_{n,i\leftarrow j}$, which mediate misspecification across sellers and, unlike $w_{n,i}$, need not vanish under divergent exploration. It is easy to verify that setting all $r_{n,i\leftarrow j}$ to zero recovers the Nash prices, and thus we can interpret $r_{n,i\leftarrow j}$ as the \emph{distortion} of seller $i$'s price due to misspecification.\footnote{Appendix~\ref{sec:cross-seller-propagation} gives an alternative interpretation of $r_{n,i\leftarrow j}$ in terms of empirical price correlations.} In this section, we ask what \emph{controls} $r_{n,i\leftarrow j}$, and what an oblivious seller's rational response to that control looks like. We begin with a sufficient condition under which seller $j$ is asymptotically immune to seller $i$'s variation; the proof is in Appendix~\ref{proof:lem:variancedominance}.

\begin{lemma}[Variance dominance]\label{lem:variancedominance}
Fix sellers $i \neq j$. On the event that $J_{n,i}/J_{n,j} \longrightarrow 0$, we have $r_{n,\, j \leftarrow i} \longrightarrow 0$.
\end{lemma}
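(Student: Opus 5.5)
This is a direct Cauchy--Schwarz argument on the empirical cross-covariance. By definition,
\[
r_{n,j\leftarrow i} \;=\; \frac{1}{J_{n,j}}\sum_{m=1}^n (p_{m,j}-\bar p_{n,j})(p_{m,i}-\bar p_{n,i}),
\]
where the normalization is by seller $j$'s cumulative exploration $J_{n,j}$. The numerator is an inner product in $\mathbb{R}^n$ of the centered price vectors $(p_{m,j}-\bar p_{n,j})_{m\le n}$ and $(p_{m,i}-\bar p_{n,i})_{m\le n}$. Their squared Euclidean norms are exactly $J_{n,j}$ and $J_{n,i}$, by the first expression in the definition of $J_{n,\cdot}$.

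Applying Cauchy--Schwarz to the numerator gives
\[
\Bigl|\sum_{m=1}^n (p_{m,j}-\bar p_{n,j})(p_{m,i}-\bar p_{n,i})\Bigr| \;\le\; \sqrt{J_{n,j}\,J_{n,i}},
\]
and therefore
\[
|r_{n,j\leftarrow i}| \;\le\; \sqrt{J_{n,i}/J_{n,j}}.
\]
This bound holds pathwise for every $n\ge 2$. Division is legitimate because $J_{n,j}>0$ for $n\ge2$, by the standing assumption that the first two prices differ.

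On the event $J_{n,i}/J_{n,j}\to 0$, the right-hand side tends to zero, so $r_{n,j\leftarrow i}\to 0$ on that event. No probabilistic input is needed: the statement is a deterministic consequence holding sample path by sample path, and it requires neither Theorem~\ref{thm:impactdemandnoise} nor any structure of the pricing dynamics.

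There is essentially no obstacle. The only point to check is that the normalization in $r_{n,j\leftarrow i}$ is by $J_{n,j}$ rather than $J_{n,i}$; this is the direction that makes the ratio $\sqrt{J_{n,i}/J_{n,j}}$ appear and gives the asymmetric conclusion. I would also remark that the same bound shows $|r_{n,j\leftarrow i}|\le 1$ whenever $J_{n,i}\le J_{n,j}$.
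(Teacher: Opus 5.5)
Your proof is correct and is the paper's own argument: Cauchy--Schwarz on the centered price vectors gives the pathwise bound $|r_{n,j\leftarrow i}|\le\sqrt{J_{n,i}/J_{n,j}}$, which vanishes on the stated event. Your explicit remarks that the statement holds deterministically sample path by sample path, and that $J_{n,j}>0$ for $n\ge 2$ justifies the division, are small clarifications the paper leaves implicit.
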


To sharpen the revenue implication, we specialize to a symmetric duopoly with primitives $(\alpha,\beta,\gamma)$, and write the common oracle best response $\phi(q)\triangleq (\alpha + \gamma q)/(-2\beta)$ and the Nash price $p^{NE}=\alpha/(-2\beta-\gamma)$. The following proposition states the revenue implication conditionally on the dominated seller's greedy price converging; its proof is in Appendix~\ref{proof:prop:variancedominancetwosellercase}.

\begin{proposition}[Revenue ordering under variance dominance]\label{prop:variancedominancetwosellercase}
Index sellers by $i,j\in\{1,2\}$, $i\neq j$. Suppose $z_{n,k}\to 0$ a.s.\ for $k\in\{1,2\}$. On the event that $J_{n,i}/J_{n,j}\to 0$ and $\tilde p_{n,i}\to q_\infty$ for some $q_\infty \in [l,u]$, and provided that $(\alpha+\gamma q_\infty,\,\beta) \in \Theta_j^{ob}$, the surplus-capture ratios~\eqref{eq:surplus-capture} satisfy, a.s.:
\begin{enumerate}[label=(\alph*), leftmargin=2em]
\item if $q_\infty\ge p^{NE}$, then $S_j\ge S_i$, with strict inequality whenever $q_\infty>p^{NE}$;
\item if $q_\infty< p^{NE}$, then $S_i<0$ and $S_j<0$, i.e., both sellers fare worse than the competitive benchmark.
\end{enumerate}
\end{proposition}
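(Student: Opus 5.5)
Under the hypotheses, the plan is to identify the limits of both sellers' prices, then read off the surplus-capture ratios from the limiting expected revenues, and compare them by elementary calculus on a one-dimensional revenue function.

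\emph{Step 1: limits of prices.} Seller $i$'s greedy price converges to $q_\infty$ by hypothesis, and $z_{n,i}\to 0$, so $p_{n,i}\to q_\infty$ and $\bar p_{n,i}\to q_\infty$ (Cesàro). For seller $j$, Lemma~\ref{lem:variancedominance} applied with the roles as stated ($J_{n,i}/J_{n,j}\to 0$) gives $r_{n,j\leftarrow i}\to 0$. I will argue that $J_{n,j}\to\infty$ fast enough that $w_{n,j}\to 0$: this is the standing condition imposed after Theorem~\ref{thm:impactdemandnoise}, so Theorem~\ref{thm:impactdemandnoise}(1) applies. The strong law gives $\bar\varepsilon_{n,j}\to 0$. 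Plugging into~\eqref{eq:obliviousgreedynextprice}, the unprojected arguments converge to $(\alpha+\gamma q_\infty,\beta)$, which lies in $\Theta_j^{ob}$ by assumption. Continuity of the projection then gives $\tilde p_{n,j}\to (\alpha+\gamma q_\infty)/(-2\beta)=\phi(q_\infty)$, hence $p_{n,j}\to\phi(q_\infty)$. The term $r_{n,j\leftarrow i}\bar p_{n,j}$ is harmless since $\bar p_{n,j}$ is bounded.

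\emph{Step 2: limiting revenues.} Writing $R(p,q)=p(\alpha+\beta p+\gamma q)$, I will show $\frac1n\sum_m p_{m,k}d_{m,k}\to R(\text{own limit},\text{other limit})$ a.s. This holds because $p_{m,k}(\alpha+\beta p_{m,k}+\gamma p_{m,-k})$ converges (so its Cesàro mean does), and $\frac1n\sum p_{m,k}\varepsilon_{m,k}\to 0$ by a martingale SLLN, since $p_{m,k}$ is predictable given $z_{m,k}$ independent of $\varepsilon_{m,k}$ and everything is bounded. Thus the $\liminf$ in~\eqref{eq:surplus-capture} is a limit, and
\[
S_i=\frac{R(q_\infty,\phi(q_\infty))-\Pi^{NE}}{\Pi^C-\Pi^{NE}},\qquad S_j=\frac{R(\phi(q_\infty),q_\infty)-\Pi^{NE}}{\Pi^C-\Pi^{NE}},
\]
with a common positive denominator by symmetry.

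\emph{Step 3: comparison.} Define $D(q)=R(\phi(q),q)-R(q,\phi(q))$. Since $\phi(q)$ is the best response to $q$, we have $R(\phi(q),q)=(\alpha+\gamma q)^2/(-4\beta)$. Expanding gives $D(q)=-\beta(\phi(q)-q)^2+\gamma(q-\phi(q))\,q\cdot$, up to the exact algebra. The cleaner route is to write $D(q)=[R(\phi(q),q)-R(q,q)]+[R(q,q)-R(q,\phi(q))]$. The first bracket equals $-\beta(\phi(q)-q)^2\ge 0$. The second equals $\gamma q\,(q-\phi(q))$, which is nonnegative when $q\ge p^{NE}$, because $\phi(q)\le q$ iff $q\ge p^{NE}$ (as $\phi$ has slope $\gamma/(-2\beta)<1$ and fixed point $p^{NE}$). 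Hence $D(q_\infty)\ge 0$, with strictness when $q_\infty>p^{NE}$ via the first bracket; this proves (a). For (b), with $q<p^{NE}$ we have $q<\phi(q)<p^{NE}$. Seller $j$ gets $(\alpha+\gamma q)^2/(-4\beta)$, which is increasing in $q$ on the relevant range, so it is strictly less than its value at $p^{NE}$, namely $\Pi^{NE}$. Seller $i$ gets $R(q,\phi(q))\le R(\phi(\phi(q)),\phi(q))$, again a best-response value at argument $\phi(q)<p^{NE}$, so it is $<\Pi^{NE}$. Both $S<0$.

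The main obstacle is Step 1's justification that seller $j$'s self-noise vanishes and that seller $i$'s regressor effect is fully removed. This relies on the standing divergent-exploration assumption, and on handling the projection near the boundary; the latter is handled by the hypothesis that the limit point lies in $\Theta_j^{ob}$ together with continuity of $\pi$.
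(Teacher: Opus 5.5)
Your proof is correct and follows the paper's outline: the dominant seller's regression becomes well specified, so it best-responds to $q_\infty$; you then pass to limiting revenues and compare. Several local steps differ from the paper's.

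\textbf{Step~1.} You use continuity of the coordinatewise projection. This needs only $(\alpha+\gamma q_\infty,\beta)\in\Theta_j^{ob}$, which is exactly what the statement assumes. The paper instead says the projection is eventually inactive, which relies on that point being interior.

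\textbf{Step~2.} You obtain the revenue limits pathwise, from the convergence of realized prices (since $z_{n,k}\to 0$) plus a martingale SLLN for $\sum_m p_{m,k}\varepsilon_{m,k}$. The paper instead computes $\mathbb{E}[R_{n,k}\mid\mathcal{F}_{n-1}]=R_k(\tilde p_n)+\beta\nu_{n,k}^2$ and shows the Ces\`aro-averaged tax vanishes. That route is heavier here, but it is the one that extends to persistent exploration (Lemma~\ref{lem:asymp-revenue}).

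\textbf{Case (a).} The paper factors $\Delta(q)=[(-2\beta-\gamma)q-\alpha][(-2\beta+\gamma)q-\alpha]/(-4\beta)$ and reads off the roots. Your split $D(q)=-\beta(q-\phi(q))^2+\gamma q\,(q-\phi(q))$ is the same polynomial. It makes visible the two sources of the gap: the dominated seller's failure to best-respond, and the lower price it faces. You should delete the garbled ``up to the exact algebra'' sentence, since your bracket decomposition is already the exact identity. You should also state explicitly that $q\ge l>0$ is what gives the second bracket its sign.

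\textbf{Case (b), dominated seller.} The paper uses concavity of $q\mapsto R_i(q,\phi(q))$ with vertex at the Stackelberg price above $p^{NE}$, a claim that needs its own computation. Your bound $R(q,\phi(q))\le\max_p R(p,\phi(q))=(\alpha+\gamma\phi(q))^2/(-4\beta)<\Pi^{NE}$ uses only $\phi(q)<p^{NE}$ and monotonicity of the best-response value in the opponent's price. It is shorter and entirely elementary.
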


\paragraph{Discussion.}
Lemma~\ref{lem:variancedominance} says that when seller $j$'s price dispersion dwarfs seller $i$'s, the omitted-variable channel into $j$'s misspecified regression vanishes ($r_{n,j\leftarrow i}\to 0$); combined with the standing demand-noise attenuation ($w_{n,j}\to 0$), seller $j$ becomes asymptotically well-specified with respect to $i$ and best-responds to $i$'s empirical mean, even though $i$ may still suffer persistent misspecification from $j$'s variation. Reading the Nash revenue as the baseline each seller seeks to protect, Proposition~\ref{prop:variancedominancetwosellercase} shows that variance dominance hurts the dominated seller on both sides of that baseline. If its price settles above Nash (case~(a)), both prices are supracompetitive but the dominant seller---resting on its best-response frontier---captures the larger share; if it settles below Nash (case~(b)), the dominant best-responds downward and \emph{both} sellers fall below the competitive benchmark.

\paragraph{Numerical experiments.}
Though it is theoretically possible for the dominated seller's price to converge below Nash, we find that the dominated seller's price consistently converges above Nash empirically (Table~\ref{tab:variance-dominance-summary}). Further, consistent with Proposition~\ref{prop:variancedominancetwosellercase} case~(a), the dominant seller captures the lion's share of the supracompetitive surplus, out-earning the dominated by $S_j - S_i \in [0.79, 0.96]$. This shows that, even when variance dominance manufactures collusive-looking high prices, the resulting surplus is split sharply in favor of the dominant seller, and the dominated seller gains little from inducing such an outcome.\footnote{Code for all numerical experiments is available at \url{https://github.com/yw3453/ob-learn}.}

\begin{table}[!htbp]
\centering
\begin{tabular}{ccccccc}
\toprule
$\eta_1$ & $r_{T,2\leftarrow 1}$ & $r_{T,1\leftarrow 2}$ & $\bar p_{T,1}$ & $\bar p_{T,2}$ & $S_{1}$ & $S_{2}$ \\
\midrule
$0.5$ & $0.14$ & $0.95$ & $2.45$ & $2.09$ & $+0.00$ & $+0.96$ \\
$0.7$ & $0.18$ & $0.92$ & $2.46$ & $2.12$ & $+0.04$ & $+0.96$ \\
$1.0$ & $0.24$ & $0.92$ & $2.47$ & $2.15$ & $+0.10$ & $+0.96$ \\
$1.5$ & $0.26$ & $0.85$ & $2.44$ & $2.17$ & $+0.11$ & $+0.90$ \\
\bottomrule
\end{tabular}
\caption{Variance dominance in a symmetric oblivious duopoly. Both sellers use polynomial exploration schedules $\nu_{n,k}^2 = 0.05\,(n+1)^{-\eta_k}$; the dominant seller's exponent is fixed at $\eta_2 = 0.01$ and the dominated's exponent $\eta_1 \in \{0.5, 0.7, 1.0, 1.5\}$ varies the strength of the dominance event. Columns report the cross-regression ratios $r_{T,j\leftarrow i}$, running-mean prices $\bar p_{T,i}$, and surplus-capture ratios $S_i$. The dominated seller's price $\bar p_{T,1}$ settles above $p^{NE}\approx 1.79$ and the ordering $S_2 > S_1$ holds in every cell, illustrating case~(a) of Proposition~\ref{prop:variancedominancetwosellercase}. Implementation details are in Appendix~\ref{sec:experimental-details}.}
\label{tab:variance-dominance-summary}
\end{table}

\paragraph{``Spiral-up'' phenomenon and a behavioral prediction.}
Taken together, these results suggest that empirical price variance acts as a \emph{strategic resource} in oblivious markets. Being variance-dominated is a \emph{trap}: conditional on convergence, the dominated seller either settles below Nash---dragging both sellers under the competitive baseline---or settles above Nash but cedes the bulk of the supracompetitive surplus to the dominant seller, with its own gain over Nash far smaller (Table~\ref{tab:variance-dominance-summary}). A rational oblivious seller therefore has a strong incentive to ensure it is \emph{not} the dominated one, and the only sample-path-stable way to do so is to keep its own price variance growing at least at the competitor's order. This creates a natural \emph{spiral-up} logic: if sellers attempt to out-explore one another, rates escalate to the maximal sustainable order---linear,
\[
J_{n,i} = \Theta(n),
\]
since any $\omega(n)$ growth would require unbounded fluctuations and persistent clipping. By Lemma~\ref{lem:Jn-lower}, a linear rate is implementable with controllably small non-diminishing perturbations (e.g., $z_{n,i}\sim\mathrm{Unif}(-c,c)$ for $c>0$), at a persistent exploration tax of $|\beta|\nu^2$ per period (Proposition~\ref{prop:costlinearexploration}, Appendix~\ref{sec:costlinearexploration}). With the small $\nu^2$ values used in our experiments, this tax is small in $S$ units relative to the surplus-capture gap of $\approx 1$ in Table~\ref{tab:variance-dominance-summary}, so linear exploration functions as cheap ``insurance'': by matching the competitor's variance order a seller avoids domination, and---as the next section shows (Theorem~\ref{thm:globalconvergencetocompetitiveoutcome})---the resulting all-oblivious market converges to the competitive baseline, so every seller secures the Nash benchmark and pays only the exploration tax rather than ceding surplus. The rate is \emph{undesirable} in monopolistic pricing but emerges as a rational choice in competition; whether this structural inefficiency is compensated by collusion gains is the focus of the next section.

\section{Outcomes in a Market of Oblivious Sellers}\label{sec:oblivious-sellers}

We now turn to market outcomes when \emph{all} sellers are oblivious. The central question is whether oblivious learning (misspecified) can produce long-term supracompetitive, collusive-like outcomes. Our answer is no: under reasonable exploration, the competitive outcome is the dominant attractor. 

\subsection{Global Convergence to the Competitive Outcome}\label{sec:global-convergence}

Even though an oblivious seller estimates a misspecified two-parameter model, at the competitive benchmark it admits a natural ``pseudo-true'' target. Define
\[
\theta_i^{*, \,ob} \triangleq (a_i^*,b_i^*)
\;=\;
\left(\alpha_i + \sum_{j\neq i}\gamma_{i,j}p_j^{NE},\ \beta_i\right),
\]
and assume $\theta_i^{*, \,ob} \in \mathrm{int}(\Theta_i^{ob})$ for all $i$. Note that $\phi^{ob}(\theta_i^{*,\,ob}) = p_i^{NE}$ by the Nash first-order condition. Let $L_\phi^{ob}$ denote the Lipschitz constant of the oblivious pricing map $\phi^{ob}(\cdot)$ over $\bigcup_i\Theta_i^{ob}$, and let $C_x$ be a uniform bound on $\|x_{n,i}^{ob}\|_2$. Let $\mathcal{F}_n \triangleq \sigma\{(p_{m,i},d_{m,i}) : m\le n,\ i\in[N]\}$ be the filtration generated by the entire price-demand history up to time $n$. Recall that $\gamma_i = \sum_{j\neq i}\gamma_{i,j}$ and $\gamma_i^{\mathrm{col}} = \sum_{j\neq i}\gamma_{j,i}$ (cf.\ Sections~\ref{sec:demand-model} and~\ref{sec:solution-concept}). Define
\[
\bar\gamma \triangleq \frac{1}{2} \left(\max_i \gamma_i + \max_i \gamma_i^{\mathrm{col}}\right).
\]
The next theorem establishes global convergence to the competitive outcome when exploration is sufficiently strong relative to strategic misspecification; its proof is deferred to Appendix~\ref{proof:thm:globalconvergencetocompetitiveoutcome}.

\begin{theorem}[\textbf{Global convergence to the competitive outcome}]\label{thm:globalconvergencetocompetitiveoutcome}
Suppose there exists $C_M>0$ such that for all $n$ and $i$,
\begin{equation}\label{eq:globalconvergencespectrallowerboundcondition}
\mathbb{E}\!\left[x_{n,i}^{ob}(x_{n,i}^{ob})^\top\mid\mathcal{F}_{n-1}\right] \succeq C_M I_{2}.
\end{equation}
If, in addition,
\begin{equation}\label{eq:globalconvergencecondition}
\bar\gamma\, L_\phi^{ob}\, C_x \;<\; C_M,
\end{equation}
then $\hat\theta_{n,i}^{ob}\to \theta_i^{*,ob}$ and $\tilde p_{n,i}\to p_i^{NE}$ a.s.\ for every $i\in[N]$. Moreover, the aggregate mean-squared error satisfies
\[
\sum_{i=1}^{N}\left[
\mathbb{E}\|\hat{\theta}_{n,i}^{ob}-\theta_i^{*, \,ob} \|_2^2
+
\mathbb{E}|\tilde p_{n,i}-p_i^{NE}|^2
\right]
=
\begin{cases}
O\!\left(1 / n\right),
& \text{if } 2\bar\gamma L_\phi^{ob} C_x < C_M,\\[6pt]
O\!\left(\log n / n\right),
& \text{if } 2\bar\gamma L_\phi^{ob} C_x = C_M,\\[6pt]
O\!\left(n^{-2(1-\rho)}\right),
& \text{if } \bar\gamma L_\phi^{ob} C_x < C_M < 2\bar\gamma L_\phi^{ob} C_x,
\end{cases}
\]
where $\rho \triangleq \frac{\bar\gamma L_\phi^{ob} C_x}{C_M}\in(1/2,1)$ in the last case.
\end{theorem}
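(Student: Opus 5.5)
I will treat the all-oblivious market as a coupled stochastic-approximation system. The aim is one scalar Lyapunov recursion for the aggregate error
\[
V_n\triangleq\sum_i\|\hat\theta_{n,i}^{ob}-\theta_i^{*,ob}\|_2^2 ,
\]
whose contraction factor is $1-2(C_M-\bar\gamma L_\phi^{ob}C_x)/n$ up to noise. The rates then follow from the standard Chung-type lemma for recursions $u_{n+1}\le(1-c/n)u_n+O(n^{-2})$.

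\textbf{Recursive least-squares form.} For each $i$, write the unprojected estimator recursively as
\[
\tilde\theta_{n,i}=\tilde\theta_{n-1,i}+\tfrac1n\hat\Sigma_{n,i}^{-1}x_{n,i}\bigl(d_{n,i}-x_{n,i}^\top\tilde\theta_{n-1,i}\bigr),
\qquad \hat\Sigma_{n,i}=\tfrac1n\textstyle\sum_{m\le n}x_{m,i}x_{m,i}^\top .
\]
Then decompose the true demand as $d_{n,i}=x_{n,i}^\top\theta_i^{*}+\sum_{j\neq i}\gamma_{i,j}(p_{n,j}-p_j^{NE})+\varepsilon_{n,i}$; this holds because $a_i^*$ absorbs $\sum_j\gamma_{i,j}p_j^{NE}$. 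The innovation therefore has three parts: a contraction term $-x x^\top(\tilde\theta-\theta^*)$, a martingale-difference noise term, and a cross-seller bias term.

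\textbf{Conditioning and the cross term.} Take expectations given $\mathcal F_{n-1}$. By~\eqref{eq:globalconvergencespectrallowerboundcondition}, the contraction part gives at least $-2C_M\|\tilde\theta_{n-1,i}-\theta_i^*\|^2/n$ in the squared-error recursion, once $\hat\Sigma$ is handled. The simplest route is to use the gradient form with $\hat\Sigma^{-1}$ replaced by $I$; I would either argue that this is equivalent up to $O(1/n^2)$ terms or run SGD-style analysis directly on the normal equations. The cross term is bounded as follows. Since $z_{n,j}$ is independent with mean zero, $\mathbb E[p_{n,j}-p_j^{NE}\mid\mathcal F_{n-1}]=\tilde p_{n,j}-p_j^{NE}=\phi^{ob}(\hat\theta_{n-1,j})-\phi^{ob}(\theta_j^*)$, whose absolute value is at most $L_\phi^{ob}\|\hat\theta_{n-1,j}-\theta_j^*\|$. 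Together with $\|x_{n,i}\|\le C_x$, this bounds the cross term by
\[
\tfrac{2}{n}\,C_xL_\phi^{ob}\sum_{j\neq i}\gamma_{i,j}\,e_i e_j,\qquad e_k\triangleq\|\hat\theta_{n-1,k}-\theta_k^*\|.
\]
Apply $e_ie_j\le(e_i^2+e_j^2)/2$ and sum over $i$. Each $e_i^2$ collects weight $\tfrac12(\gamma_i+\gamma_i^{\mathrm{col}})\le\bar\gamma$, so the total is at most $\tfrac2n\bar\gamma L_\phi^{ob}C_xV_{n-1}$. Projection onto the convex $\Theta_i^{ob}$, which contains $\theta_i^*$, is nonexpansive, so it can only decrease the error. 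Bounded support of $\varepsilon$ and $z$ makes the second-order terms $O(1/n^2)$. This yields
\[
\mathbb E[V_n\mid\mathcal F_{n-1}]\le\Bigl(1-\tfrac{2(C_M-\bar\gamma L_\phi^{ob}C_x)}{n}\Bigr)V_{n-1}+\tfrac{K}{n^2}.
\]

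\textbf{Conclusions.} Almost-sure convergence $V_n\to0$ follows from Robbins--Siegmund, because $\sum_n 1/n$ diverges and condition~\eqref{eq:globalconvergencecondition} makes the drift coefficient positive. Convergence of $\tilde p_{n,i}$ then follows from the Lipschitz continuity of $\phi^{ob}$. For the rates, take full expectations and write $c=2C_M(1-\rho)$. Chung's lemma gives $O(1/n)$ if $c>1$, $O(\log n/n)$ if $c=1$, and $O(n^{-c})$ if $c<1$. Since $2(C_M-\bar\gamma L_\phi^{ob}C_x)=2C_M(1-\rho)$, these three cases match the stated thresholds, up to normalizing $C_M$ in the step size. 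The price MSE inherits the same rate through $L_\phi^{ob}$.

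\textbf{Main obstacle.} The hardest step is making the contraction constant exactly $C_M$ with the true least-squares preconditioner $\hat\Sigma_{n,i}^{-1}$, rather than with a plain gradient step. The difficulty is that $\hat\Sigma^{-1}x x^\top$ is not symmetric. My first attempt would be to work in the weighted norm $\|\cdot\|_{\hat\Sigma_{n}}$ and show that $\hat\Sigma_n$ stabilizes, or to rewrite the least-squares solution as an averaged estimator $\tilde\theta_n-\theta^*=\hat\Sigma_n^{-1}\frac1n\sum_m x_m(\text{bias}_m+\varepsilon_m)$. In that form the bias is controlled by the uniform lower bound $\lambda_{\min}(\hat\Sigma_n)\gtrsim C_M$, which holds eventually a.s. by martingale concentration. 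Matching the exact exponent $2(1-\rho)$ may require this second, averaged viewpoint.
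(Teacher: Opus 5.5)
\textbf{There is a genuine gap: your main recursion does not hold for the estimator in the theorem.}

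\emph{The preconditioner step fails.} You claim that replacing $\hat\Sigma_{n,i}^{-1}$ by $I$ changes things only by $O(1/n^2)$. This is false: $\hat\Sigma_{n,i}^{-1}-I$ is of order one, so your displayed inequality describes a stochastic-gradient (LMS) estimator, not least squares. For the actual least-squares iterate, the Euclidean drift is $-\tfrac{2}{n}\,\tilde e^\top\hat\Sigma_{n-1,i}^{-1}\mathbb{E}[x_{n,i}x_{n,i}^\top\mid\mathcal F_{n-1}]\,\tilde e$ up to lower-order terms. The symmetric part of a product of two positive-definite matrices can be indefinite, so $\sum_i\|\tilde e_{n,i}\|_2^2$ is not a Lyapunov function and no $C_M$-contraction follows.

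\emph{Projection.} Least squares propagates the unprojected $\tilde\theta_{n-1,i}$, so the Lyapunov function must be built on $\tilde e_{n,i}=\tilde\theta_{n,i}-\theta_i^{*,ob}$. Projection enters only through the pricing map, via $\|\hat\theta_{n,j}^{ob}-\theta_j^{*,ob}\|_2\le\|\tilde e_{n,j}\|_2$.

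\emph{The rates would not match even if the recursion held.} Your coefficient $c=2C_M(1-\rho)$ does not reproduce the theorem's scale-free trichotomy. Here $C_M<1$ strictly, because the first regressor coordinate is $1$ and prices are bounded away from $0$. Hence $c<2(1-\rho)$ always, which gives strictly weaker exponents and, for example, never the $O(1/n)$ regime when $C_M\le 1/2$. The ``normalization'' you have in mind would require $\hat\Sigma_{n-1,i}^{-1}\mathbb{E}[x_{n,i}x_{n,i}^\top\mid\mathcal F_{n-1}]\approx I$. That holds only once prices have settled, so the argument is circular.

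Your cross-term bound is exactly what is needed: factorization by conditional independence, then Young's inequality, then weight at most $\bar\gamma$ per seller.

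\textbf{The missing idea is the first fallback you mention:} the data-weighted, $1/n$-normalized Lyapunov function $W_{n,i}=\tfrac1n\tilde e_{n,i}^\top S_{n,i}\tilde e_{n,i}$, with $S_{n,i}=\sum_{m\le n}x_{m,i}x_{m,i}^\top$. Set $V_{n,i}=nW_{n,i}$, $h_{n+1,i}=x_{n+1,i}^\top S_{n,i}^{-1}x_{n+1,i}=O(1/n)$ eventually, and $\eta_{n+1,i}=\varepsilon_{n+1,i}+\sum_{j\neq i}\gamma_{i,j}(p_{n+1,j}-p_j^{NE})$. Sherman--Morrison then gives the exact identity
\[
V_{n+1,i}=V_{n,i}-\frac{(x_{n+1,i}^\top\tilde e_{n,i})^2}{1+h_{n+1,i}}+\frac{2\,\eta_{n+1,i}\,x_{n+1,i}^\top\tilde e_{n,i}}{1+h_{n+1,i}}+\frac{h_{n+1,i}\,\eta_{n+1,i}^2}{1+h_{n+1,i}}.
\]
The asymmetry disappears because the weight matrix is updated with the data. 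Dividing by $n+1$ produces a contraction factor $\tfrac{n}{n+1}$ that does not depend on $C_M$. Persistent excitation contributes $-C_M\|\tilde e_{n,i}\|_2^2/(n+1)$. Your cross-term bound contributes at most $2\bar\gamma L_\phi^{ob}C_x\sum_i\|\tilde e_{n,i}\|_2^2/(n+1)$ after summing over $i$.

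The sandwich $(C_M-\delta)\sum_i\|\tilde e_{n,i}\|_2^2\le W_n\le C_x^2\sum_i\|\tilde e_{n,i}\|_2^2$ (eventually a.s., Lemma~\ref{lem:spectralboundadaptivecovariancematrix}) closes the recursion in $W_n=\sum_iW_{n,i}$. The net coefficient is:
\begin{itemize}
\item $1+(C_M-2\bar\gamma L_\phi^{ob}C_x)/C_x^2>1$ in the first case;
\item $1$ at equality;
\item $1-(2\bar\gamma L_\phi^{ob}C_x-C_M)/C_M=2(1-\rho)\in(0,1)$ in the third case.
\end{itemize}
This is exactly where the thresholds and the exponent come from. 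Robbins--Siegmund and the Chung-type lemma then finish as you planned.

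Your averaged alternative could also be made to work. Using $\lambda_{\min}(S_{n,i})\ge(C_M-\delta)n$, it yields a Ces\`aro-type contraction with factor $\rho$. As written, however, neither repair is carried out, and the recursion you actually state does not prove the theorem.
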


\paragraph{Discussion.}
Theorem~\ref{thm:globalconvergencetocompetitiveoutcome} uncovers a misspecification--exploration tug-of-war. The quantity $\bar\gamma L_\phi^{ob} C_x$ in~\eqref{eq:globalconvergencecondition} upper-bounds the \emph{strategic feedback loop} created by omitted competitor prices ($\bar\gamma$ scales the worst-case cross-price interaction, while $L_\phi^{ob}$ and $C_x$ measure the oblivious best-response sensitivity to estimation errors), and $C_M$ measures exploration strength via the smallest eigenvalue of the conditional regressor covariance. When exploration dominates misspecification, the induced bias behaves like a controlled perturbation and the dynamics contract toward $\mathbf p^{NE}$; the rates accelerate from $n^{-2(1-\rho)}$ to $1/n$ as $C_M / (\bar\gamma L_\phi^{ob} C_x)$ grows. The persistent-excitation condition~\eqref{eq:globalconvergencespectrallowerboundcondition} is standard \citep{goldenshluger2013linear, bastani2021mostly}, and by Lemmas~\ref{lem:Jn-lower} and~\ref{lem:yy-lower} it is implied by any uniformly bounded-below exploration variance $\Var(z_{n,i})$, i.e., a linear exploration rate $J_{n,i} = \Theta(n)$. Since this is exactly the spiral-up prediction of Section~\ref{sec:spiralupphenomenon}, \eqref{eq:globalconvergencespectrallowerboundcondition} is best read as a behavioral prediction rather than a technical assumption.\footnote{The $\Theta(1)$ spectral lower bound \eqref{eq:globalconvergencespectrallowerboundcondition} would be satisfied automatically if $x_{n,i}^{ob}$ are exogenous i.i.d.\ designs.}

\paragraph{Connection to regret.}
Combining Theorem~\ref{thm:globalconvergencetocompetitiveoutcome} with the dynamic-benchmark equivalence $\sum_i \Delta_i(\theta_i, T) \asymp \sum_n \EE\norm{\mathbf p_n - \mathbf p^{NE}}_2^2$ (Corollary~\ref{cor:dynamicbenchmark-twosided} in Appendix~\ref{sec:dynamicbenchmark}) and the decomposition $\mathbf p_n = \tilde{\mathbf p}_n + \mathbf z_n$, the \emph{misspecification-induced} portion of regret is at most $O(\log T)$ in the fastest regime while the persistent-exploration term contributes a $\Theta(T)$ tax. Though undesirable, this $\Theta(T)$ rate is qualitatively different from the linear regret of an algorithm that fails to learn: the tax is controlled by the seller (Proposition~\ref{prop:costlinearexploration}), whereas linear regret from learning failure is pinned by an irreducible bias.

\begin{figure}[!htbp]
\centering
\begin{subfigure}{.48\textwidth}
\centering
\includegraphics[width=\linewidth]{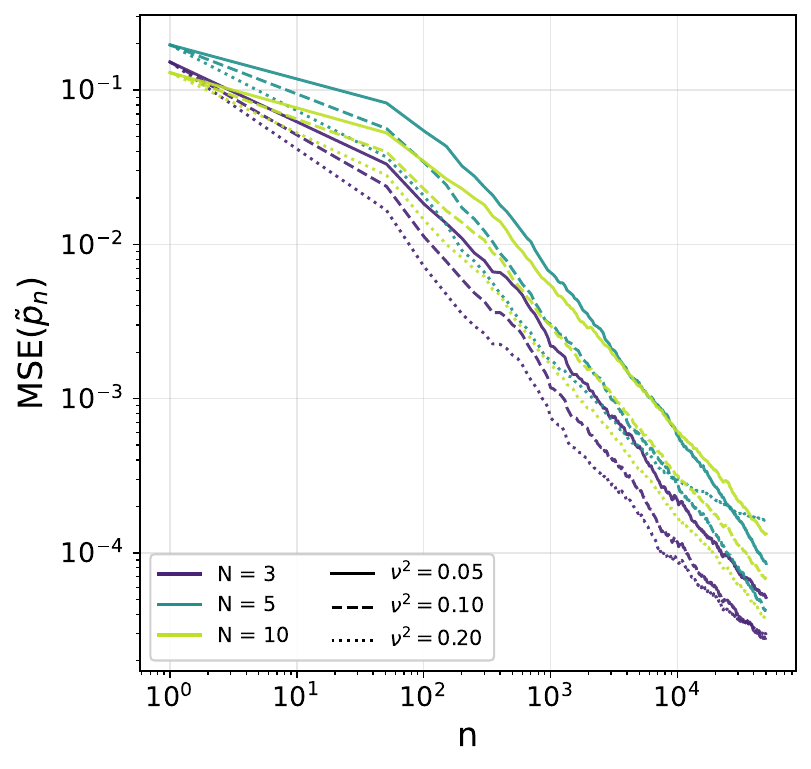}
\caption{Asymmetric markets, $N\in\{3,5,10\}$.}
\label{fig:global-convergence-stress-test-asym}
\end{subfigure}\hfill
\begin{subfigure}{.48\textwidth}
\centering
\includegraphics[width=\linewidth]{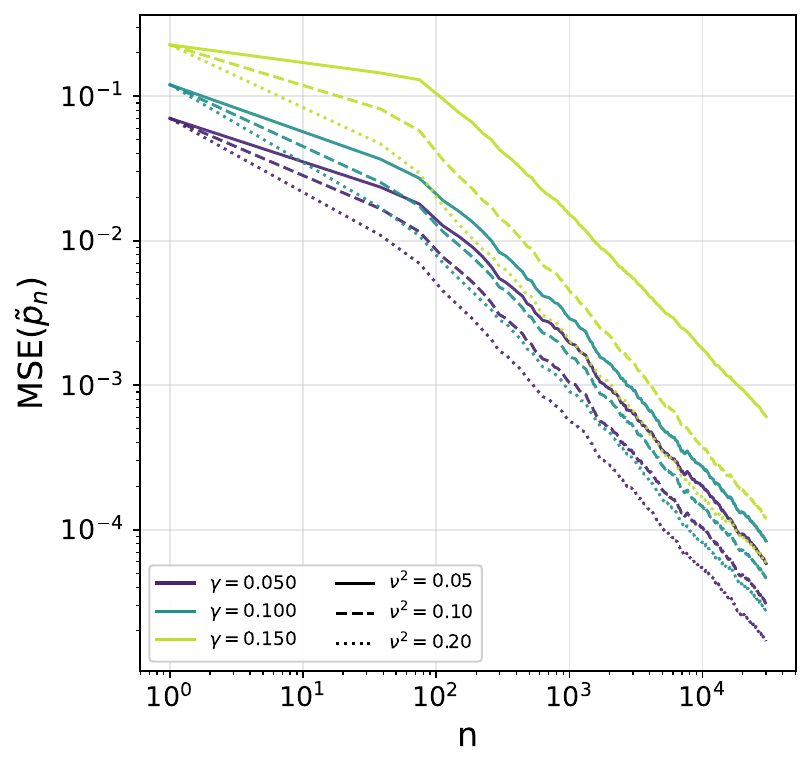}
\caption{Symmetric $N=5$, shrinking $\gamma$.}
\label{fig:global-convergence-stress-test-dom}
\end{subfigure}
\caption{Empirical stress test of Theorem~\ref{thm:globalconvergencetocompetitiveoutcome}: seed-averaged $\mathrm{MSE}(\tilde{\mathbf p}_n)$ on log--log axes. \emph{Left:} asymmetric markets at $N \in \{3, 5, 10\}$; \emph{right:} symmetric $N = 5$ markets sweeping the cross-price coefficient $\gamma$; both panels sweep exploration variance $\nu^2 \in \{0.05, 0.10, 0.20\}$. Across all 18 configurations the sufficient condition~\eqref{eq:globalconvergencecondition} is violated by roughly two orders of magnitude, yet every $\mathrm{MSE}$ trajectory decays at the asymptotic $n^{-1}$ rate. Implementation details are in Appendix~\ref{sec:experimental-details}.}
\label{fig:global-convergence-stress-test}
\end{figure}

\paragraph{Numerical experiments.} Figure~\ref{fig:global-convergence-stress-test} stress-tests Theorem~\ref{thm:globalconvergencetocompetitiveoutcome} over an asymmetric $N \in \{3, 5, 10\}$ sweep (left panel) and a symmetric $N=5$ sweep over the cross-price coefficient $\gamma$ (right panel), each crossed with three exploration variances $\nu^2 \in \{0.05, 0.10, 0.20\}$. Across all $18$ configurations, $\bar\gamma\, L_\phi^{ob}\, C_x$ is of order $1$--$10$ while $C_M(\nu^2)$ stays below $10^{-1}$, so condition~\eqref{eq:globalconvergencecondition} is violated by roughly two orders of magnitude; yet every $\mathrm{MSE}(\tilde{\mathbf p}_n)$ trajectory decays at the asymptotic $n^{-1}$ rate predicted in the strongest regime. The contraction condition is therefore conservative but not necessary in practice. Appendix~\ref{sec:appendix-threshold} adds a 16-point $\nu^2$ sweep that pins down the empirical convergence threshold and a Gaussian-clipped dithering robustness check (as opposed to the uniform perturbations used throughout the numerical experiments), both of which reinforce this conclusion.

\subsection{A Mean-Dynamics ODE Perspective and Local Convergence}\label{sec:ode-perspective}

When the global small-gain condition~\eqref{eq:globalconvergencecondition} fails, the discrete-time learning dynamics need not be globally contractive. Yet, as Figure~\ref{fig:global-convergence-stress-test} suggests, convergence to the competitive outcome still occurs in practice. To understand this phenomenon, we analyze the learning dynamics from a complementary \emph{mean-dynamics} perspective that isolates the local deterministic drift induced by the estimate--exploit--explore protocol. Let $m_n\in\RR^N$ and $Q_n\in\RR^{N\times N}$ denote the running averages of prices and pairwise price products, with centered moments
\[
S_{ij}(m,Q)\triangleq Q_{ij}-m_im_j\quad(i\neq j),\qquad V_i(m,Q)\triangleq Q_{ii}-m_i^2.
\]
Under i.i.d.\ exploration with $\Var(z_{n,i})=\nu^2>0$, regressing the misspecified model $d_i\sim a_i+b_i p_i$ under~\eqref{eq:demand} yields the misspecified greedy map
\begin{equation}\label{eq:pg-map-N-ode}
p_i^g(m,Q)\triangleq-\frac{a_i(m,Q)}{2\,b_i(m,Q)},
\end{equation}
where
\[
a_i(m,Q)=\alpha_i+\sum_{j\neq i}\gamma_{i,j}m_j-m_i\sum_{j\neq i}\gamma_{i,j}\frac{S_{ij}(m,Q)}{V_i(m,Q)},
\qquad
b_i(m,Q)=\beta_i+\sum_{j\neq i}\gamma_{i,j}\frac{S_{ij}(m,Q)}{V_i(m,Q)}.
\]
A standard stochastic-approximation derivation \citep{borkarStochasticApproximationDynamical2023}, detailed in Appendix~\ref{sec:appendix-ode-derivation}, then yields the $(2N+N(N-1)/2)$-dimensional mean-dynamics ODE
\begin{equation}\label{eq:ode-N-mQ}
\begin{aligned}
\dot m_i(t) &= p_i^g(m(t),Q(t)) - m_i(t), \qquad i\in[N],\\
\dot Q_{ij}(t) &= p_i^g(m(t),Q(t))\,p_j^g(m(t),Q(t)) - Q_{ij}(t),\qquad i\neq j,\\
\dot Q_{ii}(t) &= (p_i^g(m(t),Q(t)))^2 + \nu^2 - Q_{ii}(t),\qquad i\in[N].
\end{aligned}
\end{equation}
Although the oblivious estimates $\hat\theta_{n,i}^{ob}$ are projected onto $\Theta_i^{ob}$, the projection is inactive locally around the Nash outcome and is omitted from the ODE drift. Heuristically, \eqref{eq:ode-N-mQ} is the deterministic continuous-time skeleton of the discrete moment recursion, analogous in spirit to the fluid limit in queueing applications. The result below shows that this ODE admits a unique equilibrium, whose mean component coincides with the full-information Nash equilibrium $\mathbf p^{NE}$, and that this equilibrium is locally asymptotically stable. The proof is in Appendix~\ref{proof:thm:ode-local-stability}.

\begin{theorem}[\textbf{Mean-dynamics local stability}]\label{thm:ode-local-stability}
The ODE \eqref{eq:ode-N-mQ} admits a unique equilibrium
\[
(m^*, Q^*) \;=\; \bigl(\mathbf p^{NE},\; \mathbf p^{NE}(\mathbf p^{NE})^\top + \nu^2 I_N\bigr),
\]
and it is locally asymptotically stable.
\end{theorem}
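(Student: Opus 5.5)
The plan has two parts: first characterize the equilibria of \eqref{eq:ode-N-mQ} directly, then linearize at the unique equilibrium and show all eigenvalues of the Jacobian have negative real part.

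\emph{Equilibrium.} Setting the right-hand side of \eqref{eq:ode-N-mQ} to zero forces $m_i=p_i^g(m,Q)$ for every $i$. Substituting this into the $Q$-equations gives $Q_{ij}=m_im_j$ for $i\neq j$ and $Q_{ii}=m_i^2+\nu^2$. Hence $S_{ij}=0$ and $V_i=\nu^2>0$ at any equilibrium, so the ratios $S_{ij}/V_i$ vanish. Then $a_i=\alpha_i+\sum_{j\neq i}\gamma_{i,j}m_j$ and $b_i=\beta_i$, and the fixed-point condition $m_i=-a_i/(2\beta_i)$ reads $\Gamma m=-\boldsymbol\alpha$. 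Invertibility of $\Gamma$ (Section~\ref{sec:solution-concept}) gives $m=\mathbf p^{NE}$ uniquely, and hence $Q^*=\mathbf p^{NE}(\mathbf p^{NE})^\top+\nu^2I_N$.

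\emph{Linearization.} I would change coordinates from $(m,Q)$ to $(m,S,V)$, with $S_{ij}=Q_{ij}-m_im_j$ and $V_i=Q_{ii}-m_i^2$. This change of variables is a smooth diffeomorphism, so local asymptotic stability is invariant under it. Write $g_i=p_i^g$ as a function of $(m,\rho)$, where $\rho_{ij}=S_{ij}/V_i$. A direct computation gives
\[
\dot S_{ij}=g_ig_j-Q_{ij}-\dot m_i m_j-m_i\dot m_j=(g_i-m_i)(g_j-m_j)-S_{ij},
\]
and similarly
\[
\dot V_i=(g_i-m_i)^2+\nu^2-V_i .
\]
At the equilibrium, $g_i-m_i=0$, so the quadratic terms have zero derivative there. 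The linearized $(S,V)$ block is therefore $-I$, and it is decoupled from $m$: $\partial\dot S/\partial m=0$ and $\partial\dot V/\partial m=0$ at the equilibrium. On the other hand, $\dot m$ depends on $(S,V)$ only through $\rho$.

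In the ordering $(m;S,V)$ the Jacobian is thus block upper triangular,
\[
\begin{pmatrix} D_mg-I & \ast\\ 0 & -I\end{pmatrix}.
\]
Its spectrum is $\{-1\}\cup\operatorname{spec}(D_mg-I)$, so it remains to show that $D_mg-I$ is Hurwitz.

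At $\rho=0$, $g_i(m,0)=-(\alpha_i+\sum_{j\neq i}\gamma_{i,j}m_j)/(2\beta_i)$. Hence $D_mg=B$ with $B_{ii}=0$ and $B_{ij}=\gamma_{i,j}/(-2\beta_i)$. Since $\sum_{j\neq i}|B_{ij}|=\gamma_i/(-2\beta_i)<1/2$, Gershgorin's theorem places every eigenvalue of $B-I$ in the disc centered at $-1$ of radius less than $1/2$. All eigenvalues therefore have real part below $-1/2$. Lyapunov's indirect method then yields local asymptotic stability.

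\emph{Main obstacle.} The only delicate step is the derivative of $\dot m$ with respect to $m$. Because $a_i$ contains the term $-m_i\sum_j\gamma_{i,j}\rho_{ij}$, I must confirm that holding $(S,V)$ fixed at $(0,\nu^2)$ kills the $m$-dependence through $\rho$ in the new coordinates. It does, since $\rho$ is a function of $(S,V)$ alone. This is precisely why the $(m,S,V)$ coordinates are used rather than $(m,Q)$: in the $(m,Q)$ coordinates the block triangular structure is hidden. I would also note that the projection is inactive near the equilibrium, which holds because $\theta_i^{*,ob}\in\operatorname{int}\Theta_i^{ob}$.
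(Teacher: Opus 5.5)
Your proposal is correct and follows essentially the same route as the paper's proof: the same equilibrium characterization via $S_{ij}=0$, $V_i=\nu^2$ and invertibility of $\Gamma$, the same change to centered-moment coordinates yielding a block upper-triangular Jacobian with a $-I$ block, and the same diagonal-dominance bound $\gamma_i/(-2\beta_i)<1/2$ on the mean block. The only differences are cosmetic. The paper shifts $V_i$ by $\nu^2$, and it bounds the spectral radius of the off-diagonal matrix by its $\infty$-norm rather than applying Gershgorin directly to $B-I$.
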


\paragraph{Discussion.}
Theorem~\ref{thm:ode-local-stability} is the deterministic counterpart of Theorem~\ref{thm:globalconvergencetocompetitiveoutcome}: within the regime where the mean-dynamics approximation is accurate, the competitive outcome is the only equilibrium and it is locally stable, regardless of whether condition~\eqref{eq:globalconvergencecondition} holds. This helps explain Figure~\ref{fig:global-convergence-stress-test}, where prices are steered toward $\mathbf p^{NE}$ by the deterministic skeleton even though~\eqref{eq:globalconvergencecondition} is violated.

\subsection{Collusive Excursions and the Continuum of Pseudo-Equilibria}

Section~\ref{sec:motivation} highlighted two puzzling features of an all-oblivious market: possible collusive behaviors \citep{cooperLearningPricingModels2015, hansenFrontiersAlgorithmicCollusion2021, douglasNaiveAlgorithmicCollusion2024}, and the wide spread of long-run sample-path prices ranging from near-competitive to near-collusive (Figure~\ref{fig:samplepaths}). The mean-dynamics ODE~\eqref{eq:ode-N-mQ} explains both: within the persistent-exploration regime that drives convergence to $\mathbf p^{NE}$, the system admits striking finite-time excursions (upward and collusive-looking, or downward); and when persistent excitation fails, sample paths trace out a continuum of pseudo-equilibria, qualitatively matching the noiseless and exploration-free analysis of \citet[\S4.3]{cooperLearningPricingModels2015}.

\subsubsection{Short-Run Excursions}\label{sec:ode-excursions}

The same ODE \eqref{eq:ode-N-mQ} that locks the long-run mean to $\mathbf p^{NE}$ admits striking transients along the way. Specializing to the symmetric duopoly, Figure~\ref{fig:excursions-main} displays two types of trajectories of \eqref{eq:ode-N-mQ} and their discrete-time analogues: an \emph{upward excursion} (panels (a),(b)) in which the mean price overshoots $\mathbf p^{NE}$ toward the collusive benchmark $\mathbf p^C$ before relaxing back, and a \emph{downward excursion} (panels (c),(d)) in which the mean price first dips below $\mathbf p^{NE}$ before recovering. Appendix~\ref{sec:appendix-excursions} formalizes the governing forces behind these phenomena and shows that the \emph{direction} of any such excursion is sample-path dependent and not controllable by the sellers. Consequently, observable collusive behaviors are random finite-time transients that oblivious learning cannot reliably induce to earn supracompetitive profits.

\begin{figure}[!htbp]
\centering
\begin{minipage}[t]{.49\textwidth}
\centering
{\small\textbf{Positive excursion}}\\[2pt]
\begin{subfigure}[t]{.49\linewidth}
\centering
\includegraphics[width=\linewidth]{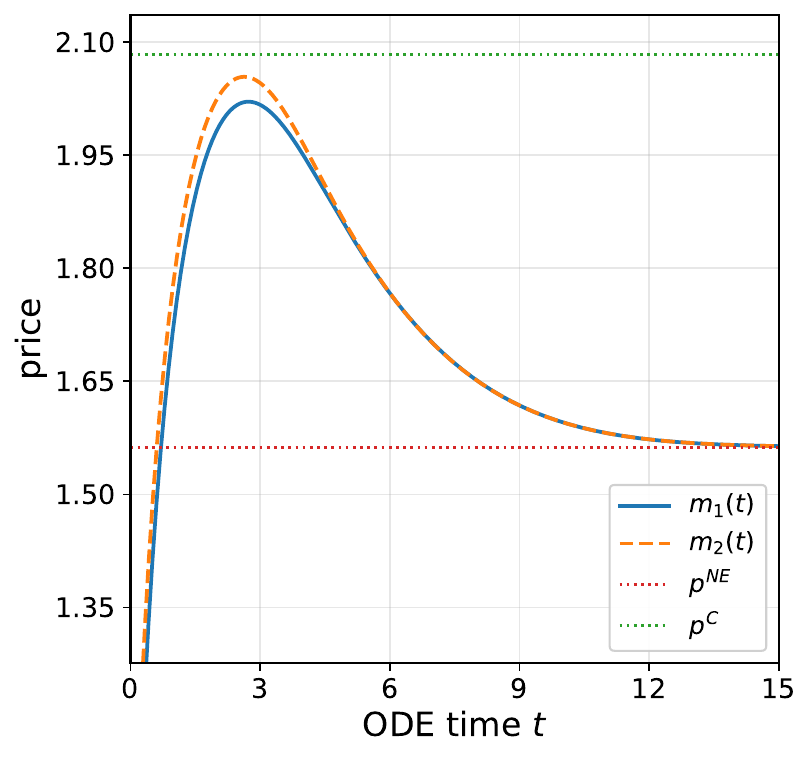}
\caption{ODE.}
\label{fig:positiveexcursion5dode-main}
\end{subfigure}\hfill
\begin{subfigure}[t]{.49\linewidth}
\centering
\includegraphics[width=\linewidth]{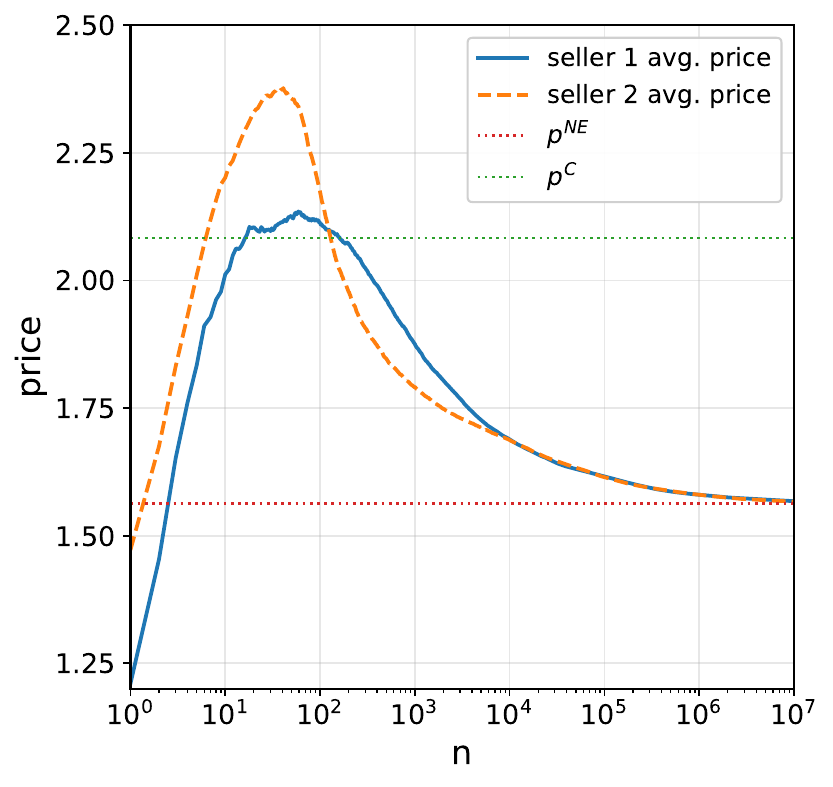}
\caption{Discrete system.}
\label{fig:positiveexcursiondiscrete-main}
\end{subfigure}
\end{minipage}\hfill
\begin{minipage}[t]{.49\textwidth}
\centering
{\small\textbf{Negative excursion}}\\[2pt]
\begin{subfigure}[t]{.49\linewidth}
\centering
\includegraphics[width=\linewidth]{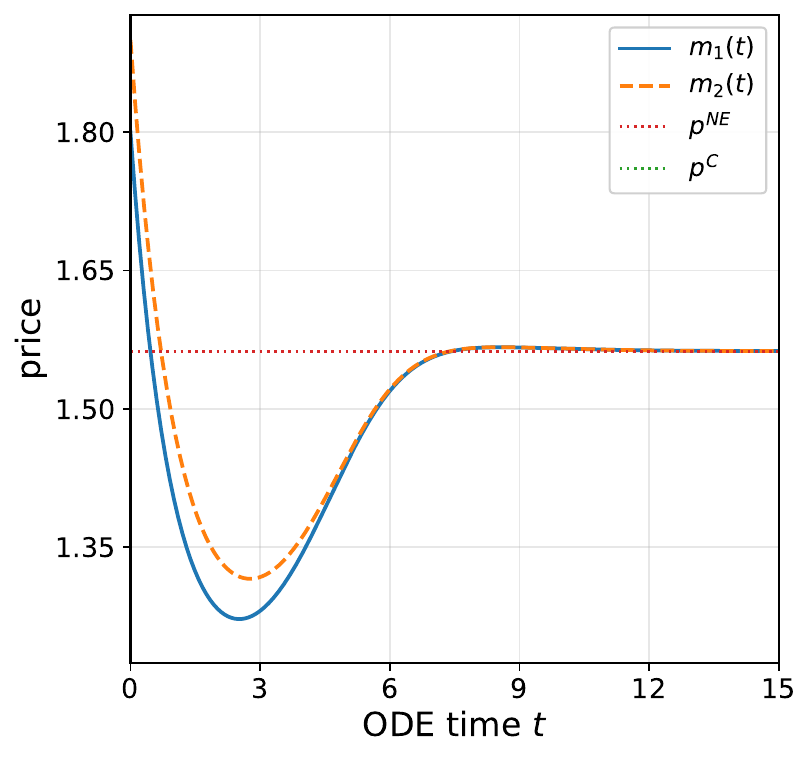}
\caption{ODE.}
\label{fig:negativeexcursion5dode-main}
\end{subfigure}\hfill
\begin{subfigure}[t]{.49\linewidth}
\centering
\includegraphics[width=\linewidth]{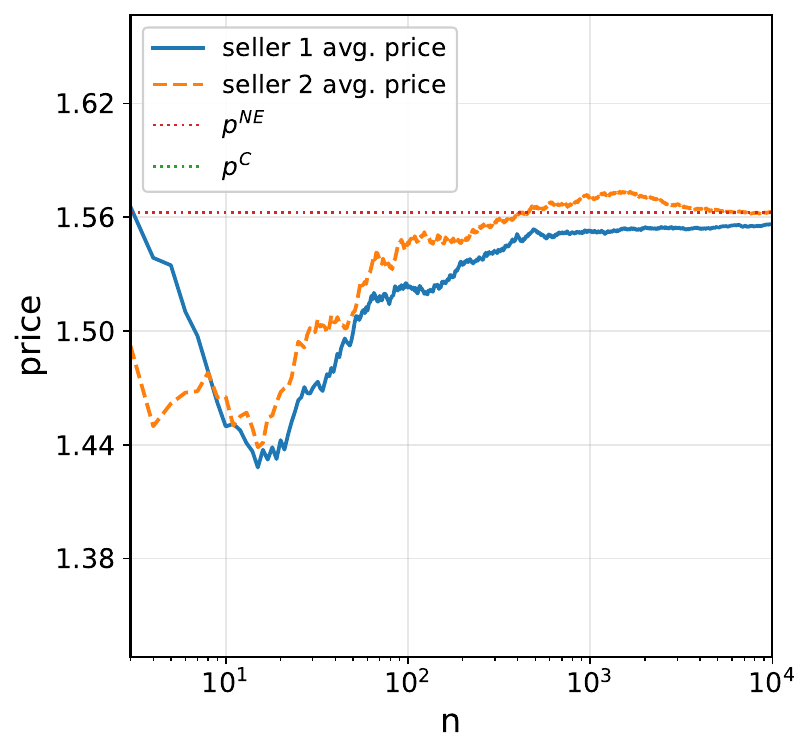}
\caption{Discrete system.}
\label{fig:negativeexcursiondiscrete-main}
\end{subfigure}
\end{minipage}
\caption{Positive (panels (a), (b)) and negative (panels (c), (d)) excursions of the mean price in the symmetric duopoly, each shown for the five-dimensional ODE~\eqref{eq:duopoly-mQ-ode} and the discrete-time dynamics~\eqref{eq:obliviousgreedynextprice}. The $y$-axis is average price; red and green dotted lines mark $p^{NE}$ and $p^C$. ODE panels use a linear time scale; discrete-time panels plot the rolling empirical mean of each seller on a logarithmic $n$-axis. Implementation details are in Appendix~\ref{sec:experimental-details}.}
\label{fig:excursions-main}
\end{figure}

\subsubsection{Diminishing Exploration}

When exploration decays over time ($\Var(z_{n,i})\downarrow 0$), persistent excitation can fail (cf.\ Lemmas~\ref{lem:Jn-lower} and~\ref{lem:yy-lower}) and the ODE drift can become ill-defined, and so neither Theorem~\ref{thm:globalconvergencetocompetitiveoutcome} nor Theorem~\ref{thm:ode-local-stability} forces convergence to $\mathbf p^{NE}$. In the duopoly $N=2$, this connects directly to the deterministic/no-exploration analysis of \citet[\S4.3]{cooperLearningPricingModels2015}. Define the empirical regression ratios
\[
r_{n,1}\triangleq \frac{Q_{n,12}-m_{n,1}m_{n,2}}{Q_{n,11}-m_{n,1}^2},
\qquad
r_{n,2}\triangleq \frac{Q_{n,12}-m_{n,1}m_{n,2}}{Q_{n,22}-m_{n,2}^2}.
\]
Conditional on a fixed pair $(r_1,r_2)$, the duopoly's misspecified greedy conditions $m_i=-a_i/(2b_i)$ reduce to the linear system
\begin{equation}\label{eq:cooper-pseudo-equilibrium}
(2\beta_1+\gamma_{1,2} r_1)m_1+\gamma_{1,2} m_2=-\alpha_1,
\qquad
\gamma_{2,1} m_1+(2\beta_2+\gamma_{2,1} r_2)m_2=-\alpha_2,
\end{equation}
whose solution yields a \emph{continuum} of candidate pseudo-equilibria indexed by $(r_1,r_2)$, mirroring \citet[\S4.3]{cooperLearningPricingModels2015}.

\begin{figure}[ht]
\centering
\begin{minipage}[t]{.49\textwidth}
\centering
{\small\textbf{Larger exploration variance}}\\[2pt]
\begin{subfigure}[t]{.49\linewidth}
\centering
\includegraphics[width=\linewidth]{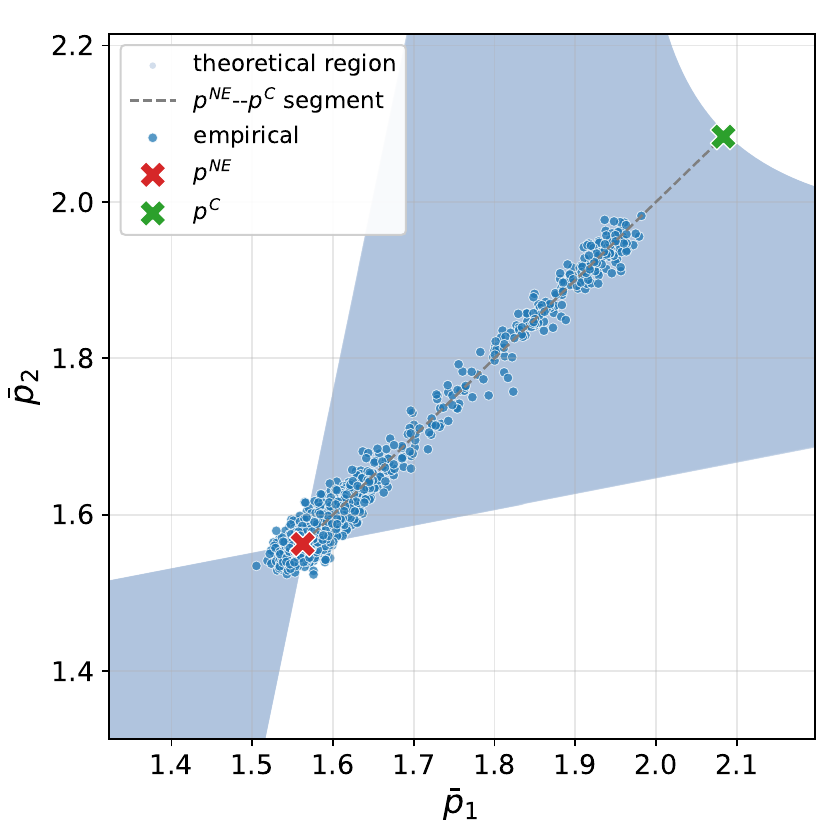}
\caption{Price.}
\label{fig:cooper-region-05}
\end{subfigure}\hfill
\begin{subfigure}[t]{.49\linewidth}
\centering
\includegraphics[width=\linewidth]{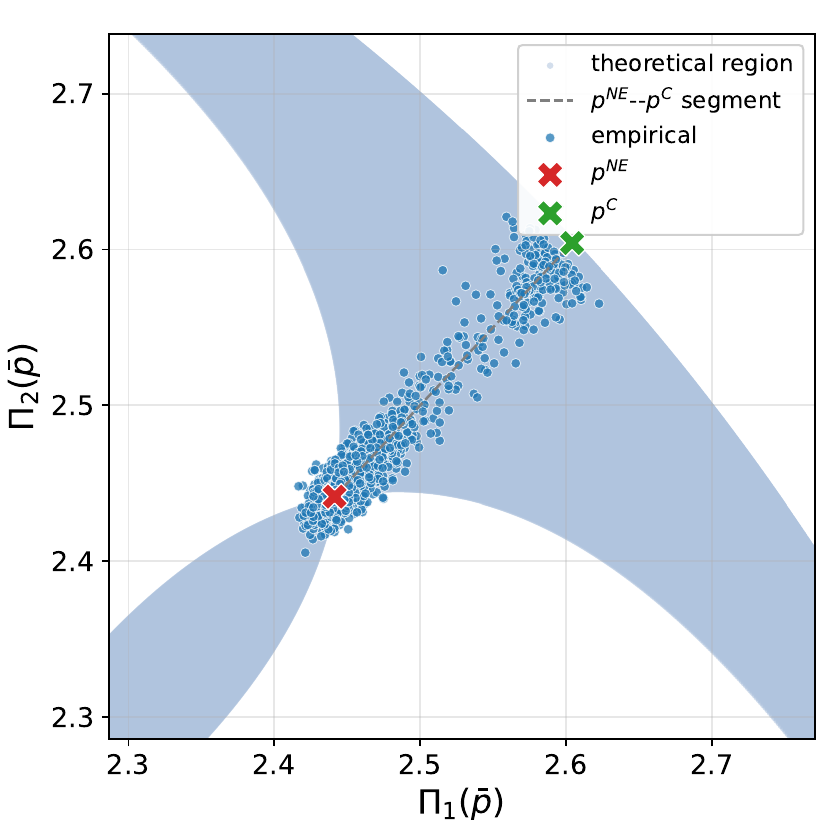}
\caption{Revenue.}
\label{fig:cooper-revenue-05}
\end{subfigure}
\end{minipage}\hfill
\begin{minipage}[t]{.49\textwidth}
\centering
{\small\textbf{Smaller exploration variance}}\\[2pt]
\begin{subfigure}[t]{.49\linewidth}
\centering
\includegraphics[width=\linewidth]{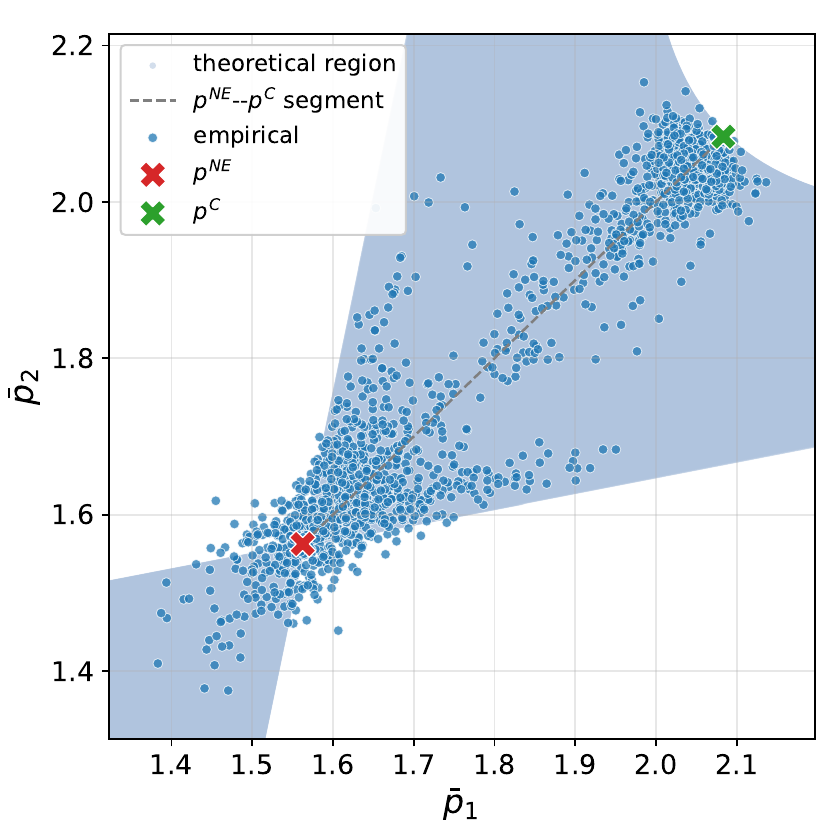}
\caption{Price.}
\label{fig:cooper-region-085}
\end{subfigure}\hfill
\begin{subfigure}[t]{.49\linewidth}
\centering
\includegraphics[width=\linewidth]{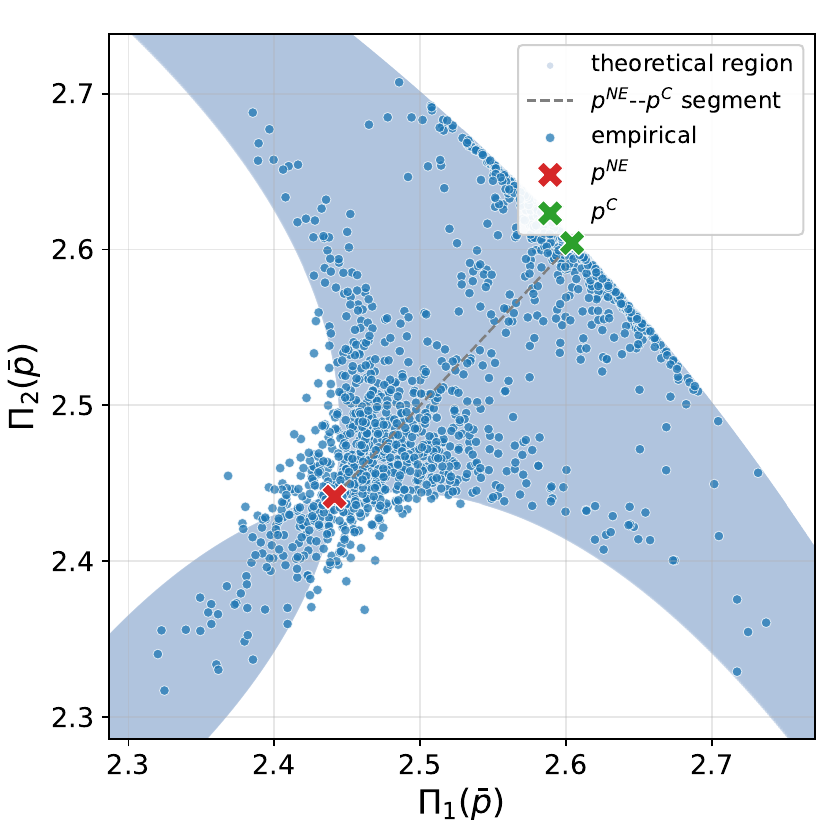}
\caption{Revenue.}
\label{fig:cooper-revenue-085}
\end{subfigure}
\end{minipage}
\caption{Running-mean prices (panels (a), (c)) and per-period revenues (panels (b), (d)) of $1{,}800$ sample paths in the symmetric duopoly under two decaying-exploration schedules $\nu_n^2 = 0.3(n+1)^{-\eta}$: $\eta = 0.5$ (panels (a), (b)) and $\eta = 0.85$ (panels (c), (d)). Shading in the price panels is the theoretical region induced by the admissible regression-ratio set, and its image under the revenue map in the revenue panels. Red and green markers are $\mathbf p^{NE}$, $\mathbf p^{C}$ (and their revenue images). A non-trivial fraction of seeds earns less than $\Pi^{NE}$ ($33\%$ at $\eta = 0.5$, $19\%$ at $\eta = 0.85$). Implementation details are in Appendix~\ref{sec:experimental-details}.}
\label{fig:cooper-region-revenue}
\end{figure}

Figure~\ref{fig:cooper-region-revenue} probes this continuum in a symmetric duopoly with long-run sample paths from diverse warm-up prices, under two decaying-exploration schedules $\nu_n^2 = 0.3(n+1)^{-\eta}$ with $\eta\in\{0.5, 0.85\}$, overlaid on the theoretical admissible regression-ratio region (Appendix~\ref{sec:experimental-details}). The empirical clouds fill two-dimensional regions reaching $\mathbf p^{NE}$, $\mathbf p^{C}$, and points outside the bounding rectangle, reproducing the insight of \citet{cooperLearningPricingModels2015} that the limit can be competitive, collusive, intermediate, or random. 

In contrast to Theorems~\ref{thm:globalconvergencetocompetitiveoutcome} and~\ref{thm:ode-local-stability}, this continuum is the competitive analog of \emph{incomplete learning} \citep{keskinIncompleteLearningCertaintyEquivalence2018}: persistent exploration facilitates identification of the competitive equilibrium, and without it learning fails and long-run outcomes scatter across an entire continuum of random prices. In line with this picture, the smaller-variance schedule, though injecting less exploration noise, produces the \emph{wider} cross-seed cloud, as smaller exploration variance lets early drifts freeze, whereas larger exploration variance keeps the persistent-excitation regime active for more iterations and allows the Nash attractor to undo collusive drifts. This closes the loop with Figure~\ref{fig:samplepaths}: under the canonical $\Theta(\sqrt n)$ rate, the design becomes nearly singular as exploration fades and different sample paths freeze at different points of the pseudo-equilibrium continuum. Supracompetitive price trends (like those observed in Figure~\ref{fig:samplepaths}) are therefore better interpreted as failures to learn rather than emergent collusion.

\section{Markets with Informed Sellers}\label{sec:informed-strategic-choice}

So far, we have analyzed the market when all sellers are oblivious. We now turn to the case when at least one seller is informed. We first analyze the all-informed market as an ideal baseline, and then we study mixed markets in which a strict subset of sellers is oblivious and the rest informed. 

\paragraph{Heterogeneous exploration rates.}
To allow for heterogeneous exploration behavior, let $\nu_{n,i}^2=\Var(z_{n,i})$ denote seller $i$'s exploration variance at time $n$, and throughout this section suppose that
\[
\nu_{n,i}^2=\Theta\!\left(n^{-\eta_i}\right),
\qquad
0\le \eta_i<1,
\qquad
\eta_{\min}=\min_{i\in[N]}\eta_i,
\qquad
\eta_{\max}=\max_{i\in[N]}\eta_i.
\]
This polynomial rate assumption is not necessary for the results, but it allows for a clean ordering of exploration rates among sellers and is thus adopted for expositional clarity. Since informed sellers do not suffer from model misspecification, we further assume that their exploration decays, i.e., $\eta_j > 0$ for every informed seller $j \in \cI^{in}$. Without loss of generality, assume that the first $N+1$ periods produce a full-rank empirical Fisher information matrix for each informed seller, allowing least squares thereafter. 

\paragraph{Forecast rules.}
As explained in Section~\ref{sec:learning-dynamics}, an informed seller needs a forecast rule to map competitors' past prices into a prediction of their next prices, which is then used to compute the greedy response \eqref{eq:greedy-prices}. A comprehensive survey of possible forecast rules is beyond the scope of this paper, as our focus is on the strategic-modeling choice between oblivious and informed sellers. Therefore, we adopt the running-mean forecast rule $\hat p_{n+1, j} = m_{n, j}$, which is operationally simple and robust to competitors' price-exploration noise. This is also the canonical choice for the classical adaptive-best-response results in the game theory literature (Section~\ref{sec:literature}). Two natural alternatives---an implementable \emph{lag-1} rule $\hat p_{n+1,j} = p_{n,j}$ and a clairvoyant \emph{greedy-component} rule $\hat p_{n+1,j} = \tilde p_{n+1,j}$---lead to the same qualitative conclusions; we ablate them in Appendix~\ref{sec:forecast-rule-ablation}.

\subsection{All-Informed Markets}\label{sec:all-informed}

We begin with the all-informed market, $\mathcal{I}^{in} = [N]$. An informed seller $i$'s correctly specified demand model has parameter
\[
\theta_i^{*,\,in} \;\triangleq\; (\alpha_i,\,\gamma_{i,1},\,\ldots,\,\gamma_{i,i-1},\,\beta_i,\,\gamma_{i,i+1},\,\ldots,\,\gamma_{i,N})^\top \in \mathbb R^{N+1},
\]
which collects the true intercept, cross-price coefficients, and own-price slope from~\eqref{eq:demand}, with $\beta_i$ occupying the own-price slot (the position that multiplies $p_{n,i}$ in $x_{n,i}^{in}$) so that $(x_{n,i}^{in})^\top \theta_i^{*,in}$ reproduces the demand mean; let $\hat\theta_{n,i}^{in}$ denote the projected least-squares estimate at time $n$. Throughout this subsection, let $\Gamma$ be the Nash matrix from Section~\ref{sec:solution-concept} and define the linearized best-response map around $\mathbf p^{NE}$,
\[
B \;\triangleq\; I - \tfrac{1}{2}\,\mathrm{diag}(1/\beta_i)\,\Gamma.
\]
Under the standing assumption $\gamma_i < |\beta_i|$, $I - B$ is positive stable, so the matrix Lyapunov equation $P(I-B) + (I-B)^\top P = I$ admits a unique symmetric positive-definite solution $P$; we set $\mu \triangleq 1/\sigma_{\max}(P) > 0$. The following result characterizes the convergence of realized prices to the competitive equilibrium; the proof is in Appendix~\ref{proof:thm:allinformedmeanforecast}.

\begin{theorem}[All-informed market]\label{thm:allinformedmeanforecast}
Suppose that 
\begin{equation}\label{eq:conditioninformedsellersmixed}
\eta_{\min} + 1 \;>\; 2\, \eta_{\max}.
\end{equation}
Then, $\hat\theta_{n,k}^{in}\to\theta_k^{*,\,in}$ and $\tilde p_{n,k}\to p_k^{NE}$ a.s.\ for every $k\in[N]$, and the realized price satisfies
\begin{equation}\label{eq:cor:allinformed-rate}
\mathbb{E}\|\mathbf{p}_n - \mathbf{p}^{NE}\|_2^2 \;=\; O\!\left(n^{-\mu/2}(\log n)^3 \;+\; n^{\eta_{\max} - 1}\log n \;+\; n^{-\eta_{\min}}\right).
\end{equation}
Moreover, in the special case where $\eta_i = 1/2$ for every $i\in[N]$ and, additionally,
\begin{equation}\label{eq:cor:symmpart-regularity}
\lambda_{\max}(B + B^\top) \;<\; 1,
\end{equation}
we have $\mathbb{E}\|\mathbf p_n - \mathbf p^{NE}\|_2^2 = O(n^{-1/2}\log n)$. Summing over $n\le T$ in~\eqref{eq:regret-equiv-main} gives individual dynamic regret $\Delta_i(\theta_i,T) = O(\sqrt T\,\log T)$ for every $i\in[N]$ and the same rate for the aggregate regret.
\end{theorem}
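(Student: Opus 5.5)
The argument splits into two layers: an estimation layer showing that each informed seller's least-squares estimate converges at a rate governed by its accumulated exploration, and a dynamics layer treating the running means $m_n$ as a stochastic-approximation recursion driven by the linearized best-response map $B$ and perturbed by estimation error. The realized price then decomposes as $\mathbf p_n-\mathbf p^{NE}=(\tilde{\mathbf p}_n-\mathbf p^{NE})+\mathbf z_n$. The term $\mathbb E\|\mathbf z_n\|^2=\sum_k\nu_{n,k}^2=O(n^{-\eta_{\min}})$ is exactly the third term in \eqref{eq:cor:allinformed-rate}, so the work is to control $\tilde{\mathbf p}_n-\mathbf p^{NE}$.

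\emph{Estimation layer.} For informed seller $k$, the regressor $x_{n,k}^{in}=(1,\mathbf p_n)$ is correctly specified, so the least-squares error is $V_{n,k}^{-1}\sum_m x_{m,k}^{in}\varepsilon_{m,k}$ with $V_{n,k}=\sum_m x_{m,k}^{in}(x_{m,k}^{in})^\top$. A self-normalized martingale bound gives $\|\tilde\theta_{n,k}-\theta_k^{*,in}\|^2=O(\log n/\lambda_{\min}(V_{n,k}))$, and projection onto the convex set $\Theta_k^{in}$ only reduces this error. For the lower bound on $\lambda_{\min}(V_{n,k})$, I use that the perturbations $z_{m,j}$ are independent across sellers and of $\mathcal F_{m-1}$, with the greedy part $\tilde{\mathbf p}_m$ being $\mathcal F_{m-1}$-measurable. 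This should give conditional covariance $\succeq c\,\min_j\nu_{m,j}^2$ in the centered directions, and a matrix Freedman bound in the spirit of Lemmas~\ref{lem:Jn-lower} and~\ref{lem:yy-lower} then yields $\lambda_{\min}(V_{n,k})\gtrsim\sum_{m\le n}m^{-\eta_{\max}}\asymp n^{1-\eta_{\max}}$ a.s. and in expectation, up to logarithms. This step needs care because the intercept direction requires $(1,\mathbf p)$ to be non-collinear, which holds since the centered price covariance is nondegenerate. The resulting estimation error is $O(n^{\eta_{\max}-1}\log n)$, the second term in \eqref{eq:cor:allinformed-rate}, and its vanishing gives the a.s. consistency $\hat\theta_{n,k}^{in}\to\theta_k^{*,in}$.

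\emph{Dynamics layer.} With the running-mean forecast, $\tilde p_{n+1,k}=\phi_k^{in}(\hat\theta_{n,k},m_{n,-k})$. Under the true parameters this map is affine, $\tilde{\mathbf p}_{n+1}-\mathbf p^{NE}=B(m_n-\mathbf p^{NE})$ exactly, because $\phi_k(\theta^*,\cdot)$ is linear and $B$ has zero diagonal. The running mean then satisfies
\[
m_{n+1}-\mathbf p^{NE}=\Bigl(1-\tfrac{1}{n+1}\Bigr)(m_n-\mathbf p^{NE})+\tfrac{1}{n+1}\bigl[B(m_n-\mathbf p^{NE})+e_n+\mathbf z_{n+1}\bigr],
\]
where $e_n$ is the plug-in error, bounded by $L\|\hat\theta_n-\theta^*\|$ uniformly because prices are bounded and the own-slope coordinate is bounded away from zero on $\Theta^{in}$. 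For the Lyapunov function $W_n=(m_n-\mathbf p^{NE})^\top P(m_n-\mathbf p^{NE})$, the identity $P(I-B)+(I-B)^\top P=I$ gives a drift $-\tfrac{1}{n}\|m_n-\mathbf p^{NE}\|^2\le-\tfrac{\mu}{n}W_n$. Handling the cross term with $e_n$ by Young's inequality with a small constant, and the $\mathbf z$ term as a martingale difference contributing $O(n^{-2}\sum_k\nu_{n,k}^2)$, produces a recursion of the form
\[
\mathbb E W_{n+1}\le\bigl(1-\tfrac{\mu/2}{n}\bigr)\mathbb E W_n+O\!\bigl(n^{-1}\,\mathbb E\|e_n\|^2\bigr)+O(n^{-2}).
\]
A Chung-type lemma solves this as $O(n^{-\mu/2}(\log n)^3)$ from the homogeneous part, with logarithms absorbed from the early error, plus $O(n^{\eta_{\max}-1}\log n)$ from the error forcing. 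Condition \eqref{eq:conditioninformedsellersmixed} is what keeps the estimation error decaying faster than the exploration variance allows, so that the error forcing is summable against the $1/n$ step and consistent with the lower bound on $\lambda_{\min}$. Since $\tilde{\mathbf p}_{n+1}-\mathbf p^{NE}=B(m_n-\mathbf p^{NE})+e_n$, the rate transfers to the greedy prices, and adding $\mathbb E\|\mathbf z_n\|^2$ gives \eqref{eq:cor:allinformed-rate}. The a.s. convergence follows from Robbins--Siegmund applied to $W_n$.

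\emph{Special case.} When $\eta_i=1/2$ for every $i$, condition \eqref{eq:cor:symmpart-regularity} gives $\|m-\mathbf p^{NE}\|^2$-drift with constant $(1-\lambda_{\max}(B+B^\top)/2)>1/2$, working directly with $P=I/2$ rather than the general Lyapunov solution. The homogeneous term then decays faster than $n^{-1/2}$, and the total is dominated by $n^{-1/2}\log n$, coming from the $n^{\eta_{\max}-1}\log n$ and $n^{-\eta_{\min}}$ terms. Summing over $n\le T$ and invoking \eqref{eq:regret-equiv-main} gives $\Delta_i(\theta_i,T)=O(\sqrt T\log T)$.

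The main obstacle is the eigenvalue lower bound for $V_{n,k}$ under endogenous, decaying, and heterogeneous exploration, together with coupling it to the dynamics without circularity, since $e_n$ depends on $V_n$ and $V_n$ depends on the price path. I would first obtain a crude bound using only the independence of the $z_{m,j}$ and the boundedness of prices, then bootstrap.
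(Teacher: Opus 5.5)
Your proposal is correct and follows essentially the paper's route: an informed OLS rate from a matrix-Freedman lower bound on the design matrix, and the exact linearization $\tilde{\mathbf p}_{n+1}-\mathbf p^{NE}=B(\mathbf m_n-\mathbf p^{NE})+\boldsymbol\delta_n$. The Lyapunov equation $P(I-B)+(I-B)^\top P=I$ then feeds a Chung-type recursion, Robbins--Siegmund gives the a.s.\ claims, and the symmetric part of $I-B$ handles the special case; your second-moment recursion with Young's inequality and your choice $P=I/2$ are cosmetic variants of the paper's Minkowski bound on $\sqrt{\mathbb E V_n}$ and its estimate $\mu\ge 2-\lambda_{\max}(B+B^\top)$. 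One correction: condition~\eqref{eq:conditioninformedsellersmixed} is not what makes the forcing $n^{-1}\cdot n^{\eta_{\max}-1}\log n$ summable, since that holds for any $\eta_{\max}<1$; it enters in the design-matrix concentration, where the matrix-Freedman variance proxy grows like $n^{1-\eta_{\min}}$ (driven by the largest dither) against a signal of only $n^{1-\eta_{\max}}$, giving the tail $\exp(-c\,n^{1+\eta_{\min}-2\eta_{\max}})$, and because that bound uses only boundedness of the greedy components and independence of the dithers, no bootstrap is needed.
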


\paragraph{Discussion.}
The general rate~\eqref{eq:cor:allinformed-rate} decomposes into three terms: a market-side rate independent of the exploration schedule, the parameter-estimation error (Theorem~\ref{thm:informedolsrate}), and the residual exploration variance controlled by the slowest-decaying schedule. In the special case, every seller achieves the near-optimal $\tilde O(\sqrt T)$ regret rate, matching the optimal rate for dynamic pricing with demand learning in the monopolistic case~\citep{keskinDynamicPricingUnknown2014}. Thus, the all-informed market can be interpreted as an example of a well-behaved market in which the sellers reach a stable outcome efficiently without paying a persistent exploration tax.

\paragraph{Interpreting the conditions.}
Condition~\eqref{eq:conditioninformedsellersmixed} is a mild regularity requirement on relative exploration rates to ensure that the learning signal (mean growth) dominates the noise (variance). Condition~\eqref{eq:cor:symmpart-regularity} requires that the largest eigenvalue of $B + B^\top$ is strictly less than $1$; this is automatic whenever $B$ is symmetric, since the proof of Theorem~\ref{thm:allinformedmeanforecast} gives $\|B\|_\infty < 1/2$, hence $\lambda_{\max}(B+B^\top) = 2\lambda_{\max}(B) \le 2\|B\|_\infty < 1$. When all sellers share a common own-price slope $\beta_i \equiv \beta$, a direct calculation gives $B + B^\top = I - H / (2\beta)$, so~\eqref{eq:cor:symmpart-regularity} is \emph{equivalent} to $H \prec 0$, the condition from Section~\ref{sec:solution-concept} that guarantees a unique unconstrained collusive outcome.

\paragraph{Numerical experiments.}
Figure~\ref{fig:allinformed-numerics} corroborates Theorem~\ref{thm:allinformedmeanforecast} in the special case with exploration rate $\eta_i = 1/2$. Panel~(\subref{fig:EA-symm}) is a symmetric market at $N \in \{2, 5\}$; panel~(\subref{fig:EA-asym}) is an asymmetric market at $N \in \{3, 5\}$ with heterogeneous demand primitives. In both panels the regularity condition~\eqref{eq:cor:symmpart-regularity} holds and the seed-averaged price-MSE trajectory decays at the slope $\approx -1/2$ predicted by Theorem~\ref{thm:allinformedmeanforecast}. Panel~(\subref{fig:EA-stress}) is an asymmetric $N = 3$ market where condition~\eqref{eq:cor:symmpart-regularity} is violated. In this regime, the MSE still tends to zero, but at a markedly slower empirical slope $\approx -0.3$.

\begin{figure}[!htbp]
\centering
\begin{subfigure}{.32\textwidth}
\centering
\includegraphics[width=\linewidth]{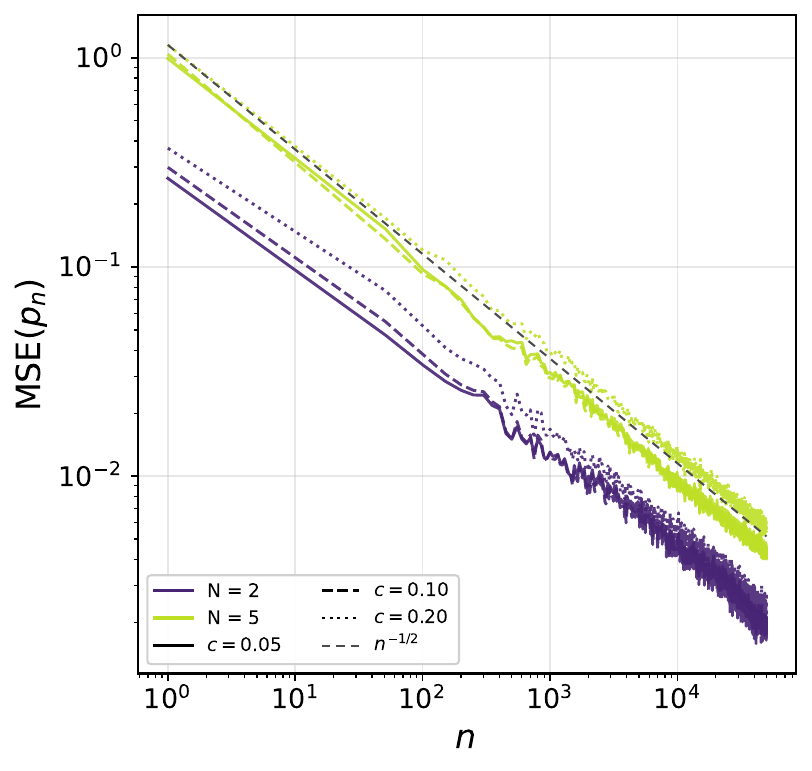}
\caption{Symmetric, regularity holds.}
\label{fig:EA-symm}
\end{subfigure}\hfill
\begin{subfigure}{.32\textwidth}
\centering
\includegraphics[width=\linewidth]{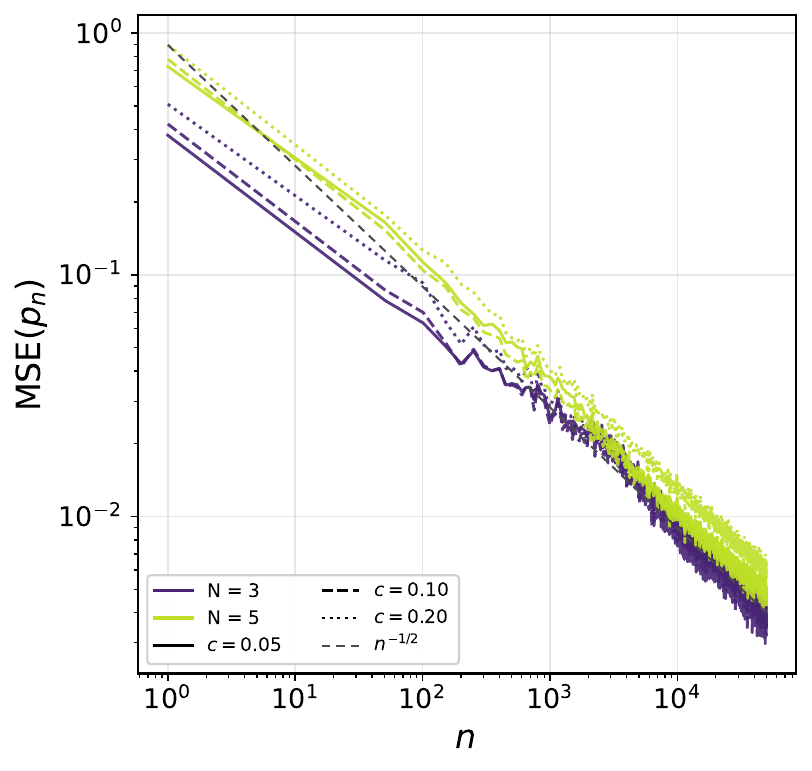}
\caption{Asymmetric, regularity holds.}
\label{fig:EA-asym}
\end{subfigure}\hfill
\begin{subfigure}{.32\textwidth}
\centering
\includegraphics[width=\linewidth]{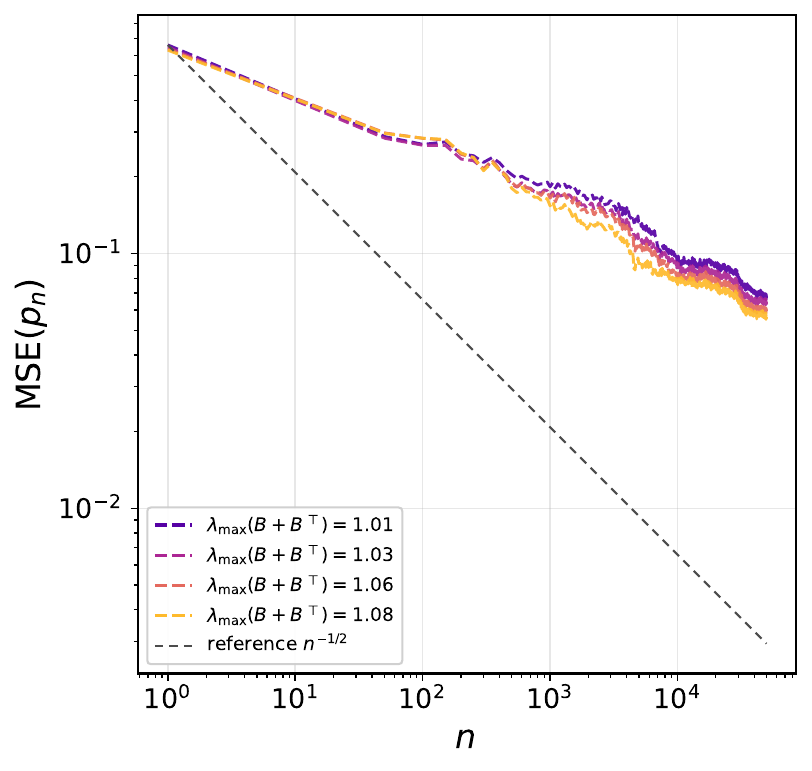}
\caption{Every cell violates regularity.}
\label{fig:EA-stress}
\end{subfigure}
\caption{Empirical corroboration of Theorem~\ref{thm:allinformedmeanforecast} in the special case $\eta_i = 1/2$. All three panels overlay seed-averaged price-MSE $\EE\norm{\mathbf p_n - \mathbf p^{NE}}_2^2$ on log--log axes. Panel~(\subref{fig:EA-symm}): symmetric all-informed markets at $N \in \{2, 5\}$. Panel~(\subref{fig:EA-asym}): asymmetric all-informed markets at $N \in \{3, 5\}$. In both, regularity~\eqref{eq:cor:symmpart-regularity} holds and the empirical slope matches the predicted $-1/2$. Panel~(\subref{fig:EA-stress}): asymmetric $N = 3$ market designed so every cell violates regularity~\eqref{eq:cor:symmpart-regularity}. The MSE still decays in every cell but at a slower slope ($\approx -0.3$ vs.\ the predicted $-1/2$), confirming that condition~\eqref{eq:cor:symmpart-regularity} is sufficient but not strictly necessary in practice. Implementation details are in Appendix~\ref{sec:experimental-details}.}
\label{fig:allinformed-numerics}
\end{figure}

\subsection{Mixed Markets}\label{sec:mixedmarkets}

We now consider mixed markets in which a strict subset of sellers is oblivious and the remainder is informed, i.e., $\cI^{ob} \cup \cI^{in} = [N]$ with both subsets non-empty. Recall from Section~\ref{sec:learning-dynamics} that both oblivious and informed sellers follow an estimate--exploit--explore paradigm, and recall the heterogeneous-exploration setup from the opener of Section~\ref{sec:informed-strategic-choice}: oblivious sellers explore persistently ($\eta_i = 0$), while informed sellers' dithering decays at rate $\nu_{n,j}^2 = \Theta(n^{-\eta_j})$ with $\eta_j > 0$.

Theorem~\ref{thm:mixedmarketconvergence} below characterizes the market outcome in mixed markets. To set the stage, we first introduce some constants. From Theorem~\ref{thm:globalconvergencetocompetitiveoutcome} we inherit the oblivious greedy-map Lipschitz constant $L_\phi^{ob}$, the regressor envelope $C_x$ (a uniform a.s. bound on $\|x_{n,i}^{ob}\|_2$), and the persistent-excitation lower bound $C_M$---all restricted to the set of oblivious sellers $\cI^{ob}$. We also define the oblivious-on-oblivious cross-coupling weight
\[
\bar\gamma^{ob} \;\triangleq\; \frac{1}{2}\Bigl(\max_i \sum_{j \in \cI^{ob}\setminus\{i\}} \gamma_{i,j} + \max_i \sum_{j \in \cI^{ob}\setminus\{i\}}\gamma_{j,i}\Bigr),
\]
the analogue of $\bar\gamma$ from Theorem~\ref{thm:globalconvergencetocompetitiveoutcome} restricted to oblivious--oblivious links. We further introduce $L_\phi^{in,\theta}$, the maximum Lipschitz constant of $\phi_j^{in}(\theta;\,\mathbf m)$ in $\theta$ over the projection boxes of the informed sellers and over possible competitor price means $\mathbf m \in [l,u]^{N-1}$.

Since informed sellers and oblivious sellers interact through different modeling dynamics, three additional cross-coupling constants are needed to capture the feedback effects. The first measures the aggregate effect of informed sellers on oblivious sellers' regression error:
\[
\bar\Lambda \;\triangleq\; \max_{i \in \cI^{ob}} \sum_{j \in \cI^{in}}\frac{\gamma_{i,j}\,\gamma_j}{2|\beta_j|} \;+\; L_\phi^{in,\theta}\,\max_{i \in \cI^{ob}}\sum_{j \in \cI^{in}}\gamma_{i,j}.
\]
The second represents how a perturbation in the running mean $m_{n,k}$ propagates through informed sellers' best responses and lands on oblivious sellers' regression:
\[
\bar\Psi \;\triangleq\; \max_{k \in [N]} \sum_{i \in \cI^{ob}}\sum_{j \in \cI^{in}\setminus\{k\}}\frac{\gamma_{i,j}\,\gamma_{j,k}}{2|\beta_j|}.
\]
The third is the small-gain margin on the informed side: the diagonal self-stabilizing rate---combining the effects of oblivious sellers and running-mean forecasts on informed sellers' regression---minus the informed--informed cross-coupling drag,
\[
\bar\kappa \;\triangleq\; \min\,\Bigl\{2 - L_\phi^{ob},\;\; 2 - L_\phi^{in,\theta} - \max_{j \in \cI^{in}}\frac{\gamma_j}{2|\beta_j|}\Bigr\} \;-\; \max_{k \in [N]}\sum_{j \in \cI^{in}\setminus\{k\}}\frac{\gamma_{j,k}}{2|\beta_j|}.
\]
We are now ready to state the convergence result for mixed markets. The proof is in Appendix~\ref{proof:thm:mixedmarketconvergence}.

\begin{theorem}[\textbf{Mixed-market convergence}]\label{thm:mixedmarketconvergence}
Suppose:
\begin{itemize}
\item[(i)] (Exploration regularity, informed side) $\eta_{\max} < 1/2$;
\item[(ii)] (Small-gain, informed side) $\bar\kappa \;>\; 0$;
\item[(iii)] (Persistent excitation, oblivious side) There exists $C_M > 0$ such that $\EE[\,x_{n,i}^{ob}(x_{n,i}^{ob})^\top \mid \cF_{n-1}] \succeq C_M\, I_2$ for every $n$ and every $i \in \cI^{ob}$;
\item[(iv)] (Small-gain, oblivious side) $C_M > C_x\left(\, 2\bar\gamma^{ob}\, L_\phi^{ob} + \bar\Lambda + L_\phi^{ob}\, \bar\Psi/\bar\kappa \,\right)$.
\end{itemize}
Then, $\hat\theta_{n,i}^{ob} \to \theta_i^{*,ob}$, $\hat\theta_{n,j}^{in} \to \theta_j^{*,in}$, and $\tilde p_{n,k} \to p_k^{NE}$ a.s.\ for every $i \in \cI^{ob}$, $j \in \cI^{in}$, and $k \in [N]$. Moreover, the aggregate mean-squared error satisfies
\[
\sum_{i \in \cI^{ob}} \EE\norm{\hat\theta_{n,i}^{ob} - \theta_i^{*,ob}}_2^2
\;+\;
\sum_{k \in [N]} \EE\!\left[(\tilde p_{n,k} - p_k^{NE})^2\right]
\;=\;
\begin{cases}
O\!\bigl(n^{\eta_{\max} - 1}\,\log n\bigr), & c^* > 1 - \eta_{\max},\\[3pt]
O\!\bigl(n^{\eta_{\max} - 1}\,(\log n)^2\bigr), & c^* = 1 - \eta_{\max},\\[3pt]
O\!\bigl(n^{-c^*}\bigr), & c^* < 1 - \eta_{\max};
\end{cases}
\]
and every informed seller's realized-price mean-squared error satisfies $\EE[(p_{n,j} - p_j^{NE})^2] = \tilde O \left(n^{-\min\{c^*,\, \eta_j\}} \right)$ for $j \in \cI^{in}$, where $\tilde O$ hides logarithmic factors and $c^* \triangleq \sup_{\lambda > 0} \min\bigl\{1 + \bigl[C_M - C_x(2\bar\gamma^{ob} L_\phi^{ob} + \bar\Lambda) - \lambda L_\phi^{ob}\bigr]/C_x^2,\;\; \bar\kappa - C_x \bar\Psi/\lambda\bigr\} > 0$.
\end{theorem}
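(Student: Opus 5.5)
The plan is to glue the two single-population arguments already in the paper, the oblivious contraction behind Theorem~\ref{thm:globalconvergencetocompetitiveoutcome} and the informed-side analysis behind Theorem~\ref{thm:allinformedmeanforecast}, into a coupled two-block Lyapunov recursion, and then close it with a small-gain argument. Write $U_n \triangleq \sum_{i\in\cI^{ob}}\EE\|\hat\theta^{ob}_{n,i}-\theta_i^{*,ob}\|_2^2$ for the oblivious parameter error. Write $V_n \triangleq \EE\|\mathbf m_n-\mathbf p^{NE}\|_2^2$ for the running-mean error, where $\mathbf m_n$ is the running-mean forecast used by informed sellers. The informed parameter error is handled separately. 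The informed regression is correctly specified. Condition~(i) gives $\sum_m \nu_{m,j}^2 \gtrsim n^{1-\eta_{\max}}$, which together with Lemmas~\ref{lem:Jn-lower} and~\ref{lem:yy-lower} yields a lower bound on the minimum eigenvalue of the informed design. Theorem~\ref{thm:informedolsrate} then gives $\EE\|\hat\theta^{in}_{n,j}-\theta_j^{*,in}\|_2^2=O(n^{\eta_{\max}-1}\log n)$. I would treat this as an exogenous forcing term; it is the source of the $n^{\eta_{\max}-1}\log n$ floor in the rate.

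\textbf{Oblivious block.} I would follow the recursive least-squares form used for Theorem~\ref{thm:globalconvergencetocompetitiveoutcome}: $\hat\theta_{n+1,i}-\theta_i^{*,ob}=(I-\tfrac1{n}M_{n,i}^{-1}x x^\top)(\hat\theta_{n,i}-\theta_i^{*,ob})+\tfrac1n M_{n,i}^{-1}x(\text{bias}+\varepsilon)$. Here the misspecification bias is $\sum_{j\ne i}\gamma_{i,j}(p_{n,j}-p_j^{NE})$.

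I would split this bias by the type of seller $j$. If $j$ is oblivious, I bound $|\tilde p_{n,j}-p_j^{NE}|\le L^{ob}_\phi\|\hat\theta_{n,j}-\theta_j^{*,ob}\|$ and symmetrize with Cauchy--Schwarz over row and column sums, which produces the term $2\bar\gamma^{ob}L_\phi^{ob}C_x$. If $j$ is informed, I write $\tilde p_{n,j}-p_j^{NE}=[\phi_j^{in}(\hat\theta_{n,j},\mathbf m_{n,-j})-\phi_j^{in}(\theta^{*}_j,\mathbf m_{n,-j})]+\sum_k\frac{\gamma_{j,k}}{2|\beta_j|}(m_{n,k}-p_k^{NE})$. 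The first bracket contributes $L_\phi^{in,\theta}\gamma_{i,j}$ and the informed estimation forcing. The second bracket, after bounding $\|\mathbf m\|$-type terms, contributes $\gamma_{i,j}\gamma_j/2|\beta_j|$; these two pieces together form $\bar\Lambda$. The part that couples through other coordinates of $\mathbf m_n$ is left as the cross term with weight $\bar\Psi$.

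Using persistent excitation~(iii), the contraction coefficient is $C_M$ per unit of $1/n$. Combined with Young's inequality $2ab\le \lambda a^2+b^2/\lambda$, this gives
\[
U_{n+1}\le\Bigl(1-\tfrac{2}{n}\bigl[C_M-C_x(2\bar\gamma^{ob}L^{ob}_\phi+\bar\Lambda)-\lambda L^{ob}_\phi\bigr]/C_x^{2}\cdots\Bigr)U_n+\tfrac{C_x\bar\Psi}{\lambda n}V_n+O(n^{\eta_{\max}-2}\log n).
\]

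\textbf{Mean block.} $\mathbf m_{n+1}-\mathbf p^{NE}=(1-\tfrac1{n+1})(\mathbf m_n-\mathbf p^{NE})+\tfrac1{n+1}(\tilde{\mathbf p}_{n+1}-\mathbf p^{NE}+\mathbf z_{n+1})$. For oblivious sellers I substitute $\tilde p=\phi^{ob}(\hat\theta)$. For informed sellers I substitute the linearized best response in $\mathbf m_n$ plus the estimation error. The diagonal drift $2-\cdots$ minus the informed--informed cross drag is precisely $\bar\kappa$ in (ii). This yields $V_{n+1}\le(1-\tfrac{\bar\kappa}{n})V_n+\tfrac{L^{ob}_\phi}{n}U_n+O(n^{\eta_{\max}-2}\log n)+O(n^{-2})$.

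\textbf{Closing the system.} I form $W_n=U_n+c V_n$ with $c$ chosen from $\lambda$. Condition~(iv) ensures that both effective contraction rates are positive for some $\lambda$. A Chung-type lemma applied to $W_{n+1}\le(1-c^*/n)W_n+O(n^{\eta_{\max}-2}\log n)$ then gives the three-case rate according to whether $c^*$ exceeds, equals, or falls below $1-\eta_{\max}$; the equality case picks up the extra log. Almost-sure convergence follows from the Robbins--Siegmund theorem applied to the pathwise analogue, using that the martingale noise terms are square-summable at scale $1/n$. The realized-price bound for informed sellers follows from $\EE(p_{n,j}-p_j^{NE})^2\le2\EE(\tilde p_{n,j}-p_j^{NE})^2+2\nu_{n,j}^2$. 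The bound for $\tilde p$ is transferred from $U_n$ and $V_n$ by Lipschitzness.

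\textbf{Main obstacle.} The hardest step is the bias cross term in the oblivious block. The regressor $x_{n,i}$ is correlated with competitors' deviations through the shared history, so the bias is not a martingale difference. Obtaining exactly the constants $\bar\Lambda$ and $\bar\Psi$, instead of cruder ones, needs a careful conditional-expectation decomposition: $\EE[x_{n,i}(p_{n,j}-p_j^{NE})\mid\cF_{n-1}]$ factors because the dithers are independent across sellers. Only the greedy parts correlate, and those are $\cF_{n-1}$-measurable. I would try this decomposition first. If constants are lost, I would fall back on absorbing them through the freedom in $\lambda$ and $c$.
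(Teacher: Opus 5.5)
Your overall architecture is the paper's. It combines the oblivious OLS error with the squared running-mean deviations in one Lyapunov function. It factorizes $\EE[\eta_{n+1,i}x_{n+1,i}^\top\tilde e_{n,i}\mid\cF_n]$ using the independent dithers, and applies Young's inequality to produce $2\bar\gamma^{ob}L_\phi^{ob}$, $\bar\Lambda$ and $\bar\Psi$. It then closes via (ii) and (iv), uses Lemma~\ref{lem:nonhomogeneouslinearrecursion}(b) for the trichotomy, and Robbins--Siegmund for a.s.\ convergence.

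The genuine gap is in the oblivious block. You track the Euclidean error $U_n=\sum_i\EE\|\hat\theta^{ob}_{n,i}-\theta_i^{*,ob}\|_2^2$ through $\tilde e_{n+1}=(I-\tfrac1n M_{n,i}^{-1}xx^\top)\tilde e_n+\cdots$. Its first-order drift is $-\tfrac2n\,\tilde e_n^\top M_{n,i}^{-1}\Sigma_{n+1,i}\tilde e_n$, where $\Sigma_{n+1,i}=\EE[x_{n+1,i}x_{n+1,i}^\top\mid\cF_n]$. Here $M_{n,i}$ is the historical average, while $\Sigma_{n+1,i}$ is centered at the current greedy price. So $M_{n,i}^{-1}\Sigma_{n+1,i}$ can have a symmetric part with a negative eigenvalue, for instance when the empirical variance is small and $\tilde p_{n+1,i}$ is far from $\bar p_{n,i}$. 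Condition (iii) therefore gives no contraction for $U_n$, and your claim that ``the contraction coefficient is $C_M$'' fails for this quantity.

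The missing idea is the design-weighted Lyapunov function $W_{n,i}=\tfrac1n\tilde e_{n,i}^\top S_{n,i}\tilde e_{n,i}$ from the proof of Theorem~\ref{thm:globalconvergencetocompetitiveoutcome}. Sherman--Morrison gives $W_{n+1,i}=\tfrac{n}{n+1}W_{n,i}-\tfrac{(x_{n+1,i}^\top\tilde e_{n,i})^2}{(n+1)(1+h_{n+1,i})}+\cdots$, so (iii) yields exactly $-\tfrac{C_M}{(n+1)(1+\delta_h)}\|\tilde e_{n,i}\|_2^2$. This also fixes your constants:
\begin{itemize}
\item The leading $1$ in $c^*$ is the factor $\tfrac{n}{n+1}$.
\item The division by $C_x^2$ is the conversion $\sum_i\|\tilde e_{n,i}\|_2^2\ge W_n^{ob}/C_x^2$. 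It must be applied only after the $-\lambda L_\phi^{ob}$ term from the $\lambda$-weighted mean block has been added to the coefficient of $\sum_i\|\tilde e_{n,i}\|_2^2$.
\item With unweighted Young's and $\lambda$ entering only as the Lyapunov weight, the mean-block rate is $\bar\kappa-C_x\bar\Psi/\lambda$.
\end{itemize}
Your displayed recursion has a factor $2$, no leading $1$, and uses $\lambda$ both inside Young's and alongside a separate weight $c$; it does not reproduce the stated $c^*$. The Euclidean MSE rate in the theorem is recovered only at the end, via $W_{n,i}\ge(C_M-\delta)\|\tilde e_{n,i}\|_2^2$ (Lemma~\ref{lem:spectralboundadaptivecovariancematrix}).

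A second gap is on the informed side. Theorem~\ref{thm:informedolsrate} is an $L^2$ statement, so it gives neither of the two a.s.\ facts you need:
\begin{itemize}
\item $\hat\theta^{in}_{n,j}\to\theta^{*,in}_j$ a.s., which is part of the conclusion and is needed for $\tilde p_{n,j}\to p_j^{NE}$;
\item the pathwise summability $\sum_n\tfrac{1}{n+1}\sum_j\|\tilde e^{in}_{n,j}\|_2^2<\infty$ that your pathwise Robbins--Siegmund step requires. This forcing is not a martingale-noise term, so square-summability of the noise does not cover it.
\end{itemize}
Summability follows from Tonelli and $\sum_n n^{\eta_{\max}-2}\log n<\infty$. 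The a.s.\ convergence needs the exponential tail bound of \citet{keskinDynamicPricingUnknown2014} on the design event plus Borel--Cantelli, as in Lemma~\ref{lem:proof-mm-olsrate}.

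Two smaller attributions are also off. The informed design lower bound comes from the matrix-Freedman Lemma~\ref{lem:designmatrixgrowthcondition}, not Lemmas~\ref{lem:Jn-lower} and~\ref{lem:yy-lower}. The role of (i) is to make $1+\eta_{\min}>2\eta_{\max}$ hold with $\eta_{\min}=0$, and to give $\eta_j<1-\eta_{\max}$, so that the informed realized-price exponent reduces to $\min\{c^*,\eta_j\}$.
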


\paragraph{Discussion.}
Despite the presence of misspecified oblivious competitors, informed sellers learn the true demand model and the market converges jointly to $\mathbf p^{NE}$. Oblivious sellers reproduce the Theorem~\ref{thm:globalconvergencetocompetitiveoutcome} outcome---near-Nash asymptotic prices, persistent exploration tax, linear regret---while informed sellers earn sublinear regret: chaining the informed realized-price MSE rate $\tilde O(n^{-\min\{c^*,\eta_j\}})$ from Theorem~\ref{thm:mixedmarketconvergence} into the per-seller regret identity~\eqref{eq:regret-equiv-main} (Proposition~\ref{prop:dynamicbenchmark}) gives $\Delta_j(\theta_j, T) = \tilde O(T^{1-\min\{c^*,\eta_j\}})$ for every informed seller. Up to logarithmic factors, the informed realized-price MSE decays at rate $\min\{c^*,\,\eta_j\}$, where the \emph{stability bottleneck} $c^*$ pins the fastest joint convergence rate consistent with both subsystems' small-gain margins and $\eta_j$ is the informed seller's exploration exponent. The two bottlenecks decouple, since $\eta_{\max}$ enters neither $c^*$ nor conditions (ii)--(iv). Any larger-than-$\sqrt T$ regret thus reflects the price informed sellers pay for having oblivious competitors who inject persistent misspecification into the market.

\paragraph{Interpreting the conditions.}
The four conditions pair an exploration-regularity requirement with a small-gain inequality on each side of the market. Condition~(i) is~\eqref{eq:conditioninformedsellersmixed} from Theorem~\ref{thm:allinformedmeanforecast} specialized to $\eta_{\min} = 0$ for the persistently exploring oblivious sellers, while condition~(ii) is the small-gain inequality on the informed side, requiring that the diagonal self-decay rate exceeds the informed-on-informed cross-coupling drag. On the oblivious side, condition~(iii) is the persistent-excitation requirement~\eqref{eq:globalconvergencespectrallowerboundcondition} from Theorem~\ref{thm:globalconvergencetocompetitiveoutcome}---the rational behavior predicted in Section~\ref{sec:spiralupphenomenon}---and condition~(iv) is the small-gain inequality requiring that persistent excitation dominate every cross-coupling channel into the oblivious regression, with its leading oblivious--oblivious term recovering the (strongest) all-oblivious condition~\eqref{eq:globalconvergencecondition}.

\begin{figure}[!htbp]
\centering
\begin{subfigure}{.32\textwidth}
\centering
\includegraphics[width=\linewidth]{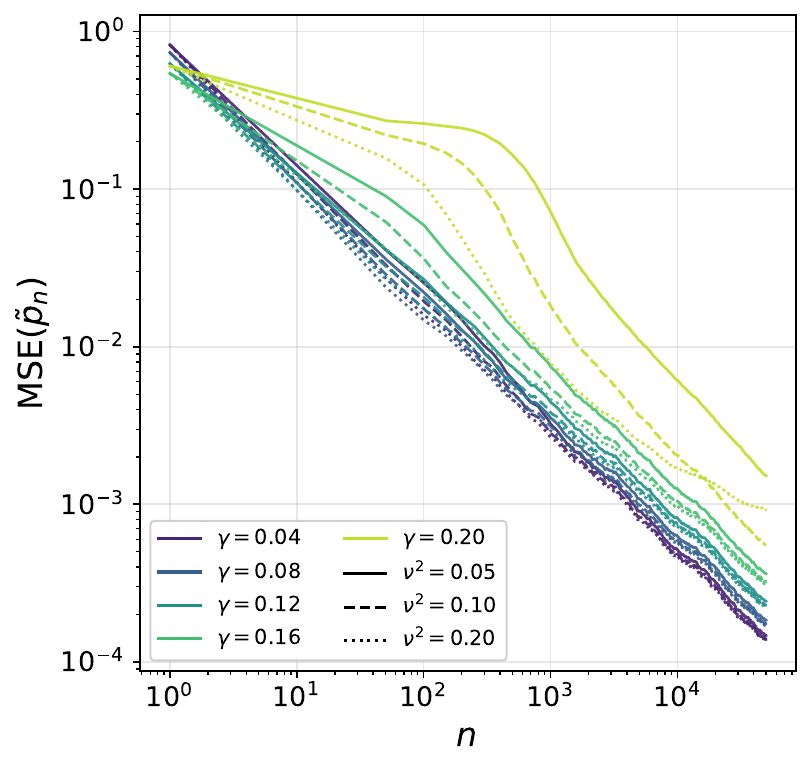}
\caption{Symm., (ii) and (iv) both fail.}
\label{fig:M1}
\end{subfigure}\hfill
\begin{subfigure}{.32\textwidth}
\centering
\includegraphics[width=\linewidth]{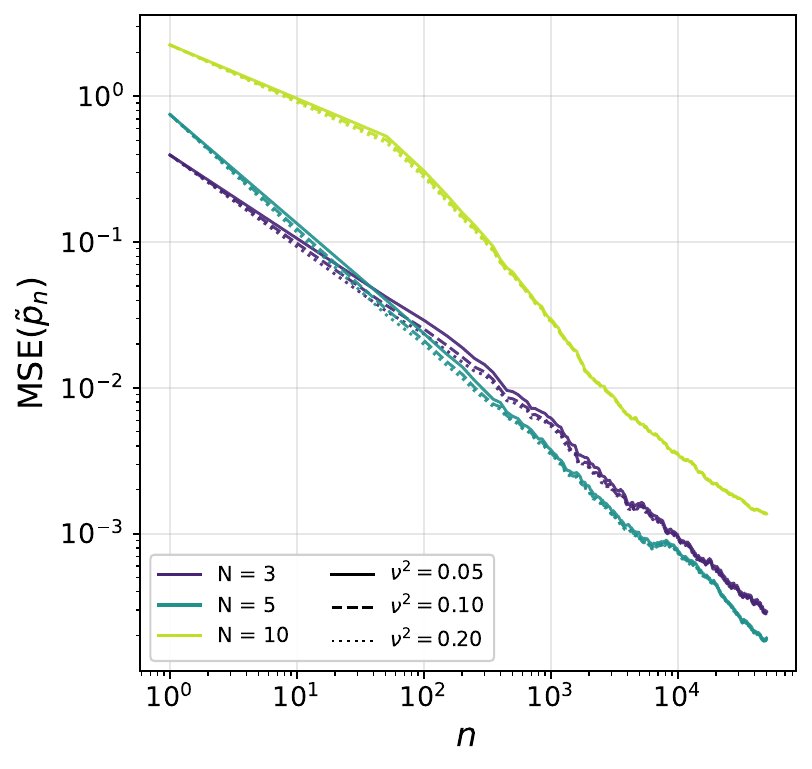}
\caption{Asymm., (ii) and (iv) both fail.}
\label{fig:M2}
\end{subfigure}\hfill
\begin{subfigure}{.32\textwidth}
\centering
\includegraphics[width=\linewidth]{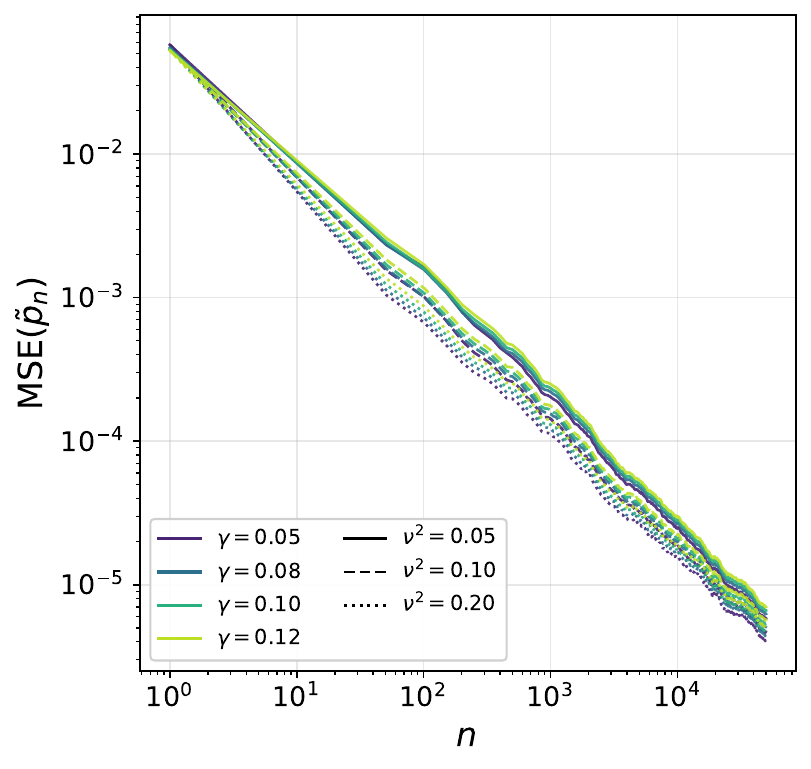}
\caption{Symm., (ii) holds, (iv) fails.}
\label{fig:M1b}
\end{subfigure}
\caption{Empirical verification of Theorem~\ref{thm:mixedmarketconvergence}: seed-averaged price MSE $\EE\norm{\tilde{\mathbf p}_n - \mathbf p^{NE}}_2^2$ on log--log axes. Panel~(\subref{fig:M1}): symmetric $N = 5$ with $|\cI^{ob}| = 2$, $|\cI^{in}| = 3$. Panel~(\subref{fig:M2}): asymmetric markets at $N \in \{3, 5, 10\}$ with one oblivious and $N-1$ informed sellers. In both panels $\bar\kappa < 0$, so conditions (ii) and (iv) fail jointly. Panel~(\subref{fig:M1b}): symmetric $N = 5$ with $|\cI^{ob}| = 4$, $|\cI^{in}| = 1$, primitives chosen so $\bar\kappa>0$ (condition (ii) holds) but condition (iv) fails. Across all 36 cells the price-MSE trajectory decays log-linearly to zero, with no qualitative difference between cells in which (ii) and (iv) jointly fail and cells in which (iv) alone fails. Implementation details are in Appendix~\ref{sec:experimental-details}.}
\label{fig:mixedmarketmaster-numerics}
\end{figure}

\paragraph{Numerical experiments.}
Figure~\ref{fig:mixedmarketmaster-numerics} stress-tests Theorem~\ref{thm:mixedmarketconvergence} in three settings: a symmetric $N = 5$ mixed market sweeping $(\gamma, \nu^2)$ on a $5 \times 3$ grid (panel~(\subref{fig:M1})), an asymmetric mixed market at $N \in \{3, 5, 10\}$ with one oblivious and $N-1$ informed sellers (panel~(\subref{fig:M2})), and a symmetric $N = 5$ market with primitives tuned so condition~(ii) holds but (iv) fails (panel~(\subref{fig:M1b})). In the first two panels the projection-box-induced Lipschitz envelopes are large enough that $\bar\kappa < 0$, so conditions~(ii) and~(iv) jointly fail.\footnote{Condition~(ii) is a prerequisite for (iv) in the proof, so (iv) is well-posed only when (ii) holds.} Across all 36 cells the seed-averaged price-MSE decays cleanly to zero on log--log axes, indicating that the small-gain conditions are sufficient but not necessary in practice.

\paragraph{Alternative forecast rules.}
The two alternatives flagged at the opener of this section---the lag-1 and the greedy-component rules---share with the running mean the practical constraint that the competitor's next-period exploration noise $z_{n+1,1}$ is unobservable. Appendix~\ref{sec:forecast-rule-ablation} verifies that all three leave $\mathbf p^{NE}$ as the price attractor, so the qualitative insight of Theorem~\ref{thm:mixedmarketconvergence} carries over. A fourth, fully clairvoyant rule that does peek at $z_{n+1,1}$ is also ablated there and is discussed in the next section.

\section{The Strategy Game}\label{sec:meta-strategy}

Sections~\ref{sec:oblivious-sellers} (\textsf{ob}--\textsf{ob}), \ref{sec:all-informed} (\textsf{in}--\textsf{in}), and~\ref{sec:mixedmarkets} (\textsf{ob}--\textsf{in}) characterize the three distinct market compositions via their convergence theorems. In this section we lift those three theorems into a strategy game on $\{\textsf{oblivious},\textsf{informed}\}$ and identify its unique strict pure-strategy Nash equilibrium. The key upshot is that informed strictly dominates oblivious in every market composition. For clarity, we first present the duopoly case, then extend to the $N$-seller case.

For the asymptotic revenue argument we impose an additional requirement on the bounded-support exploration shocks of Section~\ref{sec:learning-dynamics}: for every $i\in[N]$ there exists a deterministic constant $\delta_i < \min\,\left\{p_i^{NE} - l,\; u - p_i^{NE}\right\}$ such that
\begin{equation}\label{eq:explore-bound}
|z_{n,i}| \;\le\; \delta_i \qquad \text{for all sufficiently large } n.
\end{equation}
Condition~\eqref{eq:explore-bound} allows clipping to be asymptotically inactive once $\tilde p_{n,i}$ has converged into a neighborhood of $p_i^{NE}$.\footnote{All numerical experiments in this paper use uniform dithering $z_{n,i}\sim\mathrm{Unif}(-c_n, c_n)$ with $c_n$ eventually small enough that~\eqref{eq:explore-bound} holds. The convergence theorems of Sections~\ref{sec:oblivious-sellers} and \ref{sec:informed-strategic-choice} use only conditional first and second moments of $z_{n,i}$ and hence carry through verbatim under~\eqref{eq:explore-bound}.}

\paragraph{Strict dominance.}
Consider the duopoly strategy game in which each seller $i\in\{1,2\}$ chooses $s_i\in\{\textsf{oblivious},\textsf{informed}\}$ and earns the asymptotic surplus-capture $S_i(s_1, s_2)$, defined as in~\eqref{eq:surplus-capture} along the price path induced by $(s_1, s_2)$. The strategy profile $(s_1, s_2)$ determines the market composition and thus which of the three convergence theorems applies. The following proposition shows that the informed strategy strictly dominates the oblivious strategy for each seller, and that the unique strict pure-strategy Nash equilibrium of the strategy game is the all-informed market. The proof is in Appendix~\ref{proof:prop:strict-dom}.

\begin{proposition}[Strict dominance]\label{prop:strict-dom}
Suppose sellers' exploration schedules satisfy~\eqref{eq:explore-bound}, and the assumptions of Theorems~\ref{thm:globalconvergencetocompetitiveoutcome}, \ref{thm:allinformedmeanforecast}, and~\ref{thm:mixedmarketconvergence} hold under the corresponding strategy profiles (with constant oblivious variance $\nu_i^2 > 0$ on the oblivious side and a decaying schedule on the informed side). Then, for each seller $i\in\{1,2\}$ and any opponent choice $s_{-i}$, almost surely,
\[
S_i\bigl(\textsf{informed},\, s_{-i}\bigr) \;=\; 0 \;>\; -\,\frac{|\beta_i|\,\nu_i^2}{\Pi_i^{C}-\Pi_i^{NE}} \;=\; S_i\bigl(\textsf{oblivious},\, s_{-i}\bigr).
\]
Hence, \textsf{informed} strictly dominates \textsf{oblivious} for each seller, and $(\textsf{informed},\textsf{informed})$ is the unique strict pure-strategy Nash equilibrium of the strategy game.
\end{proposition}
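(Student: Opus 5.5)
The argument reduces each of the four strategy profiles to a limiting time-averaged revenue computation along a path whose greedy component converges to $\mathbf p^{NE}$. First, for each profile we invoke the matching convergence theorem: Theorem~\ref{thm:globalconvergencetocompetitiveoutcome} for (\textsf{oblivious},\textsf{oblivious}), Theorem~\ref{thm:allinformedmeanforecast} for (\textsf{informed},\textsf{informed}), and Theorem~\ref{thm:mixedmarketconvergence} for the mixed profiles. In every case $\tilde p_{n,k}\to p_k^{NE}$ a.s. Together with~\eqref{eq:explore-bound}, this means that for all large $n$ the realized price $\tilde p_{n,k}+z_{n,k}$ lies in $[l,u]$. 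Hence clipping is eventually inactive, and $p_{n,k}=\tilde p_{n,k}+z_{n,k}$ exactly with $z_{n,k}$ mean zero and independent of $\mathcal F_{n-1}$ and of the other sellers' shocks.

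Next we expand realized revenue. Write $p_{m,i}d_{m,i}=R_i(p_{m,i},\mathbf p_{m,-i})+p_{m,i}\varepsilon_{m,i}$. The term $\frac1n\sum p_{m,i}\varepsilon_{m,i}\to0$ a.s. by the martingale SLLN, since prices are bounded and $\varepsilon$ is independent of $p_{m,i}$. Expanding the quadratic $R_i$ around $\tilde{\mathbf p}_m$ gives
\[
R_i(\mathbf p_m)=R_i(\tilde{\mathbf p}_m)+\nabla R_i(\tilde{\mathbf p}_m)^\top \mathbf z_m+\beta_i z_{m,i}^2+\sum_{j\neq i}\gamma_{i,j}z_{m,i}z_{m,j}.
\]
The linear term and the cross terms $z_{m,i}z_{m,j}$ ($j\ne i$, independent, mean zero) are martingale differences with bounded increments, so their averages vanish a.s. The term $R_i(\tilde{\mathbf p}_m)$ has Ces\`aro limit $\Pi_i^{NE}$ by continuity and the a.s. convergence of $\tilde{\mathbf p}_m$. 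For $\frac1n\sum z_{m,i}^2$, we again use the martingale SLLN: it converges to $\lim\frac1n\sum\nu_{m,i}^2$. This limit is $\nu_i^2$ when seller $i$ is oblivious with constant variance, and $0$ when seller $i$ is informed with $\nu_{m,i}^2=\Theta(m^{-\eta_i})$, $\eta_i>0$. Thus the $\liminf$ in~\eqref{eq:surplus-capture} is an actual limit, equal to $0$ for an informed seller and $-|\beta_i|\nu_i^2/(\Pi_i^C-\Pi_i^{NE})$ for an oblivious one, independently of $s_{-i}$. The denominator is positive by the standing assumption.

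Dominance and the equilibrium claim then follow immediately. Since $\nu_i^2>0$, each seller strictly prefers \textsf{informed} against either opponent choice. A strictly dominant strategy for both players makes (\textsf{informed},\textsf{informed}) a strict Nash equilibrium. It is also the unique pure one, because any profile containing \textsf{oblivious} admits a strictly profitable deviation.

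The main obstacle is the step where clipping becomes inactive and the exact decomposition $p_{n,k}=\tilde p_{n,k}+z_{n,k}$ holds. One has to check that~\eqref{eq:explore-bound} combined with a.s. convergence of $\tilde p_{n,k}$ gives a random but finite time after which no truncation occurs. Finitely many initial terms do not affect Ces\`aro limits. Only conditional second moments of $z$ enter, so the convergence theorems apply unchanged, as the paper's footnote indicates. A secondary check is that $\tilde p_{m,k}$ is $\mathcal F_{m-1}$-measurable, which justifies treating the linear and cross terms as martingale differences.
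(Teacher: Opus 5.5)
Your proposal is correct and follows essentially the same route as the paper, which proves the proposition via Lemma~\ref{lem:asymp-revenue} and Corollary~\ref{cor:asymp-S}. Both arguments use eventual inactivity of clipping under~\eqref{eq:explore-bound}, an exact quadratic expansion of revenue, a martingale SLLN for the residuals, and Ces\`aro averaging of the exploration variance. The only cosmetic difference is bookkeeping: the paper expands around $\mathbf p^{NE}$ and lumps all residuals into $R_{n,i}-\EE[R_{n,i}\mid\cF_{n-1}]$, whereas you expand around $\tilde{\mathbf p}_m$ and treat the demand-noise, linear, cross, and $z_{m,i}^2-\nu_{m,i}^2$ terms separately.
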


\paragraph{Numerical experiments.}
Table~\ref{tab:meta-revenue-summary} reports per-seller realized average revenue $\bar R_{T,i} \triangleq R_{T,i}/T$ and the surplus-capture ratio $S_i$ for each composition in a symmetric duopoly. In the \textsf{ob}--\textsf{ob} composition, both sellers pay the persistent exploration tax, with realized revenues sitting below $\Pi^{NE}$ in proportion to $\nu^2$; in the \textsf{ob}--\textsf{in} composition, the oblivious seller pays the tax while the informed seller essentially captures $\Pi^{NE}$ under the running-mean forecast; and in the \textsf{in}--\textsf{in} composition, both sellers' realized revenues are essentially equal to $\Pi^{NE}$. 

\begin{table}[!htbp]
\centering
\begin{tabular}{llcccc}
\toprule
composition & exploration / forecast & $\bar R_{T,1}$ {\scriptsize[5--95\%]} & $\bar R_{T,2}$ {\scriptsize[5--95\%]} & $S_1$ & $S_2$ \\
\midrule
\textsf{ob}--\textsf{ob} & linear, $\nu^2 = 0.20$ & $2.99\;[2.95, 3.03]$ & $2.99\;[2.96, 3.03]$ & $-0.27$ & $-0.27$ \\
\textsf{ob}--\textsf{ob} & linear, $\nu^2 = 0.05$ & $3.16\;[3.12, 3.23]$ & $3.16\;[3.12, 3.23]$ & $-0.04$ & $-0.04$ \\
\midrule
\textsf{ob}--\textsf{in} & running mean, $\eta = 0.5$  & $3.09\;[3.05, 3.16]$ & $3.19\;[3.16, 3.22]$ & $-0.13$ & $\phantom{-}0.00$ \\
\textsf{ob}--\textsf{in} & running mean, $\eta = 0.25$ & $3.09\;[3.06, 3.13]$ & $3.18\;[3.15, 3.21]$ & $-0.14$ & $-0.01$ \\
\textsf{ob}--\textsf{in} & perfect prediction & $3.14\;[3.09, 3.21]$ & $3.41\;[3.37, 3.45]$ & $-0.07$ & $\phantom{-}0.31$ \\
\midrule
\textsf{in}--\textsf{in} & decaying, $\eta = 0.5$ & $3.19\;[3.15, 3.23]$ & $3.19\;[3.15, 3.23]$ & $\phantom{-}0.00$ & $\phantom{-}0.00$ \\
\textsf{in}--\textsf{in} & decaying, $\eta = 0.7$ & $3.19\;[3.12, 3.25]$ & $3.19\;[3.12, 3.26]$ & $\phantom{-}0.00$ & $\phantom{-}0.00$ \\
\bottomrule
\end{tabular}
\caption{Revenue comparison across the three duopoly compositions of oblivious (\textsf{ob}) and informed (\textsf{in}) sellers. $\bar R_{T,i}$ is the per-period running-mean revenue (brackets are cross-seed $5\%$--$95\%$ ranges over $S = 200$ seeds); $S_i$ is the surplus-capture ratio. Implementation details are in Appendix~\ref{sec:experimental-details}.}
\label{tab:meta-revenue-summary}
\end{table}

The last \textsf{ob}--\textsf{in} row probes how much an informed seller could in principle gain from sharper competitor-price information: under a clairvoyant \emph{perfect-prediction} forecast $\hat p_{n+1,1} = p_{n+1,1}$, the informed-side revenue rises to the \emph{Stackelberg revenue} $\Pi^{S}$---the follower payoff in a Stackelberg duopoly with the oblivious seller as leader---and the surplus-capture ratio climbs from $0$ to $0.31$. Because the competitor's next-period exploration noise is unobservable in practice, $\Pi^{S}$ acts as an information upper bracket on the informed-side revenue against an oblivious competitor; any sharper-but-implementable forecast would yield a realized revenue somewhere in $[\Pi^{NE}, \Pi^{S}]$. See Appendix~\ref{sec:forecast-rule-ablation} for more details.

\paragraph{The strategy game.}
Treating each seller's choice of \textsf{oblivious} vs.\ \textsf{informed} as a strategic move yields a two-player normal-form game whose payoff is the realized $S_i$. Table~\ref{tab:meta-game} summarizes the four cells, with the payoffs taken from Proposition~\ref{prop:strict-dom} and confirmed numerically by Table~\ref{tab:meta-revenue-summary}. The off-diagonal ``$\ge 0$'' entries acknowledge a range of possibilities for an informed seller facing an oblivious competitor: under the practical forecasts of Proposition~\ref{prop:strict-dom} the informed-side payoff is the Nash surplus capture $0$, but should richer side information about a competitor's next-period price---say from market-microstructure data, public algorithmic disclosures, or stronger forecasting models---become available, the payoff could be strictly higher, capped by the Stackelberg upper bracket (Appendix~\ref{sec:forecast-rule-ablation}). In every case, \textsf{informed} strictly dominates \textsf{oblivious} for each seller, and (\textsf{informed}, \textsf{informed}) is the unique strict pure-strategy Nash equilibrium of the strategy game.

\begin{table}[!htbp]
\centering
\renewcommand{\arraystretch}{1.8}
\setlength{\tabcolsep}{12pt}
\begin{tabular}{c|r@{\;,\;\;}l|r@{\;,\;\;}l}
\hline
 & \multicolumn{2}{c|}{\textbf{\textsf{oblivious}}} & \multicolumn{2}{c}{\textbf{\textsf{informed}}} \\
\hline
\textbf{\textsf{oblivious}} &
\textcolor{red}{$-\nu^2$} & \textcolor{red}{$-\nu^2$} &
\textcolor{red}{$-\nu^2$} & \textcolor{green!50!black}{$\ge 0$} \\
\hline
\textbf{\textsf{informed}} &
\textcolor{green!50!black}{$\ge 0$} & \textcolor{red}{$-\nu^2$} &
\textcolor{green!50!black}{$\boldsymbol{0}$} & \textcolor{green!50!black}{$\boldsymbol{0}$} \\
\hline
\end{tabular}
\caption{The strategy game. Rows index seller~1's strategy, columns index seller~2's strategy; each cell reports the asymptotic surplus-capture payoffs $(S_1, S_2)$. The entry $-\nu^2$ is shorthand for the oblivious payoff under a constant exploration variance $\nu^2$, and $0$ is the informed payoff at Nash. The off-diagonal ``$\ge 0$'' entries cover both the Nash value (attained under the practical forecasts of Proposition~\ref{prop:strict-dom}) and the Stackelberg upper bracket (Appendix~\ref{sec:forecast-rule-ablation}). The unique strict pure-strategy Nash equilibrium is (\textsf{informed}, \textsf{informed}), shown in the bottom-right cell.}
\label{tab:meta-game}
\end{table}

\paragraph{Beyond duopoly.}
The two-seller strategy game extends cleanly to $N$-seller markets at every composition $|\cI^{ob}| \in \{0, 1, \ldots, N\}$: the on-diagonal cells generalize via Theorems~\ref{thm:globalconvergencetocompetitiveoutcome} and~\ref{thm:allinformedmeanforecast}, and the off-diagonal cells via Theorem~\ref{thm:mixedmarketconvergence}. Appendix~\ref{sec:multi-seller-emp-appendix} substantiates this numerically at $N \in \{3, 5\}$ across all $|\cI^{ob}| \in \{0, \ldots, N\}$, and the four-cell qualitative ordering of Table~\ref{tab:meta-game} carries over: oblivious sellers strictly below Nash, informed sellers operationally at Nash, with a strict positive gap $\bar S^{in} - \bar S^{ob}$ in every mixed cell. As a result, any unilateral deviation from the all-informed equilibrium moves the deviator from the operationally-zero $\bar S^{in}$ cell to the strictly-negative $\bar S^{ob}$ cell, confirming the all-informed market as the unique strict pure-strategy Nash equilibrium of the $N$-seller strategy game as well.

\section{Discussion}\label{sec:discussion}

We study whether learning-based pricing algorithms should model competitors' prices. Across all market compositions, oblivious modeling fails to robustly sustain collusive prices and must bear a persistent exploration tax to keep the long-run outcome close to the competitive equilibrium. Informed modeling, by contrast, reaches the competitive outcome efficiently, without the persistent exploration tax, and strictly dominates oblivious modeling as a strategic choice. The managerial implication is direct: a seller is better off incorporating competitor prices into its demand model and conducting sufficient exploration to keep the learning effective and the pricing decisions robust.

\paragraph{Future directions.}
Several directions warrant further investigation. First, our framework distinguishes only between fully oblivious and fully informed sellers, while richer modeling strategies that interpolate between these two extremes would be valuable to study. Second, our results rely on the running-mean forecast for every informed seller, and a fuller understanding of the forecast-rule design space remains open. Third, we consider only mean-zero dithering on the greedy price, so alternative exploration designs are worth investigating. Fourth, more elaborate ways of incorporating competitor information into both the demand model and the pricing decision deserve attention. The interplay between model specification, learning dynamics, and market outcomes in competitive settings where algorithmic collusion is possible remains a rich and largely open area, and we hope that our work provides a useful starting point for future research in this direction.

\section*{Acknowledgments}
We thank Santiago R. Balseiro, Nicola Rosaia, Hongyao Ma, Izzy Grosof, Vineet Goyal, and Julius Durmann for helpful comments and suggestions. We thank the audience at NYC Operations Day 2025, Market Innovation Workshop 2025, INFORMS Revenue Management and Pricing Section Conference 2025, and INFORMS Annual Meeting 2025 for their comments on preliminary versions of this work. We also appreciate the feedback received from the anonymous reviewers at EC 2026, which helped us improve the presentation and clarify the results.

\bibliography{bib}
\bibliographystyle{ims}

\appendix

\section{Additional Details}
\subsection{Notation Summary}\label{sec:notation}

Table~\ref{tab:notation} consolidates the main notation used throughout the paper. Symbols introduced only locally (within a single proof or appendix subsection) are not listed.

\begin{longtable}{@{}l p{0.78\textwidth}@{}}
\caption{Notation used throughout the paper.}\label{tab:notation}\\
\toprule
\textbf{Symbol} & \textbf{Description}\\
\midrule
\endfirsthead

\multicolumn{2}{l}{\textit{(continued from previous page)}}\\
\toprule
\textbf{Symbol} & \textbf{Description}\\
\midrule
\endhead

\midrule
\multicolumn{2}{r}{\textit{continued on next page}}\\
\endfoot

\bottomrule
\endlastfoot

\multicolumn{2}{@{}l}{\textit{Indices and structural quantities}}\\
$N$ & Number of sellers, $N\ge 2$ \\
$[N]$ & Seller index set $\{1,\ldots,N\}$ \\
$i,j,k$ & Seller indices \\
$n,m$ & Discrete time indices \\
$T$ & Pricing horizon \\
$\cI^{ob},\cI^{in}$ & Sets of oblivious and informed sellers; $\cI^{ob}\cup\cI^{in}=[N]$ \\
$\cF_n$ & Filtration generated by all price--demand history up to time $n$ \\
$[l,u]$ & Price box with $u>l>0$; all prices lie in $[l,u]$ \\
\addlinespace
\multicolumn{2}{@{}l}{\textit{Demand model (Section~\ref{sec:demand-model})}}\\
$d_{n,i}$ & Demand realized by seller $i$ at time $n$, eq.~\eqref{eq:demand} \\
$\alpha_i,\beta_i,\gamma_{i,j}$ & True demand intercept, own-price slope, and cross-price coefficient ($i\neq j$) \\
$\varepsilon_{n,i}$ & Demand noise; i.i.d.\ across $n$ and $i$, mean zero, bounded support \\
$\gamma_i$ & $\sum_{j\neq i}\gamma_{i,j}$, total cross-price effect of competitors on seller $i$ \\
$\gamma_i^{\mathrm{col}}$ & $\sum_{j\neq i}\gamma_{j,i}$, total effect of seller $i$'s price on others \\
$\bar\gamma$ & $\frac{1}{2}\left(\max_i \gamma_i + \max_i \gamma_i^{\mathrm{col}}\right)$, worst-case cross-price coupling \\
$\boldsymbol\alpha$ & $(\alpha_1,\ldots,\alpha_N)^\top$ \\
$a_i,b_i$ & Misspecified intercept and slope used by an oblivious seller \\
$\theta_i^{*,ob},\theta_i^{*,in}$ & (Pseudo-)true oblivious and informed parameters for seller $i$ \\
\addlinespace
\multicolumn{2}{@{}l}{\textit{Prices, exploration, and dynamics (Section~\ref{sec:learning-dynamics})}}\\
$p_{n,i}$ & Seller $i$'s posted price at time $n$ \\
$\mathbf p_n$ & Price vector $(p_{n,1},\ldots,p_{n,N})\in\RR^N$ \\
$\mathbf p_{n,-i}$ & Competitor price vector $(p_{n,j})_{j\neq i}$ \\
$\tilde p_{n+1,i}$ & Unconstrained myopic greedy price (oblivious or informed) for seller $i$ \\
$z_{n,i}$ & Exploration perturbation; mean zero, variance $\nu_{n,i}^2$, bounded support \\
$\nu_{n,i}^2$ & Exploration variance for seller $i$ at time $n$; abbreviated $\nu^2$ (constant, homogeneous) or $\nu_n^2$ (time-varying, homogeneous) when no confusion arises \\
$\eta_i$ & Polynomial exploration exponent: $\nu_{n,i}^2 = \Theta(n^{-\eta_i})$ \\
$\eta_{\min},\eta_{\max}$ & $\min_i\eta_i$ and $\max_i\eta_i$ \\
\addlinespace
\multicolumn{2}{@{}l}{\textit{Estimation and best-response maps (Section~\ref{sec:learning-dynamics})}}\\
$x_{n,i}^{ob},x_{n,i}^{in}$ & Oblivious and informed regressor vectors \\
$\Theta_i^{ob},\Theta_i^{in}$ & Compact convex projection sets for parameter estimates \\
$\tilde\theta_{n,i}^{ob},\tilde\theta_{n,i}^{in}$ & Unprojected least-squares estimates \\
$\hat\theta_{n,i}^{ob},\hat\theta_{n,i}^{in}$ & Projected least-squares estimates \\
$\hat r_{n,i}^{ob},\hat r_{n,i}^{in}$ & Estimated per-period revenue functions for seller $i$ \\
$\phi^{ob}(\theta)$ & Oblivious greedy map; identical functional form across sellers \\
$\phi_i^{in}(\theta,\mathbf p_{-i})$ & Informed greedy map for seller $i$ given parameters $\theta$ and (forecasted) competitor prices $\mathbf p_{-i}$ \\
$L_\phi^{ob}$ & Lipschitz constant of $\phi^{ob}(\cdot)$ over $\bigcup_i \Theta_i^{ob}$ \\
$L_\phi^{in,\theta}$ & Lipschitz constant of $\phi_i^{in}(\theta;\mathbf m)$ in $\theta$ over the informed projection boxes and over possible competitor price means $\mathbf m \in [l,u]^{N-1}$ \\
$\hat p_{n+1,j}$ & Informed seller's forecast of competitor $j$'s next price (default: running mean $m_{n,j}$) \\
$R_i(p_i,\mathbf p_{-i})$ & Seller $i$'s expected per-period revenue under the true demand model \\
\addlinespace
\multicolumn{2}{@{}l}{\textit{Solution concepts (Section~\ref{sec:solution-concept})}}\\
$\mathbf p^{NE},\,p_i^{NE}$ & Nash equilibrium price vector and seller-$i$ component \\
$\mathbf p^{C},\,p_i^{C}$ & Collusive (joint-revenue-maximizing) price vector and seller-$i$ component \\
$\Pi_i^{NE},\,\Pi_i^{C}$ & Seller $i$'s per-period revenue at $\mathbf p^{NE}$ and $\mathbf p^{C}$ \\
$\Gamma$ & $N\times N$ Nash matrix: $\Gamma_{ii}=2\beta_i$, $\Gamma_{ij}=\gamma_{i,j}$; $\Gamma\mathbf p^{NE}=-\boldsymbol\alpha$ \\
$H$ & $N\times N$ symmetric collusion matrix: $H_{ii}=2\beta_i$, $H_{ij}=\gamma_{i,j}+\gamma_{j,i}$ \\
$B$ & Linearized informed best-response map at $\mathbf p^{NE}$: $B=I-\tfrac12\,\mathrm{diag}(1/\beta_i)\,\Gamma$ \\
\addlinespace
\multicolumn{2}{@{}l}{\textit{Performance metrics (Section~\ref{sec:performance-metric})}}\\
$\Delta_i(\theta_i,T)$ & Dynamic regret of seller $i$ over horizon $T$, eq.~\eqref{eq:dynamic-regret-main} \\
$S_i$ & Surplus-capture ratio of seller $i$, eq.~\eqref{eq:surplus-capture} \\
\addlinespace
\multicolumn{2}{@{}l}{\textit{Mean-dynamics ODE (Section~\ref{sec:ode-perspective}, Appendix~\ref{sec:appendix-ode-derivation})}}\\
$m_{n,i},\,m_i(t)$ & Running mean of seller $i$'s prices (discrete time $n$ / continuous time $t$) \\
$Q_{n,ij},\,Q_{ij}(t)$ & Running cross second-moment of prices (discrete / continuous) \\
$p_i^g(m,Q)$ & Misspecified greedy price as a function of empirical moments $(m,Q)$ \\
\addlinespace
\multicolumn{2}{@{}l}{\textit{Constants and asymptotic notation}}\\
$C_x$ & Uniform almost-sure bound on $\|x_{n,i}^{ob}\|_2$ \\
$C_M$ & Persistent-excitation lower bound on the smallest eigenvalue of the oblivious conditional regressor covariance \\
$f\lesssim g$ & $f=O(g)$ up to constants depending only on demand primitives \\
$f\asymp g$ & $f=\Theta(g)$ in the same sense \\
\end{longtable}

\subsection{Experiment Details}\label{sec:experimental-details}

This subsection consolidates the demand primitives, exploration schedules, initial conditions, sample sizes, and other replication details for the figures and tables in the main body. Within each entry, $\alpha$, $\beta$, and $\gamma$ refer to the linear-demand primitives of~\eqref{eq:demand}, $[l, u]$ to the price box, $\nu^2$ to per-period exploration variance, $T$ to the horizon, and $S$ to the number of independent seeds. Unless otherwise stated, demand noise is uniform on $[-0.2, 0.2]$ (so $\sigma_\varepsilon = 0.2/\sqrt{3}$), the dithering distribution is uniform on $[-\sqrt{3\nu^2}, +\sqrt{3\nu^2}]$, and warm-up prices are seeded near a fixed pair within $[l, u]^N$ for each draw.

\paragraph{Figure~\ref{fig:samplepaths} (intro sample paths).} Symmetric duopoly with $\alpha = 2.5$, $\beta = -1$, $\gamma = 0.4$, $[l, u] = [0.5, 2.5]$ (so $p^{NE} \approx 1.56$, $p^C \approx 2.08$); cumulative exploration $J_n = \Theta(\sqrt n)$ implemented via $\nu_n^2 = 0.025/\sqrt{n+1}$. Horizon $n = 10^4$; one seed per panel selected for illustration of near-competitive, indeterminate, and near-collusive paths.

\paragraph{Table~\ref{tab:variance-dominance-summary} (variance-dominance experiment).} Symmetric oblivious--oblivious duopoly with $\alpha_i = 2.5$, $\beta_i = -1$, $\gamma_{i,j} = 0.6$ for $i \neq j$, $[l, u] = [0.5, 3.5]$ (yielding $\Pi^{NE} = 3.189$, $\Pi^{C} = 3.906$, and $p^{NE} \approx 1.786$). The oblivious projection box is $(a, b) \in [1.2, 8.0] \times [-2.0, -0.5]$, wide enough that the limiting estimate $(\alpha + \gamma q_\infty,\,\beta)$ of the dominant seller lies in its interior for every $q_\infty\in[l,u]$ observed empirically, so the interior-projection hypothesis of Proposition~\ref{prop:variancedominancetwosellercase} is satisfied. Both sellers use polynomial exploration schedules $\nu_{n,k}^2 = 0.05\,(n+1)^{-\eta_k}$; the dominant seller's exponent is fixed at $\eta_2 = 0.01$, and the dominated's exponent $\eta_1 \in \{0.5, 0.7, 1.0, 1.5\}$ varies the strength of the variance-dominance event. Horizon $T = 2\times 10^5$, $S = 120$ seeds per cell. Across all four cells the dominated seller's price $\bar p_{T,1}$ settles above $p^{NE}\approx 1.79$, placing the run in case~(a) of Proposition~\ref{prop:variancedominancetwosellercase}, and the surplus ordering $S_2 > S_1$ predicted by case~(a) is confirmed in every cell.

\paragraph{Figure~\ref{fig:global-convergence-stress-test} (global-convergence stress test).} Both panels use $[l, u] = [0.5, 2.5]$ and uniform demand noise with standard deviation $0.2$, $S = 60$ seeds per configuration. \emph{Left (asymmetric):} $N \in \{3, 5, 10\}$ and $\nu^2 \in \{0.05, 0.10, 0.20\}$ (line style); demand primitives drawn once per $N$ from $\alpha_i \sim \mathcal N(2.5, 0.4^2)$ clipped to $[1.5, 3.5]$, $\beta_i \sim \mathcal N(-1, 0.2^2)$ clipped to $[-1.5, -0.7]$, and $\gamma_{i,j} = (0.4/(N-1)) \cdot U[0.7, 1.3]$ for each $i \neq j$; horizon $T = 5\times 10^4$. \emph{Right (symmetric):} $N = 5$, $\alpha_i = 2.5$, $\beta_i = -1$, cross-price $\gamma \in \{0.05, 0.10, 0.15\}$ (color); same $\nu^2$ grid; $T = 3\times 10^4$. Across all 18 configurations $\bar\gamma\, L_\phi^{ob}\, C_x$ is of order $1$--$10$, while
\[
C_M(\nu^2) \;\le\; \tfrac{1}{2}\Big(1 + p_*^2 + \nu^2 - \sqrt{(1 + p_*^2 + \nu^2)^2 - 4\nu^2}\Big)
\]
for any limit price $p_* \in [l, u]$, which stays below $10^{-1}$ for every $\nu^2 \le 0.20$; the sufficient condition~\eqref{eq:globalconvergencecondition} is therefore violated by roughly two orders of magnitude across the entire sweep.

\paragraph{Figure~\ref{fig:excursions-main} (excursions in the symmetric duopoly).} Common parameters: $\alpha = 2.5$, $\beta = -1$, $\gamma = 0.4$, $[l, u] = [0.5, 2.5]$ (so $p^{NE} = 1.56$, $p^C = 2.08$); exploration variance $\nu^2 = 0.04$ with uniform dithering. \emph{Upward (panels (a), (b)):} ODE initial condition $m_1(0) = 0.7$, $m_2(0) = 0.8$, $Q_{11}(0) = 0.8$, $Q_{22}(0) = 0.9$, $Q_{12}(0) = 1.2$; discrete-time warm-up prices $(0.5, 1.0)$ and $(0.8, 1.2)$; demand noise $\mathcal N(0, 0.2)$; discrete horizon $n = 10^7$. \emph{Downward (panels (c), (d)):} ODE initial condition $m_1(0) = 1.8$, $m_2(0) = 1.9$, $Q_{11}(0) = 3.5$, $Q_{22}(0) = 4.0$, $Q_{12}(0) = 2.1$; discrete-time warm-up prices $(1.5, 1.7)$ and $(1.7, 1.6)$; demand noise $\mathcal N(0, 0.05)$; discrete horizon $n = 10^4$.

\paragraph{Figure~\ref{fig:cooper-region-revenue} (pseudo-equilibrium continuum, prices and revenues).} Symmetric duopoly with $\alpha = 2.5$, $\beta = -1$, $\gamma = 0.4$, $[l, u] = [0.5, 2.5]$. Decaying-exploration schedules $\nu_n^2 = 0.3(n+1)^{-\eta}$ with $\eta \in \{0.5, 0.85\}$. Each sample path is started from one of $60$ warm-up price pairs jittered around six anchors that span the price box: the four near-corner anchors $(l + 0.2, l + 0.2)$, $(u - 0.2, u - 0.2)$, $(l + 0.2, u - 0.2)$, $(u - 0.2, l + 0.2)$, plus two on-diagonal anchors $(1.5, 1.5)$ near $\mathbf p^{NE}$ and $(2.0, 2.0)$ near $\mathbf p^{C}$, with $S = 30$ seeds per anchor (total $S = 1{,}800$) and horizon $T = 8\times 10^4$. Following \citet[\S4.3]{cooperLearningPricingModels2015}, the admissible regression-ratio pairs $(r_1, r_2)$ satisfy (i) the non-singularity condition $r_i < -\beta_i / \gamma_{i,j}$ for each $i \neq j$; and (ii) either the Nash branch $r_1 = r_2 = 0$ or the co-moving branch $0 < r_1 r_2 \le 1$. For our market $-\beta_i/\gamma_{i,j} = 1/0.4 = 2.5$, so we sweep $r \in [-2.5, 2.5]^2$ on a $1{,}201$-point interior grid augmented by a $4{,}000$-point boundary sweep along the ridge $r_1 r_2 = 1$. Shading in the price panels is the theoretical region induced by this admissible set $\{(r_1, r_2)\}$; shading in the revenue panels is its image under the revenue map $R_i(\mathbf p) = p_i(\alpha_i + \beta_i p_i + \sum_{j\neq i}\gamma_{i,j} p_j)$.

\paragraph{Figure~\ref{fig:allinformed-numerics} (all-informed empirical corroboration).} All three panels: $\eta_i = 1/2$, $S = 100$ seeds per cell, $T = 5\times 10^4$. Panel~(\subref{fig:EA-symm}): symmetric markets at $N \in \{2, 5\}$ with $\alpha_i = 2.5$, $\beta_i = -1$, $\gamma_{i,j} = 0.4/(N-1)$, sweeping $\nu^2 \in \{0.05, 0.10, 0.20\}$. Panel~(\subref{fig:EA-asym}): asymmetric markets at $N \in \{3, 5\}$ with heterogeneous demand primitives drawn from $\alpha_i \sim \cN(2.5, 0.4^2)$ clipped to $[1.5, 3.5]$, $\beta_i \sim \cN(-1, 0.2^2)$ clipped to $[-1.5, -0.7]$, and $\gamma_{i,j} = (0.4/(N-1)) \cdot U[0.7, 1.3]$ (drawn once per $N$ and reused in Section~\ref{sec:mixedmarkets}), sweeping the same $\nu^2$ grid. Panel~(\subref{fig:EA-stress}): asymmetric $N = 3$ market with $\alpha = (1.5, 1.5, 0.15)$, $\beta = (-1, -1, -0.05)$, and $[l, u] = [0.5, 2.5]$. The only swept coefficient is the coupling between the two strong sellers, $\gamma_{12} = \gamma_{21} \in \{0.88, 0.91, 0.94, 0.96\}$; the cross-coefficients involving the third (small-$\beta$) seller are fixed at $\gamma_{13} = \gamma_{23} = 0.025$ and $\gamma_{31} = \gamma_{32} = 0.024$. The primitives are chosen so that every cell violates regularity~\eqref{eq:cor:symmpart-regularity} ($\lambda_{\max}(B + B^\top) \in \{1.01, 1.03, 1.06, 1.08\}$, shown in the legend) while the spectral radius $\rho(B) \in \{0.45, 0.47, 0.48, 0.49\}$ stays well below $1$.

\paragraph{Figure~\ref{fig:mixedmarketmaster-numerics} (mixed-market empirical verification).} All three panels: $\eta_j = 0.25$ for every informed seller, $T = 5\times 10^4$, $S = 100$ seeds per cell. Panel~(\subref{fig:M1}): $5 \times 3$ grid of $(\gamma, \nu^2)$ in a symmetric $N = 5$ market with $|\cI^{ob}| = 2$, $|\cI^{in}| = 3$, $(\alpha, \beta) = (2.5, -1)$. Panel~(\subref{fig:M2}): $3 \times 3$ grid of $(N, \nu^2)$ at $N \in \{3, 5, 10\}$ with one oblivious and $N - 1$ informed sellers, demand primitives drawn from the heterogeneous distribution introduced in Figure~\ref{fig:allinformed-numerics} (one draw per $N$). In both panels the projection-box-induced $L_\phi^{ob}$ alone exceeds $2$, so $\bar\kappa < 0$ and conditions (ii) and (iv) fail jointly. Panel~(\subref{fig:M1b}): $4 \times 3$ grid of $(\gamma, \nu^2)$ in a symmetric $N = 5$ market with $|\cI^{ob}| = 4$, $|\cI^{in}| = 1$, $(\alpha, \beta, u) = (3.0, -2.0, 2.0)$; here $\bar\kappa \in [0.73, 0.82]$ across cells (condition (ii) holds) while the small-gain margin of (iv) is in $[-2.27, -0.76]$ (condition (iv) fails).

\paragraph{Table~\ref{tab:meta-revenue-summary} (strategy-game revenue summary).} Symmetric revenue duopoly with $\alpha = 2.5$, $\beta = -1$, $\gamma = 0.6$, $[l, u] = [0.5, 3.5]$ (so $\Pi^{NE} = 3.189$, $\Pi^{C} = 3.906$; the wider Nash--collusive gap makes the composition effects more visible than the baseline $\gamma = 0.4$ duopoly); $S = 200$ seeds per cell; $T = 60{,}000$. The \textsf{ob}--\textsf{ob} cells use constant exploration $\nu^2 \in \{0.20, 0.05\}$. The \textsf{ob}--\textsf{in} cells use the running-mean forecast with $\nu_{n,1}^2 \equiv 0.10$ for the oblivious seller and $\nu_{n,2}^2 = 0.10\,(n+1)^{-\eta}$ for $\eta \in \{0.5, 0.25\}$ for the informed seller; the final \textsf{ob}--\textsf{in} row replaces the running-mean forecast with the clairvoyant perfect-prediction forecast $\hat p_{n+1,1} = p_{n+1,1}$ at $\eta = 0.5$. The \textsf{in}--\textsf{in} cells use running-mean forecasting on both sellers with $\nu_{n,i}^2 = 0.30\,(n+1)^{-\eta}$ and $\eta \in \{0.5, 0.7\}$.

\subsection{A Dynamic Benchmark}\label{sec:dynamicbenchmark}

How should one evaluate a learning algorithm in a competitive market? A standard answer in the dynamic-pricing literature is a \emph{dynamic benchmark}: the cumulative gap between the algorithm's revenue and that of an oracle which knows the seller's own demand function and best-responds to the realized competitor profile at every step \citep{sternDynamicLearningStrategic2020, li2024lego, yangCompetitiveDemandLearning2024, liAdaptiveLearningUncertain2026}. In this appendix we show that, in our linear-demand setting, the dynamic benchmark coincides up to a positive-definite weighting with the cumulative mean-square distance from the realized price path to the Nash equilibrium. Minimizing it is therefore \emph{equivalent} to driving prices to the competitive outcome, which makes the benchmark ill-suited to studying tacit collusion, a setting in which prices are sustained \emph{above} the competitive level by design.

\paragraph{The benchmark.} Consider the asymmetric $N$-seller market of Section~\ref{sec:demand-model} with linear demand~\eqref{eq:demand}. Seller $i$'s expected revenue at price profile $\mathbf{p}=(p_1,\ldots,p_N)$ is
$$
R_i(p_i, \mathbf{p}_{-i}) \coloneq p_i \!\left(\alpha_i + \beta_i p_i + \sum_{j\neq i}\gamma_{i,j}p_j\right),
$$
and seller $i$'s informed best-response oracle, viewed as a function of competitor prices $\mathbf{q}_{-i}$, is
$$
\phi_i^{in}(\theta_i, \mathbf{q}_{-i}) \coloneq \frac{\alpha_i + \sum_{j\neq i}\gamma_{i,j}q_j}{-2\beta_i},
$$
where $\theta_i = (\alpha_i, \gamma_{i,1}, \ldots, \gamma_{i,i-1}, \beta_i, \gamma_{i,i+1}, \ldots, \gamma_{i,N})^\top$ collects seller $i$'s demand parameters, with $\beta_i$ in the own-price slot. The dynamic benchmark for seller $i$ over horizon $T$ is the cumulative expected revenue lost against this oracle:
$$
\Delta_i(\theta_i, T) = \sum_{n=1}^{T} \mathbb{E}_{n-1}\!\left[R_i\!\left(\phi_i^{in}(\theta_i, \mathbf{p}_{n,-i}), \mathbf{p}_{n,-i}\right) - R_i\!\left(p_{n,i}, \mathbf{p}_{n,-i}\right)\right].
$$
Equivalently, $\Delta_i$ is the regret against hindsight-optimal own-prices when competitors' prices are treated as exogenous. The next proposition expresses $\Delta_i$ as a quadratic in the price-deviation vector $\mathbf{p}_n - \mathbf{p}^{NE}$.

\begin{proposition}[Per-seller dynamic regret]\label{prop:dynamicbenchmark}
For every seller $i\in[N]$ and every $T\ge 1$,
$$
\Delta_i(\theta_i, T) = -\beta_i \sum_{n=1}^{T} \mathbb{E}_{n-1}\!\left\{\left[\sum_{j\neq i}\frac{\gamma_{i,j}}{-2\beta_i}\!\left(p_{n,j} - p_j^{NE}\right) - \!\left(p_{n,i} - p_i^{NE}\right)\right]^2\right\}.
$$
In particular, $\Delta_i(\theta_i, T) \lesssim \sum_{n=1}^{T} \mathbb{E}_{n-1}\,\norm{\mathbf{p}_n - \mathbf{p}^{NE}}_2^2$.
\end{proposition}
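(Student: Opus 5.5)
The identity is purely algebraic and holds period by period, so the plan is to establish it for a single $n$ and then sum over $n\le T$ under $\mathbb{E}_{n-1}$.

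\textbf{Step 1: exact quadratic loss.} Fix $n$ and condition on $\mathbf{p}_{n,-i}$. The map $p\mapsto R_i(p,\mathbf{p}_{n,-i})$ is a concave quadratic with leading coefficient $\beta_i<0$. Its unconstrained maximizer is $\phi_i^{in}(\theta_i,\mathbf{p}_{n,-i})$. Expanding around the maximizer, where the first-order term vanishes, gives the exact identity
\[
R_i\bigl(\phi_i^{in}(\theta_i,\mathbf{p}_{n,-i}),\mathbf{p}_{n,-i}\bigr)-R_i(p_{n,i},\mathbf{p}_{n,-i}) \;=\; -\beta_i\bigl(p_{n,i}-\phi_i^{in}(\theta_i,\mathbf{p}_{n,-i})\bigr)^2 .
\]
Since the benchmark oracle is defined by the unconstrained formula, no clipping to $[l,u]$ enters this step.

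\textbf{Step 2: recentre at Nash.} The Nash first-order condition gives $p_i^{NE}=\phi_i^{in}(\theta_i,\mathbf{p}^{NE}_{-i})$, because each row of $\Gamma\mathbf{p}^{NE}=-\boldsymbol\alpha$ is exactly this fixed-point equation. The best response is affine in the competitors' prices, so
\[
\phi_i^{in}(\theta_i,\mathbf{p}_{n,-i})-p_i^{NE}=\sum_{j\neq i}\frac{\gamma_{i,j}}{-2\beta_i}\bigl(p_{n,j}-p_j^{NE}\bigr).
\]
Subtracting $p_{n,i}-p_i^{NE}$ and squaring (the sign flip is harmless) gives the bracket in the statement. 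Taking $\mathbb{E}_{n-1}$ and summing over $n\le T$ then yields the displayed formula.

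\textbf{Step 3: the bound.} Write the bracket as $c_i^\top(\mathbf{p}_n-\mathbf{p}^{NE})$, where $c_i$ has entry $-1$ in coordinate $i$ and entries $\gamma_{i,j}/(-2\beta_i)$ elsewhere. By Cauchy--Schwarz, the bracket squared is at most $\|c_i\|_2^2\,\|\mathbf{p}_n-\mathbf{p}^{NE}\|_2^2$. Using $\gamma_i<-\beta_i$ gives $\|c_i\|_2^2\le 1+\bigl(\gamma_i/(2|\beta_i|)\bigr)^2<5/4$. The $\lesssim$ bound therefore holds with constant $\tfrac54|\beta_i|$, which depends only on the primitives.

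There is no real obstacle here. The only point needing care is to use the \emph{unconstrained} oracle in the benchmark, so that Step 1 is an exact equality rather than an inequality. The matching lower bound in the aggregate statement \eqref{eq:regret-equiv-main} is not part of this proposition. It would require showing that the quadratic form $\sum_i|\beta_i|\,c_ic_i^\top$ is positive definite, which follows because the matrix with rows $c_i^\top$ equals $\tfrac12\mathrm{diag}(1/\beta_i)\Gamma$ (i.e.\ $I-B$, with $B$ the linearized best-response map) and is therefore invertible.
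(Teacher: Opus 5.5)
Your proof is correct and follows the paper's argument. The paper completes the square using the first-order condition at the oracle price, which is equivalent to your exact expansion about the maximizer, and then recentres at $\mathbf p^{NE}$ via $p_i^{NE}=\phi_i^{in}(\theta_i,\mathbf p^{NE}_{-i})$ and linearity of the best response, exactly as in your Step~2. Your Cauchy--Schwarz constant $\tfrac54|\beta_i|$ spells out the ``in particular'' bound that the paper states without proof.

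One trivial slip in your closing aside: the matrix with rows $c_i^\top$ is $-\tfrac12\mathrm{diag}(1/\beta_i)\Gamma=-(I-B)$, not $I-B$. This does not affect invertibility or the form $\sum_i|\beta_i|c_ic_i^\top$.
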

\begin{proof}[Proof of Proposition~\ref{prop:dynamicbenchmark}]
Denote $p_{n,i}^* = \phi_i^{in}(\theta_i, \mathbf{p}_{n,-i})$. Expanding the per-period revenue gap,
$$
R_i\!\left(p_{n,i}^*, \mathbf{p}_{n,-i}\right) - R_i\!\left(p_{n,i}, \mathbf{p}_{n,-i}\right)
= \alpha_i \left(p_{n,i}^* - p_{n,i}\right) + \beta_i \left[\left(p_{n,i}^*\right)^2 - p_{n,i}^2\right] + \sum_{j\neq i}\gamma_{i,j}p_{n,j}\!\left(p_{n,i}^* - p_{n,i}\right).
$$
The first-order condition $\alpha_i + 2\beta_i p_{n,i}^* + \sum_{j\neq i}\gamma_{i,j} p_{n,j} = 0$ that defines $p_{n,i}^*$ yields, after multiplying by $p_{n,i}^*$,
$$
\alpha_i p_{n,i}^* + 2 \beta_i \left(p_{n,i}^*\right)^2 + \sum_{j\neq i}\gamma_{i,j} p_{n,j}\, p_{n,i}^* = 0.
$$
Substituting this identity and completing the square,
\begin{align*}
&\quad \; R_i\!\left(p_{n,i}^*, \mathbf{p}_{n,-i}\right) - R_i\!\left(p_{n,i}, \mathbf{p}_{n,-i}\right) \\
&= \underbrace{\alpha_i p_{n,i}^* + 2 \beta_i \left(p_{n,i}^*\right)^2 + \sum_{j\neq i}\gamma_{i,j} p_{n,j}\, p_{n,i}^*}_{=0}
   - \alpha_i p_{n,i} - \beta_i \left[p_{n,i}^2 + \left(p_{n,i}^*\right)^2\right] - \sum_{j\neq i}\gamma_{i,j} p_{n,j}\, p_{n,i} \\
&= - \alpha_i p_{n,i} - \beta_i \left[p_{n,i}^2 + \left(p_{n,i}^*\right)^2\right] - \sum_{j\neq i}\gamma_{i,j} p_{n,j}\, p_{n,i} \\
&= - \beta_i \left(p_{n,i}^* - p_{n,i}\right)^2 -2 \beta_i p_{n,i}^* p_{n,i} - \alpha_i p_{n,i} - \sum_{j\neq i}\gamma_{i,j}p_{n,j} p_{n,i} \\
&= - \beta_i \left(p_{n,i}^* - p_{n,i}\right)^2 + 2 \beta_i p_{n,i} \underbrace{\left(\frac{\alpha_i + \sum_{j\neq i}\gamma_{i,j}p_{n,j}}{-2 \beta_i} - p_{n,i}^*\right)}_{=0} \\
&= - \beta_i \left[\phi_i^{in}\!\left(\theta_i, \mathbf{p}_{n,-i}\right) - \phi_i^{in}\!\left(\theta_i, \mathbf{p}^{NE}_{-i}\right) + p_i^{NE} - p_{n,i}\right]^2 \\
&= - \beta_i \left[\sum_{j\neq i}\frac{\gamma_{i,j}}{-2\beta_i}\left(p_{n,j} - p_j^{NE}\right) - \left(p_{n,i} - p_i^{NE}\right)\right]^2,
\end{align*}
where the penultimate equality uses $p_i^{NE} = \phi_i^{in}\!\left(\theta_i, \mathbf{p}^{NE}_{-i}\right)$ and the last equality uses the linearity of $\phi_i^{in}$ in $\mathbf{q}_{-i}$. Taking $\mathbb{E}_{n-1}[\cdot]$ and summing over $n$ gives the displayed identity.
\end{proof}

Proposition~\ref{prop:dynamicbenchmark} admits a compact expression that exposes the link to the Nash equilibrium. It turns out that this relationship can be made more precise by aggregating over all sellers. Define the deviation $\mathbf{e}_n \triangleq \mathbf{p}_n - \mathbf{p}^{NE}$ and the row vector $v_i \in \RR^{N}$ with $v_{i,i} = -1$ and $v_{i,j} = \gamma_{i,j}/(-2\beta_i)$ for $j\neq i$. Then
\begin{equation}\label{eq:Delta-vi}
\Delta_i(\theta_i, T) \;=\; -\beta_i \sum_{n=1}^{T} \mathbb{E}_{n-1}\!\left[\left(v_i^\top \mathbf{e}_n\right)^2\right].
\end{equation}
Stacking the rows yields the matrix $V \triangleq D_\beta^{-1}\Gamma$, where $D_\beta \triangleq \mathrm{diag}(-2\beta_i)$ and $\Gamma$ is the matrix from Section~\ref{sec:solution-concept} ($\Gamma_{ii}=2\beta_i$, $\Gamma_{ij}=\gamma_{i,j}$ for $i\neq j$). Each summand on the right-hand side of~\eqref{eq:Delta-vi} measures only a one-dimensional projection of $\mathbf{e}_n$, but aggregating over sellers recovers the full norm.

\begin{corollary}[Aggregate dynamic regret]\label{cor:dynamicbenchmark-twosided}
Let $D \triangleq \mathrm{diag}(-\beta_i) \succ 0$. The matrix $V^\top D V$ is positive definite, and for every $T\ge 1$,
\[
\begin{aligned}
\lambda_{\min}\!\left(V^\top D V\right) \sum_{n=1}^{T} \mathbb{E}_{n-1}\,\norm{\mathbf{p}_n - \mathbf{p}^{NE}}_2^2
&\;\le\;
\sum_{i=1}^{N} \Delta_i(\theta_i, T) \\
&\;\le\;
\lambda_{\max}\!\left(V^\top D V\right) \sum_{n=1}^{T} \mathbb{E}_{n-1}\,\norm{\mathbf{p}_n - \mathbf{p}^{NE}}_2^2.
\end{aligned}
\]
In particular, $\sum_{i=1}^{N}\Delta_i(\theta_i, T) \asymp \sum_{n=1}^{T}\EE\,\norm{\mathbf{p}_n - \mathbf{p}^{NE}}_2^2$.
\end{corollary}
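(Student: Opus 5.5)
The plan starts from equation~\eqref{eq:Delta-vi}: summing over $i$ and writing it as a quadratic form in $\mathbf e_n$ lets Rayleigh-quotient bounds give both inequalities. Concretely,
\[
\sum_{i=1}^N -\beta_i\,(v_i^\top \mathbf e_n)^2 \;=\; \mathbf e_n^\top\Bigl(\sum_{i=1}^N (-\beta_i)\, v_i v_i^\top\Bigr)\mathbf e_n \;=\; \mathbf e_n^\top V^\top D V\,\mathbf e_n ,
\]
because the rows of $V$ are the $v_i^\top$ and $D=\mathrm{diag}(-\beta_i)$. The first thing to check is that the stacked rows really equal $D_\beta^{-1}\Gamma$. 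Row $i$ of $D_\beta^{-1}\Gamma$ has diagonal entry $2\beta_i/(-2\beta_i)=-1$ and off-diagonal entries $\gamma_{i,j}/(-2\beta_i)$, which matches $v_i$ exactly.

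Next I show $V^\top D V\succ 0$. Since $D\succ 0$ (as $\beta_i<0$), the matrix $V^\top D V$ is positive semidefinite, and it is definite if and only if $V$ is invertible. Now $D_\beta$ is a positive diagonal matrix, and $\Gamma$ is invertible because it is strictly diagonally dominant by the standing assumption $-\beta_i>\gamma_i$, as already noted in Section~\ref{sec:solution-concept}. Hence $V=D_\beta^{-1}\Gamma$ is invertible, and for any $x\neq 0$ we get $x^\top V^\top D V x=\sum_i(-\beta_i)(Vx)_i^2>0$. Thus $\lambda_{\min}(V^\top DV)>0$, and both extreme eigenvalues depend only on the demand primitives.

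For each $n$ I then sandwich the quadratic form as $\lambda_{\min}\|\mathbf e_n\|_2^2\le \mathbf e_n^\top V^\top DV\mathbf e_n\le\lambda_{\max}\|\mathbf e_n\|_2^2$. Applying $\mathbb E_{n-1}$, which is linear and monotone, and summing over $n\le T$ gives the displayed two-sided bound, since by Proposition~\ref{prop:dynamicbenchmark} the middle term summed over $i$ is exactly $\sum_i\Delta_i(\theta_i,T)$. For the final $\asymp$ statement with unconditional expectations, I take full expectations of the same inequality. This is legitimate because conditional and unconditional expectations are related by the tower property and the constants are deterministic.

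I do not expect a real obstacle. The only point needing care is the identification $V=D_\beta^{-1}\Gamma$ together with invertibility from diagonal dominance, and both are direct checks.
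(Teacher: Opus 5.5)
Your proposal is correct and follows essentially the same route as the paper. You sum~\eqref{eq:Delta-vi} over $i$ to get the quadratic form $\mathbf e_n^\top V^\top D V\,\mathbf e_n$, deduce $V^\top D V\succ 0$ from the invertibility of $V=D_\beta^{-1}\Gamma$ (via strict diagonal dominance of $\Gamma$), and sandwich with the extreme eigenvalues before applying $\mathbb{E}_{n-1}$ and summing; your row-by-row check of $V$ and the tower-property step for the unconditional $\asymp$ claim are details the paper leaves implicit.
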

\begin{proof}[Proof of Corollary~\ref{cor:dynamicbenchmark-twosided}]
Summing~\eqref{eq:Delta-vi} over $i$,
$$
\sum_{i=1}^{N} \Delta_i(\theta_i, T) \;=\; \sum_{n=1}^{T} \mathbb{E}_{n-1}\!\left[\mathbf{e}_n^\top V^\top D V\, \mathbf{e}_n\right].
$$
Since $-\beta_i>\gamma_i$ for all $i$, $\Gamma$ is strictly diagonally dominant and hence invertible; consequently $V = D_\beta^{-1}\Gamma$ is invertible, and combined with $D\succ 0$ this implies $V^\top D V \succ 0$. The two-sided bound now follows from the elementary inequalities
$$
\lambda_{\min}(V^\top D V)\,\norm{\mathbf{e}_n}_2^2 \;\le\; \mathbf{e}_n^\top V^\top D V \,\mathbf{e}_n \;\le\; \lambda_{\max}(V^\top D V)\,\norm{\mathbf{e}_n}_2^2.
$$
\end{proof}

The two-sided bound holds only for the aggregate $\sum_i \Delta_i$, not for any individual $\Delta_i$. To see why a per-seller lower bound fails, suppose seller $i$ best-responds to the realized competitor profile, i.e., $p_{n,i} = \phi_i^{in}(\theta_i, \mathbf{p}_{n,-i})$. Then $v_i^\top \mathbf{e}_n = 0$ and $\Delta_i(\theta_i, T) = 0$, yet $\norm{\mathbf{p}_n - \mathbf{p}^{NE}}_2$ can be arbitrarily large because the competitors may sit far from $\mathbf{p}_{-i}^{NE}$. Each $v_i^\top \mathbf{e}_n$ probes one direction of $\RR^N$; only the joint span of $\{v_i\}_{i=1}^N$---which is all of $\RR^N$ exactly when $\Gamma$ is invertible---recovers the full deviation vector. 

\paragraph{Tension between the benchmark and collusion.}
Two consequences of Corollary~\ref{cor:dynamicbenchmark-twosided} clarify why the dynamic benchmark is poorly suited to questions of algorithmic collusion.

\begin{itemize}
\item[(i)] \emph{Sublinear aggregate regret pins down the Nash equilibrium.} If $\sum_{i=1}^{N} \Delta_i(\theta_i, T) = o(T)$, then $\frac{1}{T}\sum_{n=1}^{T}\EE\,\norm{\mathbf{p}_n-\mathbf{p}^{NE}}_2^2 \to 0$, and hence the price path converges in mean square to $\mathbf{p}^{NE}$. 
\item[(ii)] \emph{Sustained supracompetitive prices imply linear aggregate regret.} If the price path is on average bounded away from the Nash equilibrium---in particular, whenever the system sustains a collusive outcome that strictly Pareto-dominates $\mathbf{p}^{NE}$ in terms of revenue---then $\sum_{i=1}^{N}\Delta_i(\theta_i, T) = \Theta(T)$.
\end{itemize}
The dynamic benchmark therefore identifies the Nash equilibrium as the algorithm's target by construction. Any market dynamic that earns supracompetitive revenues---the kind of dynamic that motivates the algorithmic-collusion literature---is, by the same construction, labeled as linearly regretful. Using this benchmark to evaluate learning in markets where supracompetitive pricing is a possibility therefore confounds the question being asked. This observation motivates our use of \emph{price convergence} as the primary metric throughout the paper (see the discussion in Section~\ref{sec:performance-metric}). At the same time, Proposition~\ref{prop:dynamicbenchmark} shows that the mean-square convergence rates we derive in Sections~\ref{sec:oblivious-sellers} and~\ref{sec:informed-strategic-choice} bound the learning part of the regret, so the two perspectives remain quantitatively connected.

\subsection{Cross-Seller Propagation: Price Correlation and Modeling Error}\label{sec:cross-seller-propagation}

This appendix unpacks the cross-seller residual terms $r_{n,i\leftarrow j}$ summarized at the start of Section~\ref{sec:spiralupphenomenon}. Recall that, for sellers $i \neq j$,
\[
r_{n,i\leftarrow j}
\;=\;
\frac{1}{J_{n,i}}
\sum_{m=1}^{n}
\left(p_{m,i}-\bar p_{n,i}\right)\left(p_{m,j}-\bar p_{n,j}\right),
\qquad
J_{n,i}=\sum_{m=1}^{n}\left(p_{m,i}-\bar p_{n,i}\right)^2,
\]
which, in the duopoly $N=2$, coincides with the empirical regression ratio $r_{n,i}$ of Section~\ref{sec:oblivious-sellers} (taking the unique $j\neq i$). The interaction-weighted aggregate of substitution effects is
\[
s_{n,i}
\;\triangleq\;
\sum_{j\neq i} \gamma_{i,j}\, r_{n,i\leftarrow j}.
\]
Ignoring noise terms and projections, recall from \eqref{eq:obliviousgreedynextprice} that the next-period greedy price satisfies
\[
\tilde p_{n+1,i}
\approx
\frac{\alpha_i + \sum_{j\neq i}\gamma_{i,j}\bar p_{n,j}
      - \bar p_{n,i}s_{n,i}}
{-2\left(\beta_i + s_{n,i}\right)}.
\]
It is then straightforward to verify that, when average demand is positive, \emph{larger values of $s_{n,i}$ lead to higher next-period prices}. Thus, large positive values of $r_{n,i\leftarrow j}$ across important substitution links $\gamma_{i,j}$ generate upward pricing pressure and may lead to supracompetitive outcomes. Conversely, negative values of $r_{n,i\leftarrow j}$ generate downward pricing pressure and may lead to sub-Nash outcomes. In this sense, $r_{n,i\leftarrow j}$ captures the degree to which seller $i$'s response strategy deviates from the full-information best response to seller $j$'s price path, and the sign and magnitude of this deviation directly influence the direction and strength of price adjustments.

\paragraph{Price correlation and tacit coordination.}
Define the de-trended price vectors
$$
\mathbf p'_{n,i} = (p_{1,i},\ldots,p_{n,i})^\top - \bar p_{n,i}\mathbf 1_n.
$$
Then,
\[
r_{n,i\leftarrow j}
=
\frac{\langle \mathbf p'_{n,i}, \mathbf p'_{n,j} \rangle}
{\|\mathbf p'_{n,i}\|^2_2}
=
\frac{\|\mathbf p'_{n,j}\|_2}{\|\mathbf p'_{n,i}\|_2}
\cos(\theta_{n, ij}),
\]
where $\theta_{n,ij}$ is the angle between the two de-trended price paths. In particular,
\[
r_{n,i\leftarrow j}\, r_{n,j\leftarrow i}
=
\frac{\langle \mathbf p'_{n,i}, \mathbf p'_{n,j} \rangle^2}
{\|\mathbf p'_{n,i}\|^2_2 \|\mathbf p'_{n,j}\|^2_2}
=
\mathrm{Corr}(\mathbf p_{n,i},\mathbf p_{n,j})^2 \in [0,1].
\]
Thus, pairwise products $r_{n,i\leftarrow j} r_{n,j\leftarrow i}$ directly measure the degree of \emph{tacit coordination} between sellers' price movements. At the aggregate level, let $\Sigma_n$ denote the empirical covariance matrix of de-trended prices across sellers up to time $n$. Large values of $r_{n,i\leftarrow j}$ across many pairs correspond to $\Sigma_n$ developing a dominant \emph{low-rank structure}. Economically, this reflects \emph{common-factor price experimentation}: sellers' exploratory price movements become aligned along a shared direction, whether due to similar algorithms, shared signals, or coincident responses to demand feedback. This low-rank structure is precisely what amplifies misspecification effects under oblivious learning: correlated experimentation increases the projection of omitted competitors' prices onto a seller's own price path, thereby inflating the magnitude of $r_{n,i\leftarrow j}$ and feeding back into higher greedy prices. This mechanism closely parallels recent discussions of correlated price experiments in the algorithmic collusion literature~\citep{banchioArtificialIntelligenceSpontaneous2022, lambinLessMeetsEye2024}.

\paragraph{Modeling error allocation.}
From a learning perspective, $r_{n,i\leftarrow j}$ also governs how the omitted-variable bias induced by competitor $j$ is allocated between the intercept and slope of seller $i$'s misspecified demand model. When $\mathbf p'_{n,j}$ is nearly orthogonal to $\mathbf p'_{n,i}$, the corresponding bias primarily shifts the intercept. When $\mathbf p'_{n,j}$ aligns with $\mathbf p'_{n,i}$, the bias loads onto the slope estimate, \emph{increasing the perceived own-price effect and hence the seller's inferred market power}. For illustration, consider $N=2$ and $\gamma_{1,2}=\gamma_{2,1}=\gamma$. As discussed in \citet{cooperLearningPricingModels2015}, if the slope parameter $\beta$ is known, the intercept absorbs all modeling error and prices converge to the competitive outcome; if the intercept $\alpha$ is known, the slope absorbs all modeling error and prices converge to the collusive outcome. In our notation, these correspond respectively to $r_{n,i\leftarrow j}\approx 0$ and $r_{n,i\leftarrow j}\approx 1$ (for the unique $j\neq i$ in the duopoly). Correlated price movements thus determine how omitted-variable bias is distributed across parameters, and high correlation---especially along economically important substitution links---induces supracompetitive pricing.

\paragraph{Regulatory insights.}
These observations suggest two diagnostic tools for monitoring algorithmic collusion without the need to access proprietary learning algorithms or elicit information from sellers. First, regulators or platforms may track the correlation network of sellers' prices, paying particular attention to the emergence of strong low-rank structure or large leading eigenvalues. Second, comparing sellers' estimated own-price effects to benchmark estimates can reveal systematic inflation of perceived market power, especially when such inflation coincides with heightened price correlation. Together, these signals provide actionable indicators of tacit coordination arising from correlated price movements.

\subsection{The Cost of Linear Exploration}\label{sec:costlinearexploration}
In Section~\ref{sec:spiralupphenomenon}, we argued that rational oblivious sellers should avoid being variance-dominated by others. One simple way to achieve this is to add non-diminishing perturbations to prices, which guarantees linear growth of $J_{n,i}$ by Lemma~\ref{lem:Jn-lower}. However, a natural question arises: does such linear exploration come at a significant cost? 

Let the index be $\left\{1, 2\right\}$. For own price $p$ and opponent price $p_{j}$, denote the expected revenue function of seller $i$ as
$$
R_i(p; p_{j}) = p (\alpha_i + \beta_i p + \gamma_{i,j} p_{j}).
$$
The following result quantifies, per-period and uniformly in the history, the expected-revenue cost of injecting a mean-zero perturbation into a seller's pricing decision. The long-run analogue, applied to a learning protocol whose greedy price converges to $\mathbf p^{NE}$ a.s., is sharpened to an exact asymptotic by Lemma~\ref{lem:asymp-revenue}.

\begin{proposition}[Cost of linear exploration]\label{prop:costlinearexploration}
Consider any pricing protocol for seller $i$ of the form
\[
p_{n,i} \;=\; \tilde p_{n,i} \;+\; z_{n,i}, \qquad n \ge 1,
\]
where the underlying decision $\tilde p_{n,i}$ and the opponent's realized price $p_{n,j}$ are $\mathcal{F}_{n-1}$-measurable and the perturbation $z_{n,i}$ satisfies
\[
\mathbb{E}[z_{n,i} \mid \mathcal{F}_{n-1}] \;=\; 0,
\qquad
|z_{n,i}| \;\le\; \delta \quad \text{a.s.}
\]
Then, for every $n\ge 1$,
\[
\mathbb{E}[R_i(\tilde p_{n,i} + z_{n,i};\, p_{n,j}) \mid \mathcal{F}_{n-1}]
\;=\;
R_i(\tilde p_{n,i};\, p_{n,j}) \;-\; |\beta_i|\,\mathbb{E}[z_{n,i}^2 \mid \mathcal{F}_{n-1}].
\]
In particular, the per-period expected revenue loss due to the perturbation is at most $|\beta_i|\,\delta^2$, uniformly in $n$, and vanishes as $\delta\downarrow 0$. 
\end{proposition}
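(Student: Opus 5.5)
The plan is a direct Taylor expansion. The revenue $R_i(p;p_{n,j})$ is an exact quadratic in own price, so expanding around $\tilde p_{n,i}$ involves no remainder. Writing $\tilde p \equiv \tilde p_{n,i}$ and $z \equiv z_{n,i}$, we have
\[
R_i(\tilde p + z;\, p_{n,j}) \;=\; R_i(\tilde p;\, p_{n,j}) \;+\; \partial_p R_i(\tilde p;\, p_{n,j})\, z \;+\; \beta_i z^2,
\]
where $\partial_p R_i(\tilde p; p_{n,j}) = \alpha_i + 2\beta_i \tilde p + \gamma_{i,j}p_{n,j}$. This identity can be checked by simply expanding $(\tilde p+z)(\alpha_i+\beta_i(\tilde p+z)+\gamma_{i,j}p_{n,j})$.

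Next, take conditional expectations given $\mathcal F_{n-1}$. Both $\tilde p_{n,i}$ and $p_{n,j}$ are $\mathcal F_{n-1}$-measurable by hypothesis, so $R_i(\tilde p;p_{n,j})$ and the derivative coefficient can be pulled out of the expectation. The linear term then vanishes because $\mathbb E[z\mid\mathcal F_{n-1}]=0$. Boundedness $|z|\le\delta$ ensures all conditional expectations are finite and well defined. Since $\beta_i<0$, the quadratic term contributes $\beta_i\mathbb E[z^2\mid\mathcal F_{n-1}] = -|\beta_i|\,\mathbb E[z^2\mid\mathcal F_{n-1}]$, which is exactly the claimed identity.

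For the bound, use $z^2\le\delta^2$ a.s., which gives $\mathbb E[z^2\mid\mathcal F_{n-1}]\le\delta^2$. The loss is therefore at most $|\beta_i|\delta^2$ uniformly in $n$ and in the history, and it tends to zero as $\delta\downarrow 0$.

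There is no real obstacle here. The only point needing care is the measurability of $p_{n,j}$. In the paper's protocol the opponent's realized price contains its own perturbation $z_{n,j}$, which is drawn at time $n$, so it is not $\mathcal F_{n-1}$-measurable in general. This is why the statement imposes measurability of $p_{n,j}$ explicitly as a hypothesis. If one wants the result when instead only $\tilde p_{n,j}$ is $\mathcal F_{n-1}$-measurable and $z_{n,j}$ is independent of $z_{n,i}$ with conditional mean zero, the same computation still goes through: the cross term $\gamma_{i,j} z_{n,j} z_{n,i}$ has zero conditional expectation by independence, and the term $\gamma_{i,j}\tilde p\, z_{n,j}$ has zero conditional mean. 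In that case the identity holds with $p_{n,j}$ replaced by $\tilde p_{n,j}$ on the right-hand side. I would note this remark but prove the proposition as stated.
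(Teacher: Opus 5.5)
Your proposal is correct and matches the paper's proof: an exact second-order expansion of the quadratic revenue around $\tilde p_{n,i}$, the linear term eliminated by $\mathbb E[z_{n,i}\mid\mathcal F_{n-1}]=0$ together with the $\mathcal F_{n-1}$-measurability of $\tilde p_{n,i}$ and $p_{n,j}$, and the bound from $z_{n,i}^2\le\delta^2$. Your remark is also correct: under the paper's protocol $p_{n,j}$ is not actually $\mathcal F_{n-1}$-measurable, and if $z_{n,j}$ is conditionally independent of $z_{n,i}$ with zero conditional mean, the identity holds with $\tilde p_{n,j}$ in place of $p_{n,j}$, which is the form the paper effectively uses later in Lemma~\ref{lem:asymp-revenue}.
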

\begin{proof}[Proof of Proposition~\ref{prop:costlinearexploration}]
Because $R_i(\cdot;\,p_{n,j})$ is a quadratic in its first argument with $\partial_p^2 R_i \equiv 2\beta_i$, Taylor's theorem gives the \emph{exact} expansion
\[
R_i(\tilde p_{n,i} + z_{n,i};\, p_{n,j})
\;=\;
R_i(\tilde p_{n,i};\, p_{n,j})
\;+\; z_{n,i}\,\frac{\partial R_i}{\partial p}(\tilde p_{n,i};\, p_{n,j})
\;+\; \beta_i\, z_{n,i}^2.
\]
Taking conditional expectation given $\mathcal F_{n-1}$ and using that $\tilde p_{n,i}, p_{n,j}\in\mathcal F_{n-1}$ together with $\mathbb E[z_{n,i}\mid\mathcal F_{n-1}]=0$, the linear term vanishes. Substituting $\beta_i = -|\beta_i|$ yields the claimed identity. The per-period magnitude bound follows from $\mathbb E[z_{n,i}^2\mid\mathcal F_{n-1}]\le \delta^2$.
\end{proof}

\subsection{Empirical Threshold and Robustness Checks for Theorem~\ref{thm:globalconvergencetocompetitiveoutcome}}\label{sec:appendix-threshold}

This appendix complements the stress test in Figure~\ref{fig:global-convergence-stress-test} with two additional analyses in the symmetric duopoly with $\alpha = 2.5$, $\beta = -1$, $\gamma = 0.4$, $[l, u] = [0.5, 2.5]$: (i) a 16-point logarithmically spaced sweep in the exploration variance $\nu^2 \in [10^{-3}, 0.3]$, used to pin down the empirical convergence threshold; and (ii) a robustness check that replaces the uniform dithering $z_{n,i} \sim \mathrm{Unif}(-c, c)$ with a Gaussian-clip scheme $z_{n,i} \sim \mathcal N(0, \nu^2)$, truncated at $\pm 4\sigma$. In both cases we report the tail log--log slope of $\mathrm{MSE}(\tilde{\mathbf p}_n)$ (fitted over the last $25\%$ of a horizon $T = 6\times 10^4$ run with $S = 80$ seeds per cell), and---for the $\nu^2$ sweep---the closed-form upper bound on the persistent-excitation constant $C_M(\nu^2) \le \tfrac{1}{2}(1 + p_*^2 + \nu^2 - \sqrt{(1 + p_*^2 + \nu^2)^2 - 4\nu^2})$ evaluated at $p_* = p^{NE}$.

\paragraph{Empirical threshold for convergence.}
Table~\ref{tab:threshold-curve} shows that the tail $\log\mathrm{MSE}(\tilde{\mathbf p}_n)$ vs.\ $\log n$ slope crosses $-1$ at $\nu^2_{\mathrm{emp}} \approx 0.014$ and stays at the asymptotic $n^{-1}$ rate thereafter; below $\nu^2_{\mathrm{emp}}$ the system is trapped on the pseudo-equilibrium continuum and the slope flattens. At the empirical threshold the closed-form bound gives $C_M(\nu^2_{\mathrm{emp}}) \approx 4 \times 10^{-3}$, while Theorem~\ref{thm:globalconvergencetocompetitiveoutcome}'s formal RHS $\bar\gamma\, L_\phi^{ob}\, C_x \approx 10$ for this market: the formal sufficient threshold is roughly $2{,}500\times$ larger than the empirically allowable one. 

\begin{table}[!htbp]
\centering
\begin{tabular}{rrrr}
\toprule
$\nu^2$ & slope $\mathrm{MSE}(\hat\theta)$ & slope $\mathrm{MSE}(\tilde{\mathbf p})$ & $C_M(\nu^2)$ (upper bound) \\
\midrule
$0.0010$ & $-0.02$ & $-0.06$ & $2.91\times 10^{-4}$ \\
$0.0015$ & $-0.09$ & $-0.12$ & $4.25\times 10^{-4}$ \\
$0.0021$ & $-0.06$ & $-0.06$ & $6.21\times 10^{-4}$ \\
$0.0031$ & $-0.07$ & $-0.07$ & $9.09\times 10^{-4}$ \\
$0.0046$ & $-0.09$ & $-0.09$ & $1.33\times 10^{-3}$ \\
$0.0067$ & $-0.38$ & $-0.33$ & $1.94\times 10^{-3}$ \\
$0.0098$ & $-0.69$ & $-0.62$ & $2.84\times 10^{-3}$ \\
$\boldsymbol{0.0143}$ & $\boldsymbol{-1.01}$ & $\boldsymbol{-0.87}$ & $\boldsymbol{4.15\times 10^{-3}}$ \\
$0.0210$ & $-1.21$ & $-1.18$ & $6.06\times 10^{-3}$ \\
$0.0306$ & $-1.10$ & $-1.22$ & $8.85\times 10^{-3}$ \\
$0.0448$ & $-1.02$ & $-1.19$ & $1.29\times 10^{-2}$ \\
$0.0656$ & $-0.99$ & $-1.15$ & $1.88\times 10^{-2}$ \\
$0.0959$ & $-0.99$ & $-1.11$ & $2.73\times 10^{-2}$ \\
$0.1402$ & $-1.01$ & $-1.08$ & $3.96\times 10^{-2}$ \\
$0.2051$ & $-1.04$ & $-1.05$ & $5.71\times 10^{-2}$ \\
$0.3000$ & $-1.07$ & $-1.03$ & $8.20\times 10^{-2}$ \\
\bottomrule
\end{tabular}
\caption{Empirical regime transition along a 16-point log-spaced $\nu^2$ grid in the symmetric duopoly with $\alpha = 2.5$, $\beta = -1$, $\gamma = 0.4$, $[l, u] = [0.5, 2.5]$. The tail log--log slope of $\mathrm{MSE}(\tilde{\mathbf p}_n)$ (column 3) crosses $-1$ at $\nu^2_{\mathrm{emp}} \approx 0.014$ (bolded row) and stays at the asymptotic $n^{-1}$ rate thereafter. Theorem~\ref{thm:globalconvergencetocompetitiveoutcome}'s formal threshold, by contrast, requires $\bar\gamma\, L_\phi^{ob}\, C_x \approx 10$ to be smaller than $C_M(\nu^2)$ (column 4), which is at most $\approx 0.08$ for any $\nu^2$ that keeps prices in the box $[l, u]$: the formal sufficient threshold is roughly $2{,}500\times$ larger than the empirically allowable threshold at the transition. Horizon $T = 6\times 10^4$, $S = 80$ seeds per cell.}
\label{tab:threshold-curve}
\end{table}

\paragraph{Gaussian-clip dithering.}
Table~\ref{tab:gaussian-dither} repeats the same constant-$\nu^2$ sweep with Gaussian-clip dithering ($z_{n,i} \sim \mathcal N(0, \nu^2)$, truncated at $\pm 4\sigma$). The transition pattern is qualitatively unchanged: the tail log--log slope is still well below $-1$ for $\nu^2 \le 0.005$ and stabilizes at $\approx -1$ for $\nu^2 \ge 0.025$, reproducing the uniform-dithering sweep of Table~\ref{tab:threshold-curve}. The conclusions of Theorem~\ref{thm:globalconvergencetocompetitiveoutcome} and the spiral-up discussion in §\ref{sec:spiralupphenomenon} therefore do not depend on the particular shape of the dithering distribution.

\begin{table}[!htbp]
\centering
\begin{tabular}{rrr}
\toprule
$\nu^2$ & slope $\mathrm{MSE}(\hat\theta)$ & slope $\mathrm{MSE}(\tilde{\mathbf p})$ \\
\midrule
$0.005$ & $-0.45$ & $-0.44$ \\
$0.010$ & $-0.76$ & $-0.67$ \\
$0.025$ & $-1.14$ & $-1.06$ \\
$0.050$ & $-1.09$ & $-1.12$ \\
$0.100$ & $-1.03$ & $-1.10$ \\
$0.200$ & $-0.94$ & $-1.06$ \\
$0.300$ & $-0.79$ & $-0.93$ \\
\bottomrule
\end{tabular}
\caption{Gaussian-clip dithering with $z_{n,i} \sim \mathcal N(0, \nu^2)$ truncated at $\pm 4\sigma$, symmetric duopoly with $\alpha = 2.5$, $\beta = -1$, $\gamma = 0.4$, $[l, u] = [0.5, 2.5]$. The same constant-$\nu^2$ transition as Table~\ref{tab:threshold-curve} is reproduced: tail $\log\mathrm{MSE}$ slope flat for $\nu^2 \le 0.005$, then settles at $\approx -1$ for $\nu^2 \ge 0.025$. Horizon $T = 6\times 10^4$, $S = 80$ seeds per cell.}
\label{tab:gaussian-dither}
\end{table}

\subsection{Mean-Dynamics ODE: Empirical Moments and Derivation}\label{sec:appendix-ode-derivation}

This appendix derives the mean-dynamics ODE \eqref{eq:ode-N-mQ} used in Section~\ref{sec:ode-perspective}. Throughout we assume the dithering terms $z_{n,i}$ are i.i.d.\ across $n$ and $i$, independent of $\{\varepsilon_{n,i}\}$, with $\EE[z_{n,i}]=0$ and $\Var(z_{n,i})=\nu^2>0$.

We describe the price process through its empirical first and second moments:
\[
m_{n,i}\triangleq \frac1n\sum_{s=1}^n p_{s,i},\qquad i\in[N],
\qquad
Q_{n,ij}\triangleq \frac1n\sum_{s=1}^n p_{s,i}p_{s,j},\qquad i,j\in[N].
\]
Let $m_n=(m_{n,1},\ldots,m_{n,N})\in\RR^N$ and $Q_n=(Q_{n,ij})_{i,j}\in\RR^{N\times N}$. These moments evolve by averaging:
\begin{equation}\label{eq:moment-recursion-N}
m_{n+1,i}=m_{n,i}+\frac{1}{n+1}\left(p_{n+1,i}-m_{n,i}\right),\qquad
Q_{n+1,ij}=Q_{n,ij}+\frac{1}{n+1}\left(p_{n+1,i}p_{n+1,j}-Q_{n,ij}\right).
\end{equation}
Recall from Section~\ref{sec:ode-perspective} the centered moments $S_{ij}(m,Q)\triangleq Q_{ij}-m_i m_j$ ($i\neq j$) and $V_i(m,Q)\triangleq Q_{ii}-m_i^2$. Under the true demand model \eqref{eq:demand} with $\EE[\varepsilon_{n,i}]=0$, the population regression coefficients of the misspecified regression $d_i\sim a_i+b_i p_i$ satisfy
\[
b_i(m,Q)=\frac{\Cov(p_i,d_i)}{\Var(p_i)},\qquad a_i(m,Q)=\EE[d_i]-b_i(m,Q)\EE[p_i].
\]
A direct calculation yields the closed-form maps
\begin{equation}\label{eq:ab-map-N-ode}
b_i(m,Q)=\beta_i+\sum_{j\neq i}\gamma_{i,j}\frac{S_{ij}(m,Q)}{V_i(m,Q)},\qquad
a_i(m,Q)=\alpha_i+\sum_{j\neq i}\gamma_{i,j}m_j
\;-\;m_i\sum_{j\neq i}\gamma_{i,j}\frac{S_{ij}(m,Q)}{V_i(m,Q)},
\end{equation}
which reproduce the explicit forms used in Section~\ref{sec:ode-perspective}. Note that $V_i(m,Q)$ is bounded away from $0$ over long horizons by persistent excitation and Lemma~\ref{lem:Jn-lower}. We let $p^g(m,Q)\in\RR^N$ denote the price vector with $i$-th coordinate $p_i^g(m,Q)$ as defined in~\eqref{eq:pg-map-N-ode}.

Let $\mathcal{F}_n$ denote the filtration generated by the entire price--demand history up to time $n$. Under the pricing rule
\[
p_{n+1,i}=p_i^g(m_n,Q_n)+z_{n+1,i},
\]
and the independence assumptions on $z_{n+1,i}$, we have
\[
\EE[p_{n+1,i}\mid \mathcal{F}_n]=p_i^g(m_n,Q_n).
\]
Moreover, using independence across sellers and $\Var(z_{n+1,i})=\nu^2$,
\[
\EE[p_{n+1,i}p_{n+1,j}\mid \mathcal{F}_n]=
\begin{cases}
p_i^g(m_n,Q_n)p_j^g(m_n,Q_n), & i\neq j,\\
(p_i^g(m_n,Q_n))^2+\nu^2, & i=j.
\end{cases}
\]
Taking conditional expectations in \eqref{eq:moment-recursion-N} motivates viewing $(m_n,Q_n)$ as a stochastic approximation recursion with step size $1/(n+1)$ and drift determined by $p^g(m,Q)$. The associated mean-dynamics ODE is \eqref{eq:ode-N-mQ}. Under the standard conditions of the ODE method for stochastic approximation (bounded iterates, diminishing step sizes, and locally Lipschitz drift), the piecewise-linear interpolation of $(m_n,Q_n)$ tracks the flow of \eqref{eq:ode-N-mQ} on long time scales \citep{borkarStochasticApproximationDynamical2023}.

\subsection{Short-Run Excursions in the Oblivious Duopoly}\label{sec:appendix-excursions}

This appendix formalizes the short-run excursion phenomenon discussed in Section~\ref{sec:ode-excursions} and illustrated in Figure~\ref{fig:excursions-main}. Throughout, we focus on the duopoly $N=2$ with symmetric demand parameters.

\paragraph{Five-dimensional duopoly ODE.}
Specializing the mean-dynamics ODE \eqref{eq:ode-N-mQ} to $N=2$ yields the five-dimensional moment system
\begin{equation}\label{eq:duopoly-mQ-ode}
\begin{aligned}
\dot m_i(t) &= p_i^g(m(t),Q(t)) - m_i(t), \qquad i=1,2,\\
\dot Q_{i,j}(t) &= p_i^g(m(t),Q(t))\,p_j^g(m(t),Q(t)) - Q_{i,j}(t),\qquad i \ne j,\\
\dot Q_{i,i}(t) &= (p_i^g(m(t),Q(t)))^2 + \nu^2 - Q_{i,i}(t),\qquad i=1,2.
\end{aligned}
\end{equation}
Recall the (symmetric) competitive and collusive benchmarks in the duopoly case:
\[
p^{NE}=\frac{\alpha}{-2\beta-\gamma},\qquad
p^C=\frac{\alpha}{-2\beta-2\gamma},
\qquad
l<p^{NE}<p^C<u.
\]
By Theorem~\ref{thm:ode-local-stability}, \eqref{eq:duopoly-mQ-ode} admits a unique (locally asymptotically stable) equilibrium
\[
m^* = (p^{NE}, p^{NE}),
\qquad
Q^*_{i,j} = (p^{NE})^2, \quad i\neq j,
\qquad
Q^*_{i,i} = (p^{NE})^2 + \nu^2, \quad i=1,2.
\]

\paragraph{Reduced ODE on the symmetric moment manifold.}
Finite-time dynamics of nonlinear ODEs can be complicated to delineate even for the simplified ODE \eqref{eq:duopoly-mQ-ode}. We therefore take a two-fold approach: we first theoretically characterize the structure of trajectories on a further reduced \emph{symmetric moment manifold}, and then we empirically verify that the patterns predicted by the reduced system are indeed present in the five-dimensional system.

The convergent patterns established in Section~\ref{sec:oblivious-sellers} imply that, over long horizons, the two sellers' prices tend to align, i.e., the empirical moments should approximately satisfy $m_1\approx m_2$ and $Q_{11}\approx Q_{22}$. Thus, the \emph{symmetric moment manifold}
\[
m_1=m_2=:y_1,\qquad Q_{11}=Q_{22}=:y_2,\qquad Q_{12}=:y_3,
\]
should capture key aspects of finite-time dynamics without overly complicated characterization of the dissipating inter-seller differences. Restricting to this manifold reduces the five-dimensional duopoly moment system to a three-dimensional ODE for $\mathbf y=(y_1,y_2,y_3)$:
\begin{equation}\label{eq:reducedsystem}
\dot{\mathbf y}_t
=
\begin{bmatrix}
B(y_{t,1},r(\mathbf y_t)) - y_{t,1}\\[2pt]
B(y_{t,1},r(\mathbf y_t))^2+\nu^2-y_{t,2}\\[2pt]
B(y_{t,1},r(\mathbf y_t))^2-y_{t,3}
\end{bmatrix},\qquad t\ge 0,
\end{equation}
where the greedy price map is
\begin{equation}\label{eq:reduced-greedy}
B(y_1,r)\;\triangleq\;-\frac{a(\mathbf y)}{2b(\mathbf y)}
\;=\;
\frac{\alpha+\gamma(1-r)y_1}{-2\beta-2\gamma r}
\end{equation}
and the variance and covariance proxies are
\[
V(\mathbf y)\triangleq y_2-y_1^2,\qquad C(\mathbf y)\triangleq y_3-y_1^2,\qquad
r(\mathbf y)\triangleq \frac{C(\mathbf y)}{V(\mathbf y)}.
\]
We assume $V(\mathbf y_0)>0$ so that $r(\mathbf y_t)$ is well-defined initially.\footnote{This is the continuous-time analogue of requiring that the initial design is not degenerate (e.g., the first two prices are not identical) so that the OLS slope is well-defined.} We also assume $\alpha+(\beta+\gamma)u>0$, a technical condition meaning that the demand remains positive when both sellers set the upper price bound $u$. Following Theorem~\ref{thm:ode-local-stability}, the reduced system \eqref{eq:reducedsystem} admits a unique (locally asymptotically stable) equilibrium
\begin{equation}\label{eq:3dode-equilibrium}
\mathbf y^*=\left(p^{NE},(p^{NE})^2+\nu^2,(p^{NE})^2\right).
\end{equation}

\paragraph{Upward excursions: a single overshoot.}
Let $y_i(t)$ denote the coordinates of $\mathbf y_t$. The next proposition provides a finite-time characterization of trajectories: the mean price either converges directly to $p^{NE}$, or it makes a single supracompetitive excursion before converging back. The proof is in Appendix~\ref{proof:prop:reduced-excursion}.

\begin{proposition}[Upward excursions]\label{prop:reduced-excursion}
The following hold for any solution $\mathbf y_t$ of \eqref{eq:reducedsystem}:
\begin{enumerate}
\item[\rm(i)] $V(t)>0$ for all $t\ge 0$, hence $r(t)$ is well-defined for all $t\ge 0$.
\item[\rm(ii)] There exists $T_{r, C}<\infty$ such that $r(t)<1$ and $y_1(t)<p^C$ for all $t\ge T_{r, C}$.
\item[\rm(iii)] Either \rm(a) $\mathbf y(t)$ converges directly to $\mathbf y^*$; or \rm(b) there exists $T_{NE}<\infty$ such that $\dot{y}_1(T_{NE}) > 0$ and $r(t) \in (0,1)$ and $y_1(t)\in(p^{NE},p^C)$ for all $t\ge T_{NE}$.
\item[\rm(iv)] If $\dot y_1(t_0)=0$ and $r(t_0)>0$ for some time $t_0$, then $\ddot y_1(t_0)<0$. Consequently, in case \rm(iii.b), the function $t\mapsto y_1(t)$ has exactly one local maximum on $[T_{NE},\infty)$ and no local minima. After attaining this maximum, $y_1(t)$ decreases monotonically to $p^{NE}$.
\end{enumerate}
\end{proposition}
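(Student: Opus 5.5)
The plan is to work with the derived quantities $V(t)=y_2-y_1^2$, $C(t)=y_3-y_1^2$ and $r=C/V$, since the proposition is phrased in terms of them, and to obtain closed ODEs for these from \eqref{eq:reducedsystem}. Writing $B=B(y_1,r)$ and $\dot y_1=B-y_1$, a direct computation gives
\[
\dot V = B^2+\nu^2-y_2-2y_1(B-y_1) = (B-y_1)^2+\nu^2 - V,
\qquad
\dot C = (B-y_1)^2 - C .
\]
For part (i), the first identity yields $\dot V\ge \nu^2-V$. Hence $V(t)\ge V(0)e^{-t}+\nu^2(1-e^{-t})>0$ by comparison, and $V$ is eventually bounded below by roughly $\nu^2$. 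Subtracting the two identities gives $\frac{d}{dt}(V-C)=\nu^2-(V-C)$, so $V-C\to\nu^2$ exponentially. Since $C\le V$ eventually, $r<1$ eventually.

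For part (ii), I would first show that $y_1$ and $B$ stay bounded, using positivity of demand at $u$ to control the denominator $-2\beta-2\gamma r$ once $r<1$. With that in hand, $V-C=\nu^2+O(e^{-t})$ and $V$ is bounded, so $1-r=(V-C)/V$ is bounded below eventually. Next I would observe that $B(y_1,r)$ is increasing in $r$ and that, for $r<1$, the fixed point $y_1=B(y_1,r)$ equals $\alpha/(-2\beta-\gamma-\gamma r)<p^C$. A comparison argument on $\dot y_1=B-y_1$, with $r$ bounded away from $1$, then forces $y_1$ below $p^C$ after some finite time. The key step here is obtaining a uniform margin $r\le 1-\epsilon$.

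For part (iii), I would split according to the sign of $r$ and of $y_1-p^{NE}$. The equilibrium has $r=0$. Since $\dot C=(\dot y_1)^2-C$, the covariance proxy relaxes toward a nonnegative forcing term, so $C$ (and hence $r$) becomes and stays nonnegative after finite time, unless the trajectory is already converging. Once $r>0$ holds, $B(y_1,r)>B(y_1,0)$, which makes $y_1=p^{NE}$ a barrier that is crossed only upward. Either $y_1$ approaches $p^{NE}$ from below or monotonically, which is case (a), or it crosses upward at a time $T_{NE}$ with $\dot y_1>0$ and stays in $(p^{NE},p^C)$ afterward, which is case (b). Local asymptotic stability from Theorem~\ref{thm:ode-local-stability} together with boundedness supplies convergence to $\mathbf y^*$ in both cases. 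A LaSalle-type argument, or a $\omega$-limit argument using $r\to 0$ as $C\to 0$ once $\dot y_1\to 0$, removes other limit sets.

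For part (iv), at a critical point we have $\dot y_1=0$, so $\ddot y_1=\partial_r B\cdot\dot r$, and $\partial_r B>0$ because $y_1<p^C$ there. It remains to show $\dot r<0$. Since $\dot y_1=0$, we get $\dot C=-C$ and $\dot V=\nu^2-V$, so
\[
\dot r=\frac{\dot C V-C\dot V}{V^2}=\frac{-CV-C(\nu^2-V)}{V^2}=-\frac{C\nu^2}{V^2}<0,
\]
using $C>0$ from $r>0$. Thus every critical point with $r>0$ is a strict local maximum. A $C^1$ function whose critical points are all strict maxima has at most one of them and no minima. Combined with convergence to $p^{NE}$ from above, this gives monotone decrease after the maximum.

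I expect the main obstacle to be part (iii): ruling out $r$ becoming negative again and excluding oscillations around $p^{NE}$ before the clean regime begins. This requires careful control of the sign of $C$ through $\dot C=(\dot y_1)^2-C$ rather than a purely local argument.
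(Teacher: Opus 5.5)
Your route is the paper's: the identities $\dot V=\dot y_1^2+\nu^2-V$, $\dot C=\dot y_1^2-C$ and $\frac{d}{dt}(V-C)=\nu^2-(V-C)$, monotonicity of $B$ in $r$, the filter equation for $C$, and $\ddot y_1(t_0)=\partial_r B\,\dot r(t_0)$ with $\dot r(t_0)=-\nu^2 r(t_0)/V(t_0)$ are exactly its ingredients.

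Two of your refinements improve on the paper.
\begin{itemize}
\item In (ii), the paper only records $\dot y_1<p^C-y_1$. That gives exponential decay of $(y_1-p^C)^+$ but no finite crossing time. Your uniform margin $r\le 1-\epsilon$ (from $V-C\to\nu^2$ and $V$ bounded), combined with the fixed point $\alpha/(-2\beta-\gamma-\gamma r)<p^C$, is what actually closes that step.
\item In (iii), your split counts a $C>0$ trajectory that approaches $p^{NE}$ from below as case (a). This is more accurate than the paper's claim that $C>0$ always forces an upward crossing, because the comparison $\dot y_1>B(y_1,0)-y_1$ degenerates at $p^{NE}$.
\end{itemize}

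The genuine gap is the convergence step in (iii). Local asymptotic stability of $\mathbf y^*$ (Theorem~\ref{thm:ode-local-stability}) plus boundedness says nothing about trajectories starting outside the basin of attraction. Your $\omega$-limit fallback presupposes $\dot y_1\to 0$, which is exactly what must be proved. You need to argue branch by branch.

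\emph{Branch $C\le 0$ for all $t$.} Here $\dot C=\dot y_1^2-C\ge -C\ge 0$, so $C$ is nondecreasing and bounded, hence convergent. This forces $\dot y_1\to 0$ and $C\to 0$; this is the paper's argument. Then $V\to\nu^2$ and $r\to 0$, so
\[
\dot y_1=\frac{-2\beta-\gamma}{-2\beta}\,(p^{NE}-y_1)+o(1),
\]
which gives $y_1\to p^{NE}$.

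\emph{Branch $C(t_1)>0$ for some $t_1$.} Then $C(t)\ge C(t_1)e^{-(t-t_1)}>0$ for all later $t$. So $r$ cannot turn negative again; this is immediate, not the main obstacle you anticipated. On $[\max\{t_1,T_{r,C}\},\infty)$ you have $r\in(0,1)$. The first sentence of (iv), which you prove independently of (iii), then makes every critical point of $y_1$ a strict local maximum. Hence:
\begin{itemize}
\item there is at most one critical point, so $y_1$ is eventually monotone and bounded, hence convergent;
\item since $\ddot y_1$ is bounded, $\dot y_1\to 0$;
\item the filter equation then gives $C\to 0$ and $r\to 0$, and the limit must be $p^{NE}$.
\end{itemize}
The paper relies on this same strict-maximum structure in (iv) to get monotone convergence after the peak. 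You should invoke it inside (iii) instead of appealing to local stability; there is no circularity.
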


\paragraph{Discussion.}
Proposition~\ref{prop:reduced-excursion} shows that the reduced ODE admits \emph{two modes of convergence} to the competitive equilibrium: either the mean price converges directly to $p^{NE}$ (case \rm(iii.a)), or it exhibits a \emph{single supracompetitive excursion}---it increases above $p^{NE}$, attains exactly one local maximum in $(p^{NE},p^C)$, and then decreases monotonically and converges to $p^{NE}$ (case \rm(iii.b)+\rm(iv)). The latter captures a \emph{collude-in-the-short-run, compete-in-the-long-run} pattern: prices can be transiently supracompetitive even though the unique long-run equilibrium is competitive. The same overshoot--then--convergence pattern appears in the full five-dimensional duopoly ODE \eqref{eq:duopoly-mQ-ode} and in the original discrete-time dynamics \eqref{eq:obliviousgreedynextprice} (Figure~\ref{fig:excursions-main}, panels (a)--(b)). The mean-dynamics ODE captures the drift of the empirical moments under the stochastic-approximation scaling with step size $1/(n+1)$. Thus, ODE time $t$ corresponds to the cumulative step size $\sum_{k\le n}\frac{1}{k}\approx \log n$ in discrete time. Consequently, an $O(1)$-length excursion in ODE time can translate into a long window of discrete periods, making transient supracompetitive pricing economically relevant even when the eventual limit is $p^{NE}$.

\paragraph{Downward excursions and excursion direction.}
The existence of an upward excursion alone does \emph{not} imply that oblivious learning is a reliable collusion device. In fact, if we no longer restrict the five-dimensional duopoly ODE \eqref{eq:duopoly-mQ-ode} to the symmetric manifold, the same dynamics can generate \emph{downward} excursions in which mean prices temporarily fall \emph{below} $p^{NE}$ before recovering (Figure~\ref{fig:excursions-main}, panels (c)--(d)). Such episodes are economically harmful: when both sellers price below $p^{NE}$, both could improve profits by switching to the competitive benchmark. The coexistence of upward and downward excursions raises a core strategic question: \emph{can oblivious learning reliably control the direction of the excursion?} The answer is no.

To understand why both directions are possible when the system is not restricted to the symmetric manifold, consider the centered covariance, centered variances, and the associated covariance--variance ratios
\[
C(t)\triangleq Q_{12}(t)-m_1(t)m_2(t),
\quad
V_i(t)\triangleq Q_{ii}(t)-m_i(t)^2,
\quad
r_i(t)\triangleq \frac{C(t)}{V_i(t)},\quad i=1,2.
\]
The terms $V_i(t)$ play the same role as the discrete-time information variables $J_{n,i}$. The identity $\dot V_i(t)=\dot m_i(t)^2+\nu^2 - V_i(t)$ implies that, after an initial transient, $V_i(t)$ is bounded away from zero because $\nu^2>0$ provides persistent excitation. Thus, $r_i(t)$ is well-defined. Moreover, $r_1(t)$ and $r_2(t)$ share the same numerator $C(t)$, so their signs coincide.

The key object is $C(t)$. A direct calculation gives
\[
\dot C(t) \;=\; \dot m_1(t)\,\dot m_2(t) - C(t),
\]
i.e., $C(t)$ is a stable first-order filter driven by the signed ``input'' $\dot m_1(t)\dot m_2(t)$. Under a mild positive-demand condition $\alpha+\beta u+\gamma l>0$, the misspecified greedy map is locally increasing in $r_i$: larger (positive) $r_i$ biases the oblivious best response upward, while smaller (negative) $r_i$ biases it downward. In the symmetric reduction, the two sellers' mean deviations coincide, so $\dot m_1(t) \dot m_2(t)=\dot y_1(t)^2\ge 0$ and $C(t)$ cannot be persistently negative. In the full 5-D dynamics, however, the two sellers' mean deviations can differ in sign, yielding a switching logic:
\begin{itemize}
\item If $\dot m_1(t)$ and $\dot m_2(t)$ move in the \emph{same} direction early on, then $\dot m_1(t)\dot m_2(t)$ is predominantly positive, pushing $C(t)$ upward and making $r_i(t)>0$, which can generate an \emph{upward} excursion.
\item If $\dot m_1(t)$ and $\dot m_2(t)$ move in \emph{opposite} directions early on, then $\dot m_1(t)\dot m_2(t)$ is primarily negative, pushing $C(t)$ downward and making $r_i(t)<0$, which can generate a \emph{downward} excursion.
\end{itemize}
The same co-movement mechanism is present in discrete time: the empirical ratios $r_{n,i\leftarrow j}$ summarize whether seller $i$'s centered price variations align with seller $j$'s, and this alignment is driven by early transients and realized exploration shocks. The excursion direction is therefore governed by the sign of an endogenous co-movement statistic that is shaped by early-time transients and is sample-path dependent in the underlying stochastic dynamics due to both demand noise and price exploration. Consequently, oblivious learning cannot reliably induce supracompetitive behavior to deliver extra revenue.

\subsection{Informed OLS Rate in Mixed Markets}\label{sec:informedolsrate-statement}

Consider a market in which at least one seller is informed, and let $\mathcal{I}^{in}\subseteq\{1,\ldots,N\}$ denote the set of informed sellers. Without loss of generality, assume that the first $N+1$ periods produce a full-rank empirical Fisher information matrix for each informed seller, allowing least squares thereafter. We work under the heterogeneous-exploration setup introduced in the opener of Section~\ref{sec:informed-strategic-choice}---in particular, $\nu_{n,i}^2 = \Theta(n^{-\eta_i})$ with $\eta_i \in [0,1)$, and $\eta_{\min}$, $\eta_{\max}$ as defined there---and write
\[
\theta_i^{*,\,in} \;=\; (\alpha_i,\,\gamma_{i,1},\,\ldots,\,\gamma_{i,i-1},\,\beta_i,\,\gamma_{i,i+1},\,\ldots,\,\gamma_{i,N})^\top \in\mathbb{R}^{N+1}, \qquad i\in\mathcal{I}^{in},
\]
for the true (correctly specified) informed demand parameter (the same object defined in Section~\ref{sec:all-informed}), with $\beta_i$ in the own-price slot.

\begin{theorem}[\textbf{Informed OLS rate in mixed markets}]\label{thm:informedolsrate}
Suppose condition~\eqref{eq:conditioninformedsellersmixed} holds. Then,
\[
\sum_{i \in \mathcal{I}^{in}} \mathbb{E}\,\|\hat{\theta}_{n,i}^{in}-\theta_i^{*,\,in}\|_2^2
=
O\!\left(n^{\eta_{\max}-1}\log n\right).
\]
\end{theorem}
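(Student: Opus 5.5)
The plan is to treat each informed seller's regression as a correctly specified linear model with a martingale-difference error. This gives the standard least-squares error decomposition
\[
\tilde\theta_{n,i}^{in}-\theta_i^{*,in} \;=\; \Bigl(\sum_{m\le n} x_{m,i}^{in}(x_{m,i}^{in})^\top\Bigr)^{-1}\sum_{m\le n} x_{m,i}^{in}\varepsilon_{m,i}.
\]
Because projection onto the convex set $\Theta_i^{in}$ is nonexpansive and $\theta_i^{*,in}\in\Theta_i^{in}$, it suffices to bound $\|\tilde\theta_{n,i}^{in}-\theta_i^{*,in}\|_2^2$. Write $M_{n,i}=\sum_{m\le n}x_{m,i}^{in}(x_{m,i}^{in})^\top$ and $S_{n,i}=\sum_{m\le n}x_{m,i}^{in}\varepsilon_{m,i}$. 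We use the bound
\[
\|\tilde\theta-\theta^*\|_2^2\le \frac{S_{n,i}^\top M_{n,i}^{-1}S_{n,i}}{\lambda_{\min}(M_{n,i})}
\]
and control the numerator and denominator separately.

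\textbf{Numerator.} By the self-normalized martingale inequality (Abbasi-Yadkori/Lai--Wei type), with $\varepsilon$ bounded hence sub-Gaussian,
\[
S_{n,i}^\top M_{n,i}^{-1}S_{n,i} \;\lesssim\; \log\det M_{n,i}+\log(1/\delta) \;=\; O(\log n),
\]
using that the regressors are bounded by $\sqrt{1+Nu^2}$. Choosing $\delta=n^{-2}$ keeps the failure event's contribution negligible, since the projected error is bounded on compact $\Theta_i^{in}$.

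\textbf{Denominator.} We need $\lambda_{\min}(M_{n,i})\gtrsim n^{1-\eta_{\max}}$ with high probability. Decompose $x_m^{in}=(1,\tilde{\mathbf p}_m+\mathbf z_m)$, where $\tilde{\mathbf p}_m$ is $\mathcal F_{m-1}$-measurable and $\mathbf z_m$ has independent coordinates. For a unit vector $v=(v_0,\mathbf v)$,
\[
\EE[(v^\top x_m)^2\mid\mathcal F_{m-1}] = (v_0+\mathbf v^\top\tilde{\mathbf p}_m)^2+\sum_k v_k^2\nu_{m,k}^2 .
\]
If $\|\mathbf v\|$ is bounded below, the second term is at least $c\,\|\mathbf v\|^2 m^{-\eta_{\max}}$. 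If $\|\mathbf v\|$ is small, then $|v_0|\approx1$ and the first term is bounded below because prices lie in $[l,u]$. Either way,
\[
\EE[(v^\top x_m)^2\mid\mathcal F_{m-1}]\ge c\,m^{-\eta_{\max}},
\]
uniformly in $v$; this is the analogue of Lemma~\ref{lem:yy-lower}. Summing gives a predictable lower bound $\sum_{m\le n} c\,m^{-\eta_{\max}}\asymp n^{1-\eta_{\max}}$.

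The realized matrix $M_{n,i}$ then has to be compared with this predictable sum. We would apply a matrix Freedman inequality to the martingale $\sum_m\bigl(x_mx_m^\top-\EE[x_mx_m^\top\mid\mathcal F_{m-1}]\bigr)$. Its conditional variance is bounded by
\[
\sum_m \EE\|x_m x_m^\top\|^2\,\cdot\,(\text{direction variance})\lesssim \sum_m m^{-\eta_{\min}},
\]
so its deviations are of order $\sqrt{n^{1-\eta_{\min}}\log n}$. Condition~\eqref{eq:conditioninformedsellersmixed}, namely $\eta_{\min}+1>2\eta_{\max}$, is exactly what makes $\sqrt{n^{1-\eta_{\min}}\log n}=o(n^{1-\eta_{\max}})$. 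Hence
\[
\lambda_{\min}(M_{n,i})\ge \tfrac{c}{2}\,n^{1-\eta_{\max}}
\]
with probability at least $1-n^{-2}$, after a union bound over an $\epsilon$-net of directions. Combining the two pieces on the good event gives $O(n^{\eta_{\max}-1}\log n)$, and summing over the finitely many $i\in\mathcal I^{in}$ completes the argument.

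\textbf{Main obstacle.} The hardest step is the minimum-eigenvalue concentration. The conditional second moments in the direction $(v_0,\mathbf v)$ mix a deterministic-in-the-past mean part with a vanishing exploration part, and a cancellation $v_0+\mathbf v^\top\tilde{\mathbf p}_m\approx0$ can hold simultaneously across sellers. The uniform lower bound must therefore rely only on the exploration variance in those directions. In addition, the Freedman variance proxy must be computed so that the threshold comes out as $\eta_{\min}+1>2\eta_{\max}$. If the direct matrix-Freedman bound is too loose, the fallback is to work on the centered design (subtracting running means, as in the definition of $J_{n,i}$) and reduce to bounding the centered price covariance, where the intercept direction decouples.
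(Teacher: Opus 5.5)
Your proof is correct. For the eigenvalue step it runs parallel to the paper's Lemma~\ref{lem:designmatrixgrowthcondition}: both apply matrix Freedman to $\sum_m\bigl(x_mx_m^\top-\mathbb{E}_{m-1}[x_mx_m^\top]\bigr)$ with variance proxy $\lesssim n^{1-\eta_{\min}}$, and the condition $1+\eta_{\min}>2\eta_{\max}$ enters in the same way. Two smaller differences there, both equally valid:
\begin{itemize}
\item The paper lower-bounds the predictable Gram matrix via $\sum_m\mu_m\mu_m^\top\succeq n\bar\mu_n\bar\mu_n^\top$ and Lemma~\ref{lem:rankonediagonalmatrixmineigenvalue}; you use a per-step case split on $\|\mathbf v\|$.
\item The paper takes $t=\tfrac12 n^{1-\eta_{\max}}$, which gives failure probability $\exp(-Cn^{1+\eta_{\min}-2\eta_{\max}})$; your $t\asymp\sqrt{n^{1-\eta_{\min}}\log n}$ gives failure probability $n^{-2}$.
\end{itemize}

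The genuine difference is how the noise is handled. The paper never isolates $S_{n,i}^\top M_{n,i}^{-1}S_{n,i}$. It imports the Keskin--Zeevi tail bound (their Lemma~3),
\[
\mathbb{P}\bigl(\|\hat\theta_n-\theta\|_2^2>x,\,E_n\bigr)\le kn\exp\bigl(-c\min\{x,\sqrt{x}\}\,n^{1-\eta_{\max}}\bigr),
\]
and integrates it with a split at $x_n^\circ=2\log n/(c\,n^{1-\eta_{\max}})$. Its $\log n$ therefore comes from the $kn$ prefactor, whereas yours comes from $\log\det M_{n,i}$. Your decomposition $\|M^{-1}S\|_2^2\le S^\top M^{-1}S/\lambda_{\min}(M)$ plus a self-normalized bound is more modular and self-contained. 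Because that bound is uniform in time, it also gives an almost-sure rate with little extra work (cf.\ Lemma~\ref{lem:proof-mm-olsrate}(b)). The paper's route is shorter once the imported lemma is accepted.

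Three small points to tidy:
\begin{itemize}
\item The Abbasi-Yadkori inequality is stated for the regularized matrix $M_{n,i}+\lambda I$. You can pass to $M_{n,i}^{-1}$ on the good event, since $M_{n,i}^{-1}\preceq 2(M_{n,i}+\lambda I)^{-1}$ once $\lambda_{\min}(M_{n,i})\ge\lambda$.
\item The $\epsilon$-net union bound is unnecessary, because matrix Freedman controls the extreme eigenvalue directly.
\item The variance proxy needs an explicit justification. Write $x_m=\mu_m+\xi_m$ with $\mu_m=\mathbb{E}_{m-1}[x_m]$. The mean parts cancel in the centered increment:
\[
Y_m=\mu_m\xi_m^\top+\xi_m\mu_m^\top+\bigl(\xi_m\xi_m^\top-\mathbb{E}_{m-1}[\xi_m\xi_m^\top]\bigr),
\]
which gives $\mathbb{E}_{m-1}\|Y_m\|_{op}^2\lesssim\sum_k\nu_{m,k}^2$. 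The crude bound $\|Y_m\|_{op}\le 2C_x^2$ would only yield the threshold $\eta_{\max}<1/2$. That would miss the case $\eta_i\equiv 1/2$ used in Theorem~\ref{thm:allinformedmeanforecast}.
\end{itemize}
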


The proof is in Appendix~\ref{proof:thm:informedolsrate}. Theorem~\ref{thm:informedolsrate} shows that, despite the presence of oblivious competitors whose pricing behavior may be misspecified and potentially erratic, informed sellers achieve \emph{complete learning} of the true demand model: by correctly specifying the demand model, informed sellers treat competitors' price movements as informative covariates rather than as unmodeled shocks, and thus convert market variation into identification. 

\subsection{Forecast-Rule Ablations}\label{sec:forecast-rule-ablation}

In Section~\ref{sec:mixedmarkets}, every informed seller adopts the running-mean forecast $\hat p_{n+1,k} = m_{n,k}$ for each competitor~$k$. This appendix compares it against three additional one-step forecast patterns. For tractability of the closed-form analysis below, we work in a mixed duopoly ($N = 2$) used as a tractable test bed: seller~1 is oblivious with persistent $\nu_1^2 \equiv 0.10$, seller~2 is informed with decaying $\nu_{n,2}^2 = 0.10\, n^{-1/2}$, $S = 200$ seeds, horizon $T = 60{,}000$. The four forecast rules that seller~2 could use for seller~1's next-period price are:
\begin{itemize}
\item \emph{running mean}: $\hat p_{n+1,1} = m_{n,1}$, the running mean of seller~1's realized prices (Section~\ref{sec:mixedmarkets}).
\item \emph{lag-1}: $\hat p_{n+1,1} = p_{n,1}$, the most recent realized price, which inherits seller~1's previous exploration realization $z_{n,1}$.
\item \emph{greedy component}: $\hat p_{n+1,1} = \tilde p_{n+1,1}$, a clairvoyant rule that reads seller~1's deterministic best-response component before exploration noise is added; not implementable in practice but useful as a noise-free reference point.
\item \emph{perfect prediction}: $\hat p_{n+1,1} = p_{n+1,1} = \tilde p_{n+1,1} + z_{n+1,1}$, a clairvoyant rule that reads seller~1's full realization, including its exploration noise, before committing seller~2's price; not implementable in practice but serves as an informational upper bound.
\end{itemize}

\begin{figure}[!htbp]
\centering
\begin{subfigure}{.245\textwidth}
\centering
\includegraphics[width=\linewidth]{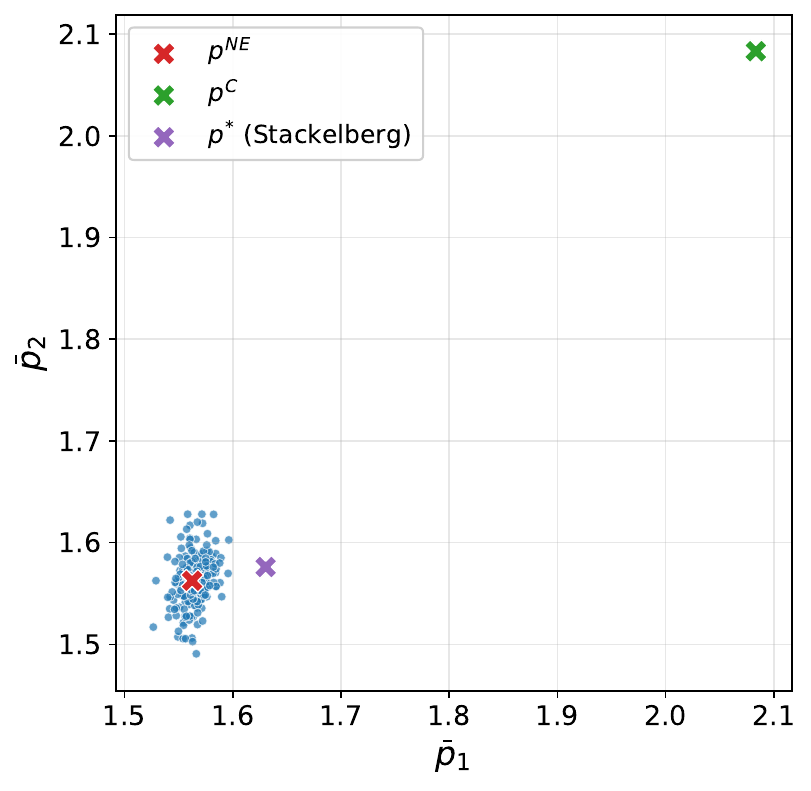}
\caption{\emph{running mean}.}
\label{fig:forecast-mean-price}
\end{subfigure}\hfill
\begin{subfigure}{.245\textwidth}
\centering
\includegraphics[width=\linewidth]{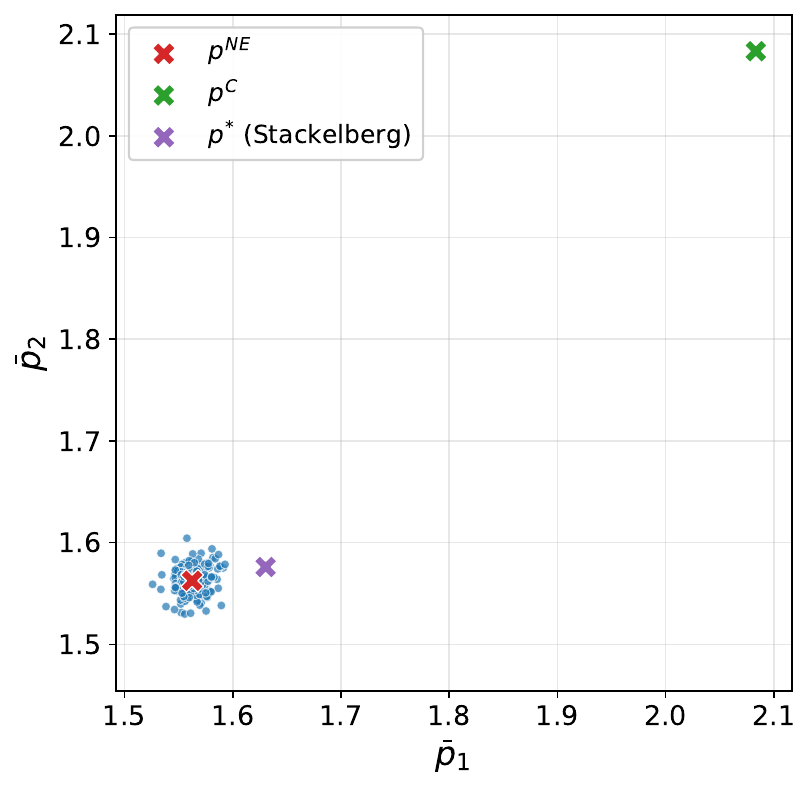}
\caption{\emph{lag-1}.}
\label{fig:forecast-lag1}
\end{subfigure}\hfill
\begin{subfigure}{.245\textwidth}
\centering
\includegraphics[width=\linewidth]{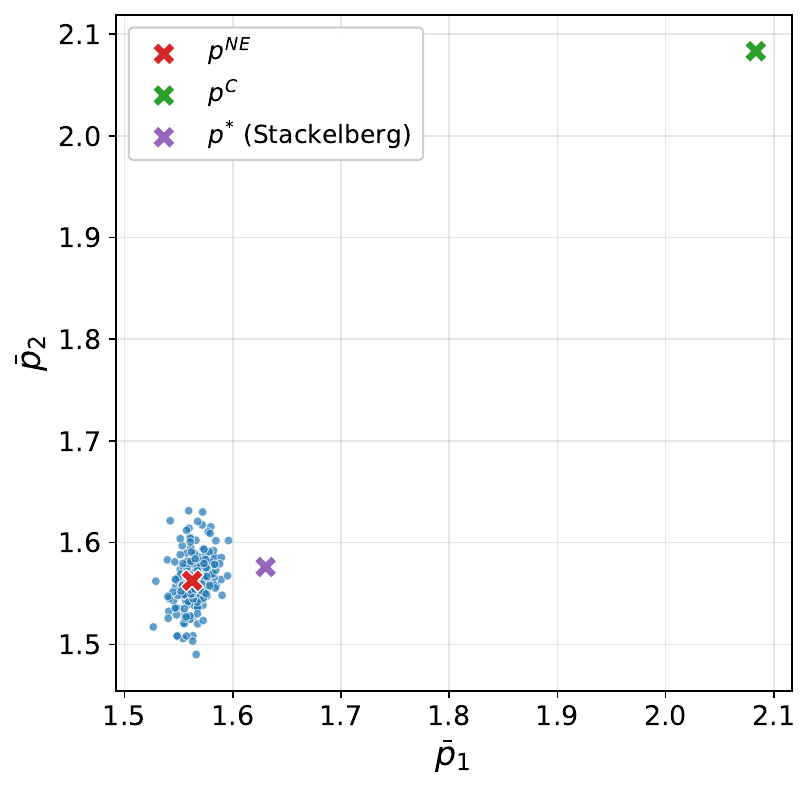}
\caption{\emph{greedy component}.}
\label{fig:forecast-greedy}
\end{subfigure}\hfill
\begin{subfigure}{.245\textwidth}
\centering
\includegraphics[width=\linewidth]{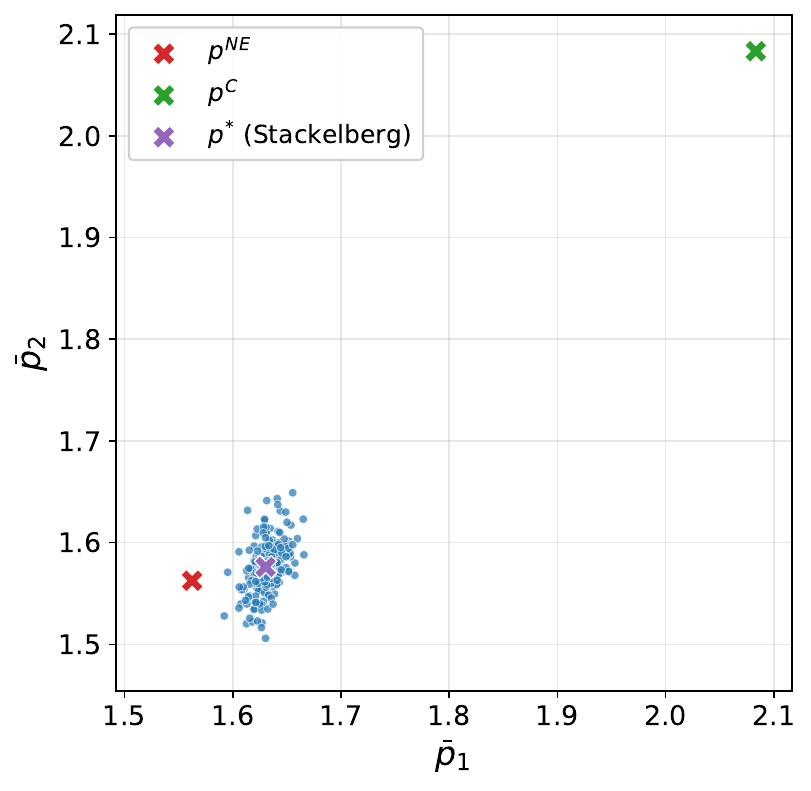}
\caption{\emph{perfect prediction}.}
\label{fig:forecast-perfect}
\end{subfigure}
\caption{Cross-seed scatter of the seed-by-seed running-mean prices $(\bar p_{T,1}, \bar p_{T,2})$ at horizon $T = 60{,}000$ for each forecast rule, $S = 200$ seeds. Demand primitives match the mixed-duopoly cell of Table~\ref{tab:meta-revenue-summary} ($\alpha = 2.5$, $\beta = -1$, $\gamma = 0.6$, $[l, u] = [0.5, 3.5]$); all four panels share the same axis limits and mark the Nash price $\mathbf p^{NE}$, the collusive price $\mathbf p^{C}$, and the Stackelberg price $\mathbf p^*$ (Equation~\eqref{eq:stackelberg-prices-app}) computed in closed form from the same primitives. The running-mean, lag-1, and greedy-component rules (panels~\subref{fig:forecast-mean-price}--\subref{fig:forecast-greedy}) all cluster tightly around $\mathbf p^{NE}$, while the clairvoyant perfect-prediction rule (panel~\subref{fig:forecast-perfect}) clusters around the Stackelberg point with the informed seller as follower.}
\label{fig:forecast-rule-scatter}
\end{figure}

\begin{table}[!htbp]
\centering
\small
\begin{tabular}{lcccccc}
\toprule
rule & $\bar p_1$ & $\bar p_2$ & $\bar R_1$ & $\bar R_2$ & $S_1$ & $S_2$ \\
\midrule
running mean       & $1.564$ & $1.563$ & $2.341$ & $2.441$ & $-0.615$ & $-0.004$ \\
lag-1              & $1.564$ & $1.563$ & $2.341$ & $2.437$ & $-0.614$ & $-0.025$ \\
greedy component   & $1.564$ & $1.563$ & $2.341$ & $2.441$ & $-0.615$ & $-0.004$ \\
perfect prediction & $1.632$ & $1.576$ & $2.353$ & $2.487$ & $-0.542$ & $+0.281$ \\
\bottomrule
\end{tabular}
\caption{Seed-averaged running-mean prices, per-period revenues, and surplus-capture ratios at horizon $T = 60{,}000$ for the four forecast rules in the mixed duopoly. Seller~1 is oblivious, seller~2 is informed. Benchmarks: $\Pi^{NE} = (2.441, 2.441)$, $\Pi^{C} = (2.604, 2.604)$; the Stackelberg revenues are $\Pi^* = (2.446, 2.484)$. Across $S = 200$ seeds, the first three rules sit at the Nash limit price; lag-1 incurs a small per-period revenue penalty for the informed seller (the Jensen tax of~\eqref{eq:lag1-jensen-app} below); perfect-prediction shifts the limit to the Stackelberg point and lifts the informed seller's surplus above $\Pi^{NE}$.}
\label{tab:forecast-rule-summary}
\end{table}

\paragraph{Limit prices: three near-Nash rules.}
The common feature of the running-mean, lag-1, and greedy-component rules is that none of them uses the competitor's \emph{next-period} exploration noise $z_{n+1,1}$. The price limit in all three cases is thus the Nash equilibrium $\mathbf p^{NE}$, consistent with Theorem~\ref{thm:mixedmarketconvergence}, and the empirical clusters in Figure~\ref{fig:forecast-rule-scatter} confirm this: the seed-by-seed final-period running means concentrate tightly around $\mathbf p^{NE}$ in all three panels.

\paragraph{Perfect-prediction limit: Stackelberg.}
The clairvoyant perfect-prediction rule, by contrast, reads the competitor's exploration noise realization $z_{n+1,1}$ and uses it in seller~2's best response. A standard martingale argument shows that the empirical co-movement statistic
\[
r_n \triangleq \frac{\sum_{m=1}^n (p_{m,1}-\bar p_{n,1})(p_{m,2}-\bar p_{n,2})}{\sum_{m=1}^n (p_{m,1}-\bar p_{n,1})^2}
\]
converges to $r^\ast = \gamma_{2,1}/(-2\beta_2)$ a.s., exactly the slope of seller~2's best-response map. Setting up the linearized recursion for $(\bar p_{n,1}, \bar p_{n,2})$ around the fixed point and solving the limiting affine system yields the Stackelberg price profile with seller~1 (oblivious) as leader and seller~2 (informed) as follower:
\begin{equation}\label{eq:stackelberg-prices-app}
p^*_1 \;=\; \frac{-2\beta_2\alpha_1+\gamma_{1,2}\alpha_2}{4\beta_1\beta_2-2\gamma_{1,2}\gamma_{2,1}},
\qquad
p^*_2 \;=\; \phi_2^{in}(p^*_1) \;=\; \frac{\alpha_2+\gamma_{2,1}p^*_1}{-2\beta_2}.
\end{equation}
A direct check of the $2\times 2$ Hurwitz conditions on the linearization at $(p^*_1, p^*_2)$ gives $\mathrm{tr} = -2$ and $\det = 1$, so the fixed point is locally asymptotically stable and the running means converge to $(p^*_1, p^*_2)$ a.s. In a symmetric duopoly ($\alpha_i \equiv \alpha$, $\beta_i \equiv \beta$, $\gamma_{i,j} \equiv \gamma$), one obtains the ordering $p^{NE} < p^*_2 < p^*_1 < p^C$ and the deterministic limit-revenue ordering $\Pi^{NE} < \Pi^*_1 < \Pi^*_2 < \Pi^C$: joint profits at the limit prices exceed Nash, the informed follower's limit revenue is strictly above the oblivious leader's, and even the leader's limit revenue is nominally above Nash. The follower's gain is large enough to register cleanly in the realized table entry $S_2 = +0.28$, the empirical signature of Stackelberg lock-in. The leader, by contrast, posts $S_1 = -0.54$ in the same table even though $\Pi^*_1 > \Pi^{NE}$, because the deterministic Stackelberg gain $\Pi^*_1 - \Pi^{NE}$ is tiny in this symmetric duopoly cell ($2.446 - 2.441 = 0.005$) while the oblivious leader continues to pay a persistent exploration tax of order $|\beta_1|\,\nu_1^2 = 0.10$ on its realized revenue (Proposition~\ref{prop:costlinearexploration}); the tax dwarfs the gain by an order of magnitude, and the leader's realized revenue therefore sits below $\Pi^{NE}$ rather than slightly above $\Pi^*_1$. The perfect-prediction rule is, however, not implementable in practice---it requires real-time observation of the competitor's i.i.d.\ exploration realization---so we record this limit only as a conceptual upper bracket on what an informed seller could earn against an oblivious competitor.

\paragraph{Lag-1 Jensen tax.}
Among the three near-Nash rules, the lag-1 rule pays a small per-period revenue penalty visible in Table~\ref{tab:forecast-rule-summary} ($\bar R_2 = 2.437$ vs. $2.441$ for the running mean rule; $S_2 = -0.025$ vs.\ $-0.004$). Heuristically, the informed best response $\phi_2^{in}(p_{n,1}) = (\alpha_2 + \gamma_{2,1}\, p_{n,1})/(-2\beta_2)$ is linear in $p_{n,1}$, so plugging the noisy realized price $p_{n,1} = \tilde p_{n,1} + z_{n,1}$ inherits the exploration variance $\nu_{n,1}^2 = \Var(z_{n,1})$ in the next-period price $p_{n+1,2}$. Revenue $R_2 = p_{n+1,2}\,(\alpha_2 + \beta_2 p_{n+1,2} + \gamma_{2,1} p_{n+1,1})$ is concave-parabolic in $p_{n+1,2}$, so by Jensen's inequality the variance translates into a persistent revenue penalty. At Nash, expanding the three moments $\EE[p_{n+1,2}], \EE[p_{n+1,2}^2], \EE[p_{n+1,2}\, p_{n+1,1}]$ around $(p_1^{NE}, p_2^{NE})$ and using the Nash identity $\alpha_2 + 2\beta_2 p_2^{NE} + \gamma_{2,1} p_1^{NE} = 0$ gives the asymptotic decomposition
\begin{equation}\label{eq:lag1-jensen-app}
\EE\!\left[R_{2,n+1}\right] \;=\; \Pi_2^{NE} \;-\; \frac{\gamma_{2,1}^2}{4\,|\beta_2|}\, \nu_{n,1}^2 \;-\; |\beta_2|\, \nu_{n+1,2}^2 \;+\; o(1).
\end{equation}
The second penalty term $|\beta_2|\, \nu_{n+1,2}^2$ is the informed seller's own-exploration tax and vanishes whenever $\nu_{n,2}^2 \downarrow 0$, which holds under the decaying schedule. The first penalty term, however, is paid on the \emph{competitor}'s persistent exploration variance and does not vanish in the spiral-up regime where $\nu_{n,1}^2 \equiv \nu^2 > 0$. In contrast, the running-mean rule de-noises $z_{n,1}$ before computing seller~2's best response, so it does not inherit this variance term and the per-period informed revenue collapses to $\Pi_2^{NE}$ in expectation. This is the structural Jensen cost of forecasting noisy realized prices rather than a de-noised average, and it is the source of the small $S_2$ gap between the lag-1 rule and the running-mean rule in Table~\ref{tab:forecast-rule-summary}.

\paragraph{Takeaway.}
The four rules bracket the canonical running mean from both sides. On the implementable side, the lag-1 rule sits below: it preserves the Nash limit price but inherits the competitor's persistent exploration variance as a Jensen tax. On the clairvoyant side, the greedy-component rule matches the running-mean's Nash limit (and its zero Jensen tax) but requires reading the competitor's deterministic best-response component before noise. The qualitative ordering $S^{in} > S^{ob}$ from Section~\ref{sec:meta-strategy} survives every rule---including lag-1, where the gap is only narrowed, not closed---further suggesting that the strategic conclusion is robust to forecast-rule choice.

\subsection{Multi-Seller Strategy Game}\label{sec:multi-seller-emp-appendix}

The two-player game of Table~\ref{tab:meta-game} extends cleanly to general $N$-seller markets at every composition. The two on-diagonal cells generalize directly from the pure-type results: \textsf{ob}--\textsf{ob} markets via Theorem~\ref{thm:globalconvergencetocompetitiveoutcome}, and \textsf{in}--\textsf{in} markets via Theorem~\ref{thm:allinformedmeanforecast}. The off-diagonal cells---mixed markets with arbitrary informed/oblivious compositions $|\cI^{ob}| \in \{1, \ldots, N - 1\}$---are characterized by Theorem~\ref{thm:mixedmarketconvergence}: under the running-mean forecast, prices converge to $\mathbf p^{NE}$ jointly, informed sellers learn the true demand model, and oblivious sellers continue to pay the persistent exploration tax.

\paragraph{Numerical experiments.}
An asymmetric multi-seller experiment substantiates this generalization numerically at $N \in \{3, 5\}$, sweeping the full composition $|\cI^{ob}| \in \{0, 1, \ldots, N\}$. Table~\ref{tab:M3} reports the group-mean surplus-capture ratios $\bar S^{ob}$ (over $\cI^{ob}$) and $\bar S^{in}$ (over $\cI^{in}$) for every cell. The two pure-type endpoints generalize the on-diagonal duopoly cells: every all-oblivious cell shows the group-mean oblivious surplus $\bar S^{ob}$ strongly negative, with $\bar S^{ob} = -1.38$ at $N = 3$ and $\bar S^{ob} = -1.10$ at $N = 5$, while every all-informed cell shows the group-mean informed surplus $\bar S^{in}$ tightly bracketing zero ($-0.010$ at $N = 3$, $-0.007$ at $N = 5$). The six off-diagonal cells generalize the duopoly \textsf{ob}--\textsf{in} cell: in every cell with $|\cI^{ob}| \in \{1, \ldots, N - 1\}$, the gap $\bar S^{in} - \bar S^{ob}$ is robustly positive (range $[+0.61, +1.25]$), the oblivious leg $\bar S^{ob}$ is strictly negative, and the informed leg $\bar S^{in}$ is operationally at zero.

\begin{table}[!htbp]
\centering
\small
\begin{tabular}{ccrrrrr}
\toprule
$N$ & $|\cI^{ob}|/|\cI^{in}|$ & $\bar S^{ob}$ & 5\%--95\% & $\bar S^{in}$ & 5\%--95\% & gap \\
\midrule
  3 & 0/3 &          &                        & $-0.010$ & $[-0.134, +0.124]$ &          \\
\midrule
  3 & 1/2 & $-1.262$ & $[-1.475, -1.077]$ & $-0.010$ & $[-0.168, +0.140]$ & $+1.252$ \\
  3 & 2/1 & $-0.911$ & $[-1.035, -0.812]$ & $-0.014$ & $[-0.249, +0.245]$ & $+0.897$ \\
\midrule
  3 & 3/0 & $-1.377$ & $[-1.493, -1.260]$ &          &                        &          \\
\midrule
  5 & 0/5 &          &                        & $-0.007$ & $[-0.079, +0.066]$ &          \\
\midrule
  5 & 1/4 & $-0.620$ & $[-0.724, -0.517]$ & $-0.010$ & $[-0.086, +0.069]$ & $+0.609$ \\
  5 & 2/3 & $-0.721$ & $[-0.816, -0.643]$ & $-0.015$ & $[-0.115, +0.087]$ & $+0.706$ \\
  5 & 3/2 & $-1.033$ & $[-1.125, -0.945]$ & $-0.016$ & $[-0.132, +0.090]$ & $+1.018$ \\
  5 & 4/1 & $-0.970$ & $[-1.047, -0.907]$ & $-0.006$ & $[-0.228, +0.157]$ & $+0.964$ \\
\midrule
  5 & 5/0 & $-1.098$ & $[-1.163, -1.011]$ &          &                        &          \\
\bottomrule
\end{tabular}
\caption{Surplus-capture ratios across the full strategy-game grid in asymmetric multi-seller markets at $N \in \{3, 5\}$, sweeping the composition $|\cI^{ob}| \in \{0, 1, \ldots, N\}$ under the running-mean forecast (Theorem~\ref{thm:mixedmarketconvergence}). $\nu^2 = 0.10$ oblivious dithering, $\nu_n^2 = 0.01\,(n+1)^{-0.25}$ informed dithering, horizon $T = 10^6$, $S = 80$ seeds per cell, with demand primitives drawn from the heterogeneous distribution introduced in Figure~\ref{fig:allinformed-numerics}. Changing $\nu_n^2$ to a faster or slower decay pattern does not alter the qualitative ordering of cells. Brackets are cross-seed $5\%$--$95\%$ ranges; ``gap'' is $\bar S^{in} - \bar S^{ob}$, reported only for mixed cells.}
\label{tab:M3}
\end{table}

\paragraph{Multi-seller strategy game.}
The qualitative four-cell ordering of Table~\ref{tab:meta-game} therefore extends to asymmetric multi-seller markets at every composition $|\cI^{ob}| \in \{0, \ldots, N\}$ and $N \in \{3, 5\}$: oblivious sellers strictly below Nash, informed sellers operationally at Nash, with a strict positive gap $\bar S^{in} > \bar S^{ob}$ in every mixed cell. A unilateral deviation from the all-informed equilibrium to oblivious modeling moves the deviator from the operationally-zero $\bar S^{in}$ cell to the strictly-negative $\bar S^{ob}$ cell of Table~\ref{tab:M3}, while all other sellers' performances are unaffected.

\section{Proofs}

\noindent \emph{Notational note.} A few symbols are reused locally within the proofs that follow: $S_{ij}(m,Q)$ and $V_i(m,Q)$ in the ODE-related proofs denote the centered second moments defined in Section~\ref{sec:ode-perspective} (not the surplus-capture ratio $S_i$ of Section~\ref{sec:performance-metric}); $S_{n,i}$ and $V_{n,i}$ in the global-convergence proof denote, respectively, the adaptive design matrix and Lyapunov function introduced there. The intended meaning is always clear from local context.

\subsection{Proof of Theorem~\ref{thm:impactdemandnoise}}\label{proof:thm:impactdemandnoise}
Let $(\mathcal{F}_n)_{n \ge 2}$ be the filtration generated by $\left\{z_{m+1,i}, \varepsilon_{m,i}: 1 \le m \le n, i \in [N]\right\}$, meaning that $\mathcal{F}_n$ captures the information right after the $(n+1)$\textsuperscript{th} prices are set and before the $(n+1)$\textsuperscript{th} demand is observed. Concretely, this means that $p_{m,i} \in \mathcal{F}_{m-1}$ and $\varepsilon_{m,i} \in \mathcal{F}_{m}$ for all $m \ge 2$. We drop the subscript $i$ for notational simplicity. 

We write
$$
J_n = \sum_{m=1}^{n} \left(p_{m} - \lbar{p}_{n}\right)^2 = \sum_{m=2}^{n} \left(1 - \frac{1}{m}\right) (p_{m} - \lbar{p}_{m-1})^2.
$$
Similarly, denote the numerator of $w_{n}$ by
$$
C_n = \sum_{m=1}^{n} \left(p_{m} - \lbar{p}_{n}\right) \varepsilon_{m} = \sum_{m=2}^{n} \left(1 - \frac{1}{m}\right) (p_m - \lbar{p}_{m-1}) (\varepsilon_m - \bar{\varepsilon}_{m-1}),
$$
where $\bar{\varepsilon}_{m-1} = \frac{1}{m-1} \sum_{k=1}^{m-1} \varepsilon_{k}$. Denote
$$
a_m = \left(1 - \frac{1}{m}\right) (p_{m} - \lbar{p}_{m-1})
$$
and note that $a_m$ is $\mathcal{F}_{m-1}$-measurable. Write
$$
J_n = \sum_{m=2}^{n} \frac{m}{m-1} a_m^2, \quad C_n = D_n - R_n,
$$
$$
D_n = \sum_{m=2}^{n} a_m \varepsilon_{m}, \quad R_n = \sum_{m=2}^{n} a_m \bar{\varepsilon}_{m-1}.
$$
Since $\frac{m}{m-1} a_m^2 \in [1,2]$, for all $n \ge 2$,
$$
\sum_{m=2}^{n} a_m^2 \le J_n \le 2 \sum_{m=2}^{n} a_m^2.
$$
This means that $J_n$ and $\sum_{m=2}^{n} a_m^2$ are of the same order. 

We first analyze $D_n$. Since prices and demand noises are bounded, $(D_n, \mathcal{F}_n)_{n \ge 2}$ is a square-integrable martingale with uniformly bounded increments. Also,
$$
V_n \triangleq \sum_{m=2}^{n} \mathbb{E}\left[(D_m - D_{m-1})^2 \middle| \mathcal{F}_{m-1}\right] = \sigma^2 \sum_{m=2}^{n} a_m^2.
$$
By the strong law of large numbers for martingales (e.g., \citet{Williams_1991}, pp.122-124):
\begin{enumerate}
\item If $V_n \rightarrow \infty$, then $D_n / V_n \rightarrow 0$ a.s.
\item If $V_n \rightarrow V_\infty$ for some $V_\infty < \infty$, then $D_n \rightarrow D_\infty$ a.s. for some finite $D_\infty$.
\end{enumerate}
Since $J_n$ is squeezed by $\sum_{m=2}^{n} a_m^2$ up to constant factors, the same conditions can be expressed in terms of $J_n$. 

Next, we analyze $R_n$. Again because $J_n$ is squeezed by $\sum_{m=2}^{n} a_m^2$, it suffices to analyze
$$
\frac{\sum_{m=2}^{n} a_m \bar{\varepsilon}_{m-1}}{\sum_{m=2}^{n} a_m^2}.
$$
Note that $\bar{\varepsilon}_{m}$ is a normalized sum of bounded martingale differences. By the law of the iterated logarithm for martingales (e.g., \citet{freedman1975ontail}), we have
$$
\bar{\varepsilon}_{m} = O\left(\sqrt{\frac{\log \log m}{m}}\right) \quad \text{a.s.}
$$
By Cauchy-Schwarz,
$$
\left|\frac{\sum_{m=2}^{n} a_m \bar{\varepsilon}_{m-1}}{\sum_{m=2}^{n} a_m^2}\right| \le \sqrt{\frac{\sum_{m=2}^{n} \bar{\varepsilon}_{m-1}^2}{\sum_{m=2}^{n} a_m^2}} \preceq \sqrt{\frac{\sum_{m=3}^{n} \frac{\log \log (m-1)}{m-1}}{\sum_{m=2}^{n} a_m^2}} \preceq \sqrt{\frac{\log n \log \log n}{\sum_{m=2}^{n} a_m^2}} \quad \text{a.s.}
$$
If $\frac{J_n}{\log n \log \log n} \rightarrow \infty$, then $\frac{\sum_{m=2}^{n} a_m \bar{\varepsilon}_{m-1}}{\sum_{m=2}^{n} a_m^2} \rightarrow 0$ a.s. Again by Cauchy-Schwarz,
\begin{align*}
\left|\frac{\sum_{m=2}^{n} a_m \bar{\varepsilon}_{m-1}}{\sum_{m=2}^{n} a_m^2}\right| &\le \frac{\sqrt{\sum_{m=2}^{n} a_m^2 (\log m)^{1+\delta}} \cdot \sqrt{\sum_{m=2}^{n} \frac{\bar{\varepsilon}_{m-1}^2}{(\log m)^{1+\delta}}}}{\sum_{m=2}^{n} a_m^2} \\
&\preceq \frac{\sqrt{\sum_{m=2}^{n} a_m^2 (\log m)^{1+\delta}} \cdot \sqrt{\sum_{m=3}^{n} \frac{\log \log (m-1)}{(m-1)(\log m)^{1+\delta}}}}{\sum_{m=2}^{n} a_m^2} \quad \text{a.s.}
\end{align*}
Since $\sum_{m=3}^{n} \frac{\log \log (m-1)}{(m-1)(\log m)^{1+\delta}} < \infty$, if $\sum_{m=2}^{n} a_m^2 (\log m)^{1+\delta} < \infty$, then $\frac{\sum_{m=2}^{n} a_m \bar{\varepsilon}_{m-1}}{\sum_{m=2}^{n} a_m^2}$ converges a.s. to some finite limit. Putting the analyses of $D_n$ and $R_n$ together, we have:
\begin{itemize}
\item If $\frac{J_n}{\log n \log \log n} \rightarrow \infty$, then 
$$
w_n = \frac{C_n}{J_n} = \frac{D_n}{J_n} - \frac{R_n}{J_n} \rightarrow 0 \quad \text{a.s.}
$$
\item If $\sum_{m=2}^{n} \left(p_{m} - \lbar{p}_{m-1}\right)^2 (\log m)^{1+\delta} < \infty$ for some $\delta > 0$, then
$$
w_n = \frac{C_n}{J_n} = \frac{D_n}{J_n} - \frac{R_n}{J_n} \rightarrow w_{\infty} \quad \text{a.s.}
$$
for some finite $w_{\infty}$.
\end{itemize}
The proof is complete.

\subsection{Proof of Lemma~\ref{lem:variancedominance}}\label{proof:lem:variancedominance}
By the Cauchy--Schwarz inequality,
\[
|r_{n,\, j \leftarrow i}|
= \frac{\left|\sum_{m=1}^{n} (p_{m,j} - \bar{p}_{n,j})(p_{m,i} - \bar{p}_{n,i})\right|}{J_{n,j}}
\le \frac{\sqrt{J_{n,j} J_{n,i}}}{J_{n,j}}
= \sqrt{\frac{J_{n,i}}{J_{n,j}}}.
\]
Since $\tfrac{J_{n,i}}{J_{n,j}} \to 0$ almost surely, the result follows.

\subsection{Proof of Proposition~\ref{prop:variancedominancetwosellercase}}\label{proof:prop:variancedominancetwosellercase}
Write $\phi(q)\triangleq(\alpha+\gamma q)/(-2\beta)$ for the common oracle best response.

\paragraph{Step 1: The dominant seller tracks the best response.}
By Lemma~\ref{lem:variancedominance}, $J_{n,i}/J_{n,j}\to 0$ gives $r_{n,j\leftarrow i}\to 0$ a.s., while the standing excitation assumption $J_{n,j}/(\log n\log\log n)\to\infty$ (Section~\ref{sec:divergent-exploration}) gives $w_{n,j}\to 0$ a.s.\ by Theorem~\ref{thm:impactdemandnoise}(a). With both error channels in the oblivious estimates~\eqref{eq:obliviousgreedynextprice} vanishing, seller $j$'s misspecified regression becomes asymptotically correctly specified: its slope estimate converges to $\beta$, and its intercept estimate satisfies $\hat a_{n,j} - (\alpha + \gamma\,\bar p_{n,i}) \to 0$. With the limiting estimate $(\alpha+\gamma q_\infty,\,\beta)$ interior to $\Theta_j^{ob}$ by hypothesis, projection is eventually inactive and the greedy price equals the unprojected best response, $\tilde p_{n,j} = \phi(\bar p_{n,i}) + o(1)$ a.s. Diminishing exploration (a consequence of $z_{n,i}\to 0$ a.s.) gives $\bar p_{n,i} - \overline{\tilde p}_{n,i}\to 0$ a.s., and $\tilde p_{n,i}\to q_\infty$ forces $\overline{\tilde p}_{n,i}\to q_\infty$; hence $\tilde p_{n,j}\to\phi(q_\infty)$ a.s.

\paragraph{Step 2: Best-response revenue geometry.}
For a dominated price $q$ with best response $\phi(q)$, the first-order condition $\alpha + 2\beta\,\phi(q) + \gamma q = 0$ gives $R_j(\phi(q),q) = \phi(q)\bigl(\alpha + \beta\,\phi(q) + \gamma q\bigr) = -\beta\,\phi(q)^2$. Substituting $\phi(q) = (\alpha+\gamma q)/(-2\beta)$ and simplifying factors the revenue gap as
\begin{equation}\label{eq:delta-q-formula}
\Delta(q) \;\triangleq\; R_j(\phi(q),q) - R_i(q,\phi(q)) \;=\; \frac{\bigl[(-2\beta-\gamma)q - \alpha\bigr]\,\bigl[(-2\beta+\gamma)q - \alpha\bigr]}{-4\beta}.
\end{equation}
Its roots are $p^{NE} = \alpha/(-2\beta-\gamma)$ and $\alpha/(-2\beta+\gamma) < p^{NE}$, and the leading coefficient $(-2\beta-\gamma)(-2\beta+\gamma)/(-4\beta) > 0$ (using $-\beta > 0$ and $-\beta > \gamma$), so $\Delta(q) \ge 0$ for every $q \ge p^{NE}$, with equality only at $q = p^{NE}$.

\paragraph{Step 3: From greedy to realized revenue.}
Write $p_{n,k} = \tilde p_{n,k} + z_{n,k}$ with $z_{n,k}$ the mean-zero exploration shock. Expanding $R_{n,k} = p_{n,k}(\alpha + \beta p_{n,k} + \gamma p_{n,-k})$ and taking $\mathbb E[\cdot\mid\mathcal F_{n-1}]$ gives $\mathbb E[R_{n,k}\mid\mathcal F_{n-1}] = R_k(\tilde p_n) + \beta\,\nu_{n,k}^2$. As in the proof of Lemma~\ref{lem:asymp-revenue}, the martingale residual has uniformly bounded conditional second moments, so its Ces\`aro average vanishes a.s.; moreover, $z_{n,k}\to 0$ a.s.\ with bounded support gives $\nu_{n,k}^2 = \mathbb E[z_{n,k}^2]\to 0$ by bounded convergence, hence the average exploration intensity $\tfrac{1}{T}\sum_{n\le T}\nu_{n,k}^2 \to 0$ by Ces\`aro and the tax term $\beta\,\tfrac{1}{T}\sum_{n\le T}\nu_{n,k}^2$ vanishes. Therefore
\begin{equation}\label{eq:cesaro-no-tax}
\bar R_{T,k} \;-\; \frac{1}{T}\sum_{m\le T} R_k(\tilde p_m) \;\xrightarrow[T\to\infty]{a.s.}\; 0, \qquad k\in\{i,j\}.
\end{equation}
By Step~1, $\tilde p_{n,j}\to\phi(q_\infty)$ and $\tilde p_{n,i}\to q_\infty$, so Ces\`aro-averaging gives $\bar R_{T,j}\to R_j(\phi(q_\infty),q_\infty)$ and $\bar R_{T,i}\to R_i(q_\infty,\phi(q_\infty))$ a.s.

\paragraph{Step 4: Case~(a), $q_\infty \ge p^{NE}$.}
By~\eqref{eq:delta-q-formula} and Step~2, $\bar R_{T,j} - \bar R_{T,i} \to \Delta(q_\infty) \ge 0$ a.s., with strict inequality whenever $q_\infty > p^{NE}$. Dividing by $\Pi^{C} - \Pi^{NE} > 0$ yields $S_j \ge S_i$ a.s., strict if $q_\infty > p^{NE}$.

\paragraph{Step 5: Case~(b), $q_\infty < p^{NE}$.}
Since $q_\infty < p^{NE}$ and the best-response slope $-\gamma/(2\beta) > 0$, $\phi(q_\infty) < \phi(p^{NE}) = p^{NE}$. The dominant seller's limit revenue is $\bar R_{T,j} \to -\beta\,\phi(q_\infty)^2 < -\beta\,(p^{NE})^2 = \Pi^{NE}$, hence $S_j < 0$ a.s. For the dominated seller, the leader-revenue map $q\mapsto R_i(q,\phi(q))$ is concave with vertex at the Stackelberg price (which exceeds $p^{NE}$) and value $\Pi^{NE}$ at $q = p^{NE}$, hence strictly increasing on $[l, p^{NE}]$. Therefore $\bar R_{T,i} \to R_i(q_\infty, \phi(q_\infty)) < R_i(p^{NE}, p^{NE}) = \Pi^{NE}$, so $S_i < 0$ a.s.\hfill$\square$

\subsection{Proof of Theorem~\ref{thm:globalconvergencetocompetitiveoutcome}}\label{proof:thm:globalconvergencetocompetitiveoutcome}
For notational simplicity, we denote $x_{n,i}^{ob}$ by $x_{n,i}$, $\tilde \theta_{n,i}^{ob}$ by $\tilde \theta_{n,i}$, $\theta_{n,i}^{ob}$ by $\hat \theta_{n,i}$, and $\theta_i^{*, \,ob}$ by $\theta_i^*$. Recall that each oblivious seller $i$ models the demand function as
$$
d_{n,i}=a_i + b_i p_{n,i} + \varepsilon_{n,i} = x_{n,i}^\top \theta_i^* + \eta_{n,i},
$$
where
$$
\eta_{n,i} = \varepsilon_{n,i} + \sum_{j\ne i}\gamma_{i,j}\left(p_{n,j}-p_j^{NE}\right).
$$
For each seller $i$, define the adaptive covariance matrix
$$
S_{n,i}=\sum_{m=1}^n x_{m,i}x_{m,i}^\top, \quad M_{n,i}=S_{n,i}/n,
$$
and note that $S_{n,i}$ is invertible for all $n \ge 2$ since the first two prices are not the same. Then, the unprojected least squares estimator can be written as
$$
\tilde \theta_{n,i} = S_{n,i}^{-1} \sum_{m=1}^n x_{m,i} d_{m,i} = \theta_i^* + S_{n,i}^{-1} \sum_{m=1}^n x_{m,i} \eta_{m,i}.
$$
Define the estimation error
$$
\tilde{e}_{n,i} = \tilde \theta_{n,i} - \theta_i^*, \quad e_{n,i} = \hat \theta_{n,i} - \theta_i^*.
$$
Then, we have
$$
\tilde{\theta}_{n+1, i} - \tilde{\theta}_{n,i} = S_{n+1,i}^{-1} x_{n+1,i} (d_{n+1,i} - x_{n+1,i}^\top \tilde{\theta}_{n,i})
$$
and
\begin{align*}
\tilde{e}_{n+1,i} - \tilde{e}_{n,i} &= S_{n+1,i}^{-1} x_{n+1,i} (x_{n+1,i}^\top \theta_i^* + \eta_{n+1,i} - x_{n+1,i}^\top \tilde{\theta}_{n,i}) \\
&= - S_{n+1,i}^{-1} x_{n+1,i} x_{n+1,i}^\top \tilde{e}_{n,i} + S_{n+1,i}^{-1} x_{n+1,i} \eta_{n+1,i}.
\end{align*}
Define Lyapunov functions
$$
V_{n,i} = \tilde{e}_{n,i}^\top S_{n,i} \tilde{e}_{n,i}, \quad W_{n,i} = \tilde{e}_{n,i}^\top M_{n,i} \tilde{e}_{n,i} = \frac{1}{n} V_{n,i},
$$
and
$$
V_n = \sum_{i=1}^{N} V_{n,i}, \quad W_n = \sum_{i=1}^{N} W_{n,i} = \frac{1}{n} V_n. 
$$
Then,
\begin{align*}
V_{n+1, i} &= (\tilde e_{n,i} - S_{n+1,i}^{-1} x_{n+1,i} x_{n+1,i}^\top \tilde e_{n,i} + S_{n+1,i}^{-1} x_{n+1,i} \eta_{n+1,i})^\top S_{n+1,i} \\
&\quad \quad (\tilde e_{n,i} - S_{n+1,i}^{-1} x_{n+1,i} x_{n+1,i}^\top \tilde e_{n,i} + S_{n+1,i}^{-1} x_{n+1,i} \eta_{n+1,i}) \\
&= V_{n,i} - (x_{n+1, i}^\top \tilde e_{n,i})^2 (1 - x_{n+1, i}^\top S_{n+1,i}^{-1} x_{n+1, i}) \\
&\quad \quad + 2 \eta_{n+1,i} x_{n+1,i}^\top \tilde e_{n,i} (1 - x_{n+1, i}^\top S_{n+1,i}^{-1} x_{n+1, i}) \\
&\quad \quad + x_{n+1,i}^\top S_{n+1,i}^{-1} x_{n+1,i} \eta_{n+1,i}^2.
\end{align*}
The Sherman-Morrison formula gives
$$
S_{n+1,i}^{-1} = S_{n,i}^{-1} - \frac{S_{n,i}^{-1} x_{n+1,i} x_{n+1,i}^\top S_{n,i}^{-1}}{1 + x_{n+1,i}^\top S_{n,i}^{-1} x_{n+1,i}}.
$$
Then,
$$
S_{n+1,i}^{-1} x_{n+1,i} = S_{n,i}^{-1} x_{n+1,i} - \frac{S_{n,i}^{-1} x_{n+1,i} (x_{n+1,i}^\top S_{n,i}^{-1} x_{n+1,i})}{1 + x_{n+1,i}^\top S_{n,i}^{-1} x_{n+1,i}} = \frac{S_{n,i}^{-1} x_{n+1,i}}{1 + x_{n+1,i}^\top S_{n,i}^{-1} x_{n+1,i}},
$$
and
$$
1 - x_{n+1, i}^\top S_{n+1,i}^{-1} x_{n+1, i} = \frac{1}{1 + x_{n+1,i}^\top S_{n,i}^{-1} x_{n+1,i}}.
$$
Denoting $h_{n+1,i} = x_{n+1,i}^\top S_{n,i}^{-1} x_{n+1,i}$, we have
$$
V_{n+1, i} = V_{n,i} - \frac{(x_{n+1, i}^\top \tilde e_{n,i})^2}{1 + h_{n+1,i}} + \frac{2 \eta_{n+1,i} x_{n+1,i}^\top \tilde e_{n,i}}{1 + h_{n+1,i}} + \frac{h_{n+1,i}}{1 + h_{n+1,i}} \eta_{n+1,i}^2,
$$
$$
W_{n+1, i} = \frac{n}{n+1} W_{n,i} - \frac{(x_{n+1, i}^\top \tilde e_{n,i})^2}{(n+1)(1 + h_{n+1,i})} + \frac{2 \eta_{n+1,i} x_{n+1,i}^\top \tilde e_{n,i}}{(n+1)(1 + h_{n+1,i})} + \frac{h_{n+1,i}}{(n+1)(1 + h_{n+1,i})} \eta_{n+1,i}^2.
$$
Since
$$
p_{n+1,j}=\phi^{ob}(\hat\theta_{n,j})+z_{n+1,j}=\phi^{ob}(\theta_j^*+e_{n,j})+z_{n+1,j}
$$
and $\mathbb E[z_{n+1,j}| \mathcal F_n]=0$, we have
$$
\left|\mathbb E[p_{n+1,j}-p_j^{NE}\mid\mathcal F_n]\right| = \left|\phi^{ob}(\theta_j^*+e_{n,j})-\phi^{ob}(\theta_j^*)\right| \le L_\phi^{ob} \norm{e_{n,j}}_2 \le L_\phi^{ob} \norm{\tilde e_{n,j}}_2,
$$
where the last inequality holds because projection onto a convex set is non-expansive. Given $\mathcal F_n$, $x_{n+1, i}$ depends only on $z_{n+1,i}$, and $p_{n+1,j}-p_j^{NE}$ depends only on $z_{n+1,j}$. Since $z_{n+1,i}$ are independent across $i$, we have, for $j \neq i$,
$$
(p_{n+1,j}-p_j^{NE})\, \perp\, x_{n+1,i}\quad\text{conditional on }\mathcal F_n.
$$
Recall that
$$
\eta_{n+1,i} =\varepsilon_{n+1,i}+\sum_{j\ne i}\gamma_{i,j}(p_{n+1,j}-p_j^{NE}).
$$
Since $\varepsilon_{n+1,i}$ is independent of everything else with zero mean, we have
$$
\mathbb E[\eta_{n+1,i}x_{n+1,i}^\top \tilde e_{n,i}\mid\mathcal F_n] =\sum_{j\ne i}\gamma_{i,j}\mathbb E[(p_{n+1,j}-p_j^{NE})x_{n+1,i}^\top \tilde e_{n,i}\mid\mathcal F_n].
$$
By the conditional independence noted above,
$$
\mathbb E[(p_{n+1,j}-p_j^{NE})x_{n+1,i}^\top \tilde e_{n,i}\mid\mathcal F_n]
=\mathbb E[p_{n+1,j}-p_j^{NE}\mid\mathcal F_n]\cdot \mathbb E[x_{n+1,i}^\top \tilde e_{n,i}\mid\mathcal F_n].
$$
Thus,
\begin{align*}
\left|\mathbb E[\eta_{n+1,i}x_{n+1,i}^\top \tilde e_{n,i}\mid\mathcal F_n]\right| &\le \sum_{j\ne i}\gamma_{i,j} \left|\mathbb E[p_{n+1,j}-p_j^{NE}\mid\mathcal F_n]\right| \cdot \left|\mathbb E[x_{n+1,i}^\top \tilde e_{n,i}\mid\mathcal F_n]\right| \\
&\le \sum_{j\ne i}\gamma_{i,j} L_\phi^{ob} \norm{\tilde e_{n,j}}_2 \cdot C_x \norm{\tilde e_{n,i}}_2,
\end{align*}
where the first inequality is due to Cauchy-Schwarz. Using Young's inequality, 
$$
\left|\mathbb E[\eta_{n+1,i}x_{n+1,i}^\top \tilde e_{n,i}\mid\mathcal F_n]\right| \le \frac{1}{2} L_\phi^{ob} C_x \sum_{j\ne i}\gamma_{i,j} \left(\norm{\tilde e_{n,i}}_2^2 + \norm{\tilde e_{n,j}}_2^2\right).
$$
Since 
$$
\mathbb{E}\left[ x_{n+1, i} x_{n+1, i}^\top \middle| \mathcal{F}_n\right] \succeq C_M I_{2} \quad \text{for all } n \text{ and } i
$$
for some constant $C_M > 0$, we have
$$
\mathbb{E}\left[ (x_{n+1, i}^\top \tilde e_{n,i})^2 \middle| \mathcal{F}_n\right] = \tilde e_{n,i}^\top \mathbb{E}\left[ x_{n+1, i} x_{n+1, i}^\top \middle| \mathcal{F}_n\right] \tilde e_{n,i} \ge C_{M} \norm{\tilde e_{n,i}}_2^2.
$$
By Lemma~\ref{lem:spectralboundadaptivecovariancematrix}, we have that, for any $0 < \delta < C_M$, eventually a.s.,
$$
h_{n+1,i} \le C_x^2 \lambda_{\max}(S_{n,i}^{-1})= \frac{C_x^2}{C_{M} - \delta} \frac{1}{n} \le \delta_h
$$
for arbitrarily small $\delta_h > 0$. Since $\frac{1}{1+h_{n+1,i}}\ge \frac{1}{1+\delta_h}$ eventually, $\frac{h_{n+1,i}}{1+h_{n+1,i}}\le h_{n+1,i}$, and $C_\eta^2 \triangleq \sup_{n,i}\mathbb E[\eta_{n+1,i}^2\mid\mathcal F_n]<\infty$ (finite because demand noises and prices are bounded), taking conditional expectation of the $W_{n+1,i}$ recursion gives that, for all large $n$ almost surely,
\begin{align*}
\mathbb E[W_{n+1,i}\mid\mathcal F_n] &\le \frac{n}{n+1}W_{n,i} -\frac{C_M}{(n+1)(1+\delta_h)}\norm{\tilde e_{n,i}}_2^2 \\
&\quad \quad + \frac{L_\phi^{ob} C_x }{n+1} \sum_{j\ne i}\gamma_{i,j} \left(\norm{\tilde e_{n,i}}_2^2 + \norm{\tilde e_{n,j}}_2^2\right) +\frac{C_x^2 C_\eta^2}{(C_M-\delta)}\frac{1}{n(n+1)}.
\end{align*}
Summing over $i$ from $1$ to $N$, we have
\begin{align*}
\mathbb E[W_{n+1}\mid\mathcal F_n] &\le \frac{n}{n+1} W_n - \frac{C_M }{(n+1)(1+\delta_h)} \sum_{i=1}^N \norm{\tilde e_{n,i}}_2^2 \\
&\quad \quad + \frac{L_\phi^{ob} C_x }{n+1} \sum_{i=1}^N \sum_{j\ne i}\gamma_{i,j} \left(\norm{\tilde e_{n,i}}_2^2 + \norm{\tilde e_{n,j}}_2^2\right) + \frac{N C_x^2 C_\eta^2}{(C_M-\delta)}\frac{1}{n(n+1)}.
\end{align*}
Note that each $\norm{\tilde e_{n,i}}_2^2$ appears with weight $\gamma_i$ from the first term and weight $\gamma_i^{\mathrm{col}}$ from the second (by relabeling summation indices). Hence,
$$
\sum_{i=1}^N \sum_{j\ne i}\gamma_{i,j} \left(\norm{\tilde e_{n,i}}_2^2 + \norm{\tilde e_{n,j}}_2^2\right) = \sum_{i=1}^N (\gamma_i + \gamma_i^{\mathrm{col}}) \norm{\tilde e_{n,i}}_2^2 \le 2\bar\gamma \sum_{i=1}^N \norm{\tilde e_{n,i}}_2^2.
$$
So,
$$
\mathbb E[W_{n+1}\mid\mathcal F_n] \le \frac{n}{n+1} W_n + \frac{\kappa}{n+1} \sum_{i=1}^N \norm{\tilde e_{n,i}}_2^2 + \frac{C_\delta}{n(n+1)},
$$
where 
$$
\kappa = 2 L_\phi^{ob} C_x \bar\gamma - \frac{C_M}{1+\delta_h}, \quad \text{and} \quad C_\delta = \frac{N C_x^2 C_\eta^2}{C_M-\delta}.
$$
By Lemma~\ref{lem:spectralboundadaptivecovariancematrix} again, for all $0 < \delta < C_M$, eventually a.s.,
$$
W_n \ge (C_{M} - \delta) \sum_{i=1}^N \norm{\tilde e_{n,i}}_2^2 \; \Longrightarrow \; \sum_{i=1}^N \norm{\tilde e_{n,i}}_2^2 \le \frac{1}{C_{M} - \delta} W_n.
$$
Similarly, a trivial spectral upper bound gives
$$
W_n \le C_x^2 \sum_{i=1}^N \norm{\tilde e_{n,i}}_2^2 \; \Longrightarrow \; \sum_{i=1}^N \norm{\tilde e_{n,i}}_2^2 \ge \frac{1}{C_x^2} W_n.
$$

From the condition of the theorem, we have 
$$
\bar\gamma L_\phi^{ob} C_x < C_{M}.
$$
We consider three cases. Suppose 
$$
2 \bar\gamma L_\phi^{ob} C_x < C_{M}.
$$
Then, since $\delta$ and $\delta_h$ can be arbitrarily small, $\kappa < 0$ eventually, and eventually a.s.,
\begin{align*}
\mathbb{E}\left[W_{n+1} \middle| \mathcal{F}_n\right] &\le \frac{n}{n+1} W_{n} + \frac{\kappa}{C_x^2(n+1)} W_n + \frac{C_\delta}{n(n+1)} \\
&= W_n - \left(1 - \frac{\kappa}{C_x^2}\right)\frac{1}{n+1} W_n + \frac{C_\delta}{n(n+1)}.
\end{align*}
Because $\kappa < 0$, the Robbins--Siegmund conditions \citep{robbinsConvergenceTheoremNonnegative1971} are satisfied, and we have
$$
W_n \text{ converges a.s.} \quad \text{and} \quad \sum_{n=1}^{\infty} \frac{1 - \kappa / C_x^2}{n+1} W_n < \infty \quad \text{a.s.}
$$
This implies that $W_n \rightarrow 0$ a.s. because $\sum_{n=1}^{\infty} \frac{1}{n+1} = \infty$. If 
$$
2 \bar\gamma L_\phi^{ob} C_x = C_{M},
$$
then $\kappa$ goes to 0 as $\delta_h \rightarrow 0$, which is at the rate or $O(1/n)$. Then, the second term in the $W_{n+1}$ recursion can be absorbed into the last term, and we still have $W_n \rightarrow 0$ a.s. If
$$
\bar\gamma L_\phi^{ob} C_x < C_{M} < 2 \bar\gamma L_\phi^{ob} C_x,
$$
then $\kappa > 0$ infinitely often, and for all $n$ large enough,
$$
\Delta_\delta \triangleq 1 - \frac{\kappa}{C_{M} - \delta} \in (0,1).
$$
We have
\begin{align*}
\mathbb{E}\left[W_{n+1} \middle| \mathcal{F}_n\right] &\le \frac{n}{n+1} W_{n} + \frac{\kappa}{(n+1)(C_{M} - \delta)} W_n + \frac{C_\delta}{n(n+1)} \\
&= \left(1 - \frac{1}{n+1}\left(1 - \frac{\kappa}{C_{M} - \delta}\right)\right) W_n + \frac{C_\delta}{n(n+1)} \\
&= W_n - \frac{\Delta_\delta}{n+1} W_n + \frac{C_\delta}{n(n+1)}.
\end{align*}
Then, the Robbins--Siegmund conditions are again satisfied and we still have $W_n \rightarrow 0$ a.s. By the spectral lower bound $\sum_{i=1}^N \norm{\tilde e_{n,i}}_2^2 \le \frac{1}{C_{M} - \delta} W_n$, we have that $\tilde \theta_{n,i} \rightarrow \theta_i^*$ a.s. for all $i$. Since the projections do not take effect eventually, we also have $\hat \theta_{n,i} \rightarrow \theta_i^*$ a.s. for all $i$. By the continuity of $\phi^{ob}$, we have $\tilde p_{n,i} \rightarrow p_i^{NE}$ a.s. for all $i$. This completes the almost sure convergence part of the proof. 

To obtain the rate of convergence, take the unconditional expectation on both sides of the $W_{n+1}$ recursion. We have
$$
\begin{cases}
\mathbb{E}\left[W_{n+1}\right] \le \mathbb{E}\left[W_{n}\right] - \left(1 - \frac{\kappa}{C_x^2}\right)\frac{1}{n+1} \mathbb{E}\left[W_{n}\right] + \frac{C_\delta}{n(n+1)}, & \kappa < 0 \text{ eventually}, \\
\mathbb{E}\left[W_{n+1}\right] \le \mathbb{E}\left[W_{n}\right] -\frac{1}{n+1} \mathbb{E}\left[W_{n}\right] + \frac{C_\delta}{n(n+1)}, & \kappa \rightarrow 0, \\
\mathbb{E}\left[W_{n+1}\right] \le \mathbb{E}\left[W_{n}\right] - \frac{\Delta_\delta}{n+1} \mathbb{E}\left[W_{n}\right] + \frac{C_\delta}{n(n+1)}, & \kappa > 0 \text{ infinitely often}.
\end{cases}
$$
In the first case, applying Lemma~\ref{lem:nonhomogeneouslinearrecursion}(a) with $a = 1 - \frac{\kappa}{C_x^2} > 1$ and $b = C_\delta$, we have
$$
\mathbb{E}\left[W_{n}\right] = O\left(\frac{1}{n}\right).
$$
In the second case, applying Lemma~\ref{lem:nonhomogeneouslinearrecursion}(a) with $a = 1$ and $b = C_\delta$, we have
$$
\mathbb{E}\left[W_{n}\right] = O\left(\frac{\log n}{n}\right).
$$
In the third case, applying Lemma~\ref{lem:nonhomogeneouslinearrecursion}(a) with $a = \Delta_\delta < 1$ and $b = C_\delta$, since $\delta$ and $\delta_h$ can be arbitrarily small, we have
$$
\mathbb{E}\left[W_{n}\right] = O\left(n^{- 2\left(1 - \rho\right)}\right),
$$
where
$$
\rho \triangleq \frac{\bar\gamma L_\phi^{ob} C_x}{C_{M}} \in (0,1).
$$
So,
$$
\mathbb{E}\left[W_{n}\right] = \begin{cases}
O\left(\frac{1}{n}\right), & \text{if } 2 \bar\gamma L_\phi^{ob} C_x < C_{M}, \\
O\left(\frac{\log n}{n}\right), & \text{if } 2 \bar\gamma L_\phi^{ob} C_x = C_{M}, \\
O\left(n^{- 2\left(1 - \rho\right)}\right), & \text{if } \bar\gamma L_\phi^{ob} C_x < C_{M} < 2 \bar\gamma L_\phi^{ob} C_x.
\end{cases}
$$
Again, by the spectral lower bound $\sum_{i=1}^{N} \norm{\tilde e_{n,i}}_2^2 \le \frac{1}{C_{M} - \delta} W_n$, we can similarly write the mean squared error bounds as
$$
\sum_{i=1}^{N} \mathbb{E}\|\tilde{\theta}_{n,i} - \theta_i^*\|_2^2 = \begin{cases}
O\left(\frac{1}{n}\right), & \text{if } 2 \bar\gamma L_\phi^{ob} C_x < C_{M}, \\
O\left(\frac{\log n}{n}\right), & \text{if } 2 \bar\gamma L_\phi^{ob} C_x = C_{M}, \\
O\left(n^{- 2\left(1 - \rho\right)}\right), & \text{if } \bar\gamma L_\phi^{ob} C_x < C_{M} < 2 \bar\gamma L_\phi^{ob} C_x,
\end{cases}
$$
where $\rho\in(1/2, 1)$. The projected estimators have the same bounds because, as the unprojected estimators converge to $\theta_i^*$ which is in the interior of $\Theta_i^{ob}$, eventually the projections do not take effect. Since $\phi^{ob}$ is uniformly Lipschitz continuous on the compact set $\bigcup_i\Theta_i^{ob}$, we have the same bound for the greedy prices:
$$
\mathbb{E}\|\tilde{\mathbf{p}}_{n} - \mathbf{p}^{NE}\|_2^2 = \begin{cases}
O\left(\frac{1}{n}\right), & \text{if } 2 \bar\gamma L_\phi^{ob} C_x < C_{M}, \\
O\left(\frac{\log n}{n}\right), & \text{if } 2 \bar\gamma L_\phi^{ob} C_x = C_{M}, \\
O\left(n^{- 2\left(1 - \rho\right)}\right), & \text{if } \bar\gamma L_\phi^{ob} C_x < C_{M} < 2 \bar\gamma L_\phi^{ob} C_x,
\end{cases}
$$
where $\rho\in(1/2, 1)$. 

\subsection{Proof of Theorem~\ref{thm:ode-local-stability}}\label{proof:thm:ode-local-stability}
Let $(m^*,Q^*)$ be any equilibrium of \eqref{eq:ode-N-mQ}. From $\dot m=0$ we have
\[
m_i^* = p_i^g(m^*,Q^*),\qquad i\in[N].
\]
From $\dot Q=0$ we obtain, for $i\neq j$,
\[
Q_{ij}^* = p_i^g(m^*,Q^*)p_j^g(m^*,Q^*) = m_i^* m_j^*,
\]
and for $i=j$,
\[
Q_{ii}^* = (p_i^g(m^*,Q^*))^2+\nu^2 = (m_i^*)^2+\nu^2.
\]
Therefore, $S_{ij}(m^*,Q^*)=Q_{ij}^*-m_i^* m_j^*=0$ for all $i\neq j$, and $V_i(m^*,Q^*)=Q_{ii}^*-(m_i^*)^2=\nu^2$ for all $i$. Plugging $S_{ij}=0$ into \eqref{eq:ab-map-N-ode} yields
\[
b_i(m^*,Q^*)=\beta_i,
\qquad
a_i(m^*,Q^*)=\alpha_i+\sum_{j\neq i}\gamma_{i,j}m_j^*.
\]
Using \eqref{eq:pg-map-N-ode} and $m_i^*=p_i^g(m^*,Q^*)$ then gives, for each $i$,
\[
m_i^* = -\frac{\alpha_i+\sum_{j\neq i}\gamma_{i,j}m_j^*}{2\beta_i}.
\]
Equivalently,
\[
2\beta_i\,m_i^*+\sum_{j\neq i}\gamma_{i,j}m_j^*=-\alpha_i,\qquad i\in[N].
\]
This coincides with the first-order condition for the full-information Nash equilibrium. By the same development as in Section~\ref{sec:solution-concept}, $m^*=\mathbf p^{NE}$ is the unique solution. The result for $Q^*$ then follows.

For local asymptotic stability, we linearize \eqref{eq:ode-N-mQ} around $(m^*,Q^*)$. Define
\[
y_i(m,Q)\triangleq p_i^g(m,Q)-m_i,\qquad i\in[N],
\]
and the centered second-moment deviations
\[
v_i\triangleq (Q_{ii}-m_i^2)-\nu^2,\qquad i\in[N],
\qquad
s_{ij}\triangleq Q_{ij}-m_im_j,\qquad i\neq j.
\]
In a neighborhood of $(m^*,Q^*)$ this defines a smooth change of coordinates from $(m,Q)$ to $(m,v,s)$. By \eqref{eq:ode-N-mQ}, we have
\begin{equation}\label{eq:ode-vs-exact-rep}
\dot v_i = -v_i + y_i(m,Q)^2,\qquad
\dot s_{ij} = -s_{ij} + y_i(m,Q)\,y_j(m,Q)\quad (i\neq j),
\end{equation}
together with $\dot m_i=y_i(m,Q)$.

At equilibrium, $y(m^*,Q^*)=0$, $v^*= 0$, and $s^*= 0$. Let $J$ denote the Jacobian of the vector field of \eqref{eq:ode-N-mQ} expressed in $(m,v,s)$ coordinates, evaluated at $(m^*,0,0)$. Since the maps $y\mapsto y_i^2$ and $(y_i,y_j)\mapsto y_i y_j$ have derivative zero at $y=0$, linearizing \eqref{eq:ode-vs-exact-rep} yields
\[
\delta\dot v = -\,\delta v,\qquad \delta\dot s = -\,\delta s.
\]
Therefore, $J$ has the block upper-triangular form
\[
J=
\begin{pmatrix}
J_{mm} & *\\
0 & -I
\end{pmatrix},
\]
where the $-I$ block corresponds to $(v,s)$ and contributes eigenvalues $-1$ with multiplicity $N+\frac{N(N-1)}2$. Hence, the spectrum of $J$ is the union of $\mathrm{spec}(J_{mm})$ and $\{-1\}$, and it remains to show that $J_{mm}$ is Hurwitz.

The mean block is the Jacobian of $m\mapsto \dot m$ at equilibrium with $(v,s)$ held fixed. Since $s^*\equiv 0$, the greedy map, locally continuously differentiable, simplifies to the linear best response
\[
p_i^g(m,Q^*)= -\frac{\alpha_i+\sum_{j\neq i}\gamma_{i,j}m_j}{2\beta_i},
\]
so
\[
\frac{\partial p_i^g}{\partial m_j}(m^*,Q^*)= -\frac{\gamma_{i,j}}{2\beta_i}\ (i\neq j),
\qquad
\frac{\partial p_i^g}{\partial m_i}(m^*,Q^*)=0.
\]
Since $\dot m_i=p_i^g-m_i$, it follows that
\[
(J_{mm})_{ii}=-1,\qquad (J_{mm})_{ij}=-\frac{\gamma_{i,j}}{2\beta_i}\ \ (i\neq j).
\]
Define $A$ by $A_{ii}=0$ and $A_{ij}=\frac{\gamma_{i,j}}{-2\beta_i}$ for $i\neq j$. We therefore have
$$
J_{mm} = A - I.
$$
Because $\beta_i<0$ for all $i$, $A$ is entrywise non-negative and its row sums satisfy
\[
\sum_{j}A_{ij}=\sum_{j\neq i}\frac{\gamma_{i,j}}{-2\beta_i}=\frac{\gamma_i}{-2\beta_i}<\frac12,
\]
using $-\beta_i>\gamma_i$. Hence, $\|A\|_\infty<1/2$ and every eigenvalue $\mu$ of $A$ satisfies $|\Re(\mu)|\le \|A\|_\infty<1/2$. Thus, every eigenvalue $\lambda$ of $J_{mm}$ therefore satisfies
\[
\Re(\lambda)=\Re(\mu-1)\le |\Re(\mu)|-1 < -\frac12.
\]
Therefore, $J_{mm}$ is Hurwitz, and consequently the full Jacobian $J$ is Hurwitz. Standard linearization theory implies local asymptotic stability of $(m^*,Q^*)$ for \eqref{eq:ode-N-mQ}.

\subsection{Proof of Proposition~\ref{prop:reduced-excursion}}\label{proof:prop:reduced-excursion}
We first prove two auxiliary lemmas. Define $V(t) = V(\mathbf y_t)$, $C(t)=C(\mathbf y_t)$, and $r(t)=r(\mathbf y_t)$. Since $\dot y_1(t)=B(y_1(t),r(t))-y_1(t)$, we have $(B(y_1(t),r(t))-y_1(t))^2=\dot y_1(t)^2$. A direct calculation from \eqref{eq:reducedsystem} yields the following lemma.

\begin{lemma}[Variance--covariance identities]\label{lem:reduced-VC}
Along any solution of \eqref{eq:reducedsystem},
\begin{equation}\label{eq:reduced-V-C}
\dot V(t)= \dot y_1(t)^2+\nu^2 - V(t), \qquad \dot C(t)= \dot y_1(t)^2 - C(t).
\end{equation}
Consequently, whenever $V(t)>0$, the ratio $r(t)=C(t)/V(t)$ is differentiable and satisfies
\begin{equation}\label{eq:reduced-r}
\dot r(t)
=
\frac{\dot y_1(t)^2+\nu^2}{V(t)}
\left(q(t)-r(t)\right),
\qquad
q(t)\triangleq \frac{\dot y_1(t)^2}{\dot y_1(t)^2+\nu^2}\in[0,1).
\end{equation}
\end{lemma}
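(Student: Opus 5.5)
The plan is a direct computation from the reduced system \eqref{eq:reducedsystem}, organized around one algebraic identity. By the first coordinate of \eqref{eq:reducedsystem}, $\dot y_1=B-y_1$, where I abbreviate $B\equiv B(y_1(t),r(t))$. Completing the square gives
\[
B^2-2y_1B+y_1^2=\dot y_1^2 .
\]
Every identity in the statement should reduce to this one substitution.

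\emph{Step 1 (variance).} Differentiate $V=y_2-y_1^2$ along the flow to get $\dot V=\dot y_2-2y_1\dot y_1$. Substituting the second and first coordinates of \eqref{eq:reducedsystem} gives
\[
\dot V=B^2+\nu^2-y_2-2y_1(B-y_1).
\]
Adding and subtracting $y_1^2$ regroups this as
\[
\dot V=(B^2-2y_1B+y_1^2)+\nu^2-(y_2-y_1^2)=\dot y_1^2+\nu^2-V .
\]

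\emph{Step 2 (covariance).} The computation for $C=y_3-y_1^2$ is identical, except that the third coordinate of \eqref{eq:reducedsystem} carries no $\nu^2$ term. This gives $\dot C=\dot y_1^2-C$, which completes \eqref{eq:reduced-V-C}.

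\emph{Step 3 (ratio).} On any interval where $V>0$, the ratio $r=C/V$ is a quotient of $C^1$ functions with nonvanishing denominator. The trajectory is $C^1$ wherever the solution exists, because the drift is smooth as long as $V\neq 0$ and the denominator $-2\beta-2\gamma r$ of $B$ is nonzero. The quotient rule together with \eqref{eq:reduced-V-C} gives
\[
\dot r=\frac{\dot C\,V-C\,\dot V}{V^2}=\frac{(\dot y_1^2-C)V-C(\dot y_1^2+\nu^2-V)}{V^2}=\frac{\dot y_1^2\,V-C(\dot y_1^2+\nu^2)}{V^2}.
\]
Factoring out $(\dot y_1^2+\nu^2)/V$ leaves $q-r$, with $q=\dot y_1^2/(\dot y_1^2+\nu^2)$. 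Since $\nu^2>0$, the denominator of $q$ is strictly positive, and $0\le q<1$ follows immediately. This is exactly \eqref{eq:reduced-r}.

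I do not expect a real obstacle here. The only point needing care is well-posedness: both the derivatives and $r$ must be defined on the relevant interval. This requires $V(t)>0$ and $-\beta-\gamma r\neq 0$ there. The first is assumed in the ``whenever $V(t)>0$'' clause of the statement. The second holds locally by continuity and is handled in the main proposition. The lemma itself is therefore purely algebraic.
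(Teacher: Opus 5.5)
Your proposal is correct and follows the paper's proof exactly. Like the paper, you differentiate $V=y_2-y_1^2$ and $C=y_3-y_1^2$ along the reduced system, complete the square via $\dot y_1=B-y_1$, and apply the quotient rule to $r=C/V$. Your added remark on well-posedness (nonvanishing of $V$ and of $-2\beta-2\gamma r$) is a minor point that the paper leaves implicit.
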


\begin{proof}[Proof of Lemma~\ref{lem:reduced-VC}]
Differentiate $V(t)=y_2(t)-y_1(t)^2$ and use \eqref{eq:reducedsystem}:
\[
\dot V=\dot y_2-2y_1\dot y_1
=\left(B^2+\nu^2-y_2\right)-2y_1(B-y_1)
=(B-y_1)^2+\nu^2-(y_2-y_1^2),
\]
which gives the identity for $V(t)$. The derivation for $C(t)$ is identical using $C(t)=y_3(t)-y_1(t)^2$. For \eqref{eq:reduced-r}, apply the quotient rule to $r=C/V$ and substitute \eqref{eq:reduced-V-C}.
\end{proof}

\begin{lemma}[Monotonicity of the greedy map]\label{lem:reduced-monotone}
Assume $\alpha+(\beta+ \gamma) u>0$. Then for all $(y_1,r)\in[l,u]\times[0,1)$,
\[
\frac{\partial}{\partial r}B(y_1,r)>0.
\]
In particular, for $r\in[0,1)$,
\begin{equation}\label{eq:reduced-envelope}
B(y_1,0)\le B(y_1,r)<B(y_1,1)=p^C.
\end{equation}
\end{lemma}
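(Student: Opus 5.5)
The plan is a direct computation: differentiate the closed form \eqref{eq:reduced-greedy} of $B(y_1,r)$ in $r$, and then reduce the sign of the derivative to the positive-demand assumption $\alpha+(\beta+\gamma)u>0$.

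\emph{Denominator.} Write $B(y_1,r)=N(y_1,r)/D(r)$ with $N(y_1,r)=\alpha+\gamma(1-r)y_1$ and $D(r)=-2\beta-2\gamma r$. For $r\in[0,1]$ we have $D(r)\ge -2\beta-2\gamma=2(-\beta-\gamma)>0$, by the standing assumption $-\beta>\gamma$. So $B$ is smooth on $[l,u]\times[0,1]$, including the endpoint $r=1$.

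\emph{Derivative.} Since $\partial_r N=-\gamma y_1$ and $D'(r)=-2\gamma$, the quotient rule gives
\[
\frac{\partial B}{\partial r}
=\frac{-\gamma y_1(-2\beta-2\gamma r)+2\gamma\bigl(\alpha+\gamma(1-r)y_1\bigr)}{D(r)^2}.
\]
In the numerator the terms in $\gamma^2 r y_1$ cancel ($+2\gamma^2 r y_1$ from the first product, $-2\gamma^2 r y_1$ from the second). This leaves
\[
\frac{\partial B}{\partial r}=\frac{2\gamma\bigl(\alpha+(\beta+\gamma)y_1\bigr)}{D(r)^2}.
\]
This is the step to double-check, since the cancellation of the $r$-dependence in the numerator is what makes the sign condition uniform in $r$.

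\emph{Sign.} We have $\gamma>0$ and $D(r)^2>0$, so it remains to show $\alpha+(\beta+\gamma)y_1>0$ on $[l,u]$. Because $\beta+\gamma<0$, the map $y_1\mapsto\alpha+(\beta+\gamma)y_1$ is decreasing. Its minimum over $[l,u]$ is therefore attained at $y_1=u$, where it equals $\alpha+(\beta+\gamma)u>0$ by hypothesis. Hence $\partial_r B(y_1,r)>0$ on $[l,u]\times[0,1)$; the same computation shows it holds on $[l,u]\times[0,1]$.

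\emph{Envelope \eqref{eq:reduced-envelope}.} Since $B(y_1,\cdot)$ is strictly increasing and continuous on $[0,1]$, for $r\in[0,1)$ we get $B(y_1,0)\le B(y_1,r)<B(y_1,1)$. At $r=1$ the $y_1$-dependence drops out of the numerator, so
\[
B(y_1,1)=\frac{\alpha}{-2\beta-2\gamma}=p^C
\]
for every $y_1$. This is the symmetric collusive price recalled at the start of this appendix.

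No step is a genuine obstacle. The only care needed is the algebraic cancellation noted above and confirming that $D$ stays positive at $r=1$, which is what permits evaluating the strict upper bound at the endpoint.
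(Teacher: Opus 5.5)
Your proof is correct and follows the same route as the paper. Like the paper, you differentiate \eqref{eq:reduced-greedy} in $r$, sign the numerator using $\beta+\gamma<0$ together with the hypothesis at $y_1=u$, and read off the envelope from monotonicity and $B(y_1,1)=p^C$; your numerator $2\gamma\bigl(\alpha+(\beta+\gamma)y_1\bigr)$ is actually the correct one, since the paper's displayed derivative drops the factor $2$ (harmless for the sign), and your explicit checks that $\beta+\gamma<0$ and $D(1)>0$ fill in steps the paper leaves implicit.
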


\begin{proof}
A direct differentiation of \eqref{eq:reduced-greedy} gives
\[
\frac{\partial}{\partial r}B(y_1,r)
=
\frac{\gamma\left(\alpha+(\beta+\gamma)y_1\right)}{(-2\beta-2\gamma r)^2}.
\]
The denominator is positive. Since $y_1\in[l,u]$, we have $\alpha+(\beta+\gamma)y_1\ge \alpha+(\beta+ \gamma)u>0$, hence $\partial_r B>0$. The envelope \eqref{eq:reduced-envelope} follows by monotonicity and the identity $B(y_1,1)=\alpha/(-2\beta-2\gamma)=p^C$.
\end{proof}

We now prove the claims in Proposition~\ref{prop:reduced-excursion}.

\emph{(i)} This is immediate from Lemma~\ref{lem:reduced-VC}: solving \eqref{eq:reduced-V-C} gives
\[
V(t)=e^{-t}V(0)+\int_0^t e^{-(t-s)}\left(\dot y_1(s)^2+\nu^2\right)\,ds
\ge e^{-t}V(0)+\nu^2(1-e^{-t})>0.
\]

\emph{(ii)} Define $W(t)\triangleq V(t)-C(t)=y_2(t)-y_3(t)$. From \eqref{eq:reducedsystem},
\[
\dot W(t)=\dot y_2(t)-\dot y_3(t)
=\left(B^2+\nu^2-y_2\right)-\left(B^2-y_3\right)
=\nu^2-W(t),
\]
so $W(t)=\nu^2+(W(0)-\nu^2)e^{-t}$. Hence, there exists $T_r<\infty$ such that $W(t)>0$ for all $t\ge T_r$. Since $r(t)=C(t)/V(t)$ and $V(t)>0$ by (i), it follows that $r(t)<1$ for all $t\ge T_r$. By Lemma~\ref{lem:reduced-monotone}, $\partial_r B>0$ on $[l,u]\times[0,1)$, and $B(y_1,1)=p^C$. Under the standing assumption that $l < p^{NE}<p^C<u$ and since $r(t)<1$ for all $t\ge T_r$ by (ii), we have $B(y_1(t),r(t))<p^C$ and therefore
\[
\dot y_1(t)=B(y_1(t),r(t))-y_1(t)< p^C-y_1(t),
\qquad t\ge T_r.
\]
This implies $y_1(t)$ falls below $p^C$ in finite time and cannot cross above thereafter, yielding a finite $T_C$ such that $y_1(t)<p^C$ for all $t\ge T_C$. Defining $T_{r,C} = \max\{T_r,T_C\}$ gives the claim.

\emph{(iii)} From Lemma~\ref{lem:reduced-VC}, $C(t)$ satisfies $\dot C(t)=\dot y_1(t)^2-C(t)$, hence $C(t)$ is pulled toward the nonnegative input $\dot y_1(t)^2$, and if $C(t)$ is positive at some time, it remains so thereafter. Suppose $C(t)> 0$ for all $t \ge t_0$ for some $t_0$. Then, $r(t)=C(t)/V(t)> 0$ for all $t \ge 0$, and Lemma~\ref{lem:reduced-monotone} implies
\[
B(y_1(t),r(t))> B(y_1(t),0).
\]
Since $B(y,0)-y$ is strictly positive for $y<p^{NE}$ and vanishes at $y=p^{NE}$, a standard one-dimensional comparison argument implies that $y_1(t)$ crosses above $p^{NE}$ in finite time (with $\dot{y}_1(T_{NE}) > 0$) and cannot cross below thereafter, giving (b).

On the other hand, suppose $C(t)\le 0$ for all $t\ge 0$. Since $\dot C(t)=\dot y_1(t)^2 - C(t)\ge -C(t)\ge 0$, $C(t)$ is non-decreasing and therefore converges to a limit $\bar C\in(-\infty, 0]$, with $\dot C(t)\to 0$. The identity $\dot C(t)=\dot y_1(t)^2 - C(t)$ then forces $\dot y_1(t)^2\to \bar C$; since $\dot y_1(t)^2\ge 0$, $\bar C\ge 0$, so $\bar C=0$ and $\dot y_1(t)\to 0$. The compact level set $\{(y_1,y_2,y_3): y_1\in[l,u],\ y_2,y_3\ \text{bounded}\}$ together with $\dot y_1(t)\to 0$ implies $y_1(t)$ converges to a limit $y_1^*$ (by LaSalle's invariance principle), and continuity of $B$ in \eqref{eq:reducedsystem} gives $B(y_1^*,0)=y_1^*$, i.e., $y_1^*=p^{NE}$. Finally, $V(t)\to\nu^2$ from \eqref{eq:reduced-V-C} (since $\dot y_1(t)^2\to 0$ and $V(t)$ satisfies a stable linear ODE with input $\dot y_1^2+\nu^2$), and $C(t)\to 0$ implies $y_3(t)\to y_1^{*2}=(p^{NE})^2$, while $V(t)\to\nu^2$ implies $y_2(t)\to y_1^{*2}+\nu^2=(p^{NE})^2+\nu^2$. Hence $\mathbf y(t)\to \mathbf y^*$, yielding (a).

\emph{(iv)} Differentiate $\dot y_1=B(y_1,r)-y_1$ to obtain
\[
\ddot y_1 = (\partial_{y_1}B-1)\dot y_1 + (\partial_r B)\dot r.
\]
At any $t_0$ with $\dot y_1(t_0)=0$, we have $q(t_0)=0$ in \eqref{eq:reduced-r}, hence $\dot r(t_0)= -\frac{\dot y_1(t_0)^2+\nu^2}{V(t_0)}\,r(t_0)<0$ whenever $r(t_0)>0$. By Lemma~\ref{lem:reduced-monotone}, $\partial_r B>0$ on $[l,u]\times[0,1)$, so
\[
\ddot y_1(t_0)=(\partial_r B)(y_1(t_0),r(t_0))\,\dot r(t_0)<0.
\]
Thus, every critical point with $r>0$ is a strict local maximum, implying no local minima and at most one critical point on any interval where $r\in(0,1)$. In case \rm{(b)}, we have $r(t)\in(0,1)$ for all $t\ge T_{NE}$, and there must be at least one critical point since otherwise $y_1(t)$ would be monotonically increasing and either diverge or converge to a limit above $p^{NE}$, contradicting \rm(iii) or the equilibrium characterization \eqref{eq:3dode-equilibrium}. After this maximum, $\dot y_1<0$ and $y_1(t)$ decreases monotonically toward $p^{NE}$.

\subsection{Proof of Theorem~\ref{thm:allinformedmeanforecast}}\label{proof:thm:allinformedmeanforecast}

For notational simplicity, throughout this proof we drop the ``$in$'' superscript on the informed estimator and parameter, writing $\hat\theta_{n,i}\triangleq \hat\theta_{n,i}^{in}$ and $\theta^*_i\triangleq \theta_i^{*,\,in}$. Applying Theorem~\ref{thm:informedolsrate} with $\mathcal{I}^{in} = [N]$ gives the parameter rate
\begin{equation}\label{eq:cor:param-rate}
\sum_{i \in [N]} \mathbb{E}\|\hat\theta_{n,i} - \theta^*_i\|_2^2 \;=\; O\!\left(n^{\eta_{\max} - 1}\log n\right).
\end{equation}

\paragraph*{Step 1: Linearizing the greedy best response around $\mathbf p^{NE}$.} Stack the seller-$i$ greedy best-response functions into the vector-valued map $\boldsymbol\Phi(\theta;\mathbf q) \in \mathbb R^N$ with $i$th coordinate
\begin{equation}\label{eq:cor:phi-def}
\boldsymbol\Phi_i(\theta_i; \mathbf q) \;\triangleq\; -\,\frac{\alpha_i + \sum_{j \neq i}\gamma_{i,j} q_j}{2\beta_i},
\qquad i \in [N].
\end{equation}
Recall from Section~\ref{sec:solution-concept} the Nash matrix $\Gamma \in \mathbb R^{N \times N}$ with $\Gamma_{ii} = 2\beta_i$, $\Gamma_{ij} = \gamma_{i,j}$ for $j \neq i$, and $\boldsymbol\alpha = (\alpha_1, \dots, \alpha_N)^\top$. The first-order condition $\Gamma\mathbf p^{NE} = -\boldsymbol\alpha$ from Section~\ref{sec:solution-concept} reads coordinate-wise as
\begin{equation}\label{eq:cor:NE-foc}
2\beta_i\,p_i^{NE} \;+\; \sum_{j \neq i}\gamma_{i,j}\,p_j^{NE} \;=\; -\alpha_i, \qquad i \in [N].
\end{equation}
Subtracting $p_i^{NE}$ from \eqref{eq:cor:phi-def} at the true parameters $\theta^*$ and using \eqref{eq:cor:NE-foc} to substitute $\alpha_i = -2\beta_i p_i^{NE} - \sum_{j \neq i}\gamma_{i,j} p_j^{NE}$:
\begin{align}
\boldsymbol\Phi_i(\theta^*; \mathbf q) - p_i^{NE}
&\;=\; -\frac{\alpha_i + \sum_{j\neq i}\gamma_{i,j} q_j}{2\beta_i} - p_i^{NE}
\;=\; -\frac{\alpha_i + 2\beta_i p_i^{NE} + \sum_{j\neq i}\gamma_{i,j} q_j}{2\beta_i} \notag\\
&\;=\; -\frac{-\sum_{j \neq i}\gamma_{i,j} p_j^{NE} + \sum_{j\neq i}\gamma_{i,j} q_j}{2\beta_i}
\;=\; \sum_{j \neq i}\left(-\frac{\gamma_{i,j}}{2\beta_i}\right)(q_j - p_j^{NE}). \label{eq:cor:phi-shift-coord}
\end{align}
Define $B \in \mathbb R^{N \times N}$ by
\[
B_{ii} \;=\; 0,
\qquad
B_{ij} \;=\; -\frac{\gamma_{i,j}}{2\beta_i} \quad \text{for } j \neq i,
\qquad i \in [N].
\]
Stacking \eqref{eq:cor:phi-shift-coord} into vector form yields the linearization identity
\begin{equation}\label{eq:cor:linear-BR}
\boldsymbol\Phi(\theta^*;\mathbf q) \;-\; \mathbf p^{NE} \;=\; B\,(\mathbf q - \mathbf p^{NE}).
\end{equation}
Recall that $B = I - \tfrac{1}{2}\mathrm{diag}(1/\beta_i)\,\Gamma$. The standing condition $-\beta_i > \gamma_i \triangleq \sum_{j \neq i}\gamma_{i,j}$ from Section~\ref{sec:demand-model}, which there ensured the strict diagonal dominance of $\Gamma$ (i.e., $|\Gamma_{ii}| = 2|\beta_i| > \gamma_i = \sum_{j \neq i}|\Gamma_{ij}|$) and hence its invertibility, gives directly
\[
\sum_{j \neq i}\,|B_{ij}| \;=\; \sum_{j \neq i}\frac{\gamma_{i,j}}{2|\beta_i|} \;=\; \frac{\gamma_i}{2|\beta_i|} \;<\; \frac{1}{2}, \qquad i \in [N],
\]
since $\gamma_{i,j} \ge 0$ and $\gamma_i < |\beta_i|$. Hence
\begin{equation}\label{eq:cor:B-infty-norm}
\|B\|_\infty \;\triangleq\; \max_{i \in [N]} \sum_{j \in [N]}|B_{ij}| \;=\; \max_{i \in [N]} \sum_{j \neq i}|B_{ij}| \;<\; \frac{1}{2}.
\end{equation}
Set
\[
\lambda \;\triangleq\; 1 - \|B\|_\infty \;\in\; \bigl(\tfrac{1}{2},\, 1\bigr].
\]
Applying Gershgorin's circle theorem to $I - B$: the $i$th diagonal entry is $(I-B)_{ii} = 1$ and the $i$th off-diagonal row-sum is $\sum_{j \neq i}|(I-B)_{ij}| = \sum_{j \neq i}|B_{ij}| \le \|B\|_\infty = 1 - \lambda$. Thus every eigenvalue $\zeta$ of $I-B$ lies in the disc $\{z \in \mathbb C : |z - 1| \le 1 - \lambda\} \subset \{z : \mathrm{Re}(z) \ge \lambda > 1/2\}$. In particular, $I - B$ is positive stable: all its eigenvalues have real part bounded below by $\lambda$.

By the interiority assumption $\theta^*_i \in \operatorname{int}(\Theta^{in}_i)$ from Section~\ref{sec:learning-dynamics}, the LS-projection set $\Theta^{in}_i$ has positive distance from $\{\beta_i = 0\}$, so there exists $\beta_{\min} > 0$ such that $|\hat\beta_{n,i}| \ge \beta_{\min}$ for all $i, n$ a.s. Differentiating \eqref{eq:cor:phi-def} with respect to the components $\alpha_i$, $\beta_i$, $\{\gamma_{i,j}\}_{j\neq i}$ of $\theta_i$ (laid out as in Section~\ref{sec:all-informed}):
\begin{align*}
\frac{\partial \boldsymbol\Phi_i}{\partial \alpha_i} &\;=\; -\frac{1}{2\beta_i}, &
\frac{\partial \boldsymbol\Phi_i}{\partial \beta_i} &\;=\; \frac{\alpha_i + \sum_{j \neq i}\gamma_{i,j} q_j}{2\beta_i^2}, &
\frac{\partial \boldsymbol\Phi_i}{\partial \gamma_{i,j}} &\;=\; -\frac{q_j}{2\beta_i} \quad (j \neq i).
\end{align*}
On $\Theta^{in} \times [l, u]^N$ the parameters and $q_j$ are bounded, so $|\hat\alpha|, |\hat\gamma_{i,j}| \le C_\theta$ and $|q_j| \le u$, hence each partial derivative is bounded in magnitude by some constant $K_0 = K_0(\beta_{\min}, C_\theta, u)$. By the mean-value theorem along the segment from $\theta^*_i$ to $\hat\theta_{n,i}$,
\[
|\boldsymbol\Phi_i(\hat\theta_{n,i}; \mathbf m_n) - \boldsymbol\Phi_i(\theta^*_i; \mathbf m_n)| \;\le\; K_0\,\|\hat\theta_{n,i} - \theta^*_i\|_2.
\]
Stacking and using $\|\mathbf v\|_2 \le \sqrt{N}\|\mathbf v\|_\infty$,
\begin{equation}\label{eq:cor:phi-lipschitz}
\|\boldsymbol\Phi(\hat\theta_n; \mathbf m_n) - \boldsymbol\Phi(\theta^*; \mathbf m_n)\|_2 \;\le\; K\,\sum_{i \in [N]}\|\hat\theta_{n,i} - \theta^*_i\|_2,
\qquad K \triangleq K_0\sqrt{N}.
\end{equation}
Write the realized price as
\[
\mathbf p_{n+1} \;=\; \boldsymbol\Phi(\hat\theta_n;\mathbf m_n) \;+\; \mathbf z_{n+1},
\]
where $\mathbf z_{n+1}$ is the exploration noise vector with $\mathbb E[\mathbf z_{n+1} \mid \mathcal F_n] = \mathbf 0$, independent coordinates, and second moments $\mathbb E[z_{n+1,i}^2] = \nu_{n+1,i}^2$. Adding and subtracting $\boldsymbol\Phi(\theta^*; \mathbf m_n)$ and applying \eqref{eq:cor:linear-BR} with $\mathbf q = \mathbf m_n$:
\begin{align}
\mathbf p_{n+1} - \mathbf p^{NE}
&\;=\; \bigl[\boldsymbol\Phi(\theta^*; \mathbf m_n) - \mathbf p^{NE}\bigr] \;+\; \bigl[\boldsymbol\Phi(\hat\theta_n; \mathbf m_n) - \boldsymbol\Phi(\theta^*; \mathbf m_n)\bigr] \;+\; \mathbf z_{n+1} \notag\\
&\;=\; B\,(\mathbf m_n - \mathbf p^{NE}) \;+\; \boldsymbol\delta_n \;+\; \mathbf z_{n+1}, \label{eq:cor:price-decomp}
\end{align}
where
\[
\boldsymbol\delta_n \;\triangleq\; \boldsymbol\Phi(\hat\theta_n; \mathbf m_n) - \boldsymbol\Phi(\theta^*; \mathbf m_n).
\]
By \eqref{eq:cor:phi-lipschitz}, $\boldsymbol\delta_n$ is $\mathcal F_n$-measurable and obeys
\begin{equation}\label{eq:cor:delta-bound}
\mathbb{E}\|\boldsymbol\delta_n\|_2^2 \;\le\; K^2\,\mathbb E\!\left(\sum_{i \in [N]}\|\hat\theta_{n,i} - \theta^*_i\|_2\right)^{\!2}
\;\le\; K^2 N\,\sum_{i \in [N]}\mathbb{E}\|\hat\theta_{n,i} - \theta^*_i\|_2^2
\;=\; O\!\left(n^{\eta_{\max} - 1}\log n\right),
\end{equation}
where the second inequality is Cauchy--Schwarz applied to the unit-weight inner product on $\mathbb R^N$, and the last equality is \eqref{eq:cor:param-rate}.

\paragraph*{Step 2: Stochastic-approximation recursion for the running mean.} Let
\[
\mathbf e_n \;\triangleq\; \mathbf m_n - \mathbf p^{NE}.
\]
The running-mean update $\mathbf m_{n+1} = \mathbf m_n + (n+1)^{-1}(\mathbf p_{n+1} - \mathbf m_n)$ subtracted on both sides by $\mathbf p^{NE}$ becomes
\begin{align*}
\mathbf e_{n+1}
&\;=\; \mathbf e_n + \frac{1}{n+1}\bigl(\mathbf p_{n+1} - \mathbf m_n\bigr)
\;=\; \mathbf e_n + \frac{1}{n+1}\bigl[(\mathbf p_{n+1} - \mathbf p^{NE}) - (\mathbf m_n - \mathbf p^{NE})\bigr] \\
&\;=\; \left(1 - \frac{1}{n+1}\right)\mathbf e_n + \frac{1}{n+1}\bigl(\mathbf p_{n+1} - \mathbf p^{NE}\bigr).
\end{align*}
Substituting \eqref{eq:cor:price-decomp} for $\mathbf p_{n+1} - \mathbf p^{NE}$:
\begin{align*}
\mathbf e_{n+1}
&\;=\; \left(1 - \frac{1}{n+1}\right)\mathbf e_n + \frac{1}{n+1}\bigl[B\,\mathbf e_n + \boldsymbol\delta_n + \mathbf z_{n+1}\bigr] \notag\\
&\;=\; \left[I - \frac{I - B}{n+1}\right]\mathbf e_n + \frac{\boldsymbol\delta_n + \mathbf z_{n+1}}{n+1}.
\end{align*}
Define
\begin{equation}\label{eq:cor:An-def}
A_n \;\triangleq\; I - \frac{I - B}{n + 1},
\end{equation}
so that $\mathbf e_{n+1} = A_n \mathbf e_n + (n+1)^{-1}(\boldsymbol\delta_n + \mathbf z_{n+1})$.
Since $I-B$ is positive stable, the Lyapunov theorem produces a unique symmetric $P \succ 0$ such that
\begin{equation}\label{eq:cor:Lyap-eq}
P(I-B) + (I-B)^\top P \;=\; I.
\end{equation}
Since $P \succ 0$ implies $P \preceq \sigma_{\max}(P)\,I$, we have $I \succeq \sigma_{\max}(P)^{-1}\,P$, so \eqref{eq:cor:Lyap-eq} gives the spectral inequality
\begin{equation}\label{eq:cor:Lyap-bound}
P(I-B) + (I-B)^\top P \;\succeq\; \mu\,P,
\qquad \mu \;\triangleq\; \frac{1}{\sigma_{\max}(P)} \;>\; 0.
\end{equation}
Compute $A_n^\top P A_n$ from \eqref{eq:cor:An-def}, using \eqref{eq:cor:Lyap-eq} to collapse the first-order term:
\begin{align*}
A_n^\top P A_n
&\;=\; \left[I - \frac{(I-B)^\top}{n+1}\right] P \left[I - \frac{I-B}{n+1}\right] \notag\\
&\;=\; P \;-\; \frac{P(I-B) + (I-B)^\top P}{n+1} \;+\; \frac{(I-B)^\top P (I-B)}{(n+1)^2} \notag\\
&\;=\; P \;-\; \frac{I}{n+1} \;+\; \frac{(I-B)^\top P (I-B)}{(n+1)^2}.
\end{align*}
Using $-I \preceq -\mu P$ from \eqref{eq:cor:Lyap-bound} and bounding the second-order term $(I-B)^\top P (I-B) \preceq \|I-B\|_2^2\,\sigma_{\max}(P)\,I \preceq C_3\,P$ with $C_3 \triangleq \|I-B\|_2^2\,\sigma_{\max}(P)/\sigma_{\min}(P)$,
\[
A_n^\top P A_n \;\preceq\; \left(1 - \frac{\mu}{n+1} + \frac{C_3}{(n+1)^2}\right) P.
\]
For all $n \ge n_0 \triangleq 2C_3/\mu$, $C_3/(n+1)^2 \le \mu/(2(n+1))$, hence
\begin{equation}\label{eq:cor:AnPAn-tight}
A_n^\top P A_n \;\preceq\; \left(1 - \frac{\mu}{2(n+1)}\right) P, \qquad n \ge n_0.
\end{equation}

Define the weighted Lyapunov $V_n$, the associated $P$-norm, and the root-mean-square $r_n$ by
\[
V_n \;\triangleq\; \mathbf e_n^\top P\,\mathbf e_n,
\qquad
\|u\|_P \;\triangleq\; \sqrt{u^\top P\,u},
\qquad
r_n \;\triangleq\; \sqrt{\mathbb E V_n} \;=\; \sqrt{\mathbb E\|\mathbf e_n\|_P^2}.
\]
By norm equivalence,
\begin{equation}\label{eq:cor:V-norm-equiv}
\sigma_{\min}(P)\,\|\mathbf e_n\|_2^2 \;\le\; V_n \;\le\; \sigma_{\max}(P)\,\|\mathbf e_n\|_2^2.
\end{equation}
Applying the $P$-norm triangle inequality to the recursion $\mathbf e_{n+1} = A_n\mathbf e_n + (n+1)^{-1}(\boldsymbol\delta_n + \mathbf z_{n+1})$,
\begin{equation}\label{eq:cor:Pnorm-tri}
\|\mathbf e_{n+1}\|_P \;\le\; \|A_n \mathbf e_n\|_P \;+\; \frac{\|\boldsymbol\delta_n\|_P}{n+1} \;+\; \frac{\|\mathbf z_{n+1}\|_P}{n+1}.
\end{equation}
View the three terms on the right as random variables $X_n \triangleq \|A_n\mathbf e_n\|_P$, $Y_n \triangleq \|\boldsymbol\delta_n\|_P/(n+1)$, $Z_n \triangleq \|\mathbf z_{n+1}\|_P/(n+1)$, and write $\|U\|_{L^2(\mathbb P)} \triangleq \sqrt{\mathbb E\,U^2}$. Squaring \eqref{eq:cor:Pnorm-tri} and taking expectations preserves the inequality (since all terms are nonnegative), giving
\[
r_{n+1}^2 \;=\; \mathbb E\|\mathbf e_{n+1}\|_P^2 \;\le\; \mathbb E\bigl[(X_n + Y_n + Z_n)^2\bigr] \;=\; \|X_n + Y_n + Z_n\|_{L^2(\mathbb P)}^2.
\]
Taking square roots yields $r_{n+1} \le \|X_n + Y_n + Z_n\|_{L^2(\mathbb P)}$. Minkowski's inequality applied twice gives $\|X_n + Y_n + Z_n\|_{L^2(\mathbb P)} \le \|X_n\|_{L^2(\mathbb P)} + \|Y_n\|_{L^2(\mathbb P)} + \|Z_n\|_{L^2(\mathbb P)}$, hence
\begin{equation}\label{eq:cor:rn-Mink}
r_{n+1} \;\le\; \sqrt{\mathbb E\|A_n\mathbf e_n\|_P^2} \;+\; \frac{\sqrt{\mathbb E\|\boldsymbol\delta_n\|_P^2}}{n+1} \;+\; \frac{\sqrt{\mathbb E\|\mathbf z_{n+1}\|_P^2}}{n+1}.
\end{equation}
We bound each of the three $L^2$ norms. For the first, \eqref{eq:cor:AnPAn-tight} gives $\mathbb E\|A_n\mathbf e_n\|_P^2 \le (1 - \mu/(2(n+1)))\,r_n^2$ for $n \ge n_0$; the elementary inequality $\sqrt{1-x} \le 1 - x/2$ for $x \in [0, 1]$ then yields
\begin{equation}\label{eq:cor:rn-contract}
\sqrt{\mathbb E\|A_n\mathbf e_n\|_P^2} \;\le\; \sqrt{1 - \mu/(2(n+1))}\,r_n \;\le\; \left(1 - \frac{\mu}{4(n+1)}\right) r_n.
\end{equation}
For the second, $\mathbb E\|\boldsymbol\delta_n\|_P^2 \le \sigma_{\max}(P)\,\mathbb E\|\boldsymbol\delta_n\|_2^2 \le \sigma_{\max}(P)\,D_\delta\,n^{\eta_{\max}-1}\log n$, where $D_\delta > 0$ is the implied constant in \eqref{eq:cor:delta-bound} (concretely, $D_\delta \le K^2 N\,D_\theta$ with $D_\theta$ the constant in $\sum_i \mathbb E\|\hat\theta_{n,i} - \theta^*_i\|_2^2 \le D_\theta\,n^{\eta_{\max}-1}\log n$ from Theorem~\ref{thm:informedolsrate}), so
\[
\sqrt{\mathbb E\|\boldsymbol\delta_n\|_P^2}/(n+1) \;\le\; K_\delta\,n^{(\eta_{\max}-3)/2}(\log n)^{1/2},
\qquad K_\delta \triangleq \sqrt{\sigma_{\max}(P)\,D_\delta}.
\]
For the third, $\mathbb E\|\mathbf z_{n+1}\|_P^2 = \mathrm{tr}(P\,\Sigma_{n+1}) \le N\sigma_{\max}(P)\,\nu_{\max,n+1}^2 \le N\sigma_{\max}(P)\,c_\nu\,n^{-\eta_{\min}}$, where $c_\nu$ is the constant in $\nu_{\max,n}^2 \le c_\nu\,n^{-\eta_{\min}}$, so
\begin{equation}\label{eq:cor:rn-noise}
\sqrt{\mathbb E\|\mathbf z_{n+1}\|_P^2}/(n+1) \;\le\; K_z\,n^{-(2+\eta_{\min})/2},
\qquad K_z \triangleq \sqrt{N\sigma_{\max}(P)\,c_\nu}.
\end{equation}
Substituting \eqref{eq:cor:rn-contract}--\eqref{eq:cor:rn-noise} into \eqref{eq:cor:rn-Mink}: for $n \ge n_0$, abbreviating $\rho_n \triangleq 1 - \mu/(4(n+1))$ and $g_n \triangleq K_\delta\,n^{(\eta_{\max}-3)/2}(\log n)^{1/2} + K_z\,n^{-(2+\eta_{\min})/2}$,
\begin{equation}\label{eq:cor:rn-recursion}
r_{n+1} \;\le\; \rho_n\,r_n \;+\; g_n.
\end{equation}
Iterating \eqref{eq:cor:rn-recursion} from $n_0$ to $n - 1$ gives
$$
r_n \le \left(\prod_{m=n_0}^{n-1}\rho_m\right) r_{n_0} \;+\; \sum_{k=n_0}^{n-1} g_k\,\prod_{m=k+1}^{n-1}\rho_m,
$$
with the convention that the empty product $\prod_{m=n}^{n-1}\rho_m = 1$. Using $\log(1 - x) \le -x$ for $x \in [0, 1)$ and the integral lower bound $\sum_{m=k+1}^{n-1}1/(m+1) \ge \log((n+1)/(k+2))$ on the contraction products,
\begin{align*}
\prod_{m=k+1}^{n-1}\rho_m
&\;=\; \exp\!\left(\sum_{m=k+1}^{n-1}\!\log\!\left(1 - \frac{\mu}{4(m+1)}\right)\right)
\;\le\; \exp\!\left(-\frac{\mu}{4}\!\sum_{m=k+1}^{n-1}\!\frac{1}{m+1}\right) \\
&\;\le\; \left(\frac{k+2}{n+1}\right)^{\!\mu/4}
\;\le\; C_0\,\frac{(k+1)^{\mu/4}}{n^{\mu/4}},
\end{align*}
where $C_0$ absorbs the index shifts. Applying this bound to the recursion gives,
\begin{equation}\label{eq:cor:rn-telescope}
\begin{aligned}
r_n \;\le\;& C_1\,n^{-\mu/4} \;+\; C_0\,K_\delta\,n^{-\mu/4}\!\!\sum_{k=n_0}^{n-1}\!\!(k+1)^{\mu/4}\,k^{(\eta_{\max}-3)/2}(\log k)^{1/2} \\
&\;+\; C_0\,K_z\,n^{-\mu/4}\!\!\sum_{k=n_0}^{n-1}\!\!(k+1)^{\mu/4}\,k^{-(2+\eta_{\min})/2},
\end{aligned}
\end{equation}
where $C_1 \;\triangleq\; C_0\,r_{n_0}\,(n_0+1)^{\mu/4}$ is the constant carrying the initial condition $r_{n_0}$ through the homogeneous decay. It remains to evaluate the two sums in \eqref{eq:cor:rn-telescope}. We reduce each to a pure power sum via $(k+1)^{\mu/4} \le 2^{\mu/4}\,k^{\mu/4}$ (valid for $k \ge 1$): for any $b \in \mathbb R$,
\begin{equation}\label{eq:cor:sum-reduction}
\sum_{k=n_0}^{n-1}(k+1)^{\mu/4}\,k^{b}
\;\le\; 2^{\mu/4}\!\sum_{k=n_0}^{n-1}\!k^{\,\mu/4 + b}
\;\le\; 2^{\mu/4}\!\sum_{k=1}^{n}\!k^{\,\mu/4 + b}.
\end{equation}
The standard integral comparison gives, for any $a \in \mathbb R$ and $n \ge 2$,
\begin{equation}\label{eq:cor:power-sum}
\sum_{k=1}^{n}k^{a} \;\le\; C(a)\,
\begin{cases}
n^{a+1} & \text{if } a > -1, \\
1 + \log n & \text{if } a = -1, \\
1 & \text{if } a < -1,
\end{cases}
\end{equation}
with $C(a) > 0$ depending only on $a$.

\textit{First sum.} Set $a_1 \triangleq \mu/4 + (\eta_{\max} - 3)/2$. Since $(\log k)^{1/2} \le (\log n)^{1/2}$ for $k \le n$, applying \eqref{eq:cor:sum-reduction} with $b = (\eta_{\max}-3)/2$ and \eqref{eq:cor:power-sum} with $a = a_1$, then multiplying by $n^{-\mu/4}$ and using $a_1 + 1 - \mu/4 = (\eta_{\max}-1)/2$:
\begin{align*}
&n^{-\mu/4}\!\!\sum_{k=n_0}^{n-1}\!\!(k+1)^{\mu/4}\,k^{(\eta_{\max}-3)/2}(\log k)^{1/2} \\
&\quad\le\; 2^{\mu/4}\,C(a_1)\,(\log n)^{1/2}\,n^{-\mu/4}\,\times\!
\begin{cases}
n^{a_1+1} & a_1 > -1, \\
1 + \log n & a_1 = -1, \\
1 & a_1 < -1
\end{cases} \\
&\quad=\;
\begin{cases}
O\!\left(n^{(\eta_{\max}-1)/2}(\log n)^{1/2}\right) & a_1 > -1, \\[2pt]
O\!\left(n^{-\mu/4}\,(\log n)^{3/2}\right) & a_1 = -1, \\[2pt]
O\!\left(n^{-\mu/4}(\log n)^{1/2}\right) & a_1 < -1.
\end{cases}
\end{align*}
Combining the three regimes into a single uniform bound by attaching the boundary $\log n$ factor to the $n^{-\mu/4}$ term,
\begin{equation}\label{eq:cor:S1-bound}
n^{-\mu/4}\!\!\sum_{k=n_0}^{n-1}\!\!(k+1)^{\mu/4}\,k^{(\eta_{\max}-3)/2}(\log k)^{1/2}
\;=\; O\!\left(n^{-\mu/4}\,(\log n)^{3/2} \;+\; n^{(\eta_{\max}-1)/2}(\log n)^{1/2}\right).
\end{equation}

\textit{Second sum.} The same argument with $a_2 \triangleq \mu/4 - (2+\eta_{\min})/2$, observing $a_2 + 1 - \mu/4 = -\eta_{\min}/2$, yields
\begin{equation}\label{eq:cor:S2-bound}
n^{-\mu/4}\!\!\sum_{k=n_0}^{n-1}\!\!(k+1)^{\mu/4}\,k^{-(2+\eta_{\min})/2}
\;=\; O\!\left(n^{-\mu/4}\,\log n \;+\; n^{-\eta_{\min}/2}\right).
\end{equation}

Substituting \eqref{eq:cor:S1-bound}--\eqref{eq:cor:S2-bound} into \eqref{eq:cor:rn-telescope} and collapsing constants into
\[
C_2 \;\triangleq\; C_0\,K_\delta\,C(\tfrac{\mu}{4} + \tfrac{\eta_{\max}-3}{2}),
\qquad
C_3 \;\triangleq\; C_0\,K_z\,C(\tfrac{\mu}{4} - \tfrac{2+\eta_{\min}}{2}),
\]
we obtain
\begin{equation}\label{eq:cor:rn-rate}
r_n \;\le\; \widetilde C_1\,n^{-\mu/4}\,(\log n)^{3/2} \;+\; C_2\,n^{(\eta_{\max}-1)/2}(\log n)^{1/2} \;+\; C_3\,n^{-\eta_{\min}/2},
\end{equation}
where $\widetilde C_1$ absorbs $C_1$ and the multiplicative constants from \eqref{eq:cor:S1-bound}--\eqref{eq:cor:S2-bound} (the bare $C_1 n^{-\mu/4}$ term is dominated by $\widetilde C_1 n^{-\mu/4}(\log n)^{3/2}$). Squaring \eqref{eq:cor:rn-rate} and using $(a+b+c)^2 \le 3(a^2+b^2+c^2)$:
\begin{align*}
W_n \;=\; r_n^2
&\;\le\; 3\,\widetilde C_1^2\,n^{-\mu/2}\,(\log n)^3 \;+\; 3\,C_2^2\,n^{\eta_{\max}-1}\log n \;+\; 3\,C_3^2\,n^{-\eta_{\min}} \notag\\
&\;=\; O\!\left(n^{-\mu/2}(\log n)^3 + n^{\eta_{\max} - 1}\log n + n^{-\eta_{\min}}\right).
\end{align*}
Combining with \eqref{eq:cor:V-norm-equiv} gives the running-mean $L^2$ rate
\begin{equation}\label{eq:cor:mn-rate}
\mathbb E\|\mathbf m_n - \mathbf p^{NE}\|_2^2 \;\le\; \frac{W_n}{\sigma_{\min}(P)}
\;=\; O\!\left(n^{-\mu/2}(\log n)^3 + n^{\eta_{\max} - 1}\log n + n^{-\eta_{\min}}\right).
\end{equation}

\paragraph*{Step 3: $L^2$ rate for the realized price.} Squaring \eqref{eq:cor:price-decomp} in the $\ell_2$ norm:
\[
\|\mathbf p_{n+1} - \mathbf p^{NE}\|_2^2 \;=\; \|B\mathbf e_n + \boldsymbol\delta_n + \mathbf z_{n+1}\|_2^2.
\]
Expanding and taking conditional expectation, using $\mathbb E[\mathbf z_{n+1}\mid \mathcal F_n] = 0$ to kill the cross-terms involving $\mathbf z_{n+1}$:
\begin{align*}
\mathbb E\!\left[\|\mathbf p_{n+1} - \mathbf p^{NE}\|_2^2 \,\middle|\, \mathcal F_n\right]
&\;=\; \|B\mathbf e_n + \boldsymbol\delta_n\|_2^2 \;+\; \mathbb E\|\mathbf z_{n+1}\|_2^2 \\
&\;\le\; 2\|B\|_2^2 \|\mathbf e_n\|_2^2 + 2\|\boldsymbol\delta_n\|_2^2 + \sum_{i \in [N]}\nu_{n+1,i}^2,
\end{align*}
where the inequality uses $\|u+v\|^2 \le 2\|u\|^2 + 2\|v\|^2$. Taking unconditional expectations,
\begin{equation}\label{eq:cor:pn-L2-pre}
\mathbb E\|\mathbf p_{n+1} - \mathbf p^{NE}\|_2^2
\;\le\; 2\|B\|_2^2\,\mathbb E\|\mathbf e_n\|_2^2 + 2\,\mathbb E\|\boldsymbol\delta_n\|_2^2 + N\,\nu_{\max,n+1}^2.
\end{equation}
The matrix norm $\|B\|_2$ is finite with $\|B\|_2 < \infty$. Substituting \eqref{eq:cor:mn-rate}, \eqref{eq:cor:delta-bound}, and $\nu_{\max,n+1}^2 = O(n^{-\eta_{\min}})$ into \eqref{eq:cor:pn-L2-pre}:
\begin{equation}\label{eq:cor:pn-rate-3terms}
\mathbb E\|\mathbf p_n - \mathbf p^{NE}\|_2^2 \;=\; O\!\left(n^{-\mu/2}(\log n)^3 \;+\; n^{\eta_{\max} - 1}\log n \;+\; n^{-\eta_{\min}}\right).
\end{equation}

\paragraph*{Step 4: Almost-sure convergence.}
We now upgrade the $L^2$ statement to a.s.\ convergence. Lemma~\ref{lem:proof-mm-olsrate}(b) applies with $\cI^{in} = [N]$ and gives
\begin{equation}\label{eq:cor:theta-as}
\hat\theta_{n,k} \;\to\; \theta_k^* \quad\text{a.s., for every } k\in[N],
\end{equation}
which by~\eqref{eq:cor:phi-lipschitz} also yields $\boldsymbol\delta_n \to 0$ a.s.\ and $\sum_n \|\boldsymbol\delta_n\|_2^2/(n+1) < \infty$ a.s.\ (Lemma~\ref{lem:proof-mm-olsrate}(c)). Take the recursion $\mathbf e_{n+1} = A_n\mathbf e_n + (n+1)^{-1}(\boldsymbol\delta_n + \mathbf z_{n+1})$ from Step~2 and compute
\begin{align*}
\EE[V_{n+1}\mid\cF_n]
&\;=\; \mathbf e_n^\top A_n^\top P A_n\,\mathbf e_n \;+\; \frac{2}{n+1}\,\mathbf e_n^\top A_n^\top P\,\boldsymbol\delta_n \\
&\quad +\; \frac{1}{(n+1)^2}\,\|\boldsymbol\delta_n\|_P^2 \;+\; \frac{1}{(n+1)^2}\,\mathrm{tr}(P\,\Sigma_{n+1}),
\end{align*}
where the cross-term involving $\mathbf z_{n+1}$ vanishes because $\EE[\mathbf z_{n+1}\mid\cF_n] = 0$. Applying~\eqref{eq:cor:AnPAn-tight} to the first piece and Young's inequality with $\epsilon = \mu/(4(n+1))$ to the second,
\begin{equation}\label{eq:cor:RS-ineq}
\EE[V_{n+1}\mid\cF_n] \;\le\; \left(1 - \frac{\mu}{4(n+1)}\right)V_n \;+\; h_n
\end{equation}
holds eventually a.s., with
\[
h_n \;\triangleq\; \frac{4\sigma_{\max}(P)\,\|\boldsymbol\delta_n\|_2^2}{\mu(n+1)} \;+\; \frac{\sigma_{\max}(P)\,\|\boldsymbol\delta_n\|_2^2}{(n+1)^2} \;+\; \frac{N\sigma_{\max}(P)\,\nu_{\max,n+1}^2}{(n+1)^2}.
\]
The first two pieces of $h_n$ are bounded by a constant multiple of $\|\boldsymbol\delta_n\|_2^2/(n+1)$, which is a.s.-summable by Lemma~\ref{lem:proof-mm-olsrate}(c) (applied with $\cI^{in} = [N]$ together with~\eqref{eq:cor:phi-lipschitz}). The third piece is $O(n^{-(2+\eta_{\min})})$, deterministically summable. Hence $\sum_n h_n < \infty$ a.s.

Robbins--Siegmund (in the form $\EE[V_{n+1}\mid\cF_n] \le (1+a_n)V_n - b_n + c_n$ with $a_n \equiv 0$, $b_n \triangleq (\mu/(4(n+1)))\,V_n \ge 0$, $c_n \triangleq h_n$, $\sum_n c_n < \infty$ a.s.) applied to~\eqref{eq:cor:RS-ineq} gives that $V_n$ converges to a finite limit $V_\infty$ a.s.\ and $\sum_n b_n = (\mu/4)\sum_n V_n/(n+1) < \infty$ a.s. Since $\sum_n 1/(n+1) = \infty$, the latter forces $V_\infty = 0$ a.s. Combining with the $P$-norm equivalence~\eqref{eq:cor:V-norm-equiv}, $\mathbf m_n \to \mathbf p^{NE}$ a.s. Finally, from the identity
\[
\tilde{\mathbf p}_{n+1} - \mathbf p^{NE} \;=\; \boldsymbol\Phi(\hat\theta_n;\mathbf m_n) - \mathbf p^{NE} \;=\; B\,(\mathbf m_n - \mathbf p^{NE}) \;+\; \boldsymbol\delta_n
\]
(from~\eqref{eq:cor:linear-BR} and the definition of $\boldsymbol\delta_n$), both summands vanish a.s., so $\tilde p_{n,k} \to p_k^{NE}$ a.s.\ for every $k\in[N]$.

\paragraph*{Step 5: Special case.}
It remains to prove the regret bound under $\eta_i = 1/2$ for every $i \in [N]$ and~\eqref{eq:cor:symmpart-regularity}. Write
\[
A_s \;\triangleq\; \tfrac12\bigl((I - B) + (I - B)^\top\bigr) \;=\; I - \tfrac12 (B + B^\top)
\]
for the symmetric part of $I - B$. Let $v \in \mathbb R^N$ be a unit eigenvector of $P$ corresponding to its largest eigenvalue, so that $P v = \sigma_{\max}(P)\, v$ and $\|v\|_2 = 1$. Multiplying the Lyapunov equation $P(I-B) + (I-B)^\top P = I$ on the left by $v^\top$ and on the right by $v$,
\begin{align*}
1 \;=\; v^\top v \;&=\; v^\top\!\bigl[P(I-B) + (I-B)^\top P\bigr] v \\
\;&=\; v^\top P (I-B) v + v^\top (I-B)^\top P v \\
\;&=\; 2\,\sigma_{\max}(P)\, v^\top (I-B) v \\
\;&=\; 2\,\sigma_{\max}(P)\, v^\top A_s v,
\end{align*}
where the fourth equality uses $v^\top P = \sigma_{\max}(P)\, v^\top$ together with $v^\top (I-B)^\top P v = (v^\top P (I-B) v)^\top = v^\top P (I-B) v$ (the transpose of a scalar), and the last equality uses that $v^\top A_a v = 0$ for the antisymmetric part $A_a \triangleq \tfrac12\bigl((I-B) - (I-B)^\top\bigr)$ (which equals its own transpose as a scalar and also satisfies $v^\top A_a v = -v^\top A_a v$ since $A_a^\top = -A_a$). Therefore, by the standard Rayleigh quotient bound,
\[
\sigma_{\max}(P) \;=\; \frac{1}{2\, v^\top A_s v} \;\le\; \frac{1}{2\,\lambda_{\min}(A_s)}.
\]
By definition, $\mu \triangleq 1/\sigma_{\max}(P)$ and $\lambda_{\min}(A_s) = 1 - \tfrac12\,\lambda_{\max}(B+B^\top)$, so $\mu \ge 2 - \lambda_{\max}(B+B^\top)$. Condition~\eqref{eq:cor:symmpart-regularity} then yields $\mu > 1$.

At $\eta_i = 1/2$ for every $i$ we have $\eta_{\max} = \eta_{\min} = 1/2$, so condition~\eqref{eq:conditioninformedsellersmixed} holds ($\eta_{\min} + 1 - 2\eta_{\max} = 1/2 > 0$) and the general rate~\eqref{eq:cor:allinformed-rate} specializes to
\[
\mathbb E\|\mathbf p_n - \mathbf p^{NE}\|_2^2 \;=\; O\!\left(n^{-\mu/2}(\log n)^3 \;+\; n^{-1/2}\log n\right).
\]
Since $\mu > 1$, the first term is dominated by the second, giving $\mathbb E\|\mathbf p_n - \mathbf p^{NE}\|_2^2 = O(n^{-1/2}\log n)$. Summing over $n$ from $1$ to $T$ and writing $\mathbf e_n \triangleq \mathbf p_n - \mathbf p^{NE}$,
\[
\sum_{n=1}^T \mathbb E\|\mathbf e_n\|_2^2 \;=\; O(\sqrt T\,\log T).
\]
Corollary~\ref{cor:dynamicbenchmark-twosided} then gives $\sum_{i=1}^N \Delta_i(\theta_i, T) \asymp \sum_{n=1}^T \mathbb E\|\mathbf e_n\|_2^2 = O(\sqrt T\,\log T)$, and Proposition~\ref{prop:dynamicbenchmark} gives $\Delta_i(\theta_i, T) = O(\sqrt T\,\log T)$ for every $i \in [N]$. This completes the proof.

\subsection{Proof of Theorem~\ref{thm:mixedmarketconvergence}}\label{proof:thm:mixedmarketconvergence}

We follow the structure of the proof of Theorem~\ref{thm:globalconvergencetocompetitiveoutcome} (multi-seller Lyapunov on the projected oblivious OLS error) combined with an auxiliary Lyapunov for the running-mean deviations that drive the informed-side dynamics. Throughout the proof, $\cF_n$ is the natural filtration of the price--demand history up to time $n$, and $h_{n+1,i} \triangleq x_{n+1,i}^\top S_{n,i}^{-1} x_{n+1,i}$ satisfies $h_{n+1,i} \le \delta_h$ eventually a.s.\ for any $\delta_h > 0$ by Lemma~\ref{lem:spectralboundadaptivecovariancematrix}. We work with the quantities
\begin{align*}
\tilde e_{n,i} &\;\triangleq\; \tilde\theta_{n,i}^{ob} - \theta_i^{*,ob} & &(i\in\cI^{ob}), &
\tilde e_{n,j}^{in} &\;\triangleq\; \hat\theta_{n,j}^{in} - \theta_j^{*,in} & &(j\in\cI^{in}),\\
u_{n,k} &\;\triangleq\; m_{n,k} - p_k^{NE} & &(k\in[N]), &
U_{n,k} &\;\triangleq\; u_{n,k}^2, & U_n^{tot} &\;\triangleq\; \sum_{k=1}^N U_{n,k},\\
W_{n,i} &\;\triangleq\; \frac{1}{n}\, \tilde e_{n,i}^\top S_{n,i}\, \tilde e_{n,i}, & & &
W_n^{ob} &\;\triangleq\; \sum_{i \in \cI^{ob}} W_{n,i}.
\end{align*}
Within this proof we further split
\[
\bar\Lambda \;=\; \bar\Delta + \bar\Theta \qquad\text{and}\qquad \bar\kappa \;=\; \bar c_{\mathrm{diag}} - \bar D
\]
into their four constituents:
\begin{align*}
\bar\Delta &\;\triangleq\; \max_{i \in \cI^{ob}}\sum_{j \in \cI^{in}}\frac{\gamma_{i,j}\,\gamma_j}{2|\beta_j|},\\
\bar\Theta &\;\triangleq\; L_\phi^{in,\theta}\,\max_{i \in \cI^{ob}}\sum_{j \in \cI^{in}}\gamma_{i,j},\\
\bar c_{\mathrm{diag}} &\;\triangleq\; \min\Bigl\{2 - L_\phi^{ob},\;\; 2 - L_\phi^{in,\theta} - \max_{j \in \cI^{in}}\frac{\gamma_j}{2|\beta_j|}\Bigr\},\\
\bar D &\;\triangleq\; \max_{k \in [N]}\sum_{j \in \cI^{in}\setminus\{k\}}\frac{\gamma_{j,k}}{2|\beta_j|},
\end{align*}
which arise as natural coefficients in the recurrences below. We also abbreviate the oblivious-side small-gain slack from condition~(iv) and the informed-side small-gain margin from condition~(ii) as
\begin{equation}\label{eq:K1-K2-proof}
K_1 \;\triangleq\; C_M - C_x\,(2\bar\gamma^{ob}\, L_\phi^{ob} + \bar\Lambda) \;>\; 0,
\qquad
K_2 \;\triangleq\; \bar\kappa \;>\; 0,
\end{equation}
two scalar constants that drive the per-$\lambda$ rate analysis below. With these, the bottleneck $c^*$ in the statement of Theorem~\ref{thm:mixedmarketconvergence} reads $c^* = \sup_{\lambda > 0}\min\bigl\{1 + (K_1 - \lambda L_\phi^{ob})/C_x^2,\; K_2 - C_x\bar\Psi/\lambda\bigr\}$.

\paragraph{Step 1. Oblivious OLS recursion.}
For each $i \in \cI^{ob}$, the regression residual against the pseudo-true parameter is
\begin{equation}\label{eq:proof-mm-eta-def}
\eta_{n,i} \;=\; \varepsilon_{n,i} \;+\; \sum_{j \ne i}\gamma_{i,j}\bigl(p_{n,j} - p_j^{NE}\bigr).
\end{equation}
The Sherman--Morrison derivation of Step~1 in the proof of Theorem~\ref{thm:globalconvergencetocompetitiveoutcome} applies verbatim to give
\begin{align}
W_{n+1,i}
&\;=\; \frac{n}{n+1}\, W_{n,i} \;-\; \frac{(x_{n+1,i}^\top \tilde e_{n,i})^2}{(n+1)(1 + h_{n+1,i})}\notag\\
&\quad +\;\frac{2\,\eta_{n+1,i}\, x_{n+1,i}^\top \tilde e_{n,i}}{(n+1)(1 + h_{n+1,i})} \;+\; \frac{h_{n+1,i}\,\eta_{n+1,i}^2}{(n+1)(1 + h_{n+1,i})}. \label{eq:proof-mm-Wrec-pre}
\end{align}
Take conditional expectation given $\cF_n$. The first term keeps $\tfrac{n}{n+1}\,W_{n,i}$ as is. For the second, condition~(iii) gives the persistent-excitation lower bound $\EE[(x_{n+1,i}^\top \tilde e_{n,i})^2\mid\cF_n] \ge C_M\,\norm{\tilde e_{n,i}}_2^2$, which, combined with $h_{n+1,i} \le \delta_h$ eventually a.s., yields
\begin{equation}\label{eq:proof-mm-pe-bd}
\frac{\EE[(x_{n+1,i}^\top \tilde e_{n,i})^2\mid\cF_n]}{(n+1)(1 + h_{n+1,i})} \;\ge\; \frac{C_M\,\norm{\tilde e_{n,i}}_2^2}{(n+1)(1+\delta_h)} \quad\text{eventually a.s.}
\end{equation}
The third term is the cross term, retained for separate treatment in Step~2. For the fourth term, $h/(1+h)\le h$ for $h\ge 0$, and Lemma~\ref{lem:spectralboundadaptivecovariancematrix} gives $h_{n+1,i} \le C_x^2/[(C_M - \delta)\,n]$ a.s.\ eventually, so together with $C_\eta^2 \triangleq \sup_{n,i}\EE[\eta_{n+1,i}^2\mid\cF_n] < \infty$ (bounded prices and demand noise),
\begin{equation}\label{eq:proof-mm-h-eta-bd}
\frac{\EE[h_{n+1,i}\,\eta_{n+1,i}^2\mid\cF_n]}{(n+1)(1 + h_{n+1,i})}
\;\le\; \frac{C_x^2\, C_\eta^2}{(C_M - \delta)}\cdot \frac{1}{n(n+1)} \quad\text{eventually a.s.}
\end{equation}
Combining \eqref{eq:proof-mm-pe-bd}--\eqref{eq:proof-mm-h-eta-bd} in~\eqref{eq:proof-mm-Wrec-pre}, eventually a.s.,
\begin{align}
\EE[W_{n+1,i}\mid\cF_n]
&\;\le\; \frac{n}{n+1}\,W_{n,i} \;-\; \frac{C_M\,\norm{\tilde e_{n,i}}_2^2}{(n+1)(1+\delta_h)}\notag\\
&\quad +\;\frac{2\,\bigl|\EE[\eta_{n+1,i}\, x_{n+1,i}^\top \tilde e_{n,i}\mid\cF_n]\bigr|}{n+1} \;+\; \frac{C_1}{n(n+1)},\label{eq:proof-mm-Wrec-init}
\end{align}
where $C_1 \triangleq C_x^2\,C_\eta^2/(C_M - \delta)$. The cross-term bound is the work of Step~2.

\paragraph{Step 2. Bounding the cross term.}
We expand $\EE[\eta_{n+1,i}\, x_{n+1,i}^\top \tilde e_{n,i}\mid\cF_n]$ using the definition~\eqref{eq:proof-mm-eta-def} of $\eta_{n+1,i}$. Since $\varepsilon_{n+1,i}$ has zero conditional mean and is independent of everything else, that term drops. For each $j\ne i$, the price $p_{n+1,j} = \phi_j(\theta_{n,j};\,\cdot) + z_{n+1,j}$ is the sum of an $\cF_n$-measurable greedy price and a mean-zero exploration shock $z_{n+1,j}$ that is independent of $z_{n+1,i}$, while $x_{n+1,i}$ depends only on $z_{n+1,i}$ (and on $\cF_n$-measurable quantities). Hence $z_{n+1,j}$ and $x_{n+1,i}^\top \tilde e_{n,i}$ are conditionally independent for $j\ne i$, and the conditional expectation factorizes:
\begin{align}
\EE[\eta_{n+1,i}\, x_{n+1,i}^\top \tilde e_{n,i}\mid\cF_n]
&\;=\; \sum_{j\ne i}\gamma_{i,j}\,\EE[p_{n+1,j} - p_j^{NE}\mid\cF_n]\,\cdot\,\EE[x_{n+1,i}^\top \tilde e_{n,i}\mid\cF_n].\label{eq:proof-mm-cross-factor}
\end{align}
The second factor is bounded by Cauchy--Schwarz and the regressor envelope:
\begin{equation}\label{eq:proof-mm-cs}
\bigl|\EE[x_{n+1,i}^\top \tilde e_{n,i}\mid\cF_n]\bigr| \;\le\; C_x\,\norm{\tilde e_{n,i}}_2.
\end{equation}
The first factor depends on whether $j$ is oblivious or informed. For $j\in\cI^{ob}\setminus\{i\}$, the greedy price is $\phi^{ob}(\hat\theta_{n,j}^{ob})$, and since $\phi^{ob}(\theta_j^{*,ob}) = p_j^{NE}$,
\begin{equation}\label{eq:proof-mm-ob-comp}
\bigl|\EE[p_{n+1,j} - p_j^{NE}\mid\cF_n]\bigr| \;=\; \bigl|\phi^{ob}(\hat\theta_{n,j}^{ob}) - \phi^{ob}(\theta_j^{*,ob})\bigr| \;\le\; L_\phi^{ob}\,\norm{\tilde e_{n,j}}_2,
\end{equation}
using that projection onto the convex set $\Theta_j^{ob}$ is non-expansive. For $j\in\cI^{in}$, the greedy price is $\phi_j^{in}(\hat\theta_{n,j}^{in};\, \mathbf m_{n,-j})$, with $\phi_j^{in}(\theta_j^{*,in};\, \mathbf p_{-j}^{NE}) = p_j^{NE}$ by the Nash first-order condition. Decompose into parameter and forecast variations:
\begin{align}
\bigl|\phi_j^{in}(\hat\theta_{n,j}^{in};\, \mathbf m_{n,-j}) - \phi_j^{in}(\theta_j^{*,in};\, \mathbf p_{-j}^{NE})\bigr|
&\;\le\; \bigl|\phi_j^{in}(\hat\theta_{n,j}^{in};\, \mathbf m_{n,-j}) - \phi_j^{in}(\theta_j^{*,in};\, \mathbf m_{n,-j})\bigr|\notag\\
&\quad +\;\bigl|\phi_j^{in}(\theta_j^{*,in};\, \mathbf m_{n,-j}) - \phi_j^{in}(\theta_j^{*,in};\, \mathbf p_{-j}^{NE})\bigr|.\notag
\end{align}
The first piece is bounded by the projection-box Lipschitz constant $L_\phi^{in,\theta}$ in $\theta$. The second piece uses the linearity of $\phi_j^{in}(\theta_j^{*,in};\cdot)$ in $\mathbf m$ with $\partial_{m_k}\phi_j^{in} = \gamma_{j,k}/(2|\beta_j|)$. Hence
\begin{equation}\label{eq:proof-mm-in-comp}
\bigl|\EE[p_{n+1,j} - p_j^{NE}\mid\cF_n]\bigr| \;\le\; L_\phi^{in,\theta}\,\norm{\tilde e_{n,j}^{in}}_2 \;+\; \sum_{k\ne j}\frac{\gamma_{j,k}}{2|\beta_j|}\,|u_{n,k}|.
\end{equation}
Inserting \eqref{eq:proof-mm-cs}, \eqref{eq:proof-mm-ob-comp}, and~\eqref{eq:proof-mm-in-comp} into~\eqref{eq:proof-mm-cross-factor},
\begin{align}
\bigl|\EE[\eta_{n+1,i} x_{n+1,i}^\top \tilde e_{n,i}\mid\cF_n]\bigr|
&\;\le\; C_x\,\norm{\tilde e_{n,i}}_2\Biggl[\,\sum_{j\in\cI^{ob}\setminus\{i\}} \gamma_{i,j}\, L_\phi^{ob}\, \norm{\tilde e_{n,j}}_2\notag\\
&\qquad +\; \sum_{j\in\cI^{in}}\gamma_{i,j}\,\Bigl(L_\phi^{in,\theta}\,\norm{\tilde e_{n,j}^{in}}_2 + \sum_{k\ne j}\frac{\gamma_{j,k}}{2|\beta_j|}\,|u_{n,k}|\Bigr)\,\Biggr]. \label{eq:proof-mm-cross-unyoung}
\end{align}
We next apply Young's inequality to each product on the right-hand side of~\eqref{eq:proof-mm-cross-unyoung}, separately on the ob--ob terms, the ob--in parameter terms, and the ob--in running-mean terms. We collect each family in turn.

\emph{ob--ob.} Young's gives
\begin{align*}
\sum_{j\in\cI^{ob}\setminus\{i\}} \gamma_{i,j}\, L_\phi^{ob}\,\norm{\tilde e_{n,i}}_2\,\norm{\tilde e_{n,j}}_2
&\;\le\; \frac{L_\phi^{ob}}{2}\sum_{j\in\cI^{ob}\setminus\{i\}}\gamma_{i,j}\,\bigl(\norm{\tilde e_{n,i}}_2^2 + \norm{\tilde e_{n,j}}_2^2\bigr).
\end{align*}
Summing over $i\in\cI^{ob}$ and re-indexing the double sum into row and column sums of $\gamma$ on $\cI^{ob}\times\cI^{ob}$,
\[
\sum_{i\in\cI^{ob}}\sum_{j\in\cI^{ob}\setminus\{i\}}\gamma_{i,j}\,\bigl(\norm{\tilde e_{n,i}}_2^2 + \norm{\tilde e_{n,j}}_2^2\bigr) \;\le\; 2\bar\gamma^{ob}\,\sum_{i\in\cI^{ob}}\norm{\tilde e_{n,i}}_2^2
\]
by the definition of $\bar\gamma^{ob}$, analogously to the step for Theorem~\ref{thm:globalconvergencetocompetitiveoutcome}.

\emph{ob--in (parameter).} Young's gives
\[
\sum_{j\in\cI^{in}}\gamma_{i,j}\, L_\phi^{in,\theta}\,\norm{\tilde e_{n,i}}_2\,\norm{\tilde e_{n,j}^{in}}_2 \;\le\; \frac{L_\phi^{in,\theta}}{2}\sum_{j\in\cI^{in}}\gamma_{i,j}\,\bigl(\norm{\tilde e_{n,i}}_2^2 + \norm{\tilde e_{n,j}^{in}}_2^2\bigr).
\]
Summing over $i\in\cI^{ob}$ yields a row-sum coefficient $\bar\Theta = L_\phi^{in,\theta}\,\max_{i\in\cI^{ob}}\sum_{j\in\cI^{in}}\gamma_{i,j}$ on $\sum_i\norm{\tilde e_{n,i}}_2^2$ and a column-sum coefficient
\[
\bar\Theta^{\mathrm{col}} \;\triangleq\; L_\phi^{in,\theta}\,\max_{j\in\cI^{in}}\sum_{i\in\cI^{ob}}\gamma_{i,j}
\]
on $\sum_j\norm{\tilde e_{n,j}^{in}}_2^2$.

\emph{ob--in (running mean).} Young's gives
\begin{align*}
\sum_{j\in\cI^{in}}\gamma_{i,j}\sum_{k\ne j}\frac{\gamma_{j,k}}{2|\beta_j|}\,\norm{\tilde e_{n,i}}_2\,|u_{n,k}|
&\;\le\; \frac{1}{2}\sum_{j\in\cI^{in}}\sum_{k\ne j}\gamma_{i,j}\,\frac{\gamma_{j,k}}{2|\beta_j|}\,\bigl(\norm{\tilde e_{n,i}}_2^2 + U_{n,k}\bigr).
\end{align*}
Summing over $i\in\cI^{ob}$, the diagonal piece gives coefficient $\bar\Delta$ on $\sum_i\norm{\tilde e_{n,i}}_2^2$ by the definition of $\bar\Delta$, while re-indexing the off-diagonal piece over $k$ first,
\[
\sum_{i\in\cI^{ob}}\sum_{j\in\cI^{in}}\sum_{k\ne j} \gamma_{i,j}\,\frac{\gamma_{j,k}}{2|\beta_j|}\,U_{n,k} \;=\; \sum_{k\in[N]}\,U_{n,k}\sum_{i\in\cI^{ob}}\sum_{j\in\cI^{in}\setminus\{k\}}\frac{\gamma_{i,j}\,\gamma_{j,k}}{2|\beta_j|} \;\le\; \bar\Psi\,U_n^{tot}
\]
by the definition of $\bar\Psi$.

\emph{Aggregation.} Plugging the three families back into~\eqref{eq:proof-mm-Wrec-init} and summing over $i\in\cI^{ob}$,
\begin{align}
\sum_{i\in\cI^{ob}} \EE[W_{n+1,i}\mid\cF_n]
&\;\le\; \frac{n}{n+1}\,W_n^{ob}\notag\\
&\quad -\;\frac{1}{n+1}\Bigl[\frac{C_M}{1+\delta_h} - C_x\,\bigl(2\bar\gamma^{ob}\,L_\phi^{ob} + \bar\Delta + \bar\Theta\bigr)\Bigr]\sum_{i\in\cI^{ob}}\norm{\tilde e_{n,i}}_2^2\notag\\
&\quad +\;\frac{C_x\,\bar\Psi}{n+1}\,U_n^{tot} \;+\; \frac{C_x\,\bar\Theta^{\mathrm{col}}}{n+1}\sum_{j\in\cI^{in}}\norm{\tilde e_{n,j}^{in}}_2^2 \;+\; \frac{C_1\,|\cI^{ob}|}{n(n+1)}.\label{eq:proof-mm-Wrec-agg}
\end{align}

\paragraph{Step 3. Running-mean recursion.}
For every $k\in[N]$, expand $m_{n+1,k} = \tfrac{n}{n+1}\,m_{n,k} + \tfrac{1}{n+1}\,p_{n+1,k}$, so subtracting $p_k^{NE}$,
\[
u_{n+1,k} \;=\; \frac{n}{n+1}\,u_{n,k} + \frac{1}{n+1}\,(p_{n+1,k} - p_k^{NE}).
\]
Squaring and taking conditional expectation,
\begin{align}
\EE[U_{n+1,k}\mid\cF_n]
&\;=\; \Bigl(\frac{n}{n+1}\Bigr)^{\!2}\,U_{n,k} \;+\; \frac{2n}{(n+1)^2}\,u_{n,k}\,\EE[p_{n+1,k} - p_k^{NE}\mid\cF_n]\notag\\
&\quad +\; \frac{1}{(n+1)^2}\,\EE[(p_{n+1,k} - p_k^{NE})^2\mid\cF_n].\label{eq:proof-mm-Urec-raw}
\end{align}
Using $(\tfrac{n}{n+1})^2 \le 1 - \tfrac{2}{n+1} + \tfrac{1}{(n+1)^2}$ and bounding the last term by a price envelope $C_p^2 \triangleq \sup_{n,k}\EE[(p_{n+1,k} - p_k^{NE})^2\mid\cF_n] < \infty$ (bounded prices), \eqref{eq:proof-mm-Urec-raw} simplifies to
\begin{align}
\EE[U_{n+1,k}\mid\cF_n]
&\;\le\; U_{n,k} \;-\; \frac{2}{n+1}\,U_{n,k}\notag\\
&\quad +\;\frac{2}{n+1}\,|u_{n,k}|\,\bigl|\EE[p_{n+1,k} - p_k^{NE}\mid\cF_n]\bigr|
\;+\;\frac{U_{n,k} + C_p^2}{(n+1)^2}.\label{eq:proof-mm-Urec-mid}
\end{align}
The cross-term bound now distinguishes $k\in\cI^{ob}$ and $k\in\cI^{in}$. For $k\in\cI^{ob}$, \eqref{eq:proof-mm-ob-comp} gives $|\EE[p_{n+1,k}-p_k^{NE}\mid\cF_n]|\le L_\phi^{ob}\,\norm{\tilde e_{n,k}}_2$, and Young's inequality
\[
|u_{n,k}|\cdot L_\phi^{ob}\,\norm{\tilde e_{n,k}}_2 \;\le\; \frac{L_\phi^{ob}}{2}\,\bigl(U_{n,k} + \norm{\tilde e_{n,k}}_2^2\bigr)
\]
yields the per-row bound
\begin{equation}\label{eq:proof-mm-Urec-ob}
\EE[U_{n+1,k}\mid\cF_n] \;\le\; U_{n,k} \;-\; \frac{2 - L_\phi^{ob}}{n+1}\,U_{n,k} \;+\; \frac{L_\phi^{ob}}{n+1}\,\norm{\tilde e_{n,k}}_2^2 \;+\; \frac{U_{n,k} + C_p^2}{(n+1)^2}.
\end{equation}
For $k\in\cI^{in}$, \eqref{eq:proof-mm-in-comp} gives the two-piece bound. Young's on the parameter piece gives
\[
|u_{n,k}|\cdot L_\phi^{in,\theta}\,\norm{\tilde e_{n,k}^{in}}_2 \;\le\; \frac{L_\phi^{in,\theta}}{2}\,\bigl(U_{n,k} + \norm{\tilde e_{n,k}^{in}}_2^2\bigr),
\]
and on each running-mean piece,
\[
|u_{n,k}|\cdot \frac{\gamma_{k,\ell}}{2|\beta_k|}\,|u_{n,\ell}| \;\le\; \frac{\gamma_{k,\ell}}{2|\beta_k|}\,\frac{U_{n,k} + U_{n,\ell}}{2} \qquad (\ell\ne k).
\]
Summing the latter over $\ell\ne k$ and combining,
\[
|u_{n,k}|\cdot\bigl|\EE[p_{n+1,k}-p_k^{NE}\mid\cF_n]\bigr|
\;\le\; \frac{1}{2}\Bigl[L_\phi^{in,\theta} + \frac{\gamma_k}{2|\beta_k|}\Bigr]U_{n,k} + \frac{L_\phi^{in,\theta}}{2}\,\norm{\tilde e_{n,k}^{in}}_2^2 + \frac{1}{2}\sum_{\ell\ne k}\frac{\gamma_{k,\ell}}{2|\beta_k|}\,U_{n,\ell}.
\]
Plugging into~\eqref{eq:proof-mm-Urec-mid} gives, for $k\in\cI^{in}$,
\begin{align}
\EE[U_{n+1,k}\mid\cF_n]
&\;\le\; U_{n,k} \;-\; \frac{2 - L_\phi^{in,\theta} - \gamma_k/(2|\beta_k|)}{n+1}\,U_{n,k}\notag\\
&\quad +\;\frac{1}{n+1}\sum_{\ell\ne k}\frac{\gamma_{k,\ell}}{2|\beta_k|}\,U_{n,\ell} \;+\; \frac{L_\phi^{in,\theta}}{n+1}\,\norm{\tilde e_{n,k}^{in}}_2^2 \;+\; \frac{U_{n,k}+C_p^2}{(n+1)^2}.\label{eq:proof-mm-Urec-in}
\end{align}
Summing the per-row bounds~\eqref{eq:proof-mm-Urec-ob}--\eqref{eq:proof-mm-Urec-in} over $k\in[N]$, applying
\[
\min\,\left\{2 - L_\phi^{ob},\;\, 2 - L_\phi^{in,\theta} - \max_{j\in\cI^{in}}\gamma_j/(2|\beta_j|)\right\} \;=\; \bar c_{\mathrm{diag}}
\]
as a uniform lower bound on the diagonal coefficient, and re-indexing the informed--informed cross term
\[
\sum_{k\in\cI^{in}}\sum_{\ell\ne k}\frac{\gamma_{k,\ell}}{2|\beta_k|}\,U_{n,\ell} \;=\; \sum_{\ell\in[N]}\,U_{n,\ell}\sum_{k\in\cI^{in}\setminus\{\ell\}}\frac{\gamma_{k,\ell}}{2|\beta_k|} \;\le\; \bar D\,U_n^{tot}
\]
by the definition of $\bar D$, we arrive at
\begin{align}
\sum_{k=1}^N \EE[U_{n+1,k}\mid\cF_n]
&\;\le\; U_n^{tot} \;-\; \frac{\bar c_{\mathrm{diag}} - \bar D}{n+1}\,U_n^{tot}
\;+\; \frac{L_\phi^{ob}}{n+1}\sum_{k\in\cI^{ob}}\norm{\tilde e_{n,k}}_2^2\notag\\
&\quad +\;\frac{L_\phi^{in,\theta}}{n+1}\sum_{k\in\cI^{in}}\norm{\tilde e_{n,k}^{in}}_2^2 \;+\; \frac{N\,C_2}{(n+1)^2},\label{eq:proof-mm-Urec-agg}
\end{align}
where $C_2 \triangleq \sup_{n,k} U_{n,k} + C_p^2 < \infty$. Condition~(ii) guarantees $\bar c_{\mathrm{diag}} - \bar D > 0$, so the diagonal contraction is non-trivial.

\paragraph{Step 4. Combined Lyapunov contraction and the per-$\lambda$ rate.}
For $\lambda > 0$ to be chosen, define the joint Lyapunov function
\[
\cL_n \;\triangleq\; W_n^{ob} \;+\; \lambda\,U_n^{tot}.
\]
Multiplying~\eqref{eq:proof-mm-Urec-agg} by $\lambda$, adding to~\eqref{eq:proof-mm-Wrec-agg}, and using the identity $\tfrac{n}{n+1}\,W_n^{ob} = W_n^{ob} - \tfrac{1}{n+1}\,W_n^{ob}$,
\begin{equation}\label{eq:proof-mm-Lstep4-bracket}
\EE[\cL_{n+1}\mid\cF_n] \;\le\; \cL_n \;-\; \frac{1}{n+1}\,\Bigl[\,W_n^{ob} + A(\lambda)\sum_{i\in\cI^{ob}}\norm{\tilde e_{n,i}}_2^2 + B(\lambda)\,U_n^{tot}\,\Bigr] \;+\; R_n,
\end{equation}
where
\begin{align}
A(\lambda) &\;\triangleq\; \frac{C_M}{1+\delta_h} - C_x\,\bigl(2\bar\gamma^{ob}\,L_\phi^{ob} + \bar\Delta + \bar\Theta\bigr) - \lambda\,L_\phi^{ob},\label{eq:proof-mm-A}\\
B(\lambda) &\;\triangleq\; \lambda\,(\bar c_{\mathrm{diag}} - \bar D) \;-\; C_x\,\bar\Psi,\label{eq:proof-mm-B}
\end{align}
and the residual is
\begin{equation}\label{eq:proof-mm-R}
R_n \;\triangleq\; \frac{C_1\,|\cI^{ob}| + \lambda\,N\,C_2}{n(n+1)} \;+\; \frac{C_x\,\bar\Theta^{\mathrm{col}} + \lambda\,L_\phi^{in,\theta}}{n+1}\sum_{j\in\cI^{in}}\norm{\tilde e_{n,j}^{in}}_2^2.
\end{equation}
Both $A(\lambda) > 0$ and $B(\lambda) > 0$ hold iff $\lambda$ lies in the open interval
\begin{equation}\label{eq:proof-mm-lambda-range}
\frac{C_x\,\bar\Psi}{\bar c_{\mathrm{diag}} - \bar D} \;<\; \lambda \;<\; \frac{C_M/(1+\delta_h) - C_x\,(2\bar\gamma^{ob}\,L_\phi^{ob} + \bar\Delta + \bar\Theta)}{L_\phi^{ob}}.
\end{equation}
A feasible $\lambda$ exists iff the upper bound exceeds the lower bound, which after letting $\delta_h \downarrow 0$ is exactly condition~(iv). 

Fix any such $\lambda$. By Lemma~\ref{lem:spectralboundadaptivecovariancematrix}, the spectral upper bound $W_n^{ob} \le C_x^2\sum_{i\in\cI^{ob}}\norm{\tilde e_{n,i}}_2^2$ holds, so $\sum_i\norm{\tilde e_{n,i}}_2^2 \ge W_n^{ob}/C_x^2$. Substituting into the bracket of~\eqref{eq:proof-mm-Lstep4-bracket},
\begin{align*}
W_n^{ob} + A(\lambda)\sum_i\norm{\tilde e_{n,i}}_2^2 + B(\lambda)\,U_n^{tot}
&\;\ge\; \Bigl(1 + \frac{A(\lambda)}{C_x^2}\Bigr)\,W_n^{ob} \;+\; \frac{B(\lambda)}{\lambda}\,\lambda\,U_n^{tot}\\
&\;\ge\; c^*(\lambda)\,\cL_n,
\end{align*}
where
\begin{equation}\label{eq:proof-mm-cstar-lambda}
c^*(\lambda) \;\triangleq\; \min\,\left\{1 + \frac{A(\lambda)}{C_x^2},\;\, \frac{B(\lambda)}{\lambda}\right\} \;=\; \min\,\left\{1 + \frac{K_1 - \lambda\,L_\phi^{ob}}{C_x^2},\;\, K_2 - \frac{C_x\,\bar\Psi}{\lambda}\right\} \;>\; 0,
\end{equation}
using $K_1$ from~\eqref{eq:K1-K2-proof}, after letting $\delta_h \downarrow 0$ in $A(\lambda)$. Substituting back into~\eqref{eq:proof-mm-Lstep4-bracket},
\begin{equation}\label{eq:proof-mm-Lcontract}
\EE[\cL_{n+1}\mid\cF_n] \;\le\; \cL_n \;-\; \frac{c^*(\lambda)}{n+1}\,\cL_n \;+\; R_n.
\end{equation}

\paragraph{Step 5. Almost-sure convergence.}
By Lemma~\ref{lem:proof-mm-olsrate}(c), $\sum_n \tfrac{1}{n+1}\sum_j\norm{\tilde e_{n,j}^{in}}_2^2 < \infty$ a.s., which together with the $O(1/n^2)$ term in~\eqref{eq:proof-mm-R} gives $\sum_n R_n < \infty$ a.s. Applying the Robbins--Siegmund theorem~\citep{robbinsConvergenceTheoremNonnegative1971} to~\eqref{eq:proof-mm-Lcontract} yields
\[
\cL_n \to 0 \quad\text{a.s.}, \qquad \sum_{n}\frac{c^*(\lambda)}{n+1}\,\cL_n < \infty \quad\text{a.s.,}
\]
so $W_n^{ob}\to 0$ and $U_n^{tot}\to 0$ a.s. Lemma~\ref{lem:spectralboundadaptivecovariancematrix} then gives $\norm{\tilde e_{n,i}}_2\to 0$ a.s.\ for every $i\in\cI^{ob}$, hence $\hat\theta_{n,i}^{ob}\to\theta_i^{*,ob}$ a.s., and continuity of $\phi^{ob}$ gives $\tilde p_{n,i}\to p_i^{NE}$ a.s. Lemma~\ref{lem:proof-mm-olsrate}(b) gives $\hat\theta_{n,j}^{in}\to\theta_j^{*,in}$ a.s.\ for $j\in\cI^{in}$. Finally, $U_n^{tot}\to 0$ gives $m_{n,k}\to p_k^{NE}$ a.s.\ for every $k$, so by joint continuity of $\phi_j^{in}(\cdot;\cdot)$,
\[
\tilde p_{n,j} \;=\; \phi_j^{in}(\hat\theta_{n,j}^{in};\, \mathbf m_{n,-j}) \;\longrightarrow\; \phi_j^{in}(\theta_j^{*,in};\, \mathbf p_{-j}^{NE}) \;=\; p_j^{NE} \quad\text{a.s.}
\]
for every $j\in\cI^{in}$.

\paragraph{Step 6. Polynomial rate in expectation.}
Take unconditional expectation in~\eqref{eq:proof-mm-Lcontract}:
\[
\EE[\cL_{n+1}] \;\le\; \Bigl(1 - \frac{c^*(\lambda)}{n+1}\Bigr)\,\EE[\cL_n] \;+\; \EE[R_n].
\]
By Lemma~\ref{lem:proof-mm-olsrate}(a), $\sum_j\EE\norm{\tilde e_{n,j}^{in}}_2^2 = O(n^{\eta_{\max}-1}\log n)$, so $\EE[R_n] = O(n^{\eta_{\max}-2}\log n)$ (the $O(1/n^2)$ piece is absorbed since $\eta_{\max}<1/2<1$). Applying Lemma~\ref{lem:nonhomogeneouslinearrecursion}(b) with $a = c^*(\lambda)$ and $r = 2 - \eta_{\max}$ (so that $r - 1 = 1 - \eta_{\max}$) yields
\[
\EE[\cL_n] \;=\;
\begin{cases}
O\bigl(n^{\eta_{\max} - 1}\,\log n\bigr), & c^*(\lambda) > 1 - \eta_{\max},\\[3pt]
O\bigl(n^{\eta_{\max} - 1}\,(\log n)^2\bigr), & c^*(\lambda) = 1 - \eta_{\max},\\[3pt]
O\bigl(n^{-c^*(\lambda)}\bigr), & c^*(\lambda) < 1 - \eta_{\max}.
\end{cases}
\]
Optimizing over $\lambda$ in the feasible interval~\eqref{eq:proof-mm-lambda-range} replaces $c^*(\lambda)$ by $c^* \triangleq \sup_\lambda c^*(\lambda)$, and the closed form for $c^*$ is given in Lemma~\ref{lem:cstar-closed} (Appendix~\ref{proof:thm:mixedmarketconvergence-lemmas}). The trichotomy stated in the theorem follows, with the greedy-price MSE rate inheriting both bottlenecks since
\[
\tilde p_{n,j} \;=\; \phi_j^{in}(\hat\theta_{n,j}^{in};\, \mathbf m_{n,-j})
\]
inherits the $\cL_n$ rate through $\mathbf m$ and the informed-OLS rate through $\hat\theta$, with overall $\min$ being $\min\,\left\{c^*,\, 1-\eta_{\max}\right\}$.

\paragraph{Step 7. Realized-price MSE for informed sellers.}
Fix $j\in\cI^{in}$. Decompose $p_{n,j} - p_j^{NE} = (\tilde p_{n,j} - p_j^{NE}) + z_{n,j}$ with $\EE[z_{n,j}\mid\cF_{n-1}]=0$, $\Var(z_{n,j}\mid\cF_{n-1})=\nu_{n,j}^2$, and $z_{n,j}$ independent of $\tilde p_{n,j}\in\cF_{n-1}$. The cross term vanishes in expectation, giving
\begin{equation}\label{eq:proof-mm-realized-decomp}
\EE\!\left[(p_{n,j} - p_j^{NE})^2\right] \;=\; \EE\!\left[(\tilde p_{n,j} - p_j^{NE})^2\right] \;+\; \nu_{n,j}^2.
\end{equation}
For the greedy-price piece, the Lipschitz decomposition that produced~\eqref{eq:proof-mm-in-comp} coordinate-wise gives
\[
(\tilde p_{n,j} - p_j^{NE})^2 \;\le\; 2\,(L_\phi^{in,\theta})^2\,\norm{\tilde e_{n,j}^{in}}_2^2 \;+\; 2\,\Bigl(\sum_{k\ne j}\frac{\gamma_{j,k}}{2|\beta_j|}\,|u_{n,k}|\Bigr)^{\!2} \;\le\; C_3\,\bigl(\norm{\tilde e_{n,j}^{in}}_2^2 + U_n^{tot}\bigr)
\]
for a constant $C_3$ depending only on $L_\phi^{in,\theta}$ and the projection-box ratios $\gamma_{j,k}/(2|\beta_j|)$. Taking expectation and applying Lemma~\ref{lem:proof-mm-olsrate}(a) and Step~6,
\[
\EE\!\left[(\tilde p_{n,j} - p_j^{NE})^2\right] \;=\; O\!\left(n^{\eta_{\max}-1}\log n\right) \;+\; O\!\left(\EE[\cL_n]\right) \;=\; \tilde O\!\left(n^{-\min\,\left\{c^*,\, 1-\eta_{\max}\right\}}\right),
\]
with the logarithmic factors as in the trichotomy of Step~6 (and the $\log n$ from the OLS contribution in the cases where it is not strictly dominated). The exploration tail is $\nu_{n,j}^2 = \Theta(n^{-\eta_j})$ by the standing assumption in the opener of Section~\ref{sec:informed-strategic-choice}. Combining with~\eqref{eq:proof-mm-realized-decomp},
\[
\EE\!\left[(p_{n,j} - p_j^{NE})^2\right] \;=\; \tilde O\!\left(n^{-\min\,\left\{c^*,\, 1-\eta_{\max}\right\}}\right) \;+\; O\!\left(n^{-\eta_j}\right).
\]
By condition~(i), $\eta_{\max} < 1/2$, so $1 - \eta_{\max} > 1/2 > \eta_{\max} \ge \eta_j$, hence $\eta_j \le 1-\eta_{\max}$ and $\min\,\left\{1-\eta_{\max},\, \eta_j\right\} = \eta_j$. The two terms therefore combine to $\tilde O(n^{-\min\,\left\{c^*,\, \eta_j\right\}})$, completing the proof of Theorem~\ref{thm:mixedmarketconvergence}. \hfill$\square$

\subsection{Proof of Proposition~\ref{prop:strict-dom}}\label{proof:prop:strict-dom}

By Corollary~\ref{cor:asymp-S} (stated and proved below in Appendix~\ref{proof:prop:strict-dom-lemmas}), under each strategy profile $(s_1, s_2)$ the realized surplus-capture at horizon $T$ converges a.s.:
\[
\lim_{T\to\infty} S_{T,i} \;\stackrel{\text{a.s.}}{=}\;
\begin{cases}
-\,|\beta_i|\,\nu_i^2/(\Pi_i^{C}-\Pi_i^{NE}) & \text{if } s_i = \textsf{oblivious},\\[2pt]
0 & \text{if } s_i = \textsf{informed},
\end{cases}
\]
where the limit depends only on $s_i$ (not on $s_{-i}$). Since $\lim_T S_{T,i}$ exists a.s., $S_i = \liminf_T S_{T,i}$ in Equation~\eqref{eq:surplus-capture} equals the same limit a.s. Therefore, a.s.,
\[
S_i(\textsf{informed},\, s_{-i}) \;=\; 0 \;>\; -\,|\beta_i|\,\nu_i^2/(\Pi_i^{C}-\Pi_i^{NE}) \;=\; S_i(\textsf{oblivious},\, s_{-i})
\quad \text{for every } s_{-i},
\]
which is strict dominance of \textsf{informed} over \textsf{oblivious} for each seller. The unique strict pure-strategy Nash equilibrium of the strategy game is therefore $(\textsf{informed}, \textsf{informed})$. \hfill$\square$

\subsection{Lemmas for the proof of Proposition~\ref{prop:strict-dom}}\label{proof:prop:strict-dom-lemmas}

Write $\epsilon_{n,i}\triangleq\tilde p_{n,i} - p_i^{NE}$ for the greedy-price drift error of seller~$i$.

\begin{lemma}[Asymptotic per-period revenue]\label{lem:asymp-revenue}
Adopt condition~\eqref{eq:explore-bound}. Suppose $\tilde p_{n,i}\to p_i^{NE}$ a.s.\ for every $i\in[N]$. Then, the \emph{realized} per-period revenue time-average
\[
\bar R_{T,i} \;\triangleq\; \frac{1}{T}\sum_{n=1}^T R_{n,i}
\]
satisfies
\begin{equation}\label{eq:asymp-revenue}
\bar R_{T,i} \;=\; \Pi_i^{NE} \;-\; |\beta_i|\,\bar\nu_{T,i}^2 \;+\; o(1) \qquad \text{a.s.\ as $T\to\infty$,}
\end{equation}
where $\bar\nu_{T,i}^2 \triangleq T^{-1}\sum_{n=1}^T \nu_{n,i}^2$.
\end{lemma}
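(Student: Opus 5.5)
The approach is to decompose the realized revenue as a conditional expectation plus a martingale difference, in the same way as Step~3 of the proof of Proposition~\ref{prop:variancedominancetwosellercase}. Realized revenue is $R_{n,i}=p_{n,i}d_{n,i}$ with $d_{n,i}=\alpha_i+\beta_i p_{n,i}+\sum_{j\neq i}\gamma_{i,j}p_{n,j}+\varepsilon_{n,i}$. Under condition~\eqref{eq:explore-bound} and $\tilde p_{n,i}\to p_i^{NE}$, clipping is eventually inactive, so eventually $p_{n,i}=\tilde p_{n,i}+z_{n,i}$ holds exactly for every seller. The finitely many early periods contribute $O(1/T)$ to the average and can be ignored.

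The first step computes $\mathbb E[R_{n,i}\mid\mathcal F_{n-1}]$. We use three facts: $\tilde{\mathbf p}_n$ is $\mathcal F_{n-1}$-measurable; the $z_{n,k}$ are conditionally mean zero and independent across $k$; and $\varepsilon_{n,i}$ is independent with mean zero. Expanding the product, the cross terms $z_{n,i}z_{n,j}$ for $j\neq i$ and $p_{n,i}\varepsilon_{n,i}$ vanish in conditional expectation. This leaves
\[
\mathbb E[R_{n,i}\mid\mathcal F_{n-1}]=R_i(\tilde p_{n,i},\tilde{\mathbf p}_{n,-i})+\beta_i\nu_{n,i}^2 ,
\]
which is exactly the identity of Proposition~\ref{prop:costlinearexploration}, now with the opponent's greedy price in place of its realized price.

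The second step handles the martingale part. Set $\xi_{n,i}\triangleq R_{n,i}-\mathbb E[R_{n,i}\mid\mathcal F_{n-1}]$. Because prices lie in $[l,u]$ and the noise has bounded support, the $\xi_{n,i}$ are martingale differences with uniformly bounded conditional second moments. Hence $\sum_n \xi_{n,i}/n$ converges a.s., and Kronecker's lemma gives $T^{-1}\sum_{n\le T}\xi_{n,i}\to0$ a.s.

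The third step handles the drift. The map $\mathbf q\mapsto R_i(q_i,\mathbf q_{-i})$ is a polynomial, hence continuous and uniformly Lipschitz on $[l,u]^N$. Since $\tilde{\mathbf p}_n\to\mathbf p^{NE}$ a.s., we get $R_i(\tilde p_{n,i},\tilde{\mathbf p}_{n,-i})\to \Pi_i^{NE}$ a.s., and Ces\`aro averaging gives $T^{-1}\sum_{n\le T} R_i(\tilde{\mathbf p}_n)\to\Pi_i^{NE}$ a.s. The tax term averages to exactly $\beta_i\bar\nu_{T,i}^2=-|\beta_i|\bar\nu_{T,i}^2$, with no convergence of $\nu_{n,i}^2$ required. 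Summing the three pieces yields \eqref{eq:asymp-revenue}.

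The main obstacle I expect is the bookkeeping around clipping and conditional independence. The first-step computation needs the exact decomposition $p_{n,k}=\tilde p_{n,k}+z_{n,k}$ with $z_{n,k}$ conditionally mean zero; truncation would bias the mean. Condition~\eqref{eq:explore-bound} resolves this only after a random time $n_0(\omega)$, namely once $|\tilde p_{n,i}-p_i^{NE}|<\min\{p_i^{NE}-l,u-p_i^{NE}\}-\delta_i$. I would therefore argue pathwise. Let $\tilde R_{n,i}$ denote the revenue computed from unclipped prices $\tilde p_{n,k}+z_{n,k}$. On $\{n\ge n_0\}$ the difference $R_{n,i}-\tilde R_{n,i}$ is zero, and before $n_0$ it is bounded, so its Ces\`aro average vanishes. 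I would then apply the second step to $\tilde R_{n,i}$, whose conditional mean is exactly the one computed in the first step, so the martingale argument applies without any stopping-time subtleties.
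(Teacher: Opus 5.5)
Your proposal is correct and follows essentially the same route as the paper: decompose realized revenue into its $\mathcal F_{n-1}$-conditional mean (drift plus the exact $\beta_i\nu_{n,i}^2$ tax) and a bounded martingale residual, kill the residual with the martingale strong law, and Ces\`aro-average the vanishing drift. The differences are cosmetic. The paper expands $R_i$ exactly to second order around $\mathbf p^{NE}$ and averages the $\epsilon$-terms, whereas you evaluate at $\tilde{\mathbf p}_n$ and use continuity. Your auxiliary unclipped revenue sequence also makes rigorous the random-time clipping step, which the paper handles by conditioning directly.
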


\begin{proof}[Proof of Lemma~\ref{lem:asymp-revenue}]
Since $R_i(\mathbf p)$ is a quadratic polynomial in $\mathbf p$, Taylor's expansion around $\mathbf p^{NE}$ is exact:
\begin{align*}
R_i(\mathbf p) - \Pi_i^{NE}
&= \nabla_{\mathbf p} R_i(\mathbf p^{NE})\cdot(\mathbf p - \mathbf p^{NE}) + \tfrac12 (\mathbf p - \mathbf p^{NE})^\top \nabla^2_{\mathbf p} R_i\,(\mathbf p - \mathbf p^{NE}) \\
&= p_i^{NE}\sum_{j\ne i} \gamma_{i,j}(p_j - p_j^{NE}) + \beta_i (p_i - p_i^{NE})^2 + \sum_{j\ne i} \gamma_{i,j} (p_i - p_i^{NE})(p_j - p_j^{NE}),
\end{align*}
using $\partial_{p_i} R_i(\mathbf p^{NE}) = \alpha_i + 2\beta_i p_i^{NE} + \sum_{j\ne i}\gamma_{i,j} p_j^{NE} = 0$ (Nash FOC), $\partial_{p_j} R_i(\mathbf p^{NE}) = \gamma_{i,j} p_i^{NE}$ for $j\ne i$, $\partial^2_{p_i p_i} R_i = 2\beta_i$, and $\partial^2_{p_i p_j} R_i = \gamma_{i,j}$ for $j\ne i$. Under~\eqref{eq:explore-bound} together with $\tilde p_{n,i}\to p_i^{NE}$ a.s., the event $\{|\tilde p_{n,i} - p_i^{NE}| < \min\,\left\{p_i^{NE} - l,\, u - p_i^{NE}\right\} - \delta_i\}$ holds eventually a.s., so clipping is asymptotically inactive and $p_{n,i} - p_i^{NE} = \epsilon_{n,i} + z_{n,i}$ exactly for all large $n$, where $\epsilon_{n,i}\triangleq\tilde p_{n,i} - p_i^{NE}\in\cF_{n-1}$. Conditioning on $\cF_{n-1}$ and using $\EE[z_{n,j}\mid\cF_{n-1}] = 0$, $\EE[z_{n,j}^2\mid\cF_{n-1}] = \nu_{n,j}^2$, and $\EE[z_{n,i} z_{n,j}\mid\cF_{n-1}] = 0$ for $i\ne j$,
\begin{equation}\label{eq:proof-asymp-revenue-cond}
\EE[R_{n,i} - \Pi_i^{NE} \mid \cF_{n-1}]
\;=\; p_i^{NE}\sum_{j\ne i}\gamma_{i,j}\,\epsilon_{n,j}
\;+\; \beta_i\bigl(\epsilon_{n,i}^2 + \nu_{n,i}^2\bigr)
\;+\; \sum_{j\ne i} \gamma_{i,j}\,\epsilon_{n,i}\,\epsilon_{n,j}.
\end{equation}
The martingale residual $R_{n,i} - \EE[R_{n,i}\mid\cF_{n-1}]$ has uniformly bounded conditional second moments (since prices and demand noise are bounded), so by the strong law of large numbers for bounded martingale differences,
\begin{equation}\label{eq:asymp-rev-mart}
\frac{1}{T}\sum_{n=1}^T \bigl(R_{n,i} - \EE[R_{n,i}\mid\cF_{n-1}]\bigr) \;\xrightarrow[T\to\infty]{\text{a.s.}}\; 0.
\end{equation}
Combining \eqref{eq:proof-asymp-revenue-cond} and \eqref{eq:asymp-rev-mart},
\[
\bar R_{T,i} \;-\; \Pi_i^{NE} \;=\; -\,|\beta_i|\,\bar\nu_{T,i}^2 \;+\; R^{(\epsilon)}_{T,i} \;+\; o(1) \quad\text{a.s.,}
\]
where
\[
R^{(\epsilon)}_{T,i}
\;\triangleq\;
p_i^{NE}\!\sum_{j\ne i}\gamma_{i,j}\,\frac{1}{T}\sum_{n=1}^T\epsilon_{n,j}
\;+\; \beta_i\,\frac{1}{T}\sum_{n=1}^T\epsilon_{n,i}^2
\;+\; \sum_{j\ne i} \gamma_{i,j}\,\frac{1}{T}\sum_{n=1}^T\epsilon_{n,i}\,\epsilon_{n,j}.
\]
By the a.s.\ greedy-price convergence hypothesis, $\epsilon_{n,k}\to 0$ a.s.\ for every $k$; the Cesàro mean of an a.s.-convergent sequence converges a.s.\ to the same limit, so each of $T^{-1}\sum_n \epsilon_{n,j}$, $T^{-1}\sum_n \epsilon_{n,i}^2$, and $T^{-1}\sum_n \epsilon_{n,i}\epsilon_{n,j}$ tends to $0$ a.s. Hence $R^{(\epsilon)}_{T,i}\to 0$ a.s., and~\eqref{eq:asymp-revenue} follows.
\end{proof}

Lemma~\ref{lem:asymp-revenue} sharpens Proposition~\ref{prop:costlinearexploration} from a per-period worst-case bound of order $|\beta_i|\delta^2$ to an \emph{exact asymptotic value} on any learning protocol whose greedy price converges to Nash a.s.: the asymptotic per-period revenue equals the Nash benchmark minus the Cesàro average of the seller's \emph{own} exploration variance, with equality in the leading order. The hypothesis $\tilde p_{n,i}\to p_i^{NE}$ a.s.\ is exactly what each of our three composition convergence theorems delivers (Theorems~\ref{thm:globalconvergencetocompetitiveoutcome}, \ref{thm:allinformedmeanforecast}, and~\ref{thm:mixedmarketconvergence}), so the lemma applies on the corresponding learning protocols. 

\begin{corollary}[Asymptotic surplus-capture in the three compositions]\label{cor:asymp-S}
Adopt the standing interior assumption $\mathbf p^{NE}, \mathbf p^{C}\in(l,u)^N$ and the bounded-exploration condition~\eqref{eq:explore-bound}. Define a finite version of the surplus-capture ratio \eqref{eq:surplus-capture} at horizon $T$ as $S_{T,i}\triangleq (\bar R_{T,i} - \Pi_i^{NE})/(\Pi_i^{C} - \Pi_i^{NE})$ with $\bar R_{T,i}$ from Lemma~\ref{lem:asymp-revenue}.
\begin{itemize}
\item[\textnormal{(a)}] Under the conditions of Theorem~\ref{thm:globalconvergencetocompetitiveoutcome} (\textsf{ob}--\textsf{ob}) with constant oblivious exploration $\nu_{n,i}^2\equiv\nu_i^2$,
\[
\lim_{T\to\infty} S_{T,i} \;\stackrel{\text{a.s.}}{=}\; -\,\frac{|\beta_i|\,\nu_i^2}{\Pi_i^{C} - \Pi_i^{NE}} \;<\; 0, \qquad i\in[N].
\]
\item[\textnormal{(b)}] Under the conditions of Theorem~\ref{thm:allinformedmeanforecast} (\textsf{in}--\textsf{in}) with decaying informed exploration $\nu_{n,i}^2\to 0$,
\[
\lim_{T\to\infty} S_{T,i} \;\stackrel{\text{a.s.}}{=}\; 0, \qquad i\in[N].
\]
\item[\textnormal{(c)}] Under the conditions of Theorem~\ref{thm:mixedmarketconvergence} (\textsf{ob}--\textsf{in}) with constant oblivious exploration $\nu_{n,i}^2\equiv\nu_i^2 > 0$ for $i\in\cI^{ob}$ and decaying informed exploration $\nu_{n,j}^2\to 0$ for $j\in\cI^{in}$,
\[
\lim_{T\to\infty} S_{T,i} \;\stackrel{\text{a.s.}}{=}\; -\,\frac{|\beta_i|\,\nu_i^2}{\Pi_i^{C} - \Pi_i^{NE}} \;<\; 0 \quad (i\in\cI^{ob}), \qquad
\lim_{T\to\infty} S_{T,j} \;\stackrel{\text{a.s.}}{=}\; 0 \quad (j\in\cI^{in}).
\]
\end{itemize}
\end{corollary}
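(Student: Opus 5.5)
The proof is a direct application of Lemma~\ref{lem:asymp-revenue} to each of the three compositions, followed by evaluating the Ces\`aro average $\bar\nu_{T,i}^2$ of each seller's own exploration variance. In every case the relevant convergence theorem supplies the hypothesis of the lemma, namely $\tilde p_{n,k}\to p_k^{NE}$ a.s.\ for every $k\in[N]$:
\begin{itemize}
\item Theorem~\ref{thm:globalconvergencetocompetitiveoutcome} covers (a).
\item Theorem~\ref{thm:allinformedmeanforecast} covers (b).
\item Theorem~\ref{thm:mixedmarketconvergence} covers (c).
\end{itemize}
Condition~\eqref{eq:explore-bound} is assumed throughout. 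As noted in the paper's footnote, the convergence theorems use only the conditional first and second moments of $z_{n,i}$, so they remain valid under~\eqref{eq:explore-bound}. Lemma~\ref{lem:asymp-revenue} then gives, a.s.,
\[
\bar R_{T,i}=\Pi_i^{NE}-|\beta_i|\,\bar\nu_{T,i}^2+o(1).
\]
Dividing by $\Pi_i^C-\Pi_i^{NE}$, which is positive by the standing assumption, yields
\[
S_{T,i}=-\frac{|\beta_i|\,\bar\nu_{T,i}^2}{\Pi_i^C-\Pi_i^{NE}}+o(1).
\]

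It remains to compute $\lim_T\bar\nu_{T,i}^2$ for each seller type.

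\emph{Oblivious sellers} (case (a), and $i\in\cI^{ob}$ in case (c)). Here $\nu_{n,i}^2\equiv\nu_i^2$, so $\bar\nu_{T,i}^2=\nu_i^2$ exactly. This gives the limit $-|\beta_i|\nu_i^2/(\Pi_i^C-\Pi_i^{NE})$, which is strictly negative because $\nu_i^2>0$ and $\beta_i<0$.

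\emph{Informed sellers} (case (b), and $j\in\cI^{in}$ in case (c)). Here $\nu_{n,j}^2\to0$; under the standing polynomial schedule, $\nu_{n,j}^2=\Theta(n^{-\eta_j})$ with $\eta_j>0$. The Ces\`aro mean of a deterministic null sequence is null, so $\bar\nu_{T,j}^2\to0$ and the limit is $0$.

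One point needs care. Lemma~\ref{lem:asymp-revenue} uses the exact identity $p_{n,i}-p_i^{NE}=\epsilon_{n,i}+z_{n,i}$, which holds only once clipping is inactive. Condition~\eqref{eq:explore-bound} together with a.s.\ convergence of $\tilde p_{n,i}$ ensures this for all large $n$. The finitely many early periods with possible clipping change $\bar R_{T,i}$ only by $O(1/T)$, since revenues are bounded. This is absorbed into the $o(1)$ term, so no extra argument is needed beyond what the lemma already does.

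I expect the main (mild) obstacle to be bookkeeping rather than analysis: checking that each convergence theorem's hypotheses, instantiated with the stated exploration schedules (constant for oblivious sellers, decaying for informed sellers), really deliver greedy-price convergence for \emph{all} sellers, oblivious and informed alike. This is exactly what Lemma~\ref{lem:asymp-revenue} needs in order to kill the cross terms $\epsilon_{n,j}$ coming from competitors. Theorem~\ref{thm:mixedmarketconvergence} asserts $\tilde p_{n,k}\to p_k^{NE}$ for every $k\in[N]$, and Theorems~\ref{thm:globalconvergencetocompetitiveoutcome} and~\ref{thm:allinformedmeanforecast} do the same in their respective settings. With these checks in place, each case reduces to a one-line Ces\`aro evaluation.
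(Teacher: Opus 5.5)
Your proposal is correct and follows the paper's proof essentially line for line. You use the a.s.\ greedy-price convergence from the three composition theorems to activate Lemma~\ref{lem:asymp-revenue}, divide by $\Pi_i^C-\Pi_i^{NE}>0$, and evaluate the Ces\`aro average $\bar\nu_{T,i}^2$: it equals $\nu_i^2$ for constant oblivious schedules and tends to $0$ for decaying informed schedules. Your extra remark about clipped early periods is already handled inside the lemma's proof; that set of periods is random but a.s.\ finite, and bounded revenues make its contribution $O(1/T)$.
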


\begin{proof}[Proof of Corollary~\ref{cor:asymp-S}]
Each of Theorems~\ref{thm:globalconvergencetocompetitiveoutcome}, \ref{thm:allinformedmeanforecast}, and~\ref{thm:mixedmarketconvergence} delivers $\tilde p_{n,k}\to p_k^{NE}$ a.s.\ for every $k$, so the hypothesis of Lemma~\ref{lem:asymp-revenue} holds on every cell of the strategy game. Substituting the respective exploration schedule into $\bar\nu_{T,i}^2 = T^{-1}\sum_n \nu_{n,i}^2$:

\textbf{(a)} Under Theorem~\ref{thm:globalconvergencetocompetitiveoutcome}, $\nu_{n,i}^2\equiv\nu_i^2$ for every $i\in[N]$, so $\bar\nu_{T,i}^2 = \nu_i^2$ exactly. Lemma~\ref{lem:asymp-revenue} gives $\bar R_{T,i}\to \Pi_i^{NE} - |\beta_i|\nu_i^2$ a.s., and dividing by $\Pi_i^{C} - \Pi_i^{NE} > 0$ yields $\lim_T S_{T,i} = -|\beta_i|\nu_i^2/(\Pi_i^{C} - \Pi_i^{NE}) < 0$ a.s.

\textbf{(b)} Under Theorem~\ref{thm:allinformedmeanforecast}, $\nu_{n,i}^2\to 0$ for every $i$, so $\bar\nu_{T,i}^2\to 0$ by the Cesàro-vanishing identity. Lemma~\ref{lem:asymp-revenue} gives $\bar R_{T,i}\to \Pi_i^{NE}$ a.s., hence $\lim_T S_{T,i} = 0$ a.s.

\textbf{(c)} Under Theorem~\ref{thm:mixedmarketconvergence}, $\nu_{n,i}^2\equiv\nu_i^2 > 0$ for $i\in\cI^{ob}$ and $\nu_{n,j}^2\to 0$ for $j\in\cI^{in}$. Applying (a) on $\cI^{ob}$ and (b) on $\cI^{in}$ gives the stated formulas.
\end{proof}
\subsection{Proof of Theorem~\ref{thm:informedolsrate}}\label{proof:thm:informedolsrate}
We first prove an auxiliary lemma.

\begin{lemma}[Growth condition of the design matrix]\label{lem:designmatrixgrowthcondition}
Fix an integer $d \ge 1$. Let $(\mathcal{F}_n)_{n \ge 0}$ be a filtration. For each $n \ge 1$, define the random vector
$$
x_n = (1, p_{n,1}, ..., p_{n,d})^\top \in \mathbb{R}^{d+1}, \quad \mathcal{J}_n = \sum_{m=1}^n x_m x_m^\top.
$$
Assume there exist constants $0 < l < u < \infty$ such that $p_{n,i} \in [l,u]$ a.s. for all $n$ and $i$, and that 
$$
p_{n,i} = \chi_{n,i} + z_{n,i}, \quad i = 1, ..., d,
$$
where $\chi_{n,i} \in \mathcal{F}_{n-1}$ are uniformly bounded, and $z_{n,i}$ are independent across all $n$ and $i$, mean-zero, and uniformly bounded. Let
$$
\nu_{n,i}^2 \triangleq \mathrm{Var}(z_{n,i}) = c_i n^{-\eta_i}, \quad c_i > 0, \; \eta_i \in [0,1).
$$
Define
$$
\eta_{\min} = \min_{1 \le i \le d} \eta_i, \quad \eta_{\max} = \max_{1 \le i \le d} \eta_i.
$$
Then, there exist constants $C_1, C_2 > 0$ and an $n_0$ such that for all $n \ge n_0$,
$$
\mathbb{P} \left(\lambda_{\min} (\mathcal{J}_n) \le C_1 n^{1 - \eta_{\max}}\right) \le (d+1) \exp \left(-C_2 n^{1 + \eta_{\min} - 2 \eta_{\max}}\right).
$$
If we assume that $1 + \eta_{\min} > 2 \eta_{\max}$, then the term on the right-hand side decreases to zero exponentially fast.
\end{lemma}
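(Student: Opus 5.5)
The plan is to split $\mathcal J_n$ into a predictable part plus a martingale fluctuation. Write $\mathcal J_n = \sum_{m\le n}\EE[x_mx_m^\top\mid\cF_{m-1}] + \mathcal M_n$, where $\mathcal M_n \triangleq \sum_{m\le n}\bigl(x_mx_m^\top - \EE[x_mx_m^\top\mid\cF_{m-1}]\bigr)$ is a matrix martingale with bounded increments. We then lower bound the compensator deterministically and control $\mathcal M_n$ with a matrix Azuma/Freedman inequality (the source of the $(d+1)$ dimensional prefactor).

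\emph{Step 1 (compensator).} Set $\bar x_m \triangleq (1,\chi_{m,1},\dots,\chi_{m,d})^\top\in\cF_{m-1}$ and $\Sigma_m\triangleq\mathrm{diag}(0,\nu_{m,1}^2,\dots,\nu_{m,d}^2)$. Then $\EE[x_mx_m^\top\mid\cF_{m-1}] = \bar x_m\bar x_m^\top + \Sigma_m$. For a unit vector $v=(v_0,v_{1:d})$ we have
\[
v^\top(\bar x_m\bar x_m^\top+\Sigma_m)v = (v_0+v_{1:d}^\top\chi_m)^2 + \textstyle\sum_i \nu_{m,i}^2 v_i^2 .
\]
If $\|v_{1:d}\|\ge c$, the second term is at least $c^2\min_i\nu_{m,i}^2 \gtrsim m^{-\eta_{\max}}$. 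If $\|v_{1:d}\|<c$ with $c$ small relative to $1/(\sqrt d\,u)$, then $|v_0|\ge\sqrt{1-c^2}$, and $|v_{1:d}^\top\chi_m|\le c\sqrt d\,\sup|\chi|$ forces the first term to be at least a constant. In either case the per-step bound is $\gtrsim m^{-\eta_{\max}}$ uniformly in $v$. Summing gives $\lambda_{\min}\bigl(\sum_{m\le n}\EE[x_mx_m^\top\mid\cF_{m-1}]\bigr)\ge 2C_1 n^{1-\eta_{\max}}$ for $n\ge n_0$, using $\sum_{m\le n} m^{-\eta}\asymp n^{1-\eta}$ since $\eta<1$.

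\emph{Step 2 (fluctuation).} On the event $\lambda_{\min}(\mathcal J_n)\le C_1n^{1-\eta_{\max}}$, Weyl's inequality gives $\lambda_{\min}(\mathcal M_n)\le -C_1n^{1-\eta_{\max}}$. The increments are self-adjoint with $\|\cdot\|_2\le 2(1+du^2)$. Matrix Azuma (Tropp) then yields
\[
\PP\bigl(\lambda_{\max}(-\mathcal M_n)\ge t\bigr)\le (d+1)e^{-t^2/(8\sigma^2)}, \qquad \sigma^2\asymp n .
\]
This gives only $\exp(-cn^{1-2\eta_{\max}})$, which is weaker than the claim. To gain the factor $n^{\eta_{\min}}$, I would use matrix Freedman with the predictable quadratic variation. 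The fluctuation of $x_mx_m^\top$ is driven by $z_m$: its entries are $z_{m,i}$ times bounded terms, plus $z_{m,i}z_{m,k}-\EE[\cdot]$. The conditional variance is therefore $O(\max_i\nu_{m,i}^2)=O(m^{-\eta_{\min}})$, so $W_n\lesssim n^{1-\eta_{\min}}$. Freedman then gives
\[
\exp\!\Bigl(-\tfrac{t^2/2}{W_n + Rt/3}\Bigr), \qquad t = C_1n^{1-\eta_{\max}} .
\]
The ratio $t^2/W_n \asymp n^{2-2\eta_{\max}-1+\eta_{\min}} = n^{1+\eta_{\min}-2\eta_{\max}}$, which is exactly the claimed exponent.

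\emph{Main obstacle.} The range term $Rt$ is of order $n^{1-\eta_{\max}}$, which gives $t^2/(Rt)\asymp t \asymp n^{1-\eta_{\max}}$. Since $\eta_{\min}\le\eta_{\max}$, this is at least $n^{1+\eta_{\min}-2\eta_{\max}}$, so the exponent is $\min$ of the two and the claimed bound survives, adjusting $C_2$. The real care is in bounding the predictable variance $\sum_m\EE[(\Delta\mathcal M_m)^2\mid\cF_{m-1}]$ in operator norm by $O(\sum_m m^{-\eta_{\min}})$. Boundedness of $z$ gives $\EE[z^4]\lesssim\EE[z^2]$, which handles the quadratic-in-$z$ entries. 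Finally, choosing $n_0$ so the Step 1 bound holds finishes the argument; when $1+\eta_{\min}>2\eta_{\max}$ the right-hand side decays stretched-exponentially.
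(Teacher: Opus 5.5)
Your proposal is correct and takes essentially the same route as the paper. Both arguments split $\mathcal J_n$ into its compensator plus a matrix martingale, apply Matrix Freedman with $R = 2(1+du^2)$, bound the predictable quadratic variation by $\lesssim n^{1-\eta_{\min}}$ using boundedness of $z$, and take a threshold $t\asymp n^{1-\eta_{\max}}$, at which the variance term dominates the range term in the Freedman exponent.

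The only deviation is the compensator lower bound. You use a per-step case split over unit vectors, which gives $\lambda_{\min}\gtrsim m^{-\eta_{\max}}$ at each $m$. The paper instead uses the aggregate bound $\sum_m \mu_m\mu_m^\top \succeq n\bar\mu_n\bar\mu_n^\top$ together with Lemma~\ref{lem:rankonediagonalmatrixmineigenvalue}. Both yield $\lambda_{\min}\gtrsim n^{1-\eta_{\max}}$, and yours is slightly more elementary.
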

\begin{proof}[Proof of Lemma~\ref{lem:designmatrixgrowthcondition}]
To set up, let $\mathbb{E}_{n-1}$ denote the conditional expectation given $\mathcal{F}_{n-1}$. Denote
$$
M_n = \sum_{m=1}^n \mathbb{E}_{m-1}[x_m x_m^\top], \quad Y_m = x_m x_m^\top - \mathbb{E}_{m-1}[x_m x_m^\top].
$$
Then, $(Y_m, \mathcal{F}_n)$ is a self-adjoint matrix martingale difference sequence and
$$
\mathcal{J}_n = M_n + \sum_{m=1}^n Y_m.
$$
For any $t > 0$,
\begin{equation}\label{eq:designmatrixgrowthconditiondecomposition}
\left\{\lambda_{\min}(\mathcal{J}_n) \le \lambda_{\min}(M_n) - t \right\} \subseteq \left\{\lambda_{\max}(M_n - \mathcal{J}_n) \ge t\right\} = \left\{\lambda_{\max}\left(-\sum_{m=1}^n Y_m\right) \ge t\right\}.
\end{equation}
So, it suffices to upper bound $\mathbb{P}\left(\lambda_{\max}\left(-\sum_{m=1}^n Y_m\right) \ge t\right)$. We will apply the Matrix Freedman inequality (Theorem~\ref{thm:matrixfreedman}) to the matrix martingale $\left(-\sum_{m=1}^n Y_m, \mathcal{F}_n\right)$. For that, we need:
\begin{itemize}
\item a uniform bound $R$ on $\norm{Y}_{op}$;
\item a lower bound on $\lambda_{\min}(M_n)$; and
\item an upper bound $\sigma_n^2$ on the predictable quadratic variation process $\norm{\sum_{m=1}^n \mathbb{E}_{m-1}[Y_m^2]}_{op}$.
\end{itemize}
The first item is straightforward. Since $p_{n,i}$ are uniformly bounded, 
$$
\norm{x_m}_2^2 = 1 + \sum_{i=1}^d p_{m,i}^2 \le 1 + d u^2 \eqqcolon C_x^2,
$$
and so $\norm{x_m x_m^\top}_{op} = \norm{x_m}_2^2 \le C_x^2$. By Jensen's inequality,
$$
\norm{\mathbb{E}_{m-1}[x_m x_m^\top]}_{op} \le \mathbb{E}_{m-1}[\norm{x_m x_m^\top}_{op}] \le C_x^2.
$$
So,
$$
\norm{Y_m}_{op} \le \norm{x_m x_m^\top}_{op} + \norm{\mathbb{E}_{m-1}[x_m x_m^\top]}_{op} \le 2 C_x^2
$$
and we can take $R = 2 C_x^2$.

Next, we lower bound $\lambda_{\min}(M_n)$. Write
$$
\mu_m \triangleq \mathbb{E}_{m-1}[x_m] = (1, \chi_{m,1}, ..., \chi_{m,d})^\top, \quad \xi_m \triangleq x_m - \mu_m = (0, z_{m,1}, ..., z_{m,d})^\top.
$$
Then, $\mathbb{E}_{m-1}[\xi_m] = 0$, and by independence across coordinates,
$$
\mathbb{E}_{m-1}[\xi_m \xi_m^\top] = \mathrm{diag}(0, \nu_{m,1}^2, ..., \nu_{m,d}^2).
$$
Therefore,
$$
\mathbb{E}_{m-1}[x_m x_m^\top] = \mu_m \mu_m^\top + \mathrm{diag}(0, \nu_{m,1}^2, ..., \nu_{m,d}^2).
$$
Summing, we have
$$
M_n = \sum_{m=1}^n \mu_m \mu_m^\top + \mathrm{diag}(1, S_1(n), ..., S_d(n)), \quad S_i(n) = \sum_{m=1}^n \nu_{m,i}^2.
$$
Let $\bar{\mu}_n = \frac{1}{n} \sum_{m=1}^n \mu_m$. Then,
$$
\sum_{m=1}^n \mu_m \mu_m^\top = \sum_{m=1}^{n} (\mu_m - \bar{\mu}_n)(\mu_m - \bar{\mu}_n)^\top + n \bar{\mu}_n \bar{\mu}_n^\top \succeq n \bar{\mu}_n \bar{\mu}_n^\top.
$$
Hence,
$$
M_n \succeq K_n \triangleq n \bar{\mu}_n \bar{\mu}_n^\top + \mathrm{diag}(0, S_1(n), ..., S_d(n))
$$
and $\lambda_{\min}(M_n) \ge \lambda_{\min}(K_n)$. Write $\bar{\mu}_n = (1, b_n)^\top$ with $b_n \in \mathbb{R}^d$ and $S(n) = \mathrm{diag}(S_1(n), ..., S_d(n))$. Then, applying Lemma~\ref{lem:rankonediagonalmatrixmineigenvalue} with $b = b_n$ and $S = S(n)$, we have
$$
\lambda_{\min}(M_n) \ge \lambda_{\min}(K_n) \ge \frac{1}{2} \min \left\{\frac{n}{1 + n b_n^\top S(n)^{-1} b_n}, \min_{1 \le i \le d} S_i(n)\right\}.
$$
Since $\nu_{m,i}^2 = c_i m^{-\eta_i}$, we have
$$
\min_{1 \le i \le d} S_i(n) \gtrsim n^{1 - \eta_{\max}}.
$$
Also, since $b_n$ is uniformly bounded, we have
$$
b_n^\top S(n)^{-1} b_n \lesssim \lambda_{\max}(S(n)^{-1}) = \frac{1}{\lambda_{\min}(S(n))} \lesssim n^{-(1 - \eta_{\max})}.
$$
Thus,
$$
\frac{n}{1 + n b_n^\top S(n)^{-1} b_n} \gtrsim \frac{n}{1 + n \cdot n^{-(1 - \eta_{\max})}} = \frac{n}{1 + n^{\eta_{\max}}} \gtrsim n^{1 - \eta_{\max}}.
$$
Together, we have
$$
\lambda_{\min}(M_n) \gtrsim n^{1 - \eta_{\max}}.
$$

Finally, we upper bound the predictable quadratic variation process:
$$
\sigma_n^2 = \bignorm{\sum_{m=1}^n \mathbb{E}_{m-1}[Y_m^2]}_{op}.
$$
We claim that there is some constant $C_V > 0$ such that
$$
\mathbb{E}_{m-1}\norm{Y_m^2}_{op} \le C_V \sum_{i=1}^{d} \nu_{m,i}^2 \text{ for all } m. 
$$
Given this, we have
$$
\sigma_n^2 \le \sum_{m=1}^{n} \mathbb{E}_{m-1}\norm{Y_m^2}_{op} \le C_V \sum_{m=1}^{n} \sum_{i=1}^{d} \nu_{m,i}^2 \lesssim \sum_{m=1}^{n} m^{-\eta_{\min}} \lesssim n^{1 - \eta_{\min}}.
$$
To show the claim, write
\begin{align*}
Y_m &= x_m x_m^\top - \mathbb{E}_{m-1}[x_m x_m^\top] \\
&= (\mu_m + \xi_m)(\mu_m + \xi_m)^\top - \left(\mu_m \mu_m^\top + \mathbb{E}_{m-1}[\xi_m \xi_m^\top]\right) \\
&= \mu_m \xi_m^\top + \xi_m \mu_m^\top + \left(\xi_m \xi_m^\top - \mathbb{E}_{m-1}[\xi_m \xi_m^\top]\right).
\end{align*}
Since $\mu_m$ are bounded, $\mu_m \xi_m^\top + \xi_m \mu_m^\top \lesssim \norm{\xi_m}_2$. The last two terms can be bounded as
$$
\norm{\xi_m \xi_m^\top - \mathbb{E}_{m-1}[\xi_m \xi_m^\top]}_{op} \le \norm{\xi_m \xi_m^\top}_{op} + \norm{\mathbb{E}_{m-1}[\xi_m \xi_m^\top]}_{op} \le \norm{\xi_m}_2^2 + \mathbb{E}_{m-1} [\norm{\xi_m}_2^2].
$$
So,
$$
\norm{Y_m}_{op} \lesssim \norm{\xi_m}_2 + \norm{\xi_m}_2^2 + \mathbb{E}_{m-1}[\norm{\xi_m}_2^2],
$$
and
$$
\mathbb{E}_{m-1}\norm{Y_m}_{op}^2 \lesssim \mathbb{E}_{m-1}[\norm{\xi_m}_2^2] + \mathbb{E}_{m-1}[\norm{\xi_m}_2^3] + \mathbb{E}_{m-1}[\norm{\xi_m}_2^4].
$$
Since $\xi_m$ are bounded, the higher moment terms can be controlled by the second moment term (since $\mathrm{Var}(z_{m,i}) = \nu_{m,i}^2 = c_i m^{-\eta_i}$, for large $m$ the second moment term dominates the higher moment terms anyway). So, 
$$
\mathbb{E}_{m-1}\norm{Y_m}_{op}^2 \lesssim \sum_{i=1}^{d} \nu_{m,i}^2,
$$
proving the claim.

We are ready to put everything together now. By the Matrix Freedman inequality (Theorem~\ref{thm:matrixfreedman}), for any $t \ge 0$,
$$
\mathbb{P}\left(\lambda_{\max}\left(-\sum_{m=1}^n Y_m\right) \ge t\right) \le (d+1) \exp\left(-\frac{t^2/2}{\sigma_n^2 + R t / 3}\right).
$$
By \eqref{eq:designmatrixgrowthconditiondecomposition}, we have
$$
\mathbb{P}\left(\lambda_{\min}(\mathcal{J}_n) \le \lambda_{\min}(M_n) - t\right) \le (d+1) \exp\left(-\frac{t^2/2}{\sigma_n^2 + R t / 3}\right).
$$
Since $\lambda_{\min}(M_n) \gtrsim n^{1 - \eta_{\max}}$ and $\sigma_n^2 \lesssim n^{1 - \eta_{\min}}$, choose $t = \frac{1}{2} n^{1 - \eta_{\max}}$ to get
$$
\mathbb{P}\left(\lambda_{\min}(\mathcal{J}_n) \le \frac{1}{2} n^{1 - \eta_{\max}}\right) \le (d+1) \exp\left(-\frac{n^{2(1 - \eta_{\max})}/8}{C_1 n^{1 - \eta_{\min}} + R n^{1 - \eta_{\max}} / 6}\right)
$$
for some constant $C_1 > 0$ for all large $n$. Since $\eta_{\min} \le \eta_{\max}$, we have $n^{1 - \eta_{\min}} \ge n^{1 - \eta_{\max}}$, so
$$
\frac{n^{2(1 - \eta_{\max})}/8}{C_1 n^{1 - \eta_{\min}} + R n^{1 - \eta_{\max}} / 6} \gtrsim \frac{n^{2(1 - \eta_{\max})}}{n^{1 - \eta_{\min}}} = n^{1 + \eta_{\min} - 2 \eta_{\max}}.
$$
Redefining the constants, we have shown that there exist constants $C_1, C_2 > 0$ such that for all large $n$,
$$
\mathbb{P}\left(\lambda_{\min}(\mathcal{J}_n) \le C_1 n^{1 - \eta_{\max}}\right) \le (d+1) \exp\left(-C_2 n^{1 + \eta_{\min} - 2 \eta_{\max}}\right).
$$
\end{proof}

We are now ready to prove Theorem~\ref{thm:informedolsrate}. We drop the subscript $i$ and the superscript $*$ for notational simplicity. Let $\mathcal{J}_n = \sum_{m=1}^n x_{m,i}^{in} (x_{m,i}^{in})^\top$ be the empirical Fisher information matrix up to time $n$; under the parameter layout fixed in Section~\ref{sec:informedolsrate-statement}, $(x_{n,i}^{in})^\top \theta^*_i$ equals the true demand mean (omitting noise). The compactness of the projection set $\Theta^{in}_i$ from Section~\ref{sec:learning-dynamics} provides a finite a.s.\ bound $D_\Theta \triangleq \operatorname{diam}(\Theta^{in}_i)^2$ on $\norm{\hat{\mathbf{\theta}}_n - \mathbf{\theta}}_2^2$ that we use below to truncate the integrals at infinity. By Lemma~\ref{lem:designmatrixgrowthcondition}, there exist constants $c_1, c_2 > 0$ and $N_0 \in \mathbb{N}$ such that, writing $E_n \triangleq \{\lambda_{\min}(\mathcal{J}_n) > c_1 n^{1 - \eta_{\max}}\}$, for all $n \ge N_0$,
\begin{align*}
\mathbb{E}\norm{\hat{\mathbf{\theta}}_n - \mathbf{\theta}}_2^2
&= \int_{0}^{D_\Theta}  \mathbb{P}\bigl(\norm{\hat{\mathbf{\theta}}_n - \mathbf{\theta}}_2^2 > x,\, E_n\bigr) \, dx \\
&\quad + \int_{0}^{D_\Theta}  \mathbb{P}\bigl(\norm{\hat{\mathbf{\theta}}_n - \mathbf{\theta}}_2^2 > x,\, E_n^c\bigr) \, dx \\
&\le \int_{0}^{\infty}  \mathbb{P}\bigl(\norm{\hat{\mathbf{\theta}}_n - \mathbf{\theta}}_2^2 > x,\, E_n\bigr) \, dx \\
&\quad + D_\Theta\,(N+1) \exp\bigl(-c_2 n^{1 + \eta_{\min} - 2 \eta_{\max}}\bigr).
\end{align*}
By Lemma 3 of \citet{keskinDynamicPricingUnknown2014}, we have
$$
\mathbb{P}\bigl(\norm{\hat{\mathbf{\theta}}_n - \mathbf{\theta}}_2^2 > x,\, E_n\bigr) \le kn \exp \bigl(- c_1 \rho \min\{x, \sqrt{x}\}\, n ^{1 - \eta_{\max}}\bigr)
$$
for some $k, \rho > 0$ and all $x > 0$ and $n$ large enough. Set $b_n \triangleq c_1 \rho\, n^{1 - \eta_{\max}}$ and $x_n^\circ \triangleq 2 \log n / b_n$. Since $\eta_{\max} < 1$, $x_n^\circ < 1$ for all large $n$, and $x_n^\circ = (2/(c_1\rho))\,n^{\eta_{\max} - 1} \log n = O(n^{\eta_{\max} - 1}\log n)$. Splitting the integral at $x_n^\circ$ and at $1$ (using the trivial bound $\mathbb P(\cdot) \le 1$ on $[0, x_n^\circ]$),
\begin{align*}
\int_{0}^{\infty}  \mathbb{P}\bigl(\norm{\hat{\mathbf{\theta}}_n - \mathbf{\theta}}_2^2 > x,\, E_n\bigr) \, dx
&\le x_n^\circ \\
&\quad + \int_{x_n^\circ}^{1}  kn \exp \bigl(- b_n \min\{x, \sqrt{x}\}\bigr) \, dx \\
&\quad + \int_{1}^{\infty}  kn \exp \bigl(- b_n \min\{x, \sqrt{x}\}\bigr) \, dx.
\end{align*}
The leading term $x_n^\circ$ is already of order $n^{\eta_{\max}-1}\log n$. The two tail integrals are both dominated by it, as follows.

On $[x_n^\circ, 1]$ the minimum is $x$, and using $\int_a^1 e^{-bx}\,dx = \tfrac{1}{b}(e^{-b a} - e^{-b})$ with $a = x_n^\circ$ and $b = b_n$ gives
$$
\int_{x_n^\circ}^{1} kn\, e^{-b_n x}\,dx \;\le\; \frac{kn}{b_n}\, e^{-b_n x_n^\circ} \;=\; \frac{k}{c_1 \rho}\, n^{\eta_{\max}} \cdot n^{-2} \;=\; \frac{k}{c_1\rho}\, n^{\eta_{\max} - 2} \;=\; o\!\left(n^{\eta_{\max}-1}\log n\right),
$$
where we used $b_n x_n^\circ = 2\log n$, hence $e^{-b_n x_n^\circ} = n^{-2}$, and $n/b_n = n^{\eta_{\max}}/(c_1\rho)$.

On $[1, \infty)$ the minimum is $\sqrt{x}$, and using $\int_1^{\infty} e^{-a \sqrt{x}}\,dx = \tfrac{2(a+1)}{a^2}\, e^{-a}$ (substitute $t = \sqrt{x}$, then integrate by parts) with $a = b_n$ gives
$$
\int_{1}^{\infty} kn\, e^{-b_n \sqrt{x}}\,dx \;=\; \frac{2kn(b_n + 1)}{b_n^2}\, e^{-b_n},
$$
which is exponentially small in $n$ since $b_n \to \infty$ at polynomial rate. The design-failure term $D_\Theta\,(N+1) \exp\bigl(-c_2 n^{1 + \eta_{\min} - 2 \eta_{\max}}\bigr)$ is also exponentially small under condition~\eqref{eq:conditioninformedsellersmixed}. Therefore,
$$
\mathbb{E}\norm{\hat{\mathbf{\theta}}_n - \mathbf{\theta}}_2^2 \;=\; O\!\left(n^{\eta_{\max} - 1}\log n\right).
$$
Summing over the finite set of informed sellers completes the proof.

\subsection{Auxiliary Lemmas}\label{proof:thm:mixedmarketconvergence-lemmas}

This subsection collects two auxiliary lemmas referenced from the proofs of Theorems~\ref{thm:allinformedmeanforecast} and~\ref{thm:mixedmarketconvergence}: a restatement of the informed-OLS rate as both an $L^2$ and an almost-sure bound, and a closed-form expression for the rate function $c^*$.

\begin{lemma}[Informed-OLS rate, restated]\label{lem:proof-mm-olsrate}
Under $1 + \eta_{\min} > 2\,\eta_{\max}$ of Theorem~\ref{thm:informedolsrate}:
\begin{itemize}
\item[\textnormal{(a)}] $\sum_{j \in \cI^{in}} \EE\bigl[\norm{\tilde e_{n,j}^{in}}_2^2\bigr] = O(n^{\eta_{\max} - 1}\log n)$;
\item[\textnormal{(b)}] $\sum_{j \in \cI^{in}}\norm{\tilde e_{n,j}^{in}}_2^2 = O(n^{\eta_{\max} - 1 + \varepsilon})$ a.s.\ for every $\varepsilon > 0$, and in particular $\hat\theta_{n,j}^{in} \to \theta_j^{*,in}$ a.s.\ for every $j \in \cI^{in}$;
\item[\textnormal{(c)}] $\sum_{n \ge 1} \tfrac{1}{n+1}\sum_{j \in \cI^{in}}\norm{\tilde e_{n,j}^{in}}_2^2 < \infty$ a.s.
\end{itemize}
\end{lemma}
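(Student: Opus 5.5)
Part (a) is exactly Theorem~\ref{thm:informedolsrate}, applied to each informed seller and summed over the finite set $\cI^{in}$. I would simply cite it. The only care needed is that $\tilde e^{in}_{n,j}$ here denotes the projected error $\hat\theta^{in}_{n,j}-\theta_j^{*,in}$, which is the object bounded there.

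For (b), I would reuse the two tail bounds from the proof of Theorem~\ref{thm:informedolsrate} and upgrade them with Borel--Cantelli. Fix $\varepsilon>0$ and set $x_n\triangleq n^{\eta_{\max}-1+\varepsilon}$. The plan is to show that
\[
\sum_n \mathbb P\bigl(\norm{\hat\theta^{in}_{n,j}-\theta^{*,in}_j}_2^2>x_n\bigr)<\infty .
\]
Split on the event $E_n=\{\lambda_{\min}(\mathcal J_n)>c_1n^{1-\eta_{\max}}\}$.
\begin{itemize}
\item On $E_n^c$, Lemma~\ref{lem:designmatrixgrowthcondition} gives probability at most $(N+1)\exp(-c_2n^{1+\eta_{\min}-2\eta_{\max}})$. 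This is summable because $1+\eta_{\min}>2\eta_{\max}$.
\item On $E_n$, Lemma~3 of \citet{keskinDynamicPricingUnknown2014} gives the bound $kn\exp(-c_1\rho\min\{x_n,\sqrt{x_n}\}n^{1-\eta_{\max}})$. For large $n$ we have $x_n<1$, so the exponent is $-c_1\rho n^{\varepsilon}$. The bound $kn\,e^{-c_1\rho n^\varepsilon}$ is therefore summable.
\end{itemize}
Borel--Cantelli then yields $\norm{\tilde e^{in}_{n,j}}_2^2\le n^{\eta_{\max}-1+\varepsilon}$ eventually a.s. Summing over the finitely many $j\in\cI^{in}$ gives (b). Choosing $\varepsilon<1-\eta_{\max}$ (possible since $\eta_{\max}<1$) gives $\hat\theta^{in}_{n,j}\to\theta^{*,in}_j$ a.s.

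For (c), the cleanest route is through expectations. By (a) and Tonelli,
\[
\mathbb E\sum_n \frac{1}{n+1}\sum_j\norm{\tilde e^{in}_{n,j}}_2^2=\sum_n O\bigl(n^{\eta_{\max}-2}\log n\bigr)<\infty,
\]
because $\eta_{\max}<1$. A nonnegative series with finite expectation is finite a.s., which gives (c). As a check, (b) gives the same conclusion pathwise: with $\varepsilon<1-\eta_{\max}$ the terms are eventually $O(n^{\eta_{\max}-2+\varepsilon})$, which is summable.

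The main obstacle is (b), specifically confirming that the Keskin--Zeevi tail bound holds uniformly at the shrinking threshold $x_n$ for all large $n$. The constants $k$ and $\rho$ must not depend on $x$, and the bound was stated there "for $n$ large enough" and all $x>0$, so this is what I would verify first. If that uniformity failed, the fallback would be to get (b) from (a) via Markov's inequality along a geometric subsequence $n=2^\ell$. One would then interpolate between subsequence points using the $O(1/n)$ per-step change of the least-squares estimate, which holds on $E_n$ because $\lambda_{\min}(\mathcal J_n)$ grows like $n^{1-\eta_{\max}}$.
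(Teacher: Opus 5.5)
Your proposal is correct and follows essentially the same route as the paper. Part (a) cites Theorem~\ref{thm:informedolsrate}, part (b) applies Borel--Cantelli to the Keskin--Zeevi tail bound at the threshold $x_n=n^{\eta_{\max}-1+\varepsilon}$, and part (c) applies Tonelli to (a). Your version of (b) is slightly more careful than the paper's: you keep the $kn$ prefactor and add the summable design-failure probability $\mathbb P(E_n^c)$ from Lemma~\ref{lem:designmatrixgrowthcondition} as a separate term, rather than merely ``conditioning'' on $E_n$.
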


\begin{proof}
Part~(a) is exactly Theorem~\ref{thm:informedolsrate}. Part~(b) upgrades~(a) to an almost-sure rate by Borel--Cantelli applied to the Keskin--Zeevi tail bound of \citet{keskinDynamicPricingUnknown2014}: conditioning on the high-probability design-matrix event $\{\lambda_{\min}(\cJ_{n,j}^{in}) \ge C_1\, n^{1 - \eta_{\max}}\}$ from Appendix~\ref{proof:thm:informedolsrate}, the parameter-MSE tail is $\Pr(\norm{\tilde e_{n,j}^{in}}_2^2 > x_n) \le \exp(-c\, n^{1-\eta_{\max}}\, x_n)$. Setting $x_n = n^{\eta_{\max} - 1 + \varepsilon}$ makes the right-hand side $\exp(-c\,n^\varepsilon)$, which is summable in $n$ for every $\varepsilon > 0$, so Borel--Cantelli gives $\norm{\tilde e_{n,j}^{in}}_2^2 \le n^{\eta_{\max} - 1 + \varepsilon}$ a.s.\ eventually for every $\varepsilon > 0$. Taking $\varepsilon \downarrow 0$ along a countable sequence yields the stated rate (the $\log n$ factor in part~(a) is absorbed by the arbitrary $\varepsilon$ slack), and $\hat\theta_{n,j}^{in} \to \theta_j^{*,in}$ a.s.\ since the exponent $\eta_{\max} - 1 + \varepsilon < 0$ for $\varepsilon < 1 - \eta_{\max}$. Part~(c) follows from~(a) by Tonelli applied to the non-negative integrand:
\[
\EE \left[\sum_{n\ge 1}\frac{1}{n+1}\sum_{j\in\cI^{in}}\norm{\tilde e_{n,j}^{in}}_2^2 \right] \;=\; \sum_{n\ge 1}\frac{1}{n+1}\,O\!\left(n^{\eta_{\max} - 1}\log n\right) \;=\; \sum_{n\ge 1}O\!\left(n^{\eta_{\max} - 2}\log n\right) \;<\; \infty,
\]
since $\eta_{\max} - 2 < -1$ makes the series convergent even with the log factor.
\end{proof}

\begin{lemma}[Closed form of $c^*$]\label{lem:cstar-closed}
On the feasible interval $(\Phi^*, \Phi^{**}) \triangleq (C_x\,\bar\Psi/K_2,\, K_1/L_\phi^{ob})$ identified in Step~4 of the proof of Theorem~\ref{thm:mixedmarketconvergence}, the rate function $c^*(\lambda)$ of~\eqref{eq:proof-mm-cstar-lambda} attains its supremum $c^* \triangleq \sup_\lambda c^*(\lambda)$ at the unique positive root
\[
\lambda^* \;=\; \frac{[K_1 - C_x^2(K_2 - 1)] + \sqrt{[K_1 - C_x^2(K_2 - 1)]^2 + 4\, L_\phi^{ob}\, C_x^3\,\bar\Psi}}{2\, L_\phi^{ob}},
\]
and admits the closed form
\begin{equation}\label{eq:cstar-closed}
c^* \;=\; c^*(\lambda^*) \;=\; \frac{1 + K_2}{2} \;+\; \frac{K_1 - \sqrt{[K_1 - C_x^2(K_2-1)]^2 + 4\, L_\phi^{ob}\, C_x^3\,\bar\Psi}}{2\, C_x^2}.
\end{equation}
\end{lemma}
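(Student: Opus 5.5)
We maximize $c^*(\lambda)=\min\{f(\lambda),g(\lambda)\}$ with
\[
f(\lambda)\triangleq 1+\frac{K_1-\lambda L_\phi^{ob}}{C_x^2},\qquad g(\lambda)\triangleq K_2-\frac{C_x\bar\Psi}{\lambda}.
\]
The argument rests on monotonicity. On $\lambda>0$, $f$ is strictly decreasing and affine, while $g$ is strictly increasing and concave (assuming $\bar\Psi>0$). The minimum of a decreasing and an increasing function is maximized exactly where they cross. Once we show there is a unique crossing $\lambda^*>0$, it follows that $c^*=\sup_\lambda c^*(\lambda)=f(\lambda^*)=g(\lambda^*)$.

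To locate the crossing, we set $f(\lambda)=g(\lambda)$ and multiply by $\lambda C_x^2>0$. This gives the quadratic
\[
L_\phi^{ob}\lambda^2-\bigl[K_1-C_x^2(K_2-1)\bigr]\lambda-C_x^3\bar\Psi=0.
\]
The leading coefficient is positive and the constant term is negative, so the product of the roots is negative. Hence there is exactly one positive root, obtained from the quadratic formula with the $+$ sign, and this is the displayed $\lambda^*$.

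We then check that $\lambda^*$ lies in the feasible interval $(\Phi^*,\Phi^{**})$. At $\lambda=\Phi^*=C_x\bar\Psi/K_2$ we have $g=0<f$, the latter by condition (iv). At $\lambda=\Phi^{**}=K_1/L_\phi^{ob}$ we have $f=1>0$. So the crossing occurs where both functions are positive, $c^*(\lambda^*)>0$, and by monotonicity the crossing lies to the right of $\Phi^*$. If the crossing falls beyond $\Phi^{**}$, the supremum over $\lambda>0$ is still attained at the crossing. In that case we read the interval as the set where $c^*(\lambda)>0$, which extends past $\Phi^{**}$ because $f(\Phi^{**})=1>0$.

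The closed form comes from substituting $\lambda^*$ into $f$. Writing $D\triangleq K_1-C_x^2(K_2-1)$ and $\sqrt{\Delta}\triangleq\sqrt{D^2+4L_\phi^{ob}C_x^3\bar\Psi}$, we have $\lambda^*L_\phi^{ob}=(D+\sqrt{\Delta})/2$. Therefore
\[
f(\lambda^*)=1+\frac{K_1-(D+\sqrt{\Delta})/2}{C_x^2}=1+\frac{K_1+C_x^2(K_2-1)-\sqrt{\Delta}}{2C_x^2}=\frac{1+K_2}{2}+\frac{K_1-\sqrt{\Delta}}{2C_x^2},
\]
which is \eqref{eq:cstar-closed}.

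The main obstacle is the feasible-interval bookkeeping: $\lambda^*$ need not lie below $K_1/L_\phi^{ob}$. We resolve this by arguing on all of $\lambda>0$, where $c^*(\lambda)$ stays positive near the crossing. The degenerate case $\bar\Psi=0$ is handled as a limit: $g\equiv K_2$, and the formula reduces to $\min\{K_2,\cdot\}$ behaviour with $\lambda^*=\max(D,0)/L_\phi^{ob}$. We note this separately.
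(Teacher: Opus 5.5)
Your core computation is the paper's proof: both branches are monotone, the maximum of their minimum is at the crossing, clearing denominators gives $L_\phi^{ob}\lambda^2-D\lambda-C_x^3\bar\Psi=0$ with negative product of roots, and substituting the positive root into the affine branch yields \eqref{eq:cstar-closed}. Your algebra is correct. The case $\bar\Psi=0$ is vacuous: in a mixed market every term $\gamma_{i,j}\gamma_{j,k}/(2|\beta_j|)$ with $i\in\mathcal{I}^{ob}$, $j\in\mathcal{I}^{in}$, $k\neq j$ is strictly positive.

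You depart from the paper only in the interval bookkeeping. There you have found a real hole, which the paper's proof also steps over, but your patch does not close it. Write $q(\lambda)=L_\phi^{ob}\lambda^2-D\lambda-C_x^3\bar\Psi$. Then
\[
q(\Phi^{**})=C_x^2\bigl[K_1(K_2-1)/L_\phi^{ob}-C_x\bar\Psi\bigr],
\]
so $\lambda^*<\Phi^{**}$ if and only if $K_1(K_2-1)>L_\phi^{ob}C_x\bar\Psi$. This fails whenever $K_2\le 1$, for instance whenever $L_\phi^{ob}\ge 1$, since $K_2\le 2-L_\phi^{ob}$.

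In that regime $f>g$ on all of $(\Phi^*,\Phi^{**})$. The supremum over the feasible interval is therefore $g(\Phi^{**})=K_2-L_\phi^{ob}C_x\bar\Psi/K_1$. It is not attained, and it lies strictly below the closed form unless $\lambda^*=\Phi^{**}$.

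Enlarging the interval to $\{c^*(\lambda)>0\}$ does give a true statement about $\min\{f,g\}$ over $\lambda>0$, which is also how the theorem literally defines $c^*$. It is not, however, a legitimate reading of ``feasible.'' Step~4 gets the branch $1+A(\lambda)/C_x^2$ by multiplying the bound $\sum_i\|\tilde e_{n,i}\|_2^2\ge W_n^{ob}/C_x^2$ by $A(\lambda)$. That step needs $A(\lambda)\ge 0$, i.e.\ $\lambda\le\Phi^{**}$. For $\lambda>\Phi^{**}$, the only available bound comes from the other spectral inequality, $\sum_i\|\tilde e_{n,i}\|_2^2\le W_n^{ob}/(C_M-\delta)$. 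It gives the strictly smaller branch $1+A(\lambda)/(C_M-\delta)$. So $c^*(\lambda^*)$ is then no longer a rate the Lyapunov argument certifies.

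You can repair this in either of two ways:
\begin{itemize}
\item Add the hypothesis $K_1(K_2-1)>L_\phi^{ob}C_x\bar\Psi$; under it your proof is complete as written.
\item State the feasible-interval supremum as $\min\bigl\{c^*(\lambda^*),\,K_2-L_\phi^{ob}C_x\bar\Psi/K_1\bigr\}$, which follows from your monotonicity argument in one line.
\end{itemize}
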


\begin{proof}
The first argument of $c^*(\lambda) = \min\,\left\{1 + (K_1 - \lambda L_\phi^{ob})/C_x^2,\; K_2 - C_x\bar\Psi/\lambda\right\}$ is a strictly decreasing affine function of $\lambda$, falling to $1$ at the upper boundary $\Phi^{**} = K_1/L_\phi^{ob}$; the second is a strictly increasing function rising from $0$ at the lower boundary $\Phi^* = C_x\bar\Psi/K_2$. The supremum of the $\min$ of an increasing and a decreasing function on a feasible interval is attained at their unique crossing point. Setting the two arguments equal and clearing denominators by multiplying by $\lambda\, C_x^2$ yields the quadratic
\[
L_\phi^{ob}\, \lambda^2 \;+\; \bigl[\,C_x^2(K_2 - 1) - K_1\,\bigr]\, \lambda \;-\; C_x^3\,\bar\Psi \;=\; 0.
\]
Its product of roots is $-C_x^3\bar\Psi/L_\phi^{ob} < 0$ whenever $\bar\Psi > 0$, so there is a unique positive root $\lambda^*$ as displayed. Substituting $\lambda^*$ into either branch of $c^*(\lambda)$ and simplifying gives~\eqref{eq:cstar-closed}. When $\bar\Psi = 0$ the quadratic degenerates and $c^*(\lambda)$ becomes the minimum of a strictly decreasing line and the constant $K_2$, with optimum $\min\,\left\{1 + K_1/C_x^2,\, K_2\right\}$, which is exactly~\eqref{eq:cstar-closed} evaluated at $\bar\Psi = 0$.
\end{proof}

\section{Technical Lemmas}
\begin{lemma}[Empirical dispersion lower bounds]\label{lem:Jn-lower}
Let $(\mathcal F_n)_{n\ge 0}$ be a filtration. Suppose $(x_n)_{n\ge 1}$ satisfies
\[
x_{n} = \mu_n + z_n,\qquad \mu_n \in \mathcal F_{n-1},
\]
where $(z_n)_{n\ge 1}$ are independent across $n$, satisfy $z_n \perp \mathcal F_{n-1}$, $\mathbb E[z_n]=0$, and have finite second moments
\[
\sigma_n^2 \triangleq \mathrm{Var}(z_n) = \mathbb E[z_n^2]\in (0,\infty).
\]
Define the empirical mean $\bar x_n \triangleq \frac1n\sum_{m=1}^n x_m$ and the empirical dispersion
\[
J_n \triangleq \sum_{m=1}^n (x_m-\bar x_n)^2.
\]
Then, for every $n\ge 1$,
\begin{equation}\label{eq:Jn-exp-lower-general}
\mathbb E[J_n] \;\ge\; \left(1-\frac1n\right)\sum_{m=1}^n \sigma_m^2.
\end{equation}
In addition, suppose $\mu_n$ and $z_n$ are almost surely uniformly bounded. Then:

\medskip
\noindent\textbf{(i)}
If $(z_n)$ are i.i.d.\ with $\mathbb E[z_n]=0$ and $\mathrm{Var}(z_n)=\nu^2>0$, then
\begin{equation}\label{eq:Jn-as-liminf-constvar}
\liminf_{n\to\infty}\frac{J_n}{n} \;\ge\; \nu^2
\qquad\text{and}\qquad
J_n=\Theta(n)\quad\text{a.s.}
\end{equation}

\medskip
\noindent\textbf{(ii)}
If $(z_n)$ are independent with $\mathbb E[z_n]=0$, and there exist constants $\underline{\nu},\overline{\nu}>0$ such that
\begin{equation}\label{eq:sigma-poly-bounds}
\underline{\nu}^2\,n^{-c}\ \le\ \sigma_n^2\ \le\ \overline{\nu}^2\,n^{-c}\quad\forall n \quad \text{for some }c\in(0,1),
\end{equation}
Then,
\[
\liminf_{n\to\infty}\frac{J_n}{n^{1-c}} \;\ge\; \frac{\underline{\nu}^2}{1-c}
\qquad\text{a.s.}.
\]
\end{lemma}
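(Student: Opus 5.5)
The plan is to work with the recursive form of $J_n$ already used in Section~\ref{sec:price-exploration-oblivious-modeling}, namely
\[
J_n=\sum_{m=2}^n\Bigl(1-\tfrac1m\Bigr)(x_m-\bar x_{m-1})^2 ,
\]
and to split each increment as $x_m-\bar x_{m-1}=(\mu_m-\bar x_{m-1})+z_m$, where $\mu_m-\bar x_{m-1}\in\mathcal F_{m-1}$.

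\textbf{Expectation bound.} Since $z_m\perp\mathcal F_{m-1}$ with mean zero, the cross term vanishes in conditional expectation. This gives
\[
\mathbb E[(x_m-\bar x_{m-1})^2\mid\mathcal F_{m-1}]=(\mu_m-\bar x_{m-1})^2+\sigma_m^2\ge \sigma_m^2 .
\]
Summing over $m$ yields $\mathbb E[J_n]\ge\sum_{m=2}^n(1-1/m)\sigma_m^2$. To reach the stated constant $(1-1/n)\sum_{m=1}^n\sigma_m^2$, I would instead work directly with $J_n=\sum_m x_m^2-n\bar x_n^2$ and $x_m=\mu_m+z_m$. Writing $\bar z_n=\frac1n\sum_m z_m$, expand $J_n=\sum_m(x_m-\bar x_n)^2$ as a quadratic form. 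The pure-$z$ part, $\sum_m(z_m-\bar z_n)^2$, has expectation exactly $(1-1/n)\sum_m\sigma_m^2$. The $\mu$ part, $\sum_m(\mu_m-\bar\mu_n)^2$, is nonnegative. The difficulty is the cross term $2\sum_m(\mu_m-\bar\mu_n)(z_m-\bar z_n)$: it does not have zero mean, because $\mu_m$ may depend on $z_k$ for $k<m$. I would try the following route. Write $J_n=\min_c\sum_m(x_m-c)^2$ and use the identity
\[
J_n=\sum_m(x_m-\bar x_{m-1})^2\,\tfrac{m-1}{m}.
\]
Each increment then contributes at least $\tfrac{m-1}{m}\sigma_m^2$ in expectation, so it remains to compare $\sum_m\tfrac{m-1}{m}\sigma_m^2$ with $(1-\tfrac1n)\sum_m\sigma_m^2$. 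If that comparison fails when early $\sigma_m$ are large, I would accept the weaker constant $\sum_{m\ge2}(1-1/m)\sigma_m^2$, which suffices for (i) and (ii).

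\textbf{Almost-sure parts.} Define
\[
A_n=\sum_{m\le n}\tfrac{m-1}{m}(x_m-\bar x_{m-1})^2,\qquad B_n=\sum_{m\le n}\tfrac{m-1}{m}\,\mathbb E[(x_m-\bar x_{m-1})^2\mid\mathcal F_{m-1}].
\]
Then $A_n-B_n$ is a martingale. Boundedness of $\mu_n$ and $z_n$ makes $x_m$ and $\bar x_{m-1}$ bounded, so its increments are bounded. Moreover, the conditional variance of each increment is $O(\sigma_m^2)$, because the increment equals $2(\mu_m-\bar x_{m-1})z_m+z_m^2-\sigma_m^2$ times a bounded factor.

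By the martingale strong law (Williams) already invoked in the proof of Theorem~\ref{thm:impactdemandnoise}, $(A_n-B_n)/\sum_{m\le n}\sigma_m^2\to0$ a.s. whenever $\sum_m\sigma_m^2\to\infty$. Here that sum is $\Theta(n)$ in case (i) and $\Theta(n^{1-c})$ in case (ii), so it diverges in both. Combined with $B_n\ge\sum_{m\le n}\tfrac{m-1}{m}\sigma_m^2$, this gives
\[
\liminf_n \frac{J_n}{\sum_{m\le n}\sigma_m^2}\ge1 .
\]
In case (i), $\sum_m\sigma_m^2=n\nu^2$. The matching upper bound $J_n=O(n)$ follows from $J_n\le\sum_m(x_m-c)^2\le n(2\sup|x|)^2$. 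In case (ii), $\sum_{m\le n}m^{-c}\sim n^{1-c}/(1-c)$, which gives the constant $\underline\nu^2/(1-c)$.

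\textbf{Main obstacle.} The main obstacle is the sharp constant in \eqref{eq:Jn-exp-lower-general}, for the reason explained above: the adaptive dependence of $\mu_m$ on past noise prevents the naive decomposition from giving a clean bound. I expect the recursive-increment formulation to be the one that works, possibly with a slightly relaxed constant in the first display. The almost-sure claims are robust either way.
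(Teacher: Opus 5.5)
Your hesitation about the first display is justified, and you should not try to recover the sharp constant: \eqref{eq:Jn-exp-lower-general} is false as soon as $\mu_m$ may depend on past noise.

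Take $\mathcal F_n=\sigma(z_1,\dots,z_n)$, $\mu_1=0$ and $\mu_m=\bar x_{m-1}$ for $m\ge 2$; all hypotheses hold. Then $x_m-\bar x_{m-1}=z_m$, so your Welford identity gives $J_n=\sum_{m=2}^n(1-\frac1m)z_m^2$ and $\mathbb E[J_n]=\sum_{m=2}^n(1-\frac1m)\sigma_m^2$. This is strictly below $(1-\frac1n)\sum_{m=1}^n\sigma_m^2$ for every $n\ge 2$; at $n=2$, $\mathbb E[J_2]=\sigma_2^2/2<(\sigma_1^2+\sigma_2^2)/2$.

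In particular, the comparison you planned never goes your way, whatever the early $\sigma_m$: termwise $1-\frac1m\le 1-\frac1n$, and the $m=1$ term is missing. The paper obtains the sharp constant only by asserting $\mathbb E[(\sum_m\mu_m)(\sum_m z_m)]=\sum_m\mathbb E[\mu_m z_m]=0$. That step drops exactly the terms $\mathbb E[\mu_m z_k]$, $k<m$, that you identified. In the example they equal $\sigma_k^2/k>0$ and enter $\mathbb E[J_n]$ with a negative sign.

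So your bound $\mathbb E[J_n]\ge\sum_{m=2}^n(1-\frac1m)\sigma_m^2$ is the correct, and sharp, form of the display. The stated form holds only when the $\mu_m$ are independent of the noise. Your bound falls short of $\sum_{m\le n}\sigma_m^2$ only by $\sigma_1^2+\sum_{m=2}^n\sigma_m^2/m$, which is $o(\sum_{m\le n}\sigma_m^2)$ whenever that sum diverges. Moreover, only the almost-sure parts are used elsewhere in the paper.

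For (i) and (ii) your argument is correct and takes a different route from the paper. The paper uses the static decomposition $J_n=J_n^\mu+2C_n+J_n^z$, discards $J_n^\mu\ge0$, and treats the cases separately:
in (i), the SLLN for $J_n^z/n$ and Azuma--Hoeffding with Borel--Cantelli for $\sum_m\mu_m z_m$;
in (ii), Kolmogorov's two-series theorem and Kronecker's lemma with normalization $V_n=\sum_{m\le n}\sigma_m^2$.
These are pathwise bounds, so the adaptivity problem above does not affect them.

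You instead let the nonnegative drift $(\mu_m-\bar x_{m-1})^2$ play the role of $J_n^\mu$. Everything else goes into one martingale whose increments have conditional variance $O(\sigma_m^2)$. A single martingale SLLN then gives $\liminf_n J_n/V_n\ge1$ in any bounded setting with $V_n\to\infty$. This covers (i), (ii) and other schedules at once, and delivers the corrected expectation bound as a by-product.

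Two points need to be written out.
First, Williams' theorem normalizes by $\langle M\rangle_n$, not $V_n$. On $\{\langle M\rangle_\infty=\infty\}$, combine $M_n/\langle M\rangle_n\to0$ with $\langle M\rangle_n\le CV_n$; on the complement, use a.s.\ convergence of $M_n$. Alternatively, use $\sum_m\sigma_m^2/V_m^2<\infty$ and Kronecker.
Second, pass from $\sum_{m=2}^n(1-\frac1m)\sigma_m^2$ to $V_n$ using $\sum_{m\le n}\sigma_m^2/m=O(\log n)$ in (i) and $O(1)$ in (ii).

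Finally, your use of $\bar x_{m-1}\in\mathcal F_{m-1}$ requires $x_n$ to be adapted, which the statement leaves implicit. The paper's proof needs the same, since it treats $\sum_m\mu_m z_m$ as an $\mathcal F_n$-martingale.
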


\begin{proof}[Proof of Lemma~\ref{lem:Jn-lower}]
We use the identity
\[
J_n \;=\; \sum_{m=1}^n x_m^2 \;-\; n\bar x_n^2
\;=\; \sum_{m=1}^n x_m^2 \;-\; \frac1n\left(\sum_{m=1}^n x_m\right)^2.
\]
Write $x_m=\mu_m+z_m$. Expanding and taking expectations,
\[
\mathbb E\left[\sum_{m=1}^n x_m^2\right]
=\sum_{m=1}^n \mathbb E[(\mu_m+z_m)^2]
=\sum_{m=1}^n \mathbb E[\mu_m^2] + 2\sum_{m=1}^n \mathbb E[\mu_m z_m] + \sum_{m=1}^n \mathbb E[z_m^2].
\]
Since $\mu_m\in\mathcal F_{m-1}$ and $z_m\perp\mathcal F_{m-1}$ with $\mathbb E[z_m]=0$,
\[
\mathbb E[\mu_m z_m]
=\mathbb E\left[\mu_m \mathbb E[z_m\mid \mathcal F_{m-1}]\right]=0,
\]
hence
\begin{equation}\label{eq:sumx2-exp}
\mathbb E\left[\sum_{m=1}^n x_m^2\right]
=\sum_{m=1}^n \mathbb E[\mu_m^2] + \sum_{m=1}^n \sigma_m^2.
\end{equation}
Next,
\[
\sum_{m=1}^n x_m = \sum_{m=1}^n \mu_m + \sum_{m=1}^n z_m,
\]
so
\[
\left(\sum_{m=1}^n x_m\right)^2
=\left(\sum_{m=1}^n \mu_m\right)^2
+2\left(\sum_{m=1}^n \mu_m\right)\left(\sum_{m=1}^n z_m\right)
+\left(\sum_{m=1}^n z_m\right)^2.
\]
Taking expectations, the cross term vanishes:
\[
\mathbb E\left[\left(\sum_{m=1}^n \mu_m\right)\left(\sum_{m=1}^n z_m\right)\right]
=\sum_{m=1}^n \mathbb E[\mu_m z_m]=0.
\]
By independence across $m$ and $\mathbb E[z_m]=0$,
\[
\mathbb E\left[\left(\sum_{m=1}^n z_m\right)^2\right]
=\sum_{m=1}^n \mathbb E[z_m^2] + 2\sum_{1\le k<m\le n}\mathbb E[z_k z_m]
=\sum_{m=1}^n \sigma_m^2.
\]
Therefore,
\begin{equation}\label{eq:sumx-exp}
\mathbb E\left[\left(\sum_{m=1}^n x_m\right)^2\right]
=\mathbb E\left[\left(\sum_{m=1}^n \mu_m\right)^2\right] + \sum_{m=1}^n \sigma_m^2.
\end{equation}
Plugging \eqref{eq:sumx2-exp} and \eqref{eq:sumx-exp} into $\mathbb E[J_n]$ gives
\begin{align*}
\mathbb E[J_n]
&= \left(\sum_{m=1}^n \mathbb E[\mu_m^2] + \sum_{m=1}^n \sigma_m^2\right)
-\frac1n\left(\mathbb E\left[\left(\sum_{m=1}^n \mu_m\right)^2\right] + \sum_{m=1}^n \sigma_m^2\right) \\
&=\left(\sum_{m=1}^n \mathbb E[\mu_m^2] - \frac1n \mathbb E\left[\left(\sum_{m=1}^n \mu_m\right)^2\right]\right)
+\left(1-\frac1n\right)\sum_{m=1}^n \sigma_m^2.
\end{align*}
By Cauchy--Schwarz, $(\sum_{m=1}^n \mu_m)^2 \le n\sum_{m=1}^n \mu_m^2$ pointwise, hence
\[
\mathbb E\left[\left(\sum_{m=1}^n \mu_m\right)^2\right]\le n\sum_{m=1}^n \mathbb E[\mu_m^2],
\]
so the first bracket is nonnegative. This proves \eqref{eq:Jn-exp-lower-general}. 

Suppose $|\mu_n|\le B$ a.s.\ and $|z_n|\le K$ a.s. We first assume that $(z_n)$ are i.i.d.\ with $\mathbb E[z_n]=0$ and $\mathrm{Var}(z_n)=\nu^2>0$.
Expand
\begin{align}
J_n
&=\sum_{m=1}^n (\mu_m+z_m)^2 - \frac1n\left(\sum_{m=1}^n (\mu_m+z_m)\right)^2 \notag\\
&=\underbrace{\left(\sum_{m=1}^n \mu_m^2-\frac1n\left(\sum_{m=1}^n \mu_m\right)^2\right)}_{=:J_n^\mu}
+2\underbrace{\left(\sum_{m=1}^n \mu_m z_m-\frac1n\left(\sum_{m=1}^n \mu_m\right)\left(\sum_{m=1}^n z_m\right)\right)}_{=:C_n} \notag \\
&\quad +\underbrace{\left(\sum_{m=1}^n z_m^2-\frac1n\left(\sum_{m=1}^n z_m\right)^2\right)}_{=:J_n^z}.
\label{eq:Jn-decomp}
\end{align}
Since $J_n^\mu$ is the empirical dispersion of $\mu_1,\dots,\mu_n$ and variances are nonnegative, $J_n^\mu\ge 0$, and we have $J_n\ge J_n^z + 2C_n$. By the strong law of large numbers,
\[
\frac1n\sum_{m=1}^n z_m \to 0 \quad\text{a.s.},\qquad
\frac1n\sum_{m=1}^n z_m^2 \to \mathbb E[z_1^2]=\nu^2 \quad\text{a.s.}
\]
Therefore,
\[
\frac{J_n^z}{n}
=\frac1n\sum_{m=1}^n z_m^2 - \left(\frac1n\sum_{m=1}^n z_m\right)^2
\to \nu^2
\quad\text{a.s.}
\]
Define
\[
M_n := \sum_{m=1}^n \mu_m z_m.
\]
Since $\mu_m\in\mathcal F_{m-1}$ and $\mathbb E[z_m\mid \mathcal F_{m-1}]=0$, we have
\[
\mathbb E[M_n\mid \mathcal F_{n-1}] = M_{n-1},
\]
so $(M_n,\mathcal F_n)$ is a martingale. Moreover, $|M_n-M_{n-1}|=|\mu_n z_n|\le BK$ a.s.
By Azuma--Hoeffding, for any $\varepsilon>0$,
\[
\mathbb P\left(|M_n|\ge \varepsilon n\right)\le 2\exp\left(-\frac{\varepsilon^2 n}{2B^2K^2}\right),
\]
and the RHS is summable in $n$. By Borel--Cantelli, almost surely, $|M_n|<\varepsilon n$ eventually. Since $\varepsilon>0$ is arbitrary, this shows that
\[
\frac{1}{n}\sum_{m=1}^n \mu_m z_m = \frac{M_n}{n} \to 0 \quad\text{a.s.}
\]
For the second part of $C_n$, note $|\sum_{m=1}^n \mu_m|\le nB$, so
\[
\left|\frac1n\cdot\frac1n\left(\sum_{m=1}^n \mu_m\right)\left(\sum_{m=1}^n z_m\right)\right|
\le B\left|\frac1n\sum_{m=1}^n z_m\right|
\to 0 \quad\text{a.s.}
\]
Therefore, $C_n/n\to 0$ a.s. Together, dividing $J_n\ge J_n^z + 2C_n$ by $n$, and taking $\liminf$ yields
\[
\liminf_{n\to\infty}\frac{J_n}{n}
\ge \lim_{n\to\infty}\frac{J_n^z}{n} + 2\lim_{n\to\infty}\frac{C_n}{n}
= \nu^2 \quad\text{a.s.}
\]
This proves \eqref{eq:Jn-as-liminf-constvar}, where the a.s.\ upper bound $J_n=O(n)$ follows from the boundedness of $\mu_n$ and $z_n$.

Now, suppose instead that $(z_n)$ are independent with $\mathbb E[z_n]=0$ and variances satisfying \eqref{eq:sigma-poly-bounds} for some $c\in(0,1)$. Following the same decomposition \eqref{eq:Jn-decomp}, we again have $J_n\ge J_n^z + 2C_n$. We now analyze the two terms $J_n^z$ and $C_n$ separately. Let
\[
S_n \;\triangleq\; \sum_{m=1}^n z_m,
\qquad
Q_n \;\triangleq\; \sum_{m=1}^n z_m^2,
\qquad
V_n \;\triangleq\; \sum_{m=1}^n \sigma_m^2 .
\]
Then
\[
J_n^z = Q_n - \frac{1}{n}S_n^2,
\qquad
\frac{J_n^z}{V_n}
= \frac{Q_n}{V_n} - \frac{S_n^2/n}{V_n}.
\]
We first show that $Q_n/V_n \to 1$ almost surely. Define the centered variables
\[
w_m \;\triangleq\; z_m^2 - \sigma_m^2,
\]
which satisfy $\mathbb E[w_m]=0$ and are independent. Since $|z_m|\le K$ a.s.\ and $\sigma_m^2\le \overline{\nu}^2$ by \eqref{eq:sigma-poly-bounds}, we have
\[
|w_m|\le K^2+\overline{\nu}^2 \quad\text{a.s.}
\]
Moreover,
\[
\mathrm{Var}(w_m)\le \mathbb E[w_m^2] = \mathbb E[z_m^4] - \sigma_m^4 \le \mathbb E[z_m^4]\le K^2\mathbb E[z_m^2]=K^2\sigma_m^2.
\]
Because $c\in(0,1)$, the growth condition \eqref{eq:sigma-poly-bounds} implies $V_n\asymp n^{1-c}$, and hence
\[
\sum_{m=1}^\infty \frac{\mathrm{Var}(w_m)}{V_m^2}
\;\le\;
\sum_{m=1}^\infty \frac{K^2\sigma_m^2}{V_m^2}
\;\lesssim\;
\sum_{m=1}^\infty \frac{m^{-c}}{m^{2(1-c)}}
=
\sum_{m=1}^\infty m^{-(2-c)}
<\infty.
\]
Therefore, by Kolmogorov two-series theorem, the series $\sum_{m=1}^\infty w_m/V_m$ converges almost surely. Since $V_m \rightarrow \infty$, Kronecker's lemma yields
\[
\frac{1}{V_n}\sum_{m=1}^n (z_m^2-\sigma_m^2)
=\frac{1}{V_n}\sum_{m=1}^n w_m
\;\longrightarrow\; 0
\qquad\text{a.s.}
\]
Equivalently,
\[
\frac{Q_n}{V_n}\longrightarrow 1
\qquad\text{a.s.}
\]
Next, we control the term $S_n^2/n$. Since
\[
\sum_{m=1}^\infty \frac{\mathrm{Var}(z_m)}{m^{2}}
\;\lesssim\;
\sum_{m=1}^\infty m^{-(2+c)}<\infty,
\]
Kolmogorov's two-series theorem and Kronecker's lemma again yield
\[
\frac{1}{n}\sum_{m=1}^n z_m \longrightarrow 0
\qquad\text{a.s.}
\]
An argument completely analogous to $\sum_{m=1}^n w_m / V_n \rightarrow 0$ a.s. shows that
$$
\frac{S_n}{V_n} \;\longrightarrow\; 0 \qquad\text{a.s.}
$$
Hence,
\[
\frac{S_n^2/n}{V_n} \longrightarrow 0
\qquad\text{a.s.}
\]
Combining, we obtain
\[
\frac{J_n^z}{V_n}\longrightarrow 1
\qquad\text{a.s.}
\]

Recall
\[
C_n = \sum_{m=1}^n \mu_m z_m
-\frac{1}{n}\left(\sum_{m=1}^n \mu_m\right)\left(\sum_{m=1}^n z_m\right).
\]
The second term divided by $V_n$ goes to zero a.s. because $\frac{1}{n}\sum_{m=1}^n \mu_m$ is bounded as before and we showed that $\frac{S_n}{V_n}\to 0$ a.s. The first term divided by $V_n$ also goes to zero a.s. because $\mu_n$ are bounded and the same Kolmogorov--Kronecker argument applies. So, 
\[
\frac{C_n}{V_n}\longrightarrow 0
\qquad\text{a.s.}
\]
Using $J_n\ge J_n^z + 2C_n$, dividing by $V_n$, and taking $\liminf$,
\[
\liminf_{n\to\infty}\frac{J_n}{V_n}
\;\ge\;
\lim_{n\to\infty}\frac{J_n^z}{V_n}
+
2\lim_{n\to\infty}\frac{C_n}{V_n}
= 1
\qquad\text{a.s.}
\]
Finally, under \eqref{eq:sigma-poly-bounds},
\[
V_n
=\sum_{m=1}^n \sigma_m^2
\ge
\underline{\nu}^2\sum_{m=1}^n m^{-c}
=
\frac{\underline{\nu}^2}{1-c}\,n^{1-c},
\]
which yields
\[
\liminf_{n\to\infty}\frac{J_n}{\frac{\underline{\nu}^2}{1-c}\,n^{1-c}} \ge \liminf_{n\to\infty} \frac{J_n}{V_n} \ge 1 \quad \Longrightarrow \quad \liminf_{n\to\infty}\frac{J_n}{n^{1-c}} \ge \frac{\underline{\nu}^2}{1-c} \quad\text{a.s.}
\]
This completes the proof.
\end{proof}

\begin{lemma}[Uniform conditional second-moment lower bound]\label{lem:yy-lower}
Let $(\mathcal F_n)_{n\ge 0}$ be a filtration and define the feature vector
\[
y_{n, i} \;\triangleq\; \left(1,\,x_{n,i}\right)^\top \in \mathbb R^{2}.
\]
Assume that for each $i\in\{1,\dots,N\}$,
\[
x_{n,i} = \mu_{n,i} + z_{n,i},\qquad \mu_{n,i}\in\mathcal F_{n-1},
\]
and conditional on $\mathcal F_{n-1}$ the random variables $\{z_{n,i}\}_{i=1}^N$ are independent with
\[
\mathbb E[z_{n,i}\mid \mathcal F_{n-1}] = 0,\qquad \mathrm{Var}(z_{n,i}\mid \mathcal F_{n-1}) = \nu_i^2>0
\quad\text{a.s.}
\]
Assume further that there exists $B>0$ such that $|\mu_{n,i}|\le B$ a.s. for all $n$ and $i$. Let $\nu_* \triangleq \min_{1\le i\le N}\nu_i>0$. Then for all $n\ge 1$,
\[
\mathbb E[y_{n,i} y_{n,i}^\top\mid \mathcal F_{n-1}] \;\succeq\; C_M I_{2} \quad \text{for all } i,
\]
where
\[
C_M \;\triangleq\; \frac12 \min\left\{\frac{\nu_*^2}{\nu_*^2 + B^2}, \nu_*^2\right\} > 0.
\]
\end{lemma}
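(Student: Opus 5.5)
The plan is to compute the conditional second-moment matrix explicitly and lower-bound its smallest eigenvalue via the determinant--trace formula for $2\times 2$ positive semidefinite matrices. Conditioning on $\mathcal F_{n-1}$ and using $x_{n,i}=\mu_{n,i}+z_{n,i}$ with $\mathbb E[z_{n,i}\mid\mathcal F_{n-1}]=0$ and $\mathrm{Var}(z_{n,i}\mid\mathcal F_{n-1})=\nu_i^2$, we obtain
\[
G\;\triangleq\;\mathbb E[y_{n,i}y_{n,i}^\top\mid\mathcal F_{n-1}]
=\begin{pmatrix}1 & \mu\\ \mu & \mu^2+\nu_i^2\end{pmatrix},\qquad \mu=\mu_{n,i}.
\]
Equivalently, $G=(1,\mu)^\top(1,\mu)+\mathrm{diag}(0,\nu_i^2)$, which has exactly the rank-one-plus-diagonal shape treated by Lemma~\ref{lem:rankonediagonalmatrixmineigenvalue}. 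I could invoke that lemma directly with $n=1$, $b=\mu$, $S=\nu_i^2$, which gives $\lambda_{\min}(G)\ge\tfrac12\min\{1/(1+\mu^2/\nu_i^2),\,\nu_i^2\}=\tfrac12\min\{\nu_i^2/(\nu_i^2+\mu^2),\,\nu_i^2\}$. This is precisely the claimed form before the uniformity step.

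For a self-contained version, I will use $\det G=\nu_i^2$ and $\mathrm{tr}\,G=1+\mu^2+\nu_i^2$. Since $\lambda_{\max}\le \mathrm{tr}\,G$, we get $\lambda_{\min}=\det G/\lambda_{\max}\ge \nu_i^2/(1+\mu^2+\nu_i^2)$. With a bit more care (comparing $1+\mu^2+\nu_i^2$ to $2\max\{1,\ \nu_i^2+\mu^2\}$), this yields $\lambda_{\min}\ge \tfrac12\min\{\nu_i^2/(\nu_i^2+\mu^2),\ \nu_i^2\}$. The case split is as follows. If $\nu_i^2+\mu^2\ge1$, the trace is at most $2(\nu_i^2+\mu^2)$, giving the first branch. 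Otherwise the trace is at most $2$, giving $\nu_i^2/2$, the second branch.

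The final step is uniformity over $n$ and $i$. Both branches are increasing in $\nu_i^2$ and the first is decreasing in $\mu^2$. Substituting $\mu^2\le B^2$ and $\nu_i^2\ge\nu_*^2$ therefore gives the stated $C_M$, independent of $n,i$ and of the history.

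The only delicate point is this monotonicity bookkeeping, together with checking that the two-case bound is indeed at least the stated $\tfrac12\min\{\cdot\}$. In the first case one must confirm that $\nu^2/(\nu^2+\mu^2)$ is the smaller term, or simply that the minimum is still dominated. I expect no real obstacle.
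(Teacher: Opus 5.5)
Your proposal is correct and follows the paper's proof essentially verbatim. Like the paper, you write $\mathbb E[y_{n,i}y_{n,i}^\top\mid\mathcal F_{n-1}]=(1,\mu_{n,i})^\top(1,\mu_{n,i})+\mathrm{diag}(0,\nu_i^2)$, apply Lemma~\ref{lem:rankonediagonalmatrixmineigenvalue} with $n=1$, $b=\mu_{n,i}$, $S=\nu_i^2$, and then use $\mu_{n,i}^2\le B^2$, $\nu_i^2\ge\nu_*^2$ and the monotonicity of both branches. Your determinant--trace alternative is also sound: it gives the slightly sharper $\nu_i^2/(1+\mu_{n,i}^2+\nu_i^2)$, which dominates the stated minimum because $1+\mu^2+\nu^2\le 2\max\{1,\mu^2+\nu^2\}$, so the case-split worry you flag is not an obstacle.
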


\begin{proof}[Proof of Lemma~\ref{lem:yy-lower}]
Fix $n\ge 1$ and $i$. By zero conditional means of $\{z_{n,i}\}_{i=1}^N$,
\[
\mathbb E[x_{n,i}\mid\mathcal F_{n-1}] = \mu_{n,i},
\qquad
\mathbb E[x_{n,i}^2\mid\mathcal F_{n-1}] = \mu_{n,i}^2 + \nu_i^2.
\]
Therefore,
\[
\mathbb E[y_{n,i} y_{n,i}^\top\mid\mathcal F_{n-1}]
=
\begin{pmatrix}
1 & \mu_{n,i}\\
\mu_{n,i} & \mu_{n,i}^2 + \nu_i^2
\end{pmatrix}
=
(1,\mu_{n,i})^\top(1,\mu_{n,i})
+
\begin{pmatrix}
0 & 0\\
0 & \nu_i^2
\end{pmatrix}.
\]
We now apply Lemma~\ref{lem:rankonediagonalmatrixmineigenvalue} with
$n=1$, $b=\mu_{n,i}$, and $S=\nu_i^2$. This yields
\[
\lambda_{\min}\!\left(\mathbb E[y_{n,i} y_{n,i}^\top\mid\mathcal F_{n-1}]\right)
\;\ge\;
\frac12 \min\left\{
\frac{\nu_i^2}{\nu_i^2 + \mu_{n,i}^2}, \nu_i^2
\right\}.
\]
By the uniform boundedness assumption $|\mu_{n,i}|\le B$ a.s.\ for all $i$,
$$
\lambda_{\min}\!\left(\mathbb E[y_{n,i} y_{n,i}^\top\mid\mathcal F_{n-1}]\right) \ge \frac12 \min\left\{\frac{\nu_i^2}{\nu_i^2 + B^2}, \nu_i^2\right\} \ge \frac12 \min\left\{\frac{\nu_*^2}{\nu_*^2 + B^2}, \nu_*^2\right\},
$$
completing the proof.
\end{proof}

\begin{lemma}[Spectral lower bound for adaptive covariance matrices]\label{lem:spectralboundadaptivecovariancematrix}
Let $(\mathcal F_n)_{n\ge 0}$ be a filtration. Let $(x_n)_{n\ge 1}$ be an $\mathbb R^d$-valued adapted process such that:
\begin{itemize}
\item There exists $C_x<\infty$ such that $\norm{x_n}_2\le C_x$ a.s. for all $n\ge 1$.
\item There exists $C_M>0$ such that $\mathbb E[x_n x_n^\top|\mathcal F_{n-1}]\succeq C_M I_d$ a.s. for all $n\ge 1$.
\end{itemize}
Define $S_n = \sum_{m=1}^n x_m x_m^\top$, $M_n = S_n / n$. Then, for any $\delta>0$, almost surely,
$$
S_n \succeq (C_M - \delta) n I_d, \quad \text{and} \quad M_n \succeq (C_M - \delta) I_d \quad \text{eventually}.
$$
As a consequence, for any $x$ with $\norm{x}_2 \le C_x$,
$$
x^\top S_n^{-1} x \le \frac{C_x^2}{(C_M - \delta) n}, \quad \text{and} \quad x^\top M_n^{-1} x \le \frac{C_x^2}{C_M - \delta}
$$
eventually almost surely.
\end{lemma}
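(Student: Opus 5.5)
The plan is a martingale-plus-variance decomposition of $S_n-nC_M I_d$, followed by a strong law for bounded matrix martingale differences. First I would write
\[
S_n \;=\; \sum_{m=1}^n \mathbb E[x_m x_m^\top\mid\mathcal F_{m-1}] \;+\; \sum_{m=1}^n Y_m,\qquad Y_m \triangleq x_m x_m^\top - \mathbb E[x_m x_m^\top\mid\mathcal F_{m-1}],
\]
exactly as in the proof of Lemma~\ref{lem:designmatrixgrowthcondition}. The predictable part is $\succeq nC_M I_d$ by the persistent-excitation hypothesis. Hence $\lambda_{\min}(S_n)\ge nC_M - \norm{\sum_{m\le n}Y_m}_{op}$, and it suffices to show $\norm{\sum_{m\le n}Y_m}_{op}/n\to 0$ a.s.

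Next I would bound the martingale. Each $Y_m$ is a symmetric $\mathcal F_m$-measurable matrix with $\mathbb E[Y_m\mid\mathcal F_{m-1}]=0$ and $\norm{Y_m}_{op}\le 2C_x^2$, as computed in Lemma~\ref{lem:designmatrixgrowthcondition}. The simplest route is entrywise: each of the $d^2$ scalar sequences $(Y_m)_{k\ell}$ is a bounded martingale difference sequence. Azuma--Hoeffding then gives $\mathbb P(|\sum_{m\le n}(Y_m)_{k\ell}|\ge \epsilon n)\le 2\exp(-\epsilon^2 n/(8C_x^4))$, which is summable. By Borel--Cantelli and a union over the finitely many entries, $\norm{\sum_{m\le n}Y_m}_{op}\le d\max_{k,\ell}|\cdot|<d\epsilon n$ eventually a.s. (Alternatively, the matrix Freedman inequality, Theorem~\ref{thm:matrixfreedman}, gives the same conclusion directly.) Taking $\epsilon=\delta/d$ yields $S_n\succeq (C_M-\delta)nI_d$ eventually a.s. Dividing by $n$ gives the statement for $M_n$.

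The consequences then follow for $\delta<C_M$ from $x^\top S_n^{-1}x\le \norm{x}_2^2\,\lambda_{\max}(S_n^{-1}) = \norm{x}_2^2/\lambda_{\min}(S_n)\le C_x^2/((C_M-\delta)n)$, and similarly for $M_n$. For $\delta\ge C_M$ the statements are vacuous or trivial, so I would note the restriction $\delta\in(0,C_M)$.

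There is no real obstacle. The only points needing care are the passage from entrywise to operator-norm control, which costs only the dimension factor $d$, and the fact that the exceptional null set depends on $\delta$. The latter is harmless because each statement is made for a fixed $\delta$; if uniformity is desired, one can intersect over rational $\delta$.
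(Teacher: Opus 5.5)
Your argument is correct. It differs from the paper's mainly in how uniformity over directions is obtained.

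The paper works one direction at a time. For a fixed unit vector $v$, it applies the martingale strong law to the bounded scalar differences $(v^\top x_m)^2-\mathbb E[(v^\top x_m)^2\mid\mathcal F_{m-1}]$ to get $\liminf_n v^\top S_n v/n\ge C_M$. It then extends this to all unit vectors through a finite $\varepsilon$-net of the sphere and the perturbation bound $|u^\top S_n u-v^\top S_n v|\le 2\varepsilon nC_x^2$.

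You instead keep the compensator at the matrix level. The bound $\sum_m\mathbb E[x_mx_m^\top\mid\mathcal F_{m-1}]\succeq nC_MI_d$ then enters once through Weyl's inequality, and everything reduces to showing $\|\sum_{m\le n}Y_m\|_{\mathrm{op}}=o(n)$. The $d^2$ entrywise Azuma--Hoeffding bounds together with $\|A\|_{\mathrm{op}}\le d\max_{k,\ell}|A_{k\ell}|$ deliver this, and your constants check out.

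Your route dispenses with the net and the approximation step. Because Azuma gives summable exponential tails, it also yields a quantitative fluctuation bound such as $\|\sum_{m\le n}Y_m\|_{\mathrm{op}}=O(\sqrt{n\log n})$ a.s.\ at no extra cost, whereas the paper's SLLN-based argument gives no rate. For fixed $d$, neither your dimension factor nor the net's cardinality matters.

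Your restriction to $\delta\in(0,C_M)$ for the inverse bounds is right; the paper uses this tacitly when it asserts that $S_n$ is eventually invertible. For $\delta\ge C_M$ those bounds are ill-posed rather than vacuous, though the first claim is indeed trivial there.
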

\begin{proof}[Proof of Lemma~\ref{lem:spectralboundadaptivecovariancematrix}]
Fix any deterministic unit vector $v\in\mathbb R^d$ with $\norm{v}_2=1$. Define the scalar process
$$
Z_m(v) \triangleq (v^\top x_m)^2.
$$
Then $Z_m(v)$ is $\mathcal F_m$-measurable and bounded a.s.: $0 \le Z_m(v) \le C_x^2$. Define the martingale difference sequence
$$
\Delta_m(v) \triangleq Z_m(v) - \mathbb E[Z_m(v)|\mathcal F_{m-1}].
$$
Since $\norm{\Delta_m(v)}_2 \le 2C_x^2$ a.s., $\Delta_m(v)\in L^2$ and
$$
\sum_{m=1}^\infty \frac{\mathbb E[\Delta_m(v)^2]}{m^2} \le \sum_{m=1}^\infty \frac{4C_x^4}{m^2}<\infty.
$$
By the Martingale Strong Law of Large Numbers, we have
$$
\frac{1}{n}\sum_{m=1}^n \Delta_m(v) \to 0 \quad\text{a.s.}
$$
We can compute a conditional mean lower bound:
$$
\mathbb E[Z_m(v)\mid \mathcal F_{m-1}] = v^\top \mathbb E[x_m x_m^\top|\mathcal F_{m-1}] v \ge C_M v^\top v = C_M \quad \text{a.s.}
$$
Together, we have
$$
\frac{1}{n}\sum_{m=1}^n Z_m(v) = \frac{1}{n}\sum_{m=1}^n \mathbb E[Z_m(v)\mid \mathcal F_{m-1}] + \frac{1}{n}\sum_{m=1}^n \Delta_m(v) \ge C_M + o(1) \quad\text{a.s.}
$$
Hence, for each fixed unit vector $v$,
$$
\liminf_{n\to\infty}\frac{1}{n}\sum_{m=1}^n (v^\top x_m)^2 \ge C_M \quad\text{a.s.}
$$
Since $v^\top S_n v = \sum_{m=1}^n (v^\top x_m)^2$, we equivalently have
$$
\liminf_{n\to\infty} \frac{1}{n} v^\top S_n v \ge C_M \quad\text{a.s. for each fixed unit vector } v.
$$
We next upgrade the bound to hold uniformly over an $\varepsilon$-net. Fix $\varepsilon\in(0,1)$. There exists a finite set $\mathcal N_\varepsilon\subset \mathbb S^{d-1}$ (the unit sphere) such that $|\mathcal N_\varepsilon|\le (1+2/\varepsilon)^d$ and there exists $v\in\mathcal N_\varepsilon$ with $\norm{u-v}_2\le \varepsilon$ for every $u\in\mathbb S^{d-1}$. For each fixed $v\in\mathcal N_\varepsilon$, we have
$$
\liminf_{n\to\infty}\frac{1}{n}v^\top S_n v \ge C_M \quad\text{a.s.}
$$
So, for each $v\in\mathcal N_\varepsilon$, there exists an a.s. finite random $N_v(\omega)$ such that for all $n\ge N_v(\omega)$,
$$
\frac{1}{n}v^\top S_n v \ge C_M-\frac{\delta}{2} \quad\text{a.s.}
$$
Since $\mathcal N_\varepsilon$ is finite, define $N_0(\omega)\triangleq \max_{v\in\mathcal N_\varepsilon} N_v(\omega) < \infty$ a.s., and for all $n\ge N_0(\omega)$ and all $v\in\mathcal N_\varepsilon$,
$$
\frac{1}{n}v^\top S_n v \ge C_M-\frac{\delta}{2} \quad\text{a.s.}
$$
Now, for any unit vector $u\in\mathbb S^{d-1}$, there exists $v\in\mathcal N_\varepsilon$ such that $\norm{u-v}_2\le \varepsilon$. Then,
$$
|u^\top S_n u - v^\top S_n v| \le \norm{S_n}_{\text{op}} \norm{uu^\top - vv^\top}_{\text{op}}.
$$
Since
$$
uu^\top - vv^\top = (u-v)u^\top + v(u-v)^\top,
$$
we have
$$
\norm{uu^\top - vv^\top}_{\text{op}} \le \norm{u-v}_2 \norm{u}_2 + \norm{v}_2 \norm{u-v}_2 \le 2\varepsilon.
$$
Since each $x_m x_m^\top\succeq 0$ and $\norm{x_mx_m^\top}_{\text{op}}\le C_x^2$, we have
$$
\norm{S_n}_{\text{op}} \le \sum_{m=1}^n \norm{x_m x_m^\top}_{\text{op}} \le n C_x^2.
$$
Together, we have
$$
\frac{1}{n}|u^\top S_n u - v^\top S_n v| \le 2 C_x^2 \varepsilon.
$$
Choose $\varepsilon$ small such that $2\varepsilon C_x^2 \le \delta/2$. Then, for all $n\ge N_0(\omega)$ and all unit vectors $u\in\mathbb S^{d-1}$,
$$
\frac{1}{n}u^\top S_n u \ge \frac{1}{n}v^\top S_n v - \frac{\delta}{2} \ge \left(C_M-\frac{\delta}{2}\right) - \frac{\delta}{2} = C_M-\delta,
$$
i.e.,
$$
S_n \succeq (C_M - \delta) n I_d, \quad \text{and} \quad M_n \succeq (C_M - \delta) I_d.
$$
For sufficiently large $n$, $S_n$ is invertible and $S_n^{-1} \preceq \frac{1}{(C_M - \delta) n} I_d$, so for any $x$ with $\norm{x}_2 \le C_x$,
$$
x^\top S_n^{-1}x \le \frac{C_x^2}{(C_M-\delta)n}, \quad \text{and} \quad x^\top M_n^{-1} x \le \frac{C_x^2}{C_M - \delta}.
$$
\end{proof}

\begin{lemma}[Non-homogeneous linear recursions]\label{lem:nonhomogeneouslinearrecursion}
Let $(x_n)_{n \ge 1}$ be nonnegative.
\begin{enumerate}[label=(\alph*), leftmargin=2em]
\item If, for some $a, b > 0$,
$$
x_{n+1} \le \left(1 - \frac{a}{n+1}\right) x_n + \frac{b}{n(n+1)},
$$
then
$$
x_n = \begin{cases}
O(n^{-a}), & 0 < a < 1, \\
O\!\left(\frac{\log n}{n}\right), & a = 1, \\
O(n^{-1}), & a > 1.
\end{cases}
$$

\item If, for some $a, b > 0$ and $r > 1$,
$$
x_{n+1} \le \left(1 - \frac{a}{n+1}\right) x_n + b\, n^{-r} \log n,
$$
then
$$
x_n = \begin{cases}
O\!\left(n^{-(r-1)} \log n\right), & a > r - 1, \\
O\!\left(n^{-a} (\log n)^2\right), & a = r - 1, \\
O(n^{-a}), & a < r - 1.
\end{cases}
$$
\end{enumerate}
\end{lemma}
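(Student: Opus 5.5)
Both parts follow the same template: unroll the recursion and compare the accumulated forcing with the contraction factor. Fix $n_0$ with $a/(n_0+1)<1$, so that every coefficient $1-a/(m+1)$ with $m\ge n_0$ is positive. Iterating from $n_0$ gives
\[
x_n \le \Bigl(\prod_{m=n_0}^{n-1}\bigl(1-\tfrac{a}{m+1}\bigr)\Bigr)x_{n_0} + \sum_{k=n_0}^{n-1} f_k \prod_{m=k+1}^{n-1}\bigl(1-\tfrac{a}{m+1}\bigr),
\]
where $f_k=b/(k(k+1))$ in part (a) and $f_k=b\,k^{-r}\log k$ in part (b).

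To bound the products, use $\log(1-x)\le -x$ together with $\sum_{m=k+1}^{n-1}\frac1{m+1}\ge \log\frac{n+1}{k+2}$. This yields
\[
\prod_{m=k+1}^{n-1}\bigl(1-\tfrac{a}{m+1}\bigr)\le C\,(k/n)^a,
\]
the same estimate used in the proof of Theorem~\ref{thm:allinformedmeanforecast}. Consequently
\[
x_n \le C n^{-a} + C n^{-a}\sum_{k=n_0}^{n} k^{a} f_k .
\]
The remaining work is to evaluate the power sums with the integral comparison \eqref{eq:cor:power-sum}.

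For part (a), $k^a f_k\asymp k^{a-2}$. If $a<1$, then $a-2<-1$, so the sum is bounded and $x_n=O(n^{-a})$. If $a=1$, the sum is $\asymp \log n$, giving $x_n=O(\log n/n)$. If $a>1$, the sum is $\asymp n^{a-1}$, giving $x_n=O(n^{-1})$. In each case the homogeneous term $Cn^{-a}$ is dominated by the stated rate.

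For part (b), $k^a f_k = b\,k^{a-r}\log k$. First bound $\log k\le\log n$ and pull it out of the sum. If $a>r-1$, the exponent satisfies $a-r>-1$, so the sum is $\lesssim n^{a-r+1}\log n$ and $x_n=O(n^{-(r-1)}\log n)$. If $a=r-1$, the summand is $k^{-1}\log k$, whose sum is $\asymp(\log n)^2$; this gives $x_n=O(n^{-a}(\log n)^2)$. If $a<r-1$, the exponent satisfies $a-r<-1$, and $\sum_k k^{a-r}\log k$ converges (here we keep $\log k$ inside the sum rather than bounding it), so $x_n=O(n^{-a})$.

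No step is a genuine obstacle. The only care needed is the product bound with consistent index shifts, which are absorbed into $C$, and keeping $\log k$ inside the sum in the subcritical case so that no spurious $\log n$ factor appears.
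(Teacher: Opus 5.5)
Your proof is correct and follows essentially the same route as the paper's: unroll the recursion, bound the contraction products by $C\,(k/n)^a$, and evaluate the resulting power sums case by case by integral comparison. The differences are cosmetic. You bound the products via $\log(1-x)\le -x$ rather than the paper's exact Gamma-function ratios. You also start the iteration at an $n_0$ where every factor $1-a/(m+1)$ is positive, which is slightly more careful than the paper's iteration from $n=1$, whose early factors become negative once $a>2$.
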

\begin{proof}[Proof of Lemma~\ref{lem:nonhomogeneouslinearrecursion}]
Both parts iterate a recursion of the common form
$$
x_{n+1} \le \left(1 - \frac{a}{n+1}\right) x_n + f_n
$$
with nonnegative forcing $f_n$. Iterating from $n_0 = 1$ to $n - 1$ gives
\begin{equation}\label{eq:nonhomogeneous-iteration}
x_n \le x_1 \prod_{m=1}^{n-1} \left(1 - \frac{a}{m+1}\right) + \sum_{k=1}^{n-1} f_k \prod_{m=k+1}^{n-1} \left(1 - \frac{a}{m+1}\right).
\end{equation}
The two product factors admit closed-form Gamma-ratio expressions:
$$
\prod_{m=1}^{n-1} \left(1 - \frac{a}{m+1}\right) = \prod_{m=2}^{n} \frac{m - a}{m} = \frac{\Gamma(n - a + 1)}{\Gamma(2 - a)\, \Gamma(n + 1)} \asymp n^{-a},
$$
$$
\prod_{m=k+1}^{n-1} \left(1 - \frac{a}{m+1}\right) = \prod_{m=k+2}^{n} \frac{m - a}{m} = \frac{\Gamma(n - a + 1)\, \Gamma(k + 2)}{\Gamma(k + 2 - a)\, \Gamma(n + 1)} \asymp \left(\frac{k+1}{n}\right)^{a}.
$$
Substituting into \eqref{eq:nonhomogeneous-iteration},
\begin{equation}\label{eq:nonhomogeneous-collapsed}
x_n \le C_1\, n^{-a} + C_2\, n^{-a} \sum_{k=1}^{n-1} (k+1)^{a}\, f_k,
\end{equation}
for constants $C_1, C_2 > 0$ depending only on $a$ and $x_1$.

\emph{Part (a).} Substituting $f_k = b/(k(k+1))$ into \eqref{eq:nonhomogeneous-collapsed} and using $k+1 \le 2k$,
$$
(k+1)^a\, f_k = \frac{b\, (k+1)^a}{k(k+1)} \asymp (k+1)^{a-2},
$$
so
$$
x_n \le C_1\, n^{-a} + C_2'\, n^{-a} \sum_{k=1}^{n-1} (k+1)^{a-2},
$$
and the standard integral comparison gives
$$
\sum_{k=1}^{n-1} (k+1)^{a-2} = \begin{cases}
O(1), & 0 < a < 1, \\
O(\log n), & a = 1, \\
O(n^{a-1}), & a > 1.
\end{cases}
$$
Multiplying by $n^{-a}$ yields the three rates of part (a).

\emph{Part (b).} Substituting $f_k = b\, k^{-r} \log k$ into \eqref{eq:nonhomogeneous-collapsed} and using $(k+1)^a \asymp k^a$,
$$
(k+1)^a\, f_k \asymp k^{a - r} \log k,
$$
so
$$
x_n \le C_1\, n^{-a} + C_2''\, n^{-a} \sum_{k=1}^{n-1} k^{a-r} \log k.
$$
Integral comparison gives, for $a \ne r - 1$ via $\int k^{a-r} \log k\, dk$ by parts,
$$
\sum_{k=1}^{n-1} k^{a-r} \log k = \begin{cases}
O(1), & a < r - 1, \\
O\!\left((\log n)^2\right), & a = r - 1, \\
O\!\left(n^{a-r+1} \log n\right), & a > r - 1.
\end{cases}
$$
Multiplying by $n^{-a}$ yields the three rates of part (b).
\end{proof}

\begin{theorem}[Matrix Freedman (\citet{troppFreedmansInequalityMatrix2011} Theorem 1.2)]\label{thm:matrixfreedman}
Consider a matrix martingale $(Y_n, \mathcal F_n)_{n\ge 0}$ whose values are self-adjoint matrices with dimension $d$, and let $(X_n)_{n\ge 1}$ be the associated difference sequence. Assume that the difference sequence is uniformly bounded in the sense that there exists $R > 0$ such that
$$
\lambda_{\max}(X_n) \le R \quad \text{almost surely for all } n \ge 1.
$$
Define the predictable quadratic variation process of the martingale:
$$
W_n = \sum_{k=1}^n \mathbb E[X_k^2 \mid \mathcal F_{k-1}], \quad n \ge 1.
$$
Then, for all $t \ge 0$ and $\sigma^2 > 0$,
$$
\mathbb P\left(\exists n \ge 0: \lambda_{\max}(Y_n) \ge t, \; \lambda_{\max}(W_n) \le \sigma^2\right) \le d \cdot \exp\left(-\frac{t^2/2}{\sigma^2 + Rt/3}\right).
$$
\end{theorem}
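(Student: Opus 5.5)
This statement is quoted from \citet{troppFreedmansInequalityMatrix2011}, and I would reproduce Tropp's supermartingale argument, which lifts the scalar Freedman proof to self-adjoint matrices. Fix $\theta>0$ and define $g(\theta)\triangleq(e^{\theta R}-\theta R-1)/R^2$. The main object is the real-valued process
\[
S_n(\theta)\;\triangleq\;\mathrm{tr}\exp\bigl(\theta Y_n-g(\theta)W_n\bigr),\qquad S_0(\theta)=\mathrm{tr}\exp(\theta Y_0)=d \text{ (taking } Y_0=0\text{)}.
\]
The first step is to show that $S_n(\theta)$ is a positive supermartingale.

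\emph{Step 1: one-step matrix MGF bound.} The scalar inequality $e^{\theta x}\le 1+\theta x+g(\theta)x^2$ holds for $x\le R$, because $x\mapsto (e^{\theta x}-\theta x-1)/x^2$ is increasing. Applying it through the spectral calculus to $X_k$, whose eigenvalues are at most $R$, and taking conditional expectations with $\mathbb E[X_k\mid\mathcal F_{k-1}]=0$, gives
\[
\mathbb E[e^{\theta X_k}\mid\mathcal F_{k-1}]\preceq I+g(\theta)\,\mathbb E[X_k^2\mid\mathcal F_{k-1}]\preceq \exp\bigl(g(\theta)\,\mathbb E[X_k^2\mid\mathcal F_{k-1}]\bigr).
\]
Since $\log$ is operator monotone, $\log\mathbb E[e^{\theta X_k}\mid\mathcal F_{k-1}]\preceq g(\theta)\,\mathbb E[X_k^2\mid\mathcal F_{k-1}]$.

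\emph{Step 2: supermartingale via Lieb's theorem.} Lieb's concavity theorem says that $A\mapsto\mathrm{tr}\exp(H+\log A)$ is concave on positive definite $A$ for fixed self-adjoint $H$. Set $H=\theta Y_{k-1}-g(\theta)W_k$, which is $\mathcal F_{k-1}$-measurable, and $A=e^{\theta X_k}$. Conditional Jensen then yields
\[
\mathbb E[S_k\mid\mathcal F_{k-1}]\le \mathrm{tr}\exp\bigl(H+\log\mathbb E[e^{\theta X_k}\mid\mathcal F_{k-1}]\bigr)\le \mathrm{tr}\exp\bigl(\theta Y_{k-1}-g(\theta)W_{k-1}\bigr)=S_{k-1}.
\]
The last inequality combines Step~1 with the monotonicity of $\mathrm{tr}\exp$. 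I expect this to be the main obstacle: a naive Golden--Thompson argument only handles two matrices and does not chain across steps, so Lieb's theorem (equivalently, Tropp's corollary on $\mathrm{tr}\exp(H+\log\mathbb E e^{X})$) is essential here.

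\emph{Step 3: stopping and optimizing.} Let $\tau$ be the first $n$ with $\lambda_{\max}(Y_n)\ge t$ and $\lambda_{\max}(W_n)\le\sigma^2$. On $\{\tau<\infty\}$ we have
\[
S_\tau\ge \exp\bigl(\lambda_{\max}(\theta Y_\tau-g(\theta)W_\tau)\bigr)\ge e^{\theta t-g(\theta)\sigma^2},
\]
using Weyl's inequality $\lambda_{\max}(A-B)\ge\lambda_{\max}(A)-\lambda_{\max}(B)$ for $B\succeq 0$. Optional stopping for the positive supermartingale, applied to $\tau\wedge n$ and then Fatou as $n\to\infty$, gives
\[
\mathbb P(\tau<\infty)\,e^{\theta t-g(\theta)\sigma^2}\le d .
\]
Finally, choose $\theta=R^{-1}\log(1+Rt/\sigma^2)$. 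This gives the Bennett-type bound $d\exp\bigl(-\tfrac{\sigma^2}{R^2}h(Rt/\sigma^2)\bigr)$ with $h(u)=(1+u)\log(1+u)-u$, and the elementary inequality $h(u)\ge \tfrac{u^2/2}{1+u/3}$ gives the stated Bernstein form.
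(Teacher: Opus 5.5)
Your proposal is correct and reproduces Tropp's original proof of the cited result, which the paper invokes by citation rather than reproving. The ingredients are the same as Tropp's: the one-step bound $\log\mathbb{E}[e^{\theta X_k}\mid\mathcal F_{k-1}]\preceq g(\theta)\,\mathbb{E}[X_k^2\mid\mathcal F_{k-1}]$, Lieb's concavity theorem making $\mathrm{tr}\exp(\theta Y_n-g(\theta)W_n)$ a positive supermartingale started at $d$, optional stopping with Fatou, and the Bennett-to-Bernstein simplification. Your normalization $Y_0=0$ (with $W_0=0$) is genuinely needed: it is the standing convention in Tropp's setup that the restatement here omits, and without it the bound fails (take $Y_n\equiv tI$). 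Your joint hitting time is a legitimate stopping time because $W_n$ is $\mathcal F_{n-1}$-measurable.
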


\begin{lemma}[Minimum eigenvalue of a diagonal-plus-rank-one matrix]\label{lem:rankonediagonalmatrixmineigenvalue}
Let $b \in \mathbb{R}^d$, $S = \mathrm{diag}(s_1, ..., s_d)$ with $s_i > 0$ for all $i$, and define
$$
K = n(1, b)^\top (1, b) + \begin{pmatrix}
0 & 0 \\
0 & S
\end{pmatrix}.
$$
Then,
$$
\lambda_{\min}(K) \ge \frac{1}{2} \min \left\{\frac{n}{1 + n b^\top S^{-1} b}, \min_{1 \le i \le d} s_i\right\}.
$$
\end{lemma}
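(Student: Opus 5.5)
The plan is to bound the largest eigenvalue of $K^{-1}$ rather than attack $\lambda_{\min}(K)$ directly. The block structure of $K$ makes its inverse explicit, and a standard block-diagonal domination turns that inverse into the two-term minimum in the statement.

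\emph{Step 1 (block form and invertibility).} Writing $K$ in block form,
\[
K=\begin{pmatrix} n & n b^\top\\ n b & n bb^\top + S\end{pmatrix}.
\]
The Schur complement of the top-left entry $n>0$ is $n bb^\top + S - n b\,n^{-1}\,n b^\top = S$, which is positive definite. Hence $K\succ 0$, and the block-inversion formula gives
\[
K^{-1}=\begin{pmatrix} \tfrac1n + b^\top S^{-1} b & -\,b^\top S^{-1}\\ -\,S^{-1} b & S^{-1}\end{pmatrix}.
\]
This can be checked directly by multiplying by $K$. It is the only computation in the proof.

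\emph{Step 2 (block-diagonal domination).} For any positive semidefinite block matrix $M=\begin{pmatrix} A & C^\top\\ C & D\end{pmatrix}$, let $E=\mathrm{diag}(1,-I_d)$. Then $EME=\begin{pmatrix} A & -C^\top\\ -C & D\end{pmatrix}$ is congruent to $M$, so it is also positive semidefinite. Adding $M$ and $EME$ gives
\[
M\preceq M+EME=2\,\mathrm{diag}(A,D).
\]
Apply this with $M=K^{-1}$, $A=\tfrac1n+b^\top S^{-1}b$ (a scalar) and $D=S^{-1}$. This yields
\[
\lambda_{\max}(K^{-1})\le 2\max\Bigl\{\tfrac1n+b^\top S^{-1}b,\ \max_i s_i^{-1}\Bigr\}.
\]

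\emph{Step 3 (conclude).} Since $\lambda_{\min}(K)=1/\lambda_{\max}(K^{-1})$, taking reciprocals turns the maximum into a minimum. Using $\bigl(\tfrac1n+b^\top S^{-1}b\bigr)^{-1}=n/(1+n\,b^\top S^{-1}b)$ and $(\max_i s_i^{-1})^{-1}=\min_i s_i$ gives exactly the claimed bound $\lambda_{\min}(K)\ge \tfrac12\min\{n/(1+n b^\top S^{-1}b),\,\min_i s_i\}$.

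There is no real obstacle. The only step needing care is getting the inverse in Step 1 right, including the sign of the off-diagonal blocks. Step 2 is insensitive to that sign, because it uses only that $K^{-1}\succeq 0$, and this follows from $K\succ 0$.
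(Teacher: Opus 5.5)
Your proof is correct, and it takes a genuinely different route from the paper's. The paper works with the Rayleigh quotient of $K$ directly. For $x=(t,u)$ it completes the square using the Schur complement of the lower-right block $S+nbb^\top$, and evaluates $b^\top(S+nbb^\top)^{-1}b=\frac{b^\top S^{-1}b}{1+n\,b^\top S^{-1}b}$ by the Woodbury identity. This gives $x^\top Kx\ge \frac{n}{1+n\,b^\top S^{-1}b}\,t^2$. Separately it uses $x^\top Kx\ge u^\top Su\ge (\min_i s_i)\|u\|_2^2$, and then adds the two bounds. You instead take the Schur complement of the scalar block, which is simply $S$, so $K^{-1}$ is explicit with no Woodbury step. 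You then apply the pinching inequality $M\preceq 2\,\mathrm{diag}(A,D)$ to $K^{-1}$. Both proofs in fact establish the same matrix inequality, $2K\succeq\mathrm{diag}\bigl(\frac{n}{1+n\,b^\top S^{-1}b},\,S\bigr)$: the paper in primal form, yours in inverted form. The two are equivalent because inversion reverses the Loewner order on positive definite matrices. Your version is computationally lighter, and it also shows sharpness. Since $\lambda_{\max}(K^{-1})$ dominates both the top-left entry $\frac1n+b^\top S^{-1}b$ and $\lambda_{\max}(S^{-1})$ (test against vectors supported on each coordinate block), you get the matching bound $\lambda_{\min}(K)\le\min\bigl\{\frac{n}{1+n\,b^\top S^{-1}b},\,\min_i s_i\bigr\}$. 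So the factor $\frac12$ is the only loss. The paper's route, by contrast, stays entirely at the level of quadratic forms in $K$. Both arguments tacitly use $n>0$, which holds in every application in the paper, and for $n=0$ the claim is trivial.
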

\begin{proof}[Proof of Lemma~\ref{lem:rankonediagonalmatrixmineigenvalue}]
Let $x = (t, u) \in \mathbb{R}^{1+d}$. Then,
$$
x^\top K x = n(t + b^\top u)^2 + u^\top S u.
$$
Also, $\norm{x}_2^2 = t^2 + \norm{u}_2^2$. So, the Rayleigh quotient is
$$
\frac{x^\top K x}{\norm{x}_2^2} = \frac{n(t + b^\top u)^2 + u^\top S u}{t^2 + \norm{u}_2^2}
$$
and $\lambda_{\min}(K) = \inf_{x \ne 0} \frac{x^\top K x}{\norm{x}_2^2}$. Since $S \succeq s_{\min} I_d$ where $s_{\min} = \min_{1 \le i \le d} s_i$, we have
$$
u^\top S u \ge s_{\min} \norm{u}_2^2.
$$
The tricky part is to lower bound the term $n(t + b^\top u)^2$. Note that
$$
K = \begin{pmatrix}
n & n b^\top \\
n b & S + n b b^\top
\end{pmatrix}.
$$
Let
$$
A = n, \quad B = n b, \quad D = S + n b b^\top.
$$
Then, we can consider a block form
$$
K = \begin{pmatrix}
A & B^\top \\
B & D
\end{pmatrix}.
$$
We can calculate
\begin{align*}
\begin{pmatrix}
t \\
u
\end{pmatrix}^\top K \begin{pmatrix}
t \\
u
\end{pmatrix} &= t^2 A + 2 t B^\top u + u^\top D u \\
&= u^\top D u + 2t B^\top u + t^2 B^\top D^{-1} B + t^2 A - t^2 B^\top D^{-1} B \\
&= (u + t D^{-1} B)^\top D (u + t D^{-1} B) + t^2 (A - B^\top D^{-1} B).
\end{align*}
Since $S \succ 0$, $D \succ 0$, so the first term is non-negative, and
$$
x^\top K x \ge t^2 (A - B^\top D^{-1} B) = t^2 \left(n - n^2 b^\top (S + n b b^\top)^{-1} b\right).
$$
We need to upper bound $b^\top (S + n b b^\top)^{-1} b$. Recall the Woodbury matrix identity:
$$
(A + U C V)^{-1} = A^{-1} - A^{-1} U (C^{-1} + V A^{-1} U)^{-1} V A^{-1}.
$$
Take $A = S$, $U = b$, $C = n$, $V = b^\top$. Then,
$$
(C^{-1} + V A^{-1} U)^{-1} = \left(\frac{1}{n} + b^\top S^{-1} b\right)^{-1} = \frac{n}{1 + n b^\top S^{-1} b}.
$$
So,
$$
(S + n b b^\top)^{-1} = S^{-1} - S^{-1} b \cdot \frac{n}{1 + n b^\top S^{-1} b} \cdot b^\top S^{-1},
$$
and
$$
b^\top (S + n b b^\top)^{-1} b = b^\top S^{-1} b - b^\top S^{-1} b \cdot \frac{n}{1 + n b^\top S^{-1} b} \cdot b^\top S^{-1} b.
$$
Denote $\alpha = b^\top S^{-1} b$. Then,
$$
b^\top (S + n b b^\top)^{-1} b = \alpha - \frac{n \alpha^2}{1 + n \alpha} = \frac{\alpha}{1 + n \alpha}.
$$
So,
$$
x^\top K x \ge t^2 \left(n - n^2 \cdot \frac{\alpha}{1 + n \alpha}\right) = \frac{n t^2 }{1 + n b^\top S^{-1} b}.
$$
Combining the two lower bounds, we have
$$
2 x^\top K x \ge \frac{n}{1 + n b^\top S^{-1} b} \cdot t^2  + s_{\min} \norm{u}_2^2.
$$
Since $\frac{n}{1 + n b^\top S^{-1} b} \ge 0$ and $s_{\min} > 0$, we have
$$
2 x^\top K x \ge \min \left\{\frac{n}{1 + n b^\top S^{-1} b}, s_{\min}\right\} (t^2 + \norm{u}_2^2) = \min \left\{\frac{n}{1 + n b^\top S^{-1} b}, s_{\min}\right\} \norm{x}_2^2.
$$
So, for any $x \ne 0$,
$$
\frac{x^\top K x}{\norm{x}_2^2} \ge \frac{1}{2} \min \left\{\frac{n}{1 + n b^\top S^{-1} b}, s_{\min}\right\},
$$
and taking the infimum gives the desired result.
\end{proof}

\end{document}